\numberwithin{equation}{section}
\newcommand{\Tr}{{\rm Tr\,}}
\renewcommand{\phi}{\varphi}
\renewcommand{\epsilon}{\varepsilon}
\newcommand{\uv}[1]{\mathbf{\hat{#1}}}
\renewcommand{\sc}[1]{\mathcal{#1}}
\newcommand{\bb}[1]{\mathbb{#1}}
\def\mod{\text{ mod }}
\def\U{\mathrm{U}(1)}
\def\pf{\mathsf{Pf}}
\def\cA{\mathcal{A}}
\def\bZ{\mathbb{Z}}
\def\cT{\mathcal{T}}
\def\tmfd{{M^{(3)}}}
\def\fmfd{{M^{(4)}}}
\newtheorem{theorem}{Theorem}[section]
\newtheorem{lemma}[theorem]{Lemma}
\newtheorem{corollary}[theorem]{Corollary}
\theoremstyle{definition}
\newtheorem{definition}[theorem]{Definition}
\newtheorem{remark}[theorem]{Remark}
\newtheorem{example}[theorem]{Example}
\begin{document}
\title{Ordering the topological order in the fractional quantum Hall effect}

\author{Meng Cheng}
\affiliation{Department of Physics, Yale University, New Haven, Connecticut 06511, USA}
\author{Seth Musser}
\affiliation{Condensed Matter Theory Center and Joint Quantum Institute, Department of Physics, University of Maryland, College Park, Maryland 20742, USA}
\author{Amir Raz}
\affiliation{University of Texas, Austin, Physics Department, Austin, Texas 78712, USA}
\author{Nathan Seiberg}
\affiliation{School of Natural Sciences, Institute for Advanced Study, Princeton, NJ 08540, USA}
\author{T. Senthil}
\affiliation{Department of Physics, Massachusetts Institute of Technology, Cambridge, Massachusetts 02139, USA}
\begin{abstract}\noindent
We discuss the possible topological order/topological quantum field theory of different quantum Hall systems.  Given the value of the Hall conductivity, we constrain the global symmetry of the low-energy theory and its anomaly.  Specifically, the one-form global symmetry and its anomaly are presented as the organizing principle of these systems. This information is powerful enough to lead to a unique minimal topological order (or a small number of minimal topological orders).   Almost all of the known experimentally discovered topological orders are these minimal theories. Since this work is interdisciplinary, we made a special effort to relate to researchers with different backgrounds by providing translations between different perspectives.
\end{abstract}  
\maketitle

\newpage

\tableofcontents

\section{Introduction}
 
The discovery of the fractional quantum Hall effect (FQHE) of electrons moving in two space dimensions revolutionized our understanding of the natural world. Apart from the precisely quantized Hall conductivity, FQHE systems exhibit several novel phenomena: they support quasiparticle excitations with fractional electric charge and fractional statistics; further, at their spatial boundary, there are dissipationless one-way current-carrying edge modes. These features are consequences of a non-trivial many-body ground state with subtle long-range quantum correlations, though there is no classical long-range order. 

The most familiar experimental realizations of the quantum Hall effects are found in high-mobility two-dimensional electron gases subject to a uniform magnetic field. The standard platforms are semiconductors such as GaAs or Si, or monolayer and bilayer graphene \cite{girvin2019modern}.  In recent years, FQHEs have also been observed~\cite{cai_signatures_2023,park_observation_2023,zeng_thermodynamic_2023, lu_fractional_2024} in two-dimensional Van der Waals moir\'e materials under zero magnetic field (known as the fractional quantum anomalous Hall or FQAH effect). There have also been observations of field-induced FQHE in related two-dimensional materials where a strong periodic potential is present~\cite{spanton2018observation,xie2021fractional,aronson2024displacement}. 

Universal aspects of the FQHE phenomenon are well understood. They can be captured by continuum topological quantum field theories (TQFT) enriched by a global $\U$ symmetry (corresponding to conservation of electric charge). Equivalently, they can be understood as topological orders enriched by $\U$ symmetry as a special case of symmetry enriched topological orders (SETs). A mature microscopic theory of the FQHE in the classic setting of two-dimensional electron gases in a magnetic field also exists. Microscopic theories of FQAH and related lattice systems are under development (see  \cite{Zhang2018_nearly_flat,Ledwith2019_FCITBG,Repellin2019_TBGChern,Wilhelm2020_TBGFCI,Abouelkomsan2019_MoireFCI,Wu2018_TMD_topo,Yu2019_TMD_topo,Devakul2021_TMD_topo,li2021spontaneous,reddy2023toward,yu2024fractional,dong2024theory,zhou2024fractional,dong2024anomalous,yu2024moir,huang2025fractional} for a sampling of recent papers).

\begin{figure*}
    \centering
    \subfigure[Classification of SETs]{
        \includegraphics[width=0.45\textwidth]{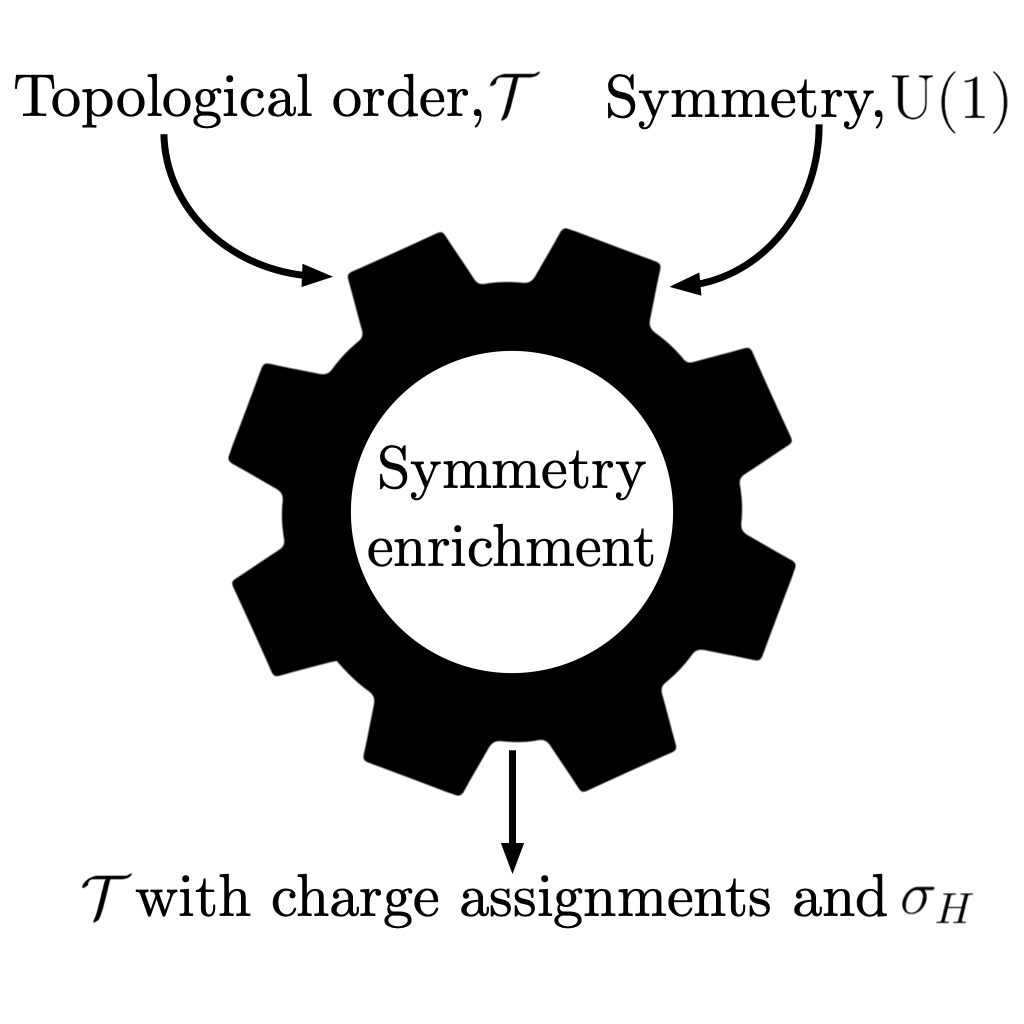}
        \label{fig:set}
            }
    ~ 
    \subfigure[Our approach]{
        \includegraphics[width=0.45\textwidth]{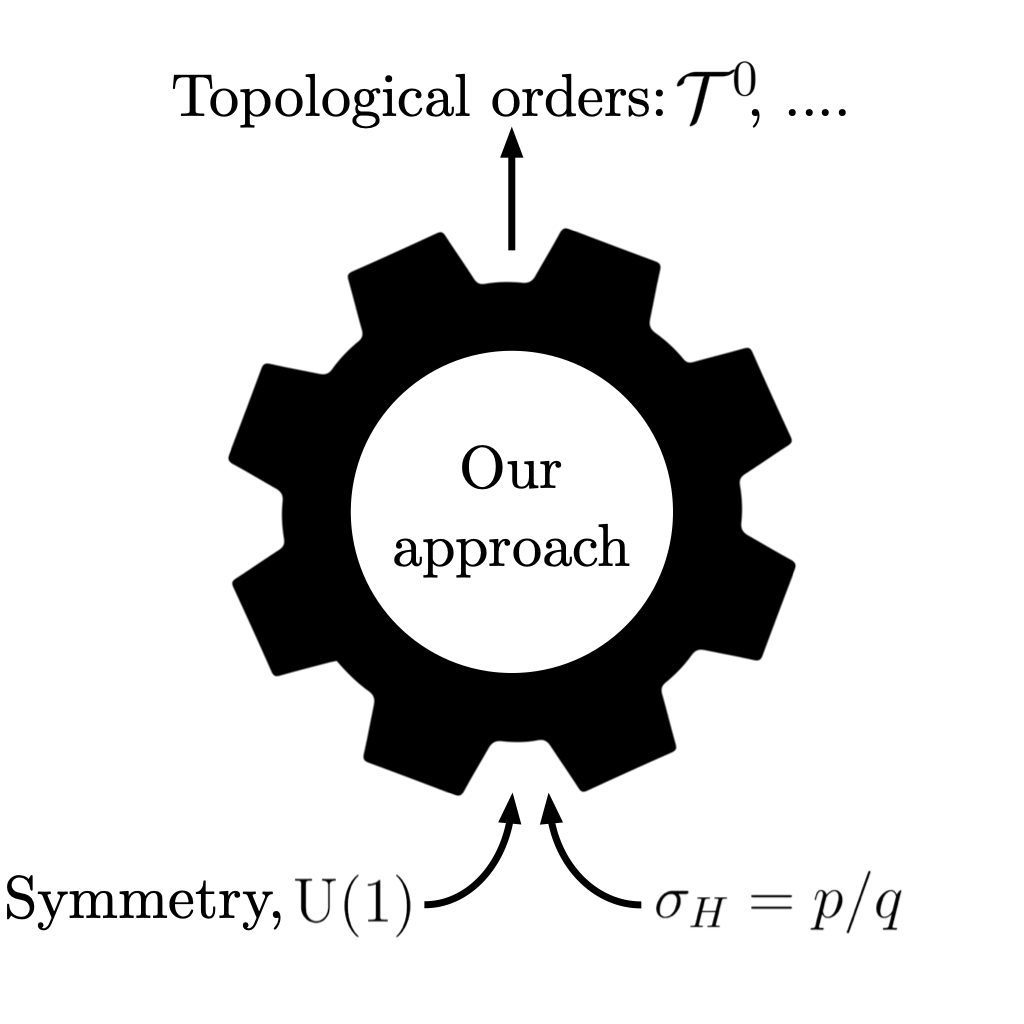}\
        \label{fig:consist}
            }
    \caption{A cartoon illustrating the difference between our approach and the usual classification of SETs. (a) Illustration of the usual approach taken to classify SETs. Here, the topological order is already known, and this information, along with the symmetry, is put into the classification. The classification then provides the possible consistent ways of assigning symmetry charges to the anyons. If the symmetry is $\U$, each symmetry assignment will come with a possibly distinct value of $\sigma_H$. In contrast, our approach, illustrated in (b), assumes the $\U$ symmetry and the value of $\sigma_H$ and returns a list of consistent topological orders. This is relevant experimentally as in this setting the topological order of the system is not already known.}
	\label{fig:contribution_cartoon}
\end{figure*}

In this context, it is interesting to ask how much of the structure of the universal low-energy theory of an FQHE system can be constrained given only the value of the quantized Hall conductivity $\sigma_H$.\footnote{ Several authors have explored general constraints on the TQFT given measurable data like $\sigma_H$ and $c_-$, see e.g., \cite{levin2009fractional,Kong:2020bmb,Dehghani:2020jls,Randal-Williams:2020hls,  Cheng:2022nds}.} This is often the first (and perhaps only) piece of experimental information available (for example, in the recent discoveries of the FQAH). What can we say about the low-energy TQFT given the Hall conductivity? We note that this is the opposite direction from the usual classification of SETs. There, we already know the topological order, and we merely want to understand how symmetry labels can be assigned to anyons. In this approach, the fractional Hall conductivity is a consequence of the symmetry assignment. Our contribution, illustrated in Fig.~\ref{fig:contribution_cartoon}, is the ability to go in the reverse direction. That is, given the value of the fractional Hall conductivity, can we produce a list of consistent topological orders and give recipes, either experimental or numerical, to distinguish between these possibilities? Answering this question will be our main focus. We will build on a variety of ideas going back to the very first theoretical papers to  organize the possible TQFTs with a given Hall conductance. 

Along the way, we will identify the {\it minimal} topological order (mTO)\footnote{See  \cite{musser_fractionalization_2024,jian_minimal_2024} for recent work on minimal topological order in other contexts. } --- defined to be the one with the smallest number of distinct anyons --- for any given value of the Hall conductivity. We will show that there is either a unique or a very small number of mTO theories at a given $\sigma_H$. Specifically for $\sigma_H = \frac{p}{q}$ with $p,q$ coprime integers, and $q$ odd, there is a unique minimal theory with $q$ distinct Abelian anyons. For $q$ even, there are four distinct minimal theories with $3q$ anyons ($q$ of which are non-Abelian and the rest are Abelian) and are variants of the familiar Pfaffian quantum Hall states. We discuss these mTOs in more detail in the summary of our results in Sec.~\ref{sec:summary}. Interestingly, almost all the observed FQHE states are described by these mTO theories. Thus, knowledge of $\sigma_H$ and the criterion of minimality is often enough to fully determine the low-energy TQFT  with no reference to a detailed microscopic theory. This may be useful as more FQHE systems are discovered, especially in lattice settings, with or without a weak magnetic field, where the microscopic physics is potentially rather different from the traditional FQHE systems. 

Our analysis will build on the old idea that the mere observation of a quantized fractional $\sigma_H$ guarantees the presence of fractionally charged Abelian anyon excitations. (Note that the quantization of $\sigma_H$ requires that current-carrying excitations are gapped in the bulk.) We will further restrict, as is standard, to situations in which there is a gap to all excitations so that the low-energy physics is described by a topological theory. We will infer several consequences of the presence of the anyons implied by the value of $\sigma_H$, and show that they lead to powerful constraints on the TQFT and, eventually, to a full ordering. 

One of the beauties of this topic is that many of the results can be derived independently from different perspectives. The topological order that characterizes FQHE states can be described in multiple equivalent ways. One of them is through a Chern-Simons theory \cite{Witten:1988hf}. A different perspective is to just focus on the data (such a braiding, fusion, etc.) needed to characterize the anyon excitations of the topological order. In its general form, this leads to a powerful (but abstract) description of the topological order in terms of a 
(unitary) modular tensor category (MTC) enriched by the global $\U$ symmetry \cite{barkeshli_symmetry_2019, zhang2025hierarchy}. A related perspective involves keeping track of the full set of gapped anyons and the action of global symmetries on them, as occurs in any specific microscopic setting.  Yet another perspective focuses on the generalized symmetries of the IR TQFT and their 't Hooft anomalies.  Finally, through the ``bulk-edge'' correspondence that relates Chern-Simons theories in 2+1d dimensions to rational conformal field theories in 1+1d dimensions \cite{Witten:1988hf,Moore:1988qv,Moore:1989vd}, we may fruitfully view quantum Hall phenomena through the lens of the conformal field theory (see, e.g. \cite{Frohlich:2000qs, Cappelli:2010jv}).

A particular focus will be on a point of view where the existence of Abelian anyon excitations is related to a (generalized) symmetry of the TQFT with an 't Hooft anomaly \cite{Gaiotto:2014kfa}. Indeed, it is well appreciated that a focus on symmetries and their anomalies leads to powerful nonperturbative constraints on the possible low-energy behaviors of physical systems. We note that in the underlying microscopic setting (the UV theory), we only have an ordinary global (also known as zero-form)  $\U$  symmetry and possibly some other ordinary symmetries (like translation). In contrast, the IR TQFT, at scales below the gap to anyon excitations,  has generalized global symmetries with certain anomalies.  These new IR symmetries can be standard emergent symmetries or, as we will discuss below, be a result of symmetry transmutation \cite{Transmutaion}. The presence of these symmetries in the ground states of the TQFT is associated with the existence of excited anyonic quasiparticle states. Our analysis will make use of the generalized global symmetries and anomalies of the IR theory, which are necessarily present in FQH systems.

Since the paper addresses questions of interest to theorists from diverse backgrounds, we will attempt to communicate our results from more than one of the various perspectives mentioned above. Naturally, we also include review material on topics that may be familiar to some, but not all, readers. Good basic introductions to quantum Hall physics and topological phases of matter can be found, for example, in \cite{Moore:1989vd,kitaev_anyons_2006,girvin2019modern,Witten:2015aoa}.

\subsection{Summary of main results}
\label{sec:summary}

The goal of this paper is to characterize the possible topological orders that are consistent with a given Hall conductivity. Throughout the course of our discussion, we will make frequent reference to various bosonic, spin, and spin$^{c}$ TQFTs. To make it clear what we mean when we refer to these, we have included Table~\ref{tab:defns} as a helpful reference.

\begin{table*}
    \centering
    \begin{tabular}{|c||c|c|}
        \hline
           Microscopic charge-$1$ particles&  Nature of local  particles & Type of TQFT \\
         \hline \hline
         Bosons& Integer-charged bosons & Non-spin or bosonic \\
        Bosons and fermions& Integer-charge bosons and fermions& Spin or Fermionic  \\
         Fermions& Fermions (bosons) have odd (even) charges & Spin$^c$ or electronic\\
         \hline
    \end{tabular}
    \caption{Types of physical systems and their low-energy TQFTs considered in this paper. The physical systems are characterized by specifying the nature of the fundamental charge carriers, which are taken to have charge one. The nature of composite particles formed by binding an integer number of these fundamental charge carriers is indicated in the second column. The last row describes electronic solids. The distinction between these physical systems is captured in the low-energy TQFT through the possible space-time manifolds in which they can be placed. We limit ourselves to orientable spacetimes. In spin (or Fermionic) TQFTs, the spacetime has to be a spin manifold, and the physical answers can depend on the choice of spin structure. A spin$^c$ TQFT is a spin TQFT that permits a particular $\U$ enrichment that corresponds to the structure of the allowed local particles as defined in the third row. It can be placed on nonspin manifolds, provided there is a specific background gauge field.  }
    \label{tab:defns}
\end{table*}

We first review the crucial observation that a fractional $\sigma_H$ implies the existence of a special fractionally charged Abelian anyon, which we call the vison. The vison, denoted by $v$, has charge $Q(v) = p/q = \sigma_H$, topological spin $h(v) = p/2q = \sigma_H/2$, and it determines the fractional $U(1)$ charges of all other anyons via braiding [i.e., $Q(x) \mod 1 = B(v,x)$ for any anyon $x$.] The existence of the vison can be argued in many ways: using the flux threading argument (see Sec. \ref{flux threading}), by showing that the response theory has a one-form symmetry with a particular anomaly  (see Sec. \ref{from response to}), by matching the anomalies of the magnetic translations in the UV theory (see Sec. \ref{sec:UVIR}), or through formal arguments on symmetry enrichment of MTCs (see Appendix \ref{charge-assignment-classification}.) We present many different approaches on the vison's existence to help build physical intuition, and to highlight a variety of complementary perspectives.

As the vison is an Abelian anyon, it generates a (one-form) symmetry group, which we denote ${\cal V}^{s,r} =\{1,v,v^2,\cdots , v^{s-1}\}$. Here $s = q\ell$ is the order of the group, and $r=p\ell$ specifies the anomaly. Note that the order of the group $s$ must be a multiple of $q$ to ensure that $v^s =1$, a transparent boson. 

Using this information, we can determine the minimal electronic or spin$^c$ topological orders given $\sigma_H = p/q$. These minimal orders all have the fewest number of anyons in ${\cal V}^{s,r}$. This means that the order of the vison is $s=q$. As mentioned earlier, if $q$ is odd, then there is a unique minimal theory, ${\cal V}^{q,p}$, which has $q$ anyons that are all powers of the Abelian vison. For $q$ even, there are four minimal theories with $3q$ anyons, all variants of the non-Abelian Pfaffian quantum Hall states. We emphasize that this result was not known for $q$ even and has already been useful in numerical studies of the $3/4$ FQHE \cite{huang_non-Abelian_2024}. We have included Table~\ref{tab:mTOs} as a helpful summary of our results. In addition to the experimentally relevant electronic systems we also considered minimal orders in the settings of: 3D topological insulator surface topological orders, bilayer electronic FQHE, and bosonic FQHE.

\begin{table*}
    \centering
    \begin{tabular}{|c||c|c|c|c|c|}
        \hline
         Symmetry and constraints& Min. Abelian sector& Min. TQFTs&  $\sc{D}^2_{{\cal T}}$&  $N_{{\cal T}}$ \\
         \hline \hline
         Spin$^c$, $\sigma_H = p/q$& & & & \\
        $q$ odd& $\mathcal{V}^{q,p}$  & $\mathcal{V}^{q,p}$ & $q$& $q$\\
         $q$ even& $\mathcal{V}^{2q,2p}$  & $\mathsf{Pf}^f_{q,p,n}$& $4q$& $4q$ ($3q$) for $n$ even (odd)\\
        \hline
         Spin$^c$ , 3D topological insulator surface TO& $\mathcal{V}^{4,2} $ &  T-Pfaffian& 8 & 6\\
         \hline
         Spin$^c$, $\U^{(1)}\times \U^{(2)}\times \bb{Z}_2^{{\cal CT}}$, $\sigma_{H}^{ab} = \frac{\delta_{ab}}{2}$& $\mathcal{V}^{4,2} \boxtimes \mathcal{V}^{4,2}$& $D(\bb{Z}_4)$& $16$& $16$\\
         \hline
        Bosons,  $\sigma_H = p/q$& & &  & \\
         $pq$ even&  $\mathcal{V}^{q,p}$  & $\mathcal{V}^{q,p}$&  $q$&  $q$\\
         $pq$ odd& $\mathcal{V}^{2q,2p} = \mathcal{A}^{2,2} \boxtimes \mathcal{A}^{q,p+q}$  & $\mathsf{Pf}^b_{q,p,n}$&  $4q$& $4q$ ($3q$) for $n$ even (odd)\\
         \hline
    \end{tabular}
    \caption{A summary of our results. This table gives the list of possible minimal TQFTs by quantum dimension $\sc{D}^2_{{\cal T}}$ and anyon count $N_{{\cal T}}$ for each FQH setup. See Appendix~\ref{app:anyontheory} for a description of the notation. In all the examples, we find that the minimal TQFTs by anyon count are a subset of the minimal TQFTs by quantum dimension; if there are non-Abelian options, as in the case of the odd $n$ theories above, they are precisely these options.}
    \label{tab:mTOs}
\end{table*}

We are further able to 1) develop a systematic algorithm to compile a list of all the possible topological orders that are consistent with a measured Hall conductivity $\sigma_H$, and (perhaps) other topological data, and 2) order the allowed topological orders for a given $\sigma_H$. We give more details in Sec.~\ref{sec:IR}.

\subsection{Outline} 

In  Sec.~\ref{QHP}, we provide an introduction to the physics of the quantum Hall effect (QHE), and in  Sec.~\ref{sec:tqft-basics}, we review some known facts about topological quantum field theory (TQFT).  The purpose of these sections is to introduce known techniques and terminology, which we will use later in the paper. In particular, we review the crucial result that a fractionally quantized Hall conductivity implies that the low-energy theory admits fractionally charged Abelian anyon excitations.

In  Sec.~\ref{from response to}, we relate the flux threading argument (reviewed in  Sec.~\ref{flux threading}), the fact that the response theory of the FQH system is not globally well defined, and the existence of a one-form global symmetry with a particular anomaly (reviewed in  Sec.~\ref{sec:symm_anom_gauge}). The anomalous one-form global symmetry is a reformulation of the statement that the system includes Abelian anyon excitations with particular properties.  This reformulation allows us to use the powerful machinery of TQFT.

In  Sec.~\ref{sec:UV}, we rephrase the known physics of charged particles in a homogeneous magnetic field in modern terms.  This allows us to identify the one-form global symmetry of the IR TQFT, which underlies the response theory, as originating from the magnetic translation group of the UV system. Another perspective relating the UV  translation to the IR one-form global symmetry is presented in  Sec.~\ref{conntranf}. 

Section~\ref{sec:symmetrylines} describes properties of the one-form global symmetry.  We review known material and present various extensions of it, which will be needed later.

In  Sec.~\ref{sec:IR}, we use the information about the global symmetry and its anomaly to organize the possible TQFTs with a given Hall conductance.  Given the value of $\sigma_H$, we present an algorithm leading to a minimal theory.

We summarize our results, compare them with the literature, and present further thoughts in  Sec.~\ref{sec: summary}.

Several appendices address more technical details, some of which are known, and others are new.  Appendix~\ref{app:anyontheory} reviews some known examples of TQFTs and introduces our notation.  Appendix~\ref{app:generator} describes how the one-form global symmetry that leads to $\sigma_H$ can be embedded in a larger one-form global symmetry of the system.  In Appendix~\ref{app:spinc-classification}, we review and extend the different ways a TQFT can be coupled to a background $\U$ gauge field (the electromagnetic field).  Appendix~\ref{app:gauging-quantum-dim} reviews how the gauging procedure, which is central in our discussion, affects the quantum dimension of the TQFT.  And in Appendix~\ref{app:mme}, we present a construction of a minimal modular extension of some of the premodular TQFTs that we discuss often.

\section{Quantum Hall preliminaries}
\label{QHP}

In this section, we review some background material on quantum Hall physics that we will need later in our discussion.  In the next section, we review background material on TQFTs.

\subsection{Quantum Hall effect}

We review\footnote{For a good review of quantum Hall physics, see, e.g.,   \cite{girvin2019modern}} the basic phenomenology of the quantum Hall effects (integer, fractional, spin, etc.). Since every quantum Hall effect has a signature in the Hall conductivity $\sigma_H$, we begin by defining it. In practice, $\sigma_H$ can be defined either via electrical transport or thermodynamic quantities.  Both of these are routinely used in experiments and give the same answer.

We begin by giving the transport definition of $\sigma_H$. Consider a 2+1d system of electrons at zero temperature. The conductivity tensor is defined as
\begin{equation}
\label{qhresp}
    J_i = \sum_j \sigma_{ij}E_j\,,
\end{equation}
where $i,j = x,y$ are indices for the spatial coordinates. 
In a system showing the Hall effect, the longitudinal conductivity $\sigma_{xx}=\sigma_{yy}=0$,  i.e., the electrical transport is dissipationless.   The conductivity tensor takes the following form:
\begin{equation}
\label{Halldefn} 
    \sigma_{ij} = \frac{\sigma_H}{2\pi} \begin{pmatrix}
        0 & -1\\
        1 & 0
    \end{pmatrix}_{ij}.
\end{equation}
The electrical current is thus perpendicular to the electric field. 

Now we give the thermodynamic definition. Consider a many-body state with a gap to excited states carrying electric current. If we change the charge density by a small amount $\delta \rho$, this will generically close the gap. However, if we also change the magnetic field by a small amount $\delta B$ then it is possible to maintain this gap even as we tune the density. The Hall conductivity is related to the coefficient of proportionality between these two changes. More precisely, $\sigma_H$ is defined by 
\begin{equation}
\label{streda}
    \frac{\sigma_H}{2\pi}  = \frac{d\rho}{dB}\,,
\end{equation}
where the derivative is taken while preserving the many-body gap. Equation~\eqref{streda} is known as the Streda formula.

Though the transport and thermodynamic definitions na\"{i}vely appear distinct, they are equivalent. At low energies and long wavelengths, \eqref{qhresp} will hold as a local equation, i.e., we can write
\begin{equation} 
J_i (\vec x, t) = \sum_j \sigma_{ij} E_j (\vec x, t)\, .
\end{equation} 
Taking the spatial divergence on both sides, and using the continuity/conservation equation $\partial_i J_i = - \partial_t \rho$, and the Maxwell equation $\sum_{ij} \epsilon_{ij} \partial_i E_j = -\partial_t B$, we get 
\begin{equation}\label{rhodB}
\partial_t \rho = \frac{\sigma_H}{2\pi}  \partial_t B\,.
\end{equation} 
Here, we have assumed that the magnetic field changes slowly over time, such that the local version of the relation between the current and the electric field holds.  Integrating both sides over a time interval $\delta t$, we arrive at the Streda formula.

Everything we have said up to this point applies to \textit{any} system with a gap to excited states carrying current. Henceforth, we specialize to the various quantum Hall systems.
In these systems the Hall conductivity $\sigma_H$ is quantized to rational numbers in units where the electron charge $e = 1$, and $\hbar = 1$.\footnote{It will be a consequence of our later discussion that any system with a gap to all excitations \textit{must} have $\sigma_H$ a rational number. So we have not lost much generality by specializing to quantum Hall systems.} Throughout the text we will write $\sigma_H=p/q$, where $p$ and $q$ are coprime integers. 

The observation of very precisely quantized $\sigma_H$ in real materials tells us that the quantum Hall effect does not require translation or other spatial symmetries. This is because these real materials have impurities (and/or, in recent realizations,  a strong periodic potential) that explicitly break translation symmetry.
Indeed, the very existence of plateaus of quantized $\sigma_H$ as the charge density is changed {\it requires} the presence of impurities (see, e.g.,  \cite{girvin2019modern}).

Nevertheless, in some cases it is useful to consider an idealized system with no impurities and, in an infinite system, continuous translation symmetry. To get a quantum Hall effect in such a system, we need a background magnetic field $B$ such that there is a rational Landau level filling fraction $\nu = 2\pi \rho/B$. This, of course, is the classic setting for quantum Hall phenomena.

In this paper, we will obtain constraints on the low-energy, i.e., infrared (IR), physics of systems with these two starting points; either $\sigma_H = p/q$, or translation symmetry with $\nu = p/q$. If $\sigma_H$ is rational, very general arguments, reviewed below, demonstrate that the IR theory must support fractionally charged anyon excitations, have associated topological order, and be described by a topological quantum field theory (TQFT). We will determine the minimal TQFT consistent with the given $\sigma_H$. On the other hand, when translation is assumed and a rational $\nu$ is given, we will analyze the global symmetries and anomalies of this microscopic, i.e., ultraviolet (UV), system. We consider the structure of IR TQFTs that preserve the UV symmetries and can match the anomalies. We show that the resulting theories are identical to the ones obtained by the consideration of TQFTs consistent with $\sigma_H = p/q$. 

That these two different starting points give the same results is not surprising. With the assumed charge $\U$ and translation symmetries, and the further assumption of a gapped ground state, $\sigma_H$ is exactly determined by the Landau level filling factor through $\sigma_H = \nu$.  This is readily seen from the aforementioned Streda formula. In the translationally invariant Landau level, the many-body ground state is gapped at some fixed Landau level fillings $\nu = p/q$. Now consider changing the external magnetic field by $\delta B$. To maintain the Landau level filling $p/q$, the density must change by $\delta \rho = \nu \delta B/2\pi$. The Streda formula then tells us that $\sigma_H = \nu$. This conclusion relies crucially on translation invariance. If impurities or even a periodic potential are present, $\sigma_H$ and $\nu$ do not have any direct relation. Indeed, the persistence of the quantized Hall effect in a range of $\nu$ in real samples is a clear-cut manifestation of the absence of such a relationship. As such, we will often discuss $\sigma_H = p/q$ instead of $\nu$, as the relation $\nu = \sigma_H$ means that the translation invariant case is a subset of the discussion with fixed $\sigma_H$.

Lastly, we note that the discussion here can be easily generalized to systems with multiple independent $\U$ symmetries, such as multilayer QH systems or 2+1d materials with multiple flavor (such as electron spin and/or ``valley") degrees of freedom. We will use multilayer systems as an example. In this case, the Hall conductivity is a symmetric matrix $\sigma_H^{ab}$, defined as
\begin{equation}\label{bilaryJE}
J_i^{(a)}=\sum_{j,b}\sigma_{ij}^{ab}E_j^{(b)}\qquad,\qquad \sigma_{ij}^{ab}=\frac{\sigma_H^{ab}}{2\pi}\begin{pmatrix}
        0 & -1\\
        1 & 0
    \end{pmatrix}_{ij}.
\end{equation}
Here $a,b$ are the layer indices. The off-diagonal components are often called the ``drag" Hall conductivity. It is also straightforward to derive the multi-layer generalization of the Streda formula:
\begin{equation}\label{bilaryrhoB} \delta\rho^{(a)}=\sum_b\frac{\sigma^{ab}_H}{2\pi}\delta B^{(b)}.
\end{equation}

In practically all FQH systems, both layers see the same magnetic field, and so we have
\begin{equation}
    \nu^{(a)}=\sum_b \sigma_H^{ab},
\end{equation}
where $\nu^{(a)}$ is the filling factor of the $a$th layer.

\subsection{Flux threading argument}\label{flux threading}

The mere observation of a quantized, fractional $\sigma_H$ leads to a profound conclusion: the FQH system must support excitations that are fractionally charged anyons. The corresponding ground states must then be topologically ordered. We review the arguments below; they are a summary of a line of thought initiated by Laughlin \cite{Laughlin1} and developed by many authors subsequently. 
That the fractional quantum Hall state supports anyon excitations was first discussed based on microscopic wavefunctions by Halperin \cite{HalperinHierarchy} and a bit later in  \cite{arovas1984fractional}. It is also implied by the Chern-Simons effective field theory description of these states \cite{zhang1989effective,read1989order,lopez1991fractional,sachdev2023quantum,wen2004quantum}. The explanation based on flux threading, reviewed below, trickled in over many years. For some representative papers, see \cite{tao1984gauge,kivelson1985consequences,einarsson1990fractional,wen1990ground,oshikawa_fractionalization_2006}.

Consider the system on an annulus $R_1\leq r\leq R_2,\ 0\leq \phi<2\pi$ where $r$ and $\phi$ are the polar coordinates. We assume that $R_2\gg R_1$. Imagine threading a magnetic flux\footnote{Later in the paper, we will refer to this as a holonomy and reserve the use of the term `magnetic flux' to situations where the flux penetrates the space occupied by the charged particles. The only exception is that we will use the term ``flux threading argument'' when we refer to this argument.  We also want to point out that this process of continuously changing the holonomy around a nontrivial cycle in space and following how the spectrum evolves is known as spectral flow and has been used extensively in various branches of mathematics and physics.} $\Phi(t)$ through the inner hole. This leads to the vector potential $\vec{A}(t)=\Phi(t)\uv{e}_\varphi/(2\pi r)$ on the annulus, where $\uv{e}_\varphi$ is the unit vector along the transverse direction.  We assume that the flux threading is slow and the time evolution of the bulk system is adiabatic. The flux threading induces an electric field $\vec{E}=-\uv{e}_\varphi(d\Phi/dt)/(2\pi r)$, driving a Hall current in the inward radial direction $I_r=\sigma_H (d\Phi/dt)/2\pi$. If the total flux threaded is $\Delta\Phi$, we find that an amount of charge $\sigma_H\Delta\Phi/2\pi$ is accumulated near the inner hole. This setup is depicted in Fig.~\ref{fig:current}.

\begin{figure*}
    \centering
    \subfigure[Flux-threading]{
        \includegraphics[width=0.45\textwidth]{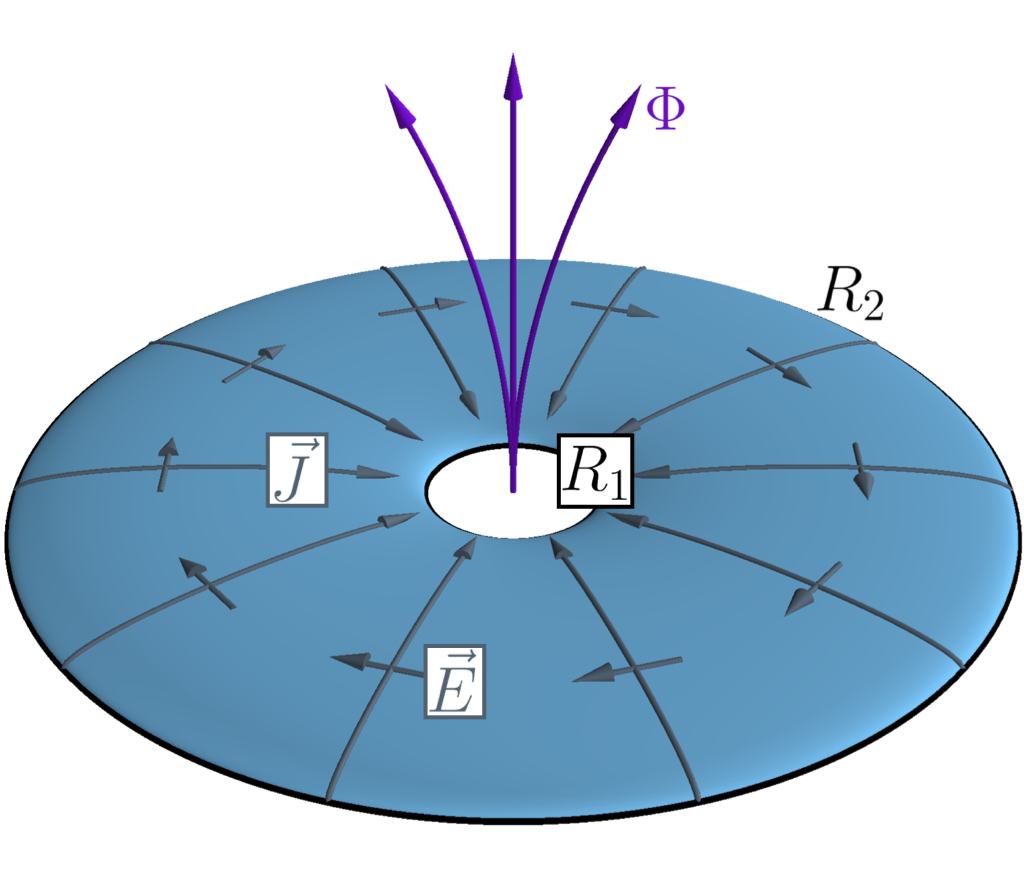}
        \label{fig:current}
            }
    ~ 
    \subfigure[Braiding with $v$]{
        \includegraphics[width=0.45\textwidth]{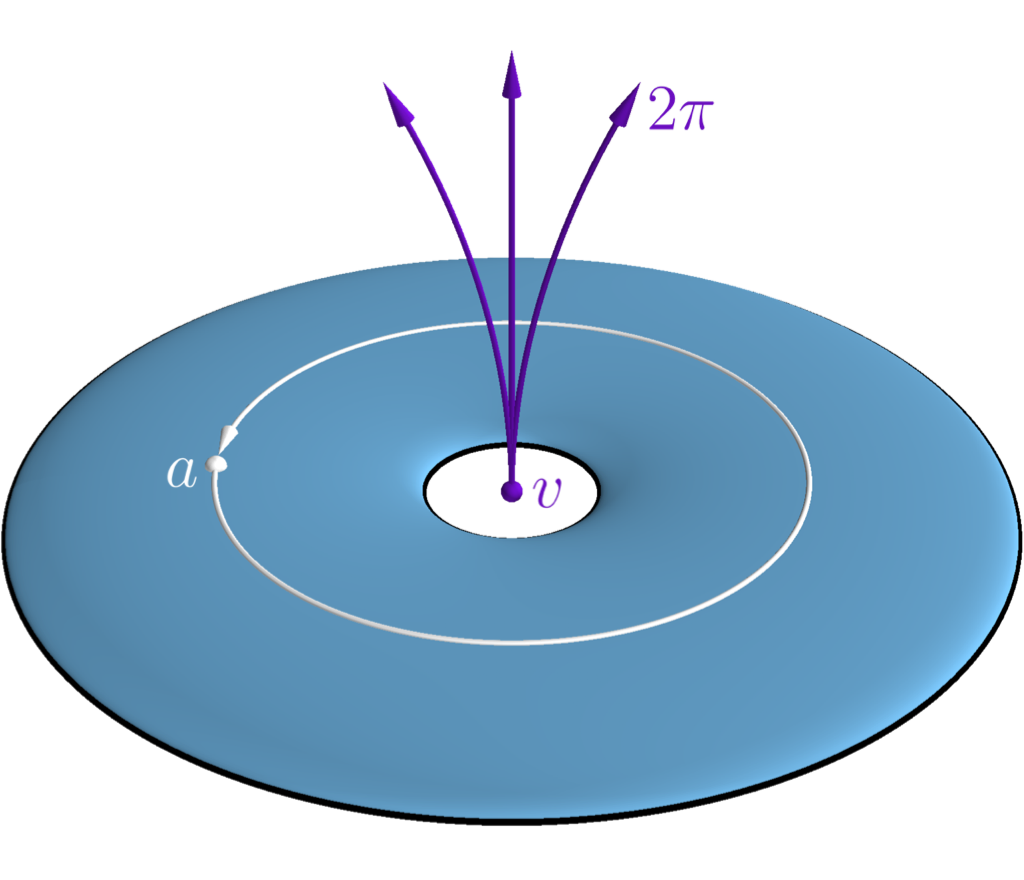}\
        \label{fig:braid}
            }
    \caption{Illustration of the flux-threading argument. (a) The setup of the argument; an annulus with inner radius $R_1$ and outer radius $R_2$. Flux $\Phi$ is threaded through the hole of the annulus, resulting in a radial electric field, $\vec{E}$. This electric field in turn induces a current flow from the outer edge to the inner edge, $\vec{J}$. When $2\pi$ flux is threaded through the hole, the system can be thought of as having returned to its ground state, albeit having accumulated a charge of $\sigma_H$ in the inner hole. (b) Display of the braiding of an anyon $a$ around this flux, which will detect its charge. The flux itself can be thought of as an anyon, the vison $v$, whose charge is $\sigma_H$ and whose topological spin is $\sigma_H/2$.}
	\label{fig:annulus}
\end{figure*}

If we set $\Delta\Phi=2\pi$ to be a single flux quantum, then after flux threading, the system returns to the initial configuration, possibly up to a large gauge transformation. Therefore, the final state of the adiabatic flux threading must be an eigenstate of the Hamiltonian. Because of adiabaticity, away from the two edges, the system must remain in the local ground state.  We can thus equivalently view the final state as nucleating a quasiparticle and its anti-quasiparticle and separating them to the two boundaries.  

The argument in the previous paragraph shows that this quasiparticle excitation carries electric charge $\sigma_H$.  We denote the anyon type of this quasiparticle by $v$ and will refer to it as the ``vison." It is displayed in Fig.~\ref{fig:braid}. Note that the flux threading argument determines the charge of $v$, not just its charge modulo one. This is because the Hall conductivity is some fixed number $\sigma_H$.

Coming back to the flux-threading argument, let us also consider adiabatically moving another quasiparticle of type $a$ counterclockwise along a nontrivial cycle of the annulus, displayed in Fig.~\ref{fig:braid}. This will result in a braiding phase $B(v,a)$. The same phase can be understood as the Aharonov-Bohm phase of a charge $Q(a)$  around the flux tube, so we find
\begin{equation}
\label{visonbraid}
    B(v,a)=Q(a) \mod 1\,.
\end{equation}
In other words, the vison determines the fractional part of the electric charge of all quasiparticles. In particular, setting $a=v$ we find $B(v,v)=\sigma_H \mod 1$. When $\sigma_H$ is fractional, we necessarily have a non-zero fractional part to $B(v,v)$. Thus $v$ has anyon statistics. This establishes that fractional $\sigma_H$ {\it requires} the existence of fractionally charged anyons. 

In addition, the topological spin $h(v)$ of the vison $v$ must satisfy $h(v)=B(v,v)/2 \mod {1/2}$. In fact, one can further show that  $h(v)=\sigma_H/2$ mod 1. Here we present a heuristic calculation assuming continuous rotation symmetry~\cite{Goldhaber:1988iw, Greiter:2022iph}. To determine the topological spin of the vison $h(v)$, we compute the change of the orbital angular momentum $L_z$ during the flux-threading process:
\begin{equation}
\begin{split}
    \frac{dL_z}{dt}&=\int_{R_1\leq r\leq R_2} d^2\vec{r}\, \rho(\vec{r}) \big[\vec{r}\times \vec{E}(\vec{r},t)]_z \\
    &= \frac{1}{2\pi}\frac{d\Phi}{dt}\int_{R_1\leq r\leq R_2} d^2\vec{r}\,\rho(\vec{r})\\
    &=\frac{1}{2\pi}\frac{d\Phi}{dt}\cdot \frac{\sigma_H}{2\pi}\Phi = \frac{\sigma_H}{8\pi^2}\frac{d}{dt}(\Phi^2).
\end{split}
\end{equation}
Integrating from $\Phi=0$ to $\Phi=2\pi$, we find $L_z=\sigma_H/2$, which we interpret as the topological spin of the vison being $h(v)=\sigma_H/2$. A derivation of this relationship can be readily given for systems whose IR physics is described by Abelian TQFTs using the $K$-matrix Chern-Simons theory reviewed in  Sec.~\ref{sec:tqft-basics}. This applies irrespective of whether the microscopic UV system has rotational invariance or not. A rigorous derivation for lattice systems can be found in \cite{Kapustin:2020bwt}.

We pause here to note that we could always bind local charged particles to $v$, e.g., the electron, and this new particle would have the same braiding with all other anyons. Strictly speaking, though, the new particle is not the one that is nucleated by adiabatic flux insertion. This is because its charge differs from $\sigma_H$. However, we will often focus on the anyon content and braiding of the theory, which only depends on the fractional part of the Hall conductivity. In other words, we will consider a variety of states, all with the same $\sigma_H \mod 1$. In that case, we can say that the flux threading argument determines the charge and topological spin of the vison up to the identifications discussed in  Sec.~\ref{sec:tqft-basics}. We will further elaborate on this comment below, e.g., after \eqref{ruspq}.
 
Let us briefly restate what we learned above about the structure of the TQFT that describes fractional quantum Hall states.  The flux-threading argument identifies a quasiparticle whose
corresponding anyon type, the vison $v$, has
\begin{equation}
\label{vison-h-Q}
{\big(h(v),Q(v)\big)\sim\left( \frac12\sigma_H, \sigma_H\right)\,,}
\end{equation}
modulo the identifications discussed in  Sec.~\ref{sec:tqft-basics}. The vison further has Abelian braiding with all the other anyons given by \eqref{visonbraid}. In turn, this implies \cite{barkeshli_symmetry_2019} that $v$ itself is an Abelian anyon. Thus, we reach the important conclusion that any TQFT that describes a fractional quantum Hall state must have a non-empty set of Abelian anyons $\cal A$ that includes the vison $v$ (and its powers obtained by fusion). 

Using the modern language of one-form symmetries, reviewed in Sec.~\ref{sec:tqft-basics}, we can restate what we learned about the TQFT of fractional quantum Hall states. A fractional $\sigma_H$ implies that the 
TQFT has a one-form symmetry with a particular anomaly structure and a particular coupling to the background electromagnetic gauge field. In Sec.~\ref{from response to}, we provide more detail when we discuss the response theory. 

Let us briefly comment on the FQHE for bosonic systems, where in the microscopic system the $\U$ charge is carried by bosons.  Though not yet observed in experiment, it is interesting to consider such bosonic systems. The flux-threading arguments above lead to the same conclusion as before. Any bosonic system with a fractional $\sigma_H$ will necessarily have a non-empty set of fractionally charged Abelian anyons. Note that for bosonic systems, `integer' QHE states with no anyon excitations must have even integer $\sigma_H$ \cite{lu_theory_2012,senthil2013integer}. This is because with odd $\sigma_H$, the vison excitation nucleated by flux threading has topological spin $ h(v) =\sigma_H/2 = 1/2 \pmod{1}$ and hence is a fermion, which cannot be created locally in a bosonic system. Thus, an odd integer $\sigma_H$ also implies a TQFT with non-trivial one-form symmetry in the bosonic QHE, and so we will consider states with the same $\sigma_H$ mod 2 for bosonic theories. 

The flux-threading argument in this section can be easily generalized to the multi-layer case. The $\U$ symmetry of the $a$th layer leads to a vison quasiparticle denoted as $v_a$. Its charge under the $\U$ symmetry of the $b$th layer is given by
\begin{equation}
    Q_b(v_a)=\sigma^{ab}_H.
\end{equation}
This implies that the braiding phase between $v_a$ and $v_b$ is given by
\begin{equation}
    B(v_a,v_b)=\sigma^{ab}_H\mod 1.
\end{equation}
We can also find the spin of $v_a$,
\begin{equation}
    h(v_a)=\frac12\sigma^{aa}_H.
\end{equation}
However, the visons in each layer are not necessarily distinct from each other. For example, in the $\nu=2/3$ bilayer, the Halperin (112) state has $v_1=v_2$.

\section{TQFT Preliminaries}
\label{sec:tqft-basics}

As discussed in the prior section, the quantization of the Hall conductivity and the concomitant vanishing longitudinal conductivity imply that the current-carrying excitations are gapped. We will further assume that there is a gap to all excitations, including electrically neutral ones. Then the low-energy physics can be described in terms of a  continuum topological quantum field theory (TQFT). The TQFT is further enriched by a global $\U$ symmetry corresponding to the conservation of electric charge~\cite{barkeshli_symmetry_2019}.    

\subsection{Introduction to TQFTs}

The TQFT supports a number of distinct types of gapped anyon excitations (defects).
Let us quickly review the universal data of anyons that will be frequently used in our analysis in this work. For a more in-depth review, see e.g., \cite{kitaev_anyons_2006}. An anyon type $a$ represents the equivalence class of quasiparticle excitations up to attachment of local excitations.  A special type, denoted by the identity $1$, represents local bosonic excitations. For fermionic systems, we will use $c$ to denote local fermions. In many cases, we will consider $c$ to be a nontrivial anyon, which is often useful for bookkeeping purposes.  Occasionally, it will prove useful to identify $c$ with the vacuum, though we will always be careful to state when we do this.
 
A basic physical operation is the fusion of two anyons. We denote the fusion of anyons $a$ and $b$ as $a\times b$.  In general, $a\times b$ is a linear combination of anyons.  If $a\times b$ is a single anyon, for all $b$, we refer to $a$ as an Abelian anyon.   It is clear that the set of all Abelian anyons $\cal A$ forms an Abelian group whose multiplication is given by fusion.  The $n$th power of an Abelian anyon $a^n$ is then defined in the obvious way.  

Each anyon type $a$ is associated with a topological spin $h(a)$. The exchange statistics of $a$ is given by $e^{2\pi i h(a)}$, so $h(a)$ is defined modulo one. In addition, there is also a phase produced by a full braid between $a$ and $b$, i.e., moving $a$ counterclockwise around $b$. In all cases of interest to us, either $a$ or $b$ is Abelian. Then the amplitude is a phase factor denoted by $e^{2\pi iB(a,b)}$.\footnote{The terminology in the original discussions of this topic, in the context of RCFT \cite{Moore:1988qv,Moore:1989vd}, followed the braid group terminology where the term braiding refers to half a twist.  This operation is commonly referred to as $R$ in the more modern TQFT discussions.} Further,
\begin{equation} \label{eqn:braid_defn}
     B(a,b) = h(a\times b) - h(a) - h(b),
\end{equation}
so the topological spin of all anyons determines the braiding with Abelian anyons. We require that $h$ is such that $B$ is linear when one of its entries is restricted to Abelian anyons, i.e., that $B(a\times d,b) = B(a,b)+B(d,b)$ for $a,d\in {\cal A}$. Additionally we require that $h$ is homogeneous \cite{Belov:2005ze, Stirling:2008bq},   i.e., if $a$ is an Abelian anyon then 
\begin{equation}
     h(a^n) = n^2h(a).
\end{equation} 
Note that these two relations mean $B(a,a)=2h(a)$, as it should. Now $h$ defines a quadratic form over $\cA$, with $B$ being the associated symmetric bilinear form. 

Let us return to the full set of anyons, not necessarily in $\cal A$.  In a bosonic system, we require that $B$ be non-degenerate, i.e., the only anyon that braids trivially with all others is the identity $1$. (This requirement leads to the fact that the theory is modular; hence, the M in MTC.) A fermionic theory is not modular.  It includes a unique anyon $c$, besides $1$, that braids trivially with all the other anyons.  It is the physical fermion, which satisfies $c^2=1$ and $h(c)=\frac12$. Often we identify $c$ with $1$, since both are local particles.

Let us demonstrate the general discussion with the special case of Abelian theories.  They are 
described by a $K$-matrix Chern-Simons Lagrangian\footnote{There is a procedure for mapping the algebraic data of the TQFT to a $K$ matrix, which is given in Appendix~C of \cite{ma_fractonic_2022}. We will not review it here, except to mention that it is possible to go back and forth.} 
\begin{equation} \label{eqn:K_mat}
    \begin{split}
    &\mathcal{L} = {1\over 4\pi}\sum_{ij}K^{ij} b_i\wedge db_j\\
    &K^{ij}\in \bb{Z}\,,
    \end{split}
\end{equation}
where the matrix $K$ is symmetric and invertible.\footnote{We hope that we do not confuse the reader by denoting some anyons by $b$ and also using $b$ to denote some dynamical $\U$ gauge fields.} The restriction that $K^{ij}$ are integers guarantees that the theory is consistent as a spin theory.  For a bosonic theory, we also need the restriction
\begin{equation}
    K^{ii}\in 2\bb{Z} \qquad,\qquad {\rm bosonic\ theories}\,.
\end{equation}

The anyons in these theories are $a_\xi = e^{i \sum_i \xi^i\int b_i }$, with $\xi \in \mathbb{Z}^n$ a vector of integers. Fusion of anyons is given by $a_\xi \times a_\eta = a_{\xi + \eta}$. The anyons have topological spin given by
\begin{equation} \label{eqn:spin_Abelian}
    h(a_\xi) =\left({1\over 2}\sum_{ij} \xi^i (K^{-1})_{ij} \xi^j\right) \mod 1\,.
\end{equation}
From  \eqref{eqn:braid_defn} we see that the braiding is given by
\begin{equation} \label{eqn:braid_Abelian}
B(a_\xi, a_\eta) = h(a_{\xi+\eta}) -h(a_\xi)-h(a_\eta)= \sum_{ij}\xi^i (K^{-1})_{ij} \eta^j \mod{1}\,.
\end{equation}

Before discussing the coupling of a general TQFT to a global $\U$ symmetry, let us demonstrate it in this simple Abelian theory.  The $\U$ symmetry is coupled to a background gauge field $A$ through the Chern-Simons Lagrangian (see, e.g., \cite{wen1992classification} and \cite{Frohlich:1995qm}) 
\begin{equation}\label{Abelian CS T}\begin{split}
    &{1\over 4\pi}\sum_{ij} K^{ij} b_i \wedge db_j -{1\over 2\pi}\sum_i t^i A \wedge db_i\\
    &t^i\in \bb{Z}\,.
    \end{split}
\end{equation}
In this case each anyon type $a_\xi$ carries electric charge $Q(a_\xi) = B(a_\xi, a_t)$.

Let us turn to the more general case.  The topological spin $h(a)$ and the $\U$ charge $Q(a)$ are subject to identifications because of attachment of local excitations. We consider three cases.
\begin{enumerate}
\item  If the microscopic theory is made out of bosons with integer $\U$ charges, the TQFT is bosonic and 
\begin{equation}
(h(a),Q(a))\sim (h(a)+1, Q(a))\sim (h(a),Q(a)+1)\,.
\end{equation}
Physically, the ambiguity in the charge arises from attaching a transparent boson with integer charge. 

The different ways of assigning $\U$ charges in a bosonic TQFT were classified in  \cite{barkeshli_symmetry_2019}. There, it was found that charges modulo one are equivalent to braiding with a special Abelian anyon, and thus the different charge assignments correspond to choices of this Abelian anyon. This anyon is precisely the vison we saw in the previous section.

In the simple case of an Abelian theory \eqref{Abelian CS T}, the fusion of Abelian anyons forms an Abelian group  ${\cal A} = \bb{Z}^n/K\bb{Z}^n$.

\item In a spin theory where $c$ is identified with the vacuum,\footnote{A spin TQFT is a TQFT that is well defined only on spin manifolds. And then, it depends on the choice of spin structure on that manifold \cite{Dijkgraaf:1989pz,Gaiotto:2015zta,Bhardwaj:2016clt}. Such a system does not have to be enriched by any global symmetry.}
\begin{equation}
\left(h(a),Q(a)\right)\sim \left(h(a)+\frac{1}{2}, Q(a)\right)\sim \left(h(a),Q(a)+1\right)\,.
\end{equation}
We prove in Appendix~\ref{app:spinc-classification} that the different ways of assigning $\U$ charge in this theory are essentially identical to the bosonic theory. They also correspond to different choices of Abelian vison, modulo fusion with $c$.

Going to the Abelian theory \eqref{Abelian CS T}, 
the group of Abelian anyons is ${\cal A} = \bb{Z}^n/K\bb{Z}^n$.

\item If the system is fermionic and all the local fermions/bosons carry odd/even charges, then the theory is referred to as a spin$^{c}$ TQFT, and this condition is often referred to as the spin/charge  relation. Such a theory describes the electronic systems that arise naturally in condensed matter physics. 
This is the setting for all known observations of the FQHE, so it will be the bulk of our focus in this paper.

If $c$ is not identified with the vacuum, this means that $Q(a)$ is now defined modulo two since odd-charged excitations cannot be identified with the vacuum. Thus
\begin{equation}
\left(h(a), Q(a)\right) \sim \left(h(a)+1, Q(a)\right)\sim\left(h(a), Q(a)+2\right)\,.
\end{equation}
If $c$ is identified with the vacuum, then there is an additional equivalence relation which relates the tuple $(h(a), Q(a))$ to $(h(a\times c), Q(a\times c))$,
\begin{equation}\label{eq:fermi_id}
\left(h(a), Q(a)\right) \sim \left(h(a) + \frac{1}{2}, Q(a)+1\right).
\end{equation}

This identification is often complicated to work with because it means that we can no longer think of $h$ and $Q$ as independent. Instead it is helpful to define another function
\begin{equation} \label{eqn:qdef}
\hat{h}(a) = h(a) + \frac{1}{2}Q(a)\sim \hat{h}(a)+1\,.
\end{equation}
This identification is valid even when $c$ is not identified with the vacuum. We can choose to work instead with the tuple $(\hat{h},Q)$.  Now, if we identify $c$ with the vacuum,  $(\hat{h},Q) \sim (\hat{h}+1,Q)\sim (\hat{h},Q+1)$.

This $\hat{h}$ is not just a convenient tool, but plays an important role in the physics of spin$^c$ theories.  There are several ways to think about  $\hat h(a)$. Geometrically, we have twisted the theory by relating the spin connection on space to the $\U$ gauge field $A$ such that, after this twist, the theory is effectively bosonic. The function $\hat h(a)$ of \eqref{eqn:qdef} is then the topological spin after the twist.  Alternatively, in terms of the corresponding RCFT, the twisting by $A$ corresponds to a spectral flow transformation. Finally, in the 2+1d TQFT we can introduce a $\pi$ flux of $\U$, $\bm{\gamma}$. Note that this is not an anyon in the TQFT. Then $\hat h(a)$ can be written as $h(\bm{\gamma}\times a)-h(\bm{\gamma})$.

Just as $\hat{h}$, the $\pi$-flux defect is not just a convenient tool, but plays an important role in the physics. Classifying the different ways of assigning $\U$ charges in a spin$^c$ TQFT can be done by extending the theory to a bosonic TQFT. Here, the $\pi$-flux line appears not as a defect line, but as an actual anyon in the theory. Once the theory has been extended to a bosonic TQFT, the techniques of  \cite{barkeshli_symmetry_2019} can be employed to classify charge assignments. We give much more detail in Appendix~\ref{app:spinc-classification}.

Now consider the Abelian theory \eqref{Abelian CS T}. Here a spin$^c$ theory is a fermionic theory where the requirement $K^{ii}\in \bb{Z}$ is strengthened to
\begin{equation}\label{tplusK}
 K^{ii}+ t^i \in 2\bb{Z}\,.
\end{equation}
With these restrictions, along with the earlier restriction that $K^{ij},t^i\in \bb{Z}$, the group of Abelian anyons is ${\cal A} = \bb{Z}^n/K\bb{Z}^n$.
\end{enumerate}

Note that $B(a,b)$, which is defined modulo 1, is not affected by these identifications.

\subsection{Global symmetries, anomalies, and gauging}
\label{sec:symm_anom_gauge}

We now rephrase the discussion using the language of higher-form symmetries \cite{Gaiotto:2014kfa}, within the framework of TQFT.  The advantage of doing so is that the approach streamlines the matching of symmetries and anomalies between UV and IR theories, and the reasoning can be easily extended to nontopological theories with such symmetries.

We again take the IR limit to be described by a TQFT. A 2+1d TQFT has no local, point-like observables.\footnote{Here, we follow the condensed matter definition of a TQFT.  Field theorists and mathematicians allow TQFTs to have point-like observables.} Instead, the only observables are line operators/defects.  Following the standard terminology, a line stretched in time is referred to as a defect in the TQFT.  It represents the worldline of an anyon. A line stretched in space at a fixed time is an operator in the TQFT.  We can think of it as creating an anyon-anti-anyon pair, dragging one of them around a closed curve, and then annihilating them.

Among these lines, those that correspond to Abelian anyons generate the one-form symmetry group of the TQFT \cite{Gaiotto:2014kfa}.\footnote{A TQFT may also have nontrivial zero-form symmetries that permute the lines \cite{barkeshli_symmetry_2019}, e.g., charge conjugation.}  As before, we denote this set of Abelian lines as $\cal A$.  Further, if the anyons have non-trivial self-statistics, this one-form symmetry is itself anomalous \cite{Gaiotto:2014kfa,Hsin:2018vcg}. 

Let us demonstrate this in the simple case of $\U_s$.  This is a special case of \eqref{eqn:K_mat} with a single $\U$ gauge field $b$ with the action
\begin{equation}\label{Uonelss}
{s\over 4\pi} \int_\tmfd b \wedge db\,.
\end{equation}
For simplicity, we focus on even $s$, in which case the theory is bosonic. We can also couple it to a background $\U$ gauge field $A$ and write it as
\begin{equation}\label{bAcoupl}
{s\over 4\pi} \int_\tmfd b \wedge db-{1\over 2\pi}\int_\tmfd A\wedge db\,.
\end{equation}
(For simplicity, we set $t=1$.) 

The action \eqref{Uonelss} is invariant under the symmetry transformation
\begin{equation}\label{onefomrac}
b\to b+ \lambda\,,
\end{equation}
with $\lambda$ a flat $\U$ gauge field, i.e., $d\lambda=0$, with $e^{is\oint \lambda}=1$ around every one-cycle.  Such $\lambda$ is a $\mathbb{Z}_s$ gauge field and the global symmetry transformation \eqref{onefomrac} is a $\mathbb{Z}_s^{(1)}$ one-form global symmetry.\footnote{Throughout this paper, we use the convention that $\mathbb{Z}_s^{(p)}$ denotes a $\mathbb{Z}_s$ $p$-form symmetry.}

Given a global symmetry, it is common to couple it to a background gauge field.  In this case, the background gauge field $\cal B$ is a $\mathbb{Z}_s$ two-form field.  This means that for every two-cycle $\cal C$ in our spacetime,  $\int_{\cal C} {\cal B}$ is an integer modulo $s$.  The effect of nonzero $\cal B$ is to twist the dynamical gauge field $b$.  Instead of having $\int_{\cal C} {db\over 2\pi} \in \mathbb{Z}$, we have
\begin{equation}\label{bBqua}
\int_{\cal C} {db\over 2\pi} \mod 1 = {1\over s} \int_{\cal C} {\cal B}\mod 1\,.
\end{equation}
In other words, for a nonzero background field $\cal B$, the flux of $b$ has a fractional value.

A standard way to define the action based on \eqref{Uonelss} is to view our spacetime as the boundary of an auxiliary four manifold $\fmfd$ and to write \eqref{Uonelss} as ${s\over 4\pi} \int_\fmfd db\wedge db$ \cite{Witten:1988hf}. Indeed, we will frequently want to take advantage of this useful tool of viewing our 2+1d spacetime as the boundary, $\tmfd=\partial \fmfd$, of a 3+1d bulk, $\fmfd$. We emphasize that the 3+1d manifold $\fmfd$ is not physical. It is merely a mathematical device.
 
Regardless of our choice of $\fmfd$, with nonzero $\cal B$ the definition of the action depends on $b$ in the bulk $\fmfd$. Importantly, this is not as bad as it could have been because the dependence on $b$ in the bulk depends only on $\cal B$.  This means that we can extend $b$ to the bulk in an arbitrary way, as long as we keep \eqref{bBqua}.  Then, the dependence on the bulk is only through $\cal B$ and is given by a quadratic expression in $\cal B$, which we will discuss below.  The significance of this statement is that even for nonzero $\cal B$, we can still view $b$ as a 2+1d field and the theory remains 2+1-dimensional.

What we have seen here is that the theory has a global symmetry acting as \eqref{onefomrac} and when we couple it to background gauge field $\cal B$, we need to specify additional information, which is captured by an extension of $\cal B$ to one dimension higher.  This is the standard setting of an 't Hooft anomaly.  We conclude that the one-form $\mathbb{Z}_s^{(1)}$ global symmetry has an anomaly, which is described by \cite{Kapustin:2014gua,Gaiotto:2014kfa,Benini:2018reh,Hsin:2018vcg}
\begin{equation}
    \frac{2\pi }{2s} \int_{\fmfd} {\cal P}({\cal B})\,.
    \label{PB}
\end{equation}
Here, ${\cal P}({\cal B})$ is the Pontryagin square of $\cal B$.\footnote{The Pontryagin square ${\cal P}({\cal B})$ is a quadratic operation on cohomology \cite{whitehead1949on}.  (For reviews for physicists, see e.g., Appendix~A in \cite{Kapustin:2013qsa} and Appendix~C in \cite{Benini:2018reh}.)  In our case it is a map $H^2(M,\bZ_s)\to H^4(M,\bZ_{2s})$ for even $s$.  As such, for ${\cal B}_{1,2} \in H^2(M, \bZ_s)$
\begin{equation}
    {\cal P}({\cal B}_1 + {\cal B}_2) = {\cal P}({\cal B}_1) + {\cal P}({\cal B}_2) + 2 {\cal B}_1 \cup {\cal B}_2
\end{equation}
and therefore 
\begin{equation}
    {\cal P}(n {\cal B}) = n^2 {\cal P}({\cal B})\,.
\end{equation}
For odd $s$, we can take it to be simply the cup product $ {\cal P}({\cal B})={\cal B}\cup {\cal B}\mod s$. The novelty in the Pontryagin square is that for even $s$, it is well defined modulo $2s$.  This generalizes the obvious fact that for ordinary numbers, if $x$ is defined modulo $s$, then $x^2$ is defined modulo $s$ for odd $s$ and modulo $2s$ for even $s$.}

This discussion of $\cal B$ is related to the coupling to a background $\U$ gauge field $A$ as in \eqref{bAcoupl}.  To see that, note that in the absence of sources, \eqref{bAcoupl} leads to the equation of motion of $b$ 
\begin{equation}\label{sdbdA}
sdb=dA\,
\end{equation}
and hence, for every closed two-cycle $\cal C$
\begin{equation}
\int_{\cal C} {db\over 2\pi} = {1\over s}\int_{\cal C} {dA\over 2\pi} \,.
\end{equation}
Hence, we identify $\int {dA\over 2\pi}=\int {\cal B}$, where $\cal B$ is the background field of the $\mathbb{Z}_s^{(1)}$ one-form global symmetry.  For values of $\cal B$ satisfying this relation, the anomaly \eqref{PB} is
\begin{equation}\label{dAdAt}
{1\over 4\pi s}\int_\fmfd dA\wedge dA\,.
\end{equation}
We should emphasize, though, that while \eqref{PB} is a nontrivial anomaly theory, this is not the case with \eqref{dAdAt}.  The latter is a well-defined expression in a manifold with a boundary.

Let us try to gauge the one-form symmetry.  The anomaly prevents us from doing it.  However, if there are $m\in 2\bZ$ and $n\in \bZ$ such that $s=mn^2$, then, we can gauge $\mathbb{Z}_n^{(1)}\subset \mathbb{Z}_s^{(1)}$, because the anomaly for this subgroup vanishes.  Denote the gauge field for this subgroup as $\frak b$, in terms of which ${\cal B}={s\over n}\frak b$.\footnote{We use uppercase characters for classical background fields and lowercase characters for dynamical fields.}  Clearly, $\frak b$ is a $\mathbb{Z}_n$ gauge field.  Then, for fixed $\frak b$, \eqref{bBqua} becomes 
\begin{equation}\label{bBquat}
\int_{\cal C} {db\over 2\pi} \mod 1 = {1\over n} \int_{\cal C} \frak b\mod 1\,.
\end{equation}
This means that the fractional part of the flux of $b$ is a multiple of $1\over n$.  Then, when $\frak b$ is a dynamical field and we sum over its values, it means that the dynamical field $b$ can be replaced with a dynamical $\U$ gauge field $\hat{b}=nb$ with standard fluxes.  This turns the Lagrangian \eqref{Uonelss} into
\begin{equation}\label{Uonels}
{m\over 4\pi}\int_{\tmfd} \hat{b} \wedge d\hat{b}\,.
\end{equation} 
It is a bosonic $\U_m$ Chern-Simons theory, and it has a $\bZ_m^{(1)}$ one-form symmetry.

For example, the $\U_8$ TQFT has a $\mathbb{Z}_8^{(1)}$ one-form global symmetry.  A $\mathbb{Z}_2^{(1)}$ subgroup of it is anomaly free and can be gauged.  In the notation above, this corresponds to $s=8$ with $m=n=2$.  This gauging leads to the $\U_2$ TQFT.

We now generalize the gauging procedure and continue to focus on bosonic theories. A subset of the anyons in the TQFT $\cal T$ is the Abelian anyons ${\cal A}\subset {\cal T}$. They are the symmetry lines of the one-form global symmetry of $\cal T$.  We consider an anomaly free subset of them ${\cal A}_0\subset {\cal A}$. This means that for all $x,y \in {\cal A}_0$,
\begin{equation}
    h(x)=0\mod 1\qquad,\qquad B(x,y)=0\mod 1\,.
\end{equation}

The anyons in ${\cal A}_0$ generate the one-form symmetry group $\otimes_I \mathbb{Z}_{n_I}^{(1)}$, which is a subgroup of the total one-form symmetry generated by $\cal A$.

Gauging ${\cal A}_0$ leads to another TQFT denoted by ${\cal T}/{\cal A}_0$. To find it, we couple $\cal T$ to background gauge fields for ${\cal A}_0$.  If the symmetry associated with ${\cal A}_0$ is $\otimes_I \mathbb{Z}_{n_I}^{(1)}$, we have background fields ${\frak b}_I$.  Denote the partition function on 2+1d spacetime manifold $\tmfd$ with background field ${\frak b}_I$ as ${\cal Z}_{\cal T}[\tmfd, {\frak b}_I]$. Then, the gauging corresponds to  making ${\frak b}_I$ dynamical by summing over all possible configurations of ${\frak b}_I$,
 \begin{equation}\label{ZsumB}
 \begin{aligned}
     &{\cal Z}_{{\cal T}/{\cal A}_0}[\tmfd]={1\over {\cal G}}\sum_{{\frak b}_I}{\cal Z}_{\cal T}[\tmfd, {\frak b}_I]\\
     &{\cal G} =\prod_I{|H^1(\tmfd,\bZ_{n_I})|\over |H^0(\tmfd,\bZ_{n_I})|}\,,
    \end{aligned}
 \end{equation}
where $\cal G$ is the volume of the gauge group (see, e.g., Appendix~B in \cite{Gaiotto:2014kfa}).
Since ${\cal A}_0$ is non-anomalous (meaning that the theory coupled to ${\frak b}_I$ is gauge invariant), the procedure is well defined and leads to a new TQFT ${\cal T}/{\cal A}_0$, as illustrated above by the example of the $\U_{mn^2}$ theory. 

This gauging procedure can also be understood using the ``Poincar\'e dual" perspective. Instead of thinking about gauge fields, one can equivalently think in terms of the topological defects.  In this case, the defects in spacetime are nothing but the line operators of ${\cal A}_0$.  Coupling to the background field means inserting defect lines in the partition function. For example, in \eqref{bBqua} if $\int_{\cal C}{\cal B}=k$ is non-zero mod $s$, then in the Poincar\'e dual picture a Wilson line of charge $k$ of the Chern-Simons theory intersects the surface ${\cal C}$.  The gauged theory is then obtained by summing over all possible insertions of ${\cal A}_0$ lines in the theory. 

Now, suppose $\tmfd$ takes the form of $\Sigma\times S^1$ where $\Sigma$ is a closed surface, and $S^1$ is the Euclidean time circle. For simplicity, let us take $\Sigma=S^2$ so $\cal T$ has a unique state. Quantizing the theory, one obtains the state of ${\cal T}/{\cal A}_0$ as a coherent superposition of all possible insertion of (point-like) defects in ${\cal A}_0$ on the state of ${\cal T}$,\footnote{If the TQFT is embedded in a nontopological UV  theory, these defects can be the IR avatars of massive particles, or anyons.} i.e., a proliferation of ${\cal A}_0$ defects.

This picture is also convenient for understanding the effect of gauging algebraically. The spectrum of lines in $\cT/\cA_0$ can be obtained through the following three-step procedure:\footnote{\label{gaugingvscon}This procedure was first discussed in the context of RCFT as an extension of the chiral algebra \cite{Moore:1988ss}.  It was later related to a quotient of the gauge group of the corresponding Chern-Simons gauge theory \cite{Moore:1989yh}.  [In the example leading to \eqref{Uonels}, we used a $\mathbb{Z}_n$ quotient of $\U$.]   From a more modern perspective, this is gauging a one-form symmetry \cite{Gaiotto:2014kfa}. The extension of the chiral algebra also leads to examples [e.g., ${\rm SU}(2)_{10}\to {\rm Spin}(5)_1$] that can be thought of as gauging a non-invertible symmetry.  The term condensation for this procedure was introduced in \cite{Bais:2008ni} and is used often.  However, it has led to some confusion. This issue will be summarized in \cite{gaugingcondensation}, where the similarities and the distinctions between gauging and condensation will be clarified.}
\begin{enumerate}
\item First, all the lines charged under ${\cal A}_0$ are projected out. Only the lines with trivial braiding with ${\cal A}_0$ survive. This is expected from the general definition of gauging, which projects on gauge-invariant operators. We can also see it from the defect line proliferation picture.  Consider inserting a line $x\in {\cal T}$ with nontrivial braiding with ${\cal A}_0$.  Then, by summing over all possible ${\cal A}_0$ loops linked with the $x$ line, the partition function vanishes.
\item Next, we split the lines braiding trivially with ${\cal A}_0$ into orbits under fusing with $\cA$. Since all possible insertions of ${\cal A}_0$ are summed over, lines in the same orbit must be identified. 
\item There is an additional subtlety when a line $x$ is invariant when fusing with nontrivial elements of ${\cal A}_0$. Denote the stabilizer subgroup $\cA_x$ as $\{a\in {\cal A}_0\,|\, a\times x=x\}$. Then in ${\cal T}/{\cal A}_0$, there are multiple descendants of the line $x$, which are associated with lines $a\in\cA_x$ ending on the lines of the anyon $x$.  All of these lines have the same topological spin as $x$. The computation of other observables, such as the S matrix, is more involved. More details can be found in, e.g., \cite{Eliens:2013epa}.
\end{enumerate}

The theory ${\cal T}/{\cal A}_0$ has a dual $\otimes_I \mathbb{Z}_{n_I}^{(0)}$ zero-form symmetry whose ``charges'' are given by integrals over two-cycles $\cal C$, $\int_{\cal C}{\frak b}_I \mod n_I$. In some contexts (e.g., in the study of orbifolds), this dual symmetry is referred to as a ``quantum symmetry.''  These charges can be coupled to background one-form gauge fields $\hat A_I$ by generalizing  \eqref{ZsumB} to   \begin{equation}
    {\cal Z}_{{\cal T}/\cA_0}[\tmfd,\hat{A}_I]={1\over {\cal G}}\sum_{{\frak b}_I} {\cal Z}_{\cal T}[\tmfd,{\frak b}_I]\exp \left(2\pi i\sum_I {1\over n_I}\int_{\tmfd} \hat{A}_I\cup {\frak b}_I\right).
    \label{gauged-theory-with-dualA}
\end{equation}

It is straightforward to check, using   \eqref{gauged-theory-with-dualA}, that the procedure can be inverted: starting with ${\cal T}/\cA_0$, and then gauging the dual zero-form symmetry, i.e., summing over $\hat{A}$, leads to the original theory. We will use this procedure in  Sec.~\ref{sec:IR}.\footnote{More precisely, there is an ambiguity in the gauging of the dual zero-form symmetry, associated with adding a counterterm that depends on $\hat A_I$ before the gauging.  In this case, this counterterm is a Dijkgraaf-Witten term.} 
    
It is worth emphasizing that the prescription does not rely on the theory $\cal T$ being topological. The same procedure can be applied to a non-topological theory with non-anomalous one-form symmetry.

Since gauging corresponds to a quotient, we will adopt the notation $\cT/\cA_0$ throughout the paper.  Most commonly, we will denote $\cA_0$ by its generating anyons. In some cases, $\cT$ has a Chern-Simons gauge theory description, in which case we will write in the denominator the corresponding subgroup (in the center of the gauge group). Whenever possible, we will give both expressions.

Let us now demonstrate the three-step procedure through a couple of examples.

First, we revisit the example of gauging a $\bZ_2^{(1)}$ one-form symmetry in $\U_8$, analyzed above. Denote the Wilson lines in $\U_8$ by $a_j$, where $j$ is a mod 8 integer, and $a_0$ is the identity $1$. In this case ${\cal A}_0=\{1,a_4\}$. Algebraically, in step 1, the gauging removes all $a_j$ with odd $j$. In the second step, we identify $a_2\sim a_6, a_0\sim a_4$, after which there are only two anyon types $1$ and $a_2$. Since the theory is Abelian, there is no need for the last step. Using the notation introduced above, this example can be summarized as
\begin{equation}
    \frac{\U_8}{(a_4)}=\frac{\U_8}{\bZ_2}=\U_2\,.
\end{equation}
Next, we consider a non-Abelian example: gauging the one-form symmetry generated by the isospin-2 line in ${\rm SU}(2)_4$. It leads to ${{\rm SU}(2)_4\over \mathbb{Z}_2}={\rm SO}(3)_4$.   Denote the isospin-$j$ anyon by $\Phi_j$ ($\Phi_0$ is the identity). The gauging removes all half-integer isospin anyons, and we are left with $\Phi_0$ (identified with $\Phi_2$) and $\Phi_1$. Since $\Phi_2\times \Phi_1=\Phi_1$, $\Phi_1$ splits into two Abelian anyons, both with $h=1/3$. This case is atypical because the resulting theory SO(3)$_4$ has another simple Chern-Simons presentation as ${\rm SU}(3)_1$ \cite{Moore:1988ss,Moore:1989yh}. The inverse procedure is to gauge the $\bZ_2^{(0)}$ zero-form charge-conjugation symmetry in ${\rm SU}(3)_1$ to recover SU(2)$_4$.

\subsection{TQFTs vs gapped phases}\label{sec:tqft-gappedphase}

In this work, the main question we set out to address is how the Hall conductivity $\sigma_H$ constrains the anyon content of the topological phase. In this case, the answer depends only on $\sigma_H$ modulo the values of invertible theories, as adding invertible theories does not change the anyons, even though it changes the phase. For example, if the system is spin{}$^c$, the answer should depend only on $\sigma_H$ mod 1, since shifting $\sigma_H$ by an integer corresponds to stacking integer quantum Hall states.  

Let us elaborate on this point in the field theory context.
In general, a quantum field theory depends on various  parameters:
\begin{itemize}
    \item The traditional parameters affect correlations of local point operators at separated points. Examples include fermion masses, $\theta$ parameters, Yukawa coupling constants, etc.
    \item More subtle parameters do not affect correlation functions of point operators, but they affect correlation functions of extended operators at separated locations.  The level $k$ in Chern-Simons theory is a well-known example of this type, because this theory does not have point operators. Additional examples include certain discrete theta parameters in 3+1d gauge theory \cite{Aharony:2013hda}.   
    \item Other parameters are choices of contact terms.  These affect correlation functions only when operators touch each other. (Some of these contact terms can be changed by a redefinition of the operators.)  One way such coupling constants arise is when we couple the theory to background fields, and then these terms are nonlinear expressions in the background fields.
    \end{itemize}

The parameters in an added invertible theory, such as $\frac{k}{4\pi}AdA$ with $k\in \bZ$, are of the last kind.  To see that, note that $\frac{k}{4\pi}AdA$ affects the two-point function of the current only at coincident points. (See the related discussion in \cite{Closset:2012vp}.) Such parameters can still have important physical consequences as they may lead to different boundary states.

By the definition adopted in this paper, a TQFT does not have any nontrivial local point operators. So all the parameters are of the last two types. Furthermore, theories that differ only by their contact terms have the same correlation functions of extended operators as long as they do not touch each other.  Indeed, such theories have the same modular tensor category data.  Therefore, for the purpose of this paper, we will identify them.

However, if the TQFTs are effective low-energy descriptions of gapped states, adding nontrivial invertible theories can change the phase of matter and lead to experimentally observable effects, such as electric or thermal Hall conductivity and gapless edge modes. These effects originate from short-distance physics, and as such, they affect only contact terms. (From this perspective, boundary states are like contact terms.)

\section{From a response theory to a one-form symmetry}\label{from response to}

In this section, we will provide another perspective on the conclusions of Sec.~\ref{flux threading}.  We will discuss the response theory of the FQHE and will relate the flux threading argument in Sec.~\ref{flux threading} to the fact that the response theory is not globally well-defined.  This will allow us to constrain the TQFT describing the system.

The response of the quantum Hall state to external electric and magnetic fields given by Eqs.~\eqref{qhresp}--\eqref{streda} can be compactly summarized by writing an effective theory for a background electromagnetic gauge field $A$,
\begin{equation}\label{AdAresp}
{\sigma_H\over 4\pi } \int_\tmfd  A\wedge dA\,.
\end{equation}
That this is the correct response theory can be seen by taking derivatives with respect to the spatial components $A_i$ to reproduce Eqs.~\eqref{qhresp} and \eqref{Halldefn}, or with respect to the time component $A_0$ to reproduce  \eqref{streda}. 

The theory given by  \eqref{AdAresp} represents the response of the quantum Hall system to a long-wavelength, low-frequency background gauge field $A$ with small magnitude. It should be viewed as the leading term in an expansion in derivatives and in $A$. 

The action \eqref{AdAresp} is well defined for topologically trivial gauge fields.  However, for generic $\sigma_H$, it is not meaningful when $A$ carries nonzero fluxes.  Following \cite{Witten:1988hf}, we can try to define it by viewing our 2+1d (arbitrary oriented) spacetime $\tmfd$ as the boundary of a 3+1d bulk $\fmfd$, and define the action \eqref{AdAresp} as
\begin{equation}\label{fourda}
    {\sigma_H\over 4\pi }\int_\fmfd dA\wedge dA\,.
\end{equation}
Unlike \eqref{AdAresp}, this expression is meaningful even for topologically nontrivial $A$.  However, it might depend on the choice of $\fmfd$ and on the way the background gauge field $A$ is extended into $\fmfd$.  

How can a 2+1d theory on $\tmfd$ depend on an extension of $A$ to the bulk $\fmfd$?

Before we answer this question, we observe that the theory \eqref{fourda} is independent of the bulk for the following values of $\sigma_H$: for bosonic theories it must be an even integer, for fermionic/spin theories it can be any integer, and for spin$^c$ theories it can also be any integer.\footnote{More precisely, this statement is true only if we add an appropriate term depending on the metric, which represents the thermal Hall conductivity.}  These values of $\sigma_H$ correspond exactly to the allowed `integer' QHE states for bosons and fermions. Thus, it seems that the response theory of \eqref{AdAresp} is well defined only for the integer states. 

We are, however, interested in the response theory \eqref{AdAresp} with a fractional coefficient
\begin{equation}
    \sigma_H={p\over q}\qquad, \quad \gcd(p,q)=1\,.
\end{equation}
Then \eqref{AdAresp} is not a meaningful 2+1d theory on $\tmfd$.  This happened because we have integrated out the modes of a nontrivial TQFT. We refer to this TQFT as ${\cal T}$.\footnote{Here we assume that our FQH state is gapped.}  

The rest of our discussion will focus on what we can learn about ${\cal T}$ given that we know it must produce \eqref{AdAresp} when it is integrated out. For simplicity, in the discussion below, we will focus on bosonic systems, and then we will review the analogous conclusions for fermions.

We pause to note that the discussion of flux threading in  Sec.~\ref{flux threading} is a manifestation of the fact that \eqref{AdAresp} is globally ill defined.  Consider a one-parameter family of background gauge fields depending on a parameter, Lorentzian time, or Euclidean time, shifting $\oint A_\varphi d\varphi$ by $2\pi$.  This is a closed path in the space of $A$'s, but the integral of \eqref{AdAresp} is not single valued under it. Instead, it is shifted by ${\sigma_H\over 2\pi}\int d\varphi dr dt E_r$, signaling the transfer of charge $\sigma_H$ from one side of the annulus to the other.  

Our main point is that the TQFT $\cal T$ must have a global one-form symmetry.  As a first sign of such a symmetry, we note that in addition to the ordinary $\U$ gauge symmetry of $A$, for a closed manifold $\fmfd$, the action \eqref{fourda} also has a $\U$ one-form gauge symmetry  \cite{Kapustin:2014gua}
\begin{equation}\label{qoneform}
    A\to A+q\lambda\,,
\end{equation}
with $\lambda$ a properly normalized $\U$ gauge field. [Here, we assume for simplicity that $pq$ is even.  Otherwise, we should replace $q$ in \eqref{qoneform} with $2q$.] This means that the only information in $A$ is the same as the information in a gauge field for a $\mathbb{Z}_q^{(1)}$ one-form symmetry.  The corresponding $\mathbb{Z}_q^{(1)}$ gauge invariance is violated when $\fmfd$ has a boundary.  Then, depending on the boundary conditions, this $\mathbb{Z}_q^{(1)}$ can act as a global symmetry.  Even though this is not a symmetry of our problem, we will soon see that a related one-form symmetry must be present.

Motivated by this, we assume that the TQFT ${\cal T}$, has a $\mathbb{Z}_s^{(1)}$ global one-form symmetry for some $s$.  (We will discuss the most general case in Appendix~\ref{app:generator}.)  As discussed in  Sec.~\ref{sec:symm_anom_gauge}, in this case, it is natural to couple $\cal T$ to a background $\mathbb{Z}_s$-valued gauge field $\cal B$.  In general, the one-form symmetry has an 't Hooft anomaly, which can be characterized by an anomaly theory in one higher dimension.
In our case, it is
\begin{equation}\label{BB}
  {2\pi r\over 2s}\int_{\fmfd} {\cal P}({\cal B})\qquad , \qquad rs\in 2\mathbb{Z}\,,
\end{equation}
with ${\cal P}({\cal B})$ the Pontryagin square operation.  (See Sec.~\ref{sec:symm_anom_gauge}.) As above, our 2+1d spacetime $\tmfd$ is the boundary of the 3+1d manifold $\fmfd$.\footnote{Note that $\fmfd$ does not have to be physical; it is merely a mathematical device.  Having said that, we can view the bulk $\fmfd$ as an SPT phase of a one-form $\mathbb{Z}_s^{(1)}$ symmetry, classified by an integer $r \mod 2s$ with even $rs$. The corresponding boundary 2+1d theory has 't Hooft anomaly for the $\bZ_s^{(1)}$ one-form symmetry that is also labeled by the same integer $r \mod 2s$.}   We emphasize that $r$ and $s$ are not necessarily coprime.

So far, we have not yet related the expression \eqref{fourda} for $A$ and the expression \eqref{BB} for $\cal B$.  Before doing that, we remind the reader that as we said after \eqref{dAdAt}, while \eqref{fourda} is well defined even when $\fmfd$ has a boundary $\tmfd=\partial \fmfd$, this is not the case with \eqref{BB}.  The ``problem'' with \eqref{BB} associated with the boundary $\tmfd=\partial \fmfd$ is exactly canceled by the anomaly of the theory $\cal T$ on $\tmfd$.

As we mentioned around \eqref{qoneform}, the expression \eqref{fourda} has a one-form gauge symmetry, which is violated in the presence of a boundary.  As we will soon see, this phenomenon is closely related to the anomaly in $\cal B$ and explains how \eqref{fourda} is related to it.

We now want to couple the system to a background $\U$ gauge field $A$ in a standard way.  For simplicity, take $ {\cal B}=dA/2\pi$. More precisely, $\cal B$ is a  $\bZ_s$ cocycle and $dA$ is a continuous differential form, such that for every closed two-cycle $\Sigma$, we have $\int_\Sigma {\cal B}=\left({1\over 2\pi }\int_\Sigma dA\right)\text{ mod } s$. Equivalently, denote the mod $s$ reduction of $dA /2\pi$ by $\left[dA/2\pi\right]_s \in H^2(\fmfd,\mathbb{Z}_s)$.  Then, the more precise version of ${\cal B}=dA/2\pi$ is
\begin{equation}\label{BdA}
{\cal B}=\left[{dA\over 2\pi}\right]_s\,.
\end{equation}
Either way, we see that the two-form background field $\cal B$ is determined by the $\U$ background gauge field $A$. (See e.g., \cite{Benini:2018reh,Delmastro:2022pfo,Brennan:2022tyl, Transmutaion}).

Using \eqref{BdA}, the anomaly theory \eqref{BB} becomes
\begin{equation}\label{dAdA}
{r\over 4\pi s}\int_\fmfd dA\wedge dA\,.
\end{equation}
\{For $\cal B$ given by \eqref{BdA}, $\cal B$ has an integer lift and the Pontryagin square in \eqref{BB} is simply a product of differential forms \cite{Kapustin:2013qsa}.  Compare with the discussion around \eqref{PB}-\eqref{dAdAt}.\}
This expression can be interpreted as the boundary term \eqref{AdAresp}, provided
\begin{equation}\label{ruspq}
{r\over s}=\sigma_H \mod 2={p\over q}\mod 2\qquad , \qquad \gcd(p,q)=1\,.
\end{equation}
Here, the modulo 2 follows from the fact that the anomaly expression \eqref{BB} depends only on $r\mod 2s$.  Equivalently, we can add to the boundary theory a counterterm ${n\over 2\pi}\int_\tmfd A\wedge dA$ with an arbitrary integer $n$, which shifts $\sigma_H$ by $2n$. Physically, this corresponds to stacking bosonic IQH states, though we note that this will change the Hall conductivity. Nevertheless, we club together all theories with the same fractional part of $\sigma_H$ to focus on their anomalous one-form symmetry.  See the discussion in  Sec.~\ref{sec:tqft-gappedphase}.

We conclude that the response theory \eqref{AdAresp} is reproduced by a TQFT with a $\mathbb{Z}_s^{(1)}$ one-form symmetry and anomaly labeled by $r$ such that $r/s=p/q\mod 2$.\footnote{A related discussion of the fractional part of the coefficient of a Chern-Simons term as a diagnostic of a nontrivial IR theory has appeared (in the context of duality and gapless theories) in \cite{Closset:2012vp}.} In App.~\ref{app:generator} we will show that what we have done is essentially the only way to reproduce \eqref{AdAresp}.

Let us discuss the one-form symmetry of this TQFT in a different light. A physical manifestation of the anomaly is that the generator line $a_1$ of the $\bZ_s^{(1)}$ one-form symmetry transforms under it with the phase $\exp(2\pi ir/s)$ \cite{Hsin:2018vcg}. This statement is equivalent to $B(a_1,a_1)=r/s\mod 1$. Furthermore, the spin of the line is given by $h(a_1)=r/(2s)\mod 1$. Thus, in a TQFT, the $\bZ_s^{(1)}$ one-form symmetry with anomaly \eqref{BB} can be understood equivalently as a $\bZ_s$ group of Abelian anyons generated by $a_1$ with the braiding statistics given above.

In  Sec.~\ref{sec:symmetrylines}, we will discuss these anyons in detail.  For now, we will simply note that they are given by $a_j$ with spins and $\U$ charges
\begin{equation}
\begin{split}
    &a_j=a_1^j\qquad,\qquad j=0,1\cdots, s-1\\
    &h(a_j)={rj^2\over 2s}\mod 1\\
    &Q(a_j)={rj\over s}\mod 1.
\end{split}
\end{equation}
In particular,
\begin{equation}
\left(h(a_1),Q(a_1)\right)=\left({p\over 2q} , {p\over q}\right)\mod 1\,,
\label{eq:hQa1}
\end{equation}
consistent with the identifications discussed in Sec.~\ref{sec:tqft-basics}. We immediately recognize the vison in \eqref{vison-h-Q} as $v=a_1$, and $s$ is the order of the vison, i.e., the minimal positive integer such that $v^s=1$. 

While the discussions after \eqref{ruspq} are specific to bosonic theories, the conclusions can be straightforwardly adopted to other cases. For spin{}$^c$ (or electronic) theories, the response theory still leads to a $\bZ_s^{(1)}$ one-form symmetry with anomaly labeled by $r$, now with
\begin{equation}
    \frac{r}{s}=\sigma_H\mod 1\,,
\end{equation}
since one can add $\frac{1}{4\pi}\int_{\tmfd}A\wedge dA$ to the boundary theory to shift $\sigma_H$ by $1$. The vison $v=a_1$ still generates the $\bZ_s^{(1)}$ group, with
\begin{equation}
\begin{split}
    &a_j=a_1^j\qquad,\qquad j=0,1\cdots, s-1\\
    &h(a_j)={rj^2\over 2s}\mod \frac12\\
    &Q(a_j)={rj\over s}\mod 1.
\end{split}
\end{equation}

In both the bosonic and the electronic cases, one can understand $s$ through the minimal fractional charge of anyons: there must exist an anyon with charge $1/s$ mod 1 \cite{levin2009fractional,  Dehghani:2020jls, Cheng:2022nds}.

\section{Charged particles in a magnetic field}
\label{sec:UV}

The statement that a fractionally quantized Hall conductivity implies the emergence of Abelian anyon excitations with fractional charge is well known. When restated in terms of the higher-form symmetry of the low-energy theory, the conclusion of  Sec.~\ref{from response to} might look strange.  Where did the one-form symmetry of the low-energy theory come from?  An obvious possible answer is that it can be an emergent symmetry, i.e., a symmetry of the IR theory without a precursor in the UV.  Although this is certainly a possibility, in this section, we will demonstrate that in the case of particles in a uniform magnetic field, the IR one-form symmetry and its anomaly started their lives in the UV theory as ordinary space translations.

This discussion follows a long and productive tradition of studying UV systems with some special exact symmetries. (In many examples, these special UV symmetries make the model solvable.) Then, one tracks these symmetries to the IR, finding some special symmetries there.  One example is the 1+1d Ising model with its large symmetry leading to integrability.  Another example is the 2+1d toric code with its large UV symmetry, which becomes a one-form symmetry in the IR.

The study of such models with a large UV symmetry is particularly useful when the symmetry of the IR theory is robust.  This is the case when the IR theory does not have any relevant operators that violate it.  In that case, we can break the special symmetry slightly in the UV, and even though the UV system does not have that symmetry, the IR theory has that symmetry as an emergent symmetry.

In our case, we will discuss a UV system with an exact translation symmetry with an anomaly.  This translation symmetry acts in the IR as a one-form symmetry.  As in the examples above, this one-form symmetry is robust.  We can break the translation symmetry in the UV and still find that one-form symmetry in the IR.  In that case, the IR one-form symmetry is an emergent symmetry.  This emergent one-form symmetry is diagnosed by the fractional part of the Hall conductivity $\sigma_H$.

More generally, even if this UV translation symmetry is completely absent (such as in lattice systems showing a $B = 0$ FQHE, or real materials with impurities that break its exact translation symmetry), the fractionally quantized $\sigma_H$ signals the one-form symmetry of the low-energy theory.

\subsection{The UV theory}

We start by discussing the symmetry and anomalies in the UV theory, and close by showing how these are realized in the IR theory.

We consider $\cal N$ charge-one particles on a 2+1d  torus with coordinates $x\sim x+L_x$ and $y \sim y+L_y$ with a constant magnetic field $B={2\pi {\cal K}/  L_xL_y}$.  The total flux ${\cal K}$ is quantized.  The filling fraction is
\begin{equation}\label{nudef}
\nu={{\cal N}\over {\cal K}} = {p\over q}\quad , \qquad \gcd(p,q)=1\quad, \qquad {\cal N},{\cal K}\gg 1\,.
\end{equation}
The integers $\cal N$ and ${\cal K}$ are macroscopic, while the integers $p$ and $q$ are of order one.  (For simplicity, we will take $\cal N$ and ${\cal K}$ to be divisible by any number of order one that we will need.) In this section, we will impose translation symmetry and study its consequences. We will phrase these in the framework of anomalies and their matching between the UV and IR. Our discussion will be complementary to the classic analysis of many particle translation symmetries of charged particles in a Landau level in  \cite{haldane1985many}. 

Below, we will also consider the second-quantized theory, where the total number of particles is
\begin{equation}\label{NandQ}
Q={\cal N}+\delta Q
\end{equation}
with $\delta Q$ an operator whose eigenvalues are of order one.  Similarly, we will write the background gauge field as
\begin{equation}\label{Adef}
A_\mu={\cal K}A_\mu^{(0)} +\delta A_\mu   \quad, \qquad \mu=t,i  \quad, \qquad i=x,y\,,
\end{equation}
where $A^{(0)}_\mu$ is a reference gauge field with with constant magnetic field with ${\cal K}=1$. Then $\delta A_\mu$ is an order one change in the background gauge field.  In particular, we will often take it to be topologically trivial and even constant.

Importantly, the reference gauge field $A_\mu^{(0)}$ in \eqref{Adef} depends on a choice of origin in space.  To see that, consider the Wilson lines (holonomies) of $A^{(0)}_\mu$ around the $x$ and $y$ directions.  They can be taken to be
\begin{equation}\label{Holonomies0}
\begin{split}
w_x&=\exp\left( i\int dx' A_x^{(0)}(x',y)\right)=\exp\left( -2\pi i{y\over L_y}\right)\\
w_y&=\exp\left( i\int dy' A_y^{(0)}(x,y)\right)=\exp\left(2\pi i{x\over L_x}\right)\,.
\end{split}
\end{equation}
(These Wilson lines receive contributions both from the gauge field and the transition functions, so that the expression in terms of the gauge fields $A^{(0)}_i$ is incomplete.)
Alternatively, we can shift $x$ and $y$ by constants.  This has the effect of shifting $A^{(0)}_i$ by constants and multiplying the Wilson lines \eqref{Holonomies0} by phases.  We will follow the choice \eqref{Holonomies0} and will set $A^{(0)}_t=0$.  Then, a change of the origin in space can be absorbed in a constant shift of $\delta A_i$ in \eqref{Adef}.

For a constant  magnetic field $B=2\pi {\cal K}/L_xL_y$, the Wilson lines around $x$ and $y$ are
\begin{equation}\label{Holonomies}
\begin{split}
&W_x(\sigma_x,{\cal K})=e^{i\sigma_x} w_x^{\cal K}=\exp\left(i\sigma_x -2\pi i{{\cal K}y\over L_y}\right)\\
&W_y(\sigma_y,{\cal K})=e^{i\sigma_y} w_y^{\cal K}=\exp\left(i\sigma_y +2\pi i{{\cal K}x\over L_x}\right)\,.
\end{split}
\end{equation}
The parameters $\sigma_i$ correspond to $\delta A_i={\sigma_i/  L_i}$.  Changing the origin of $x$ or $y$ shifts $\sigma_y$ and $\sigma_x$ .  Hence, these constants signal the fact that the classical $\U$ translation symmetry around every direction is only $\mathbb{Z}_{\cal K}$, rather than $\U$.  They are generated by
\begin{equation}\label{coords}
\begin{split}
&x\to x+{L_x\over {\cal K}}\\
&y\to y+{L_y\over {\cal K}}\,.
\end{split}
\end{equation}
Under these translations, the holonomies remain the same.

We now write down operators that implement the $\bZ_{\cal K} \times \bZ_{\cal K} $ symmetry in order to study their algebra and determine the anomaly.\footnote{Here, $\bZ_{\cal K} $ is a translation symmetry, rather than an internal symmetry.  The superscript $(0)$, which we use to denote zero-form symmetries, signals the fact that the symmetry operators act on all of space.  We will soon see how this symmetry is transmuted into a one-form internal symmetry.}  First, let us define the many-particle version of the holonomies as follows:
\begin{equation}
\begin{split}
    \mathcal{U}_x&=\exp\left[-\frac{2\pi i}{L_y}\sum_{\alpha=1}^{{\cal N}}y_\alpha\right]\quad,
   \qquad \mathcal{W}_x(\sigma,{\cal K}) = e^{i\mathcal{N}\sigma_x}\mathcal{U}_x^{\cal K}\,,\\
   \mathcal{U}_y&=\exp\left[\frac{2\pi i}{L_x}\sum_{\alpha=1}^{{\cal N}}x_\alpha\right]\quad,
   \qquad
    \mathcal{W}_y(\sigma,{\cal K}) =e^{i\mathcal{N}\sigma_y}\mathcal{U}_y^{\cal K}\,.
\end{split}
\end{equation}
Here $x_\alpha, y_\alpha$ are the Cartesian coordinates of the $\alpha$-th particle. $\mathcal{U}_i$ is also referred to as the large gauge transformation, as it changes the holonomy by $2\pi$.

Let the (second-quantized) operators implementing \eqref{coords} be $T^i_0$.  Without a background $A_\mu$, $(T_0^i)^{\cal K}=1$  and with a nonzero $A_\mu={\cal K}A^{(0)}_\mu$, we have twisted boundary conditions, $(T_0^i)^{\cal K}=\mathcal{W}_i(\sigma_i,{\cal K})$.  (Recall that we consider a system with charge $\cal N$.)  In this case, it is standard to redefine $T^i=T_0^i \mathcal{U}_i^{-1}$ and find
\begin{equation}\label{Trela}
\begin{split}
&(T^i)^{\cal K}=e^{i{\cal N}\sigma_i}\\
&T^xT^y=e^{2\pi i {{\cal N}\over {\cal K}}} T^y T^x\,.
\end{split}
\end{equation}
More generally, in the second quantized theory, we can replace $\cal N$ with the charge $Q$ of \eqref{NandQ}. Here, $T_i$ are symmetries of the Hamiltonian.

In the first equation in \eqref{Trela}, $\sigma_i$ are related to a background $\delta A_i^{(0)}$ and can be thought of as due to topological defects associated with the global $\U$ symmetry. (Standard flux threading discussions analogous to those in  Sec.~\ref{flux threading} and its counterpart in  Sec.~\ref{from response to} shift these values.) The analysis of such defects leads to a mixed anomaly between the discrete ${\mathbb Z}_{\cal K} $ translations and $\U$, which in turn leads to the famous filling constraints \cite{OshikawaLSM, cheng_translational_2016, Else:2020jln,Cheng:2022sgb}.  Such a change of the background field by $\sigma_i$ with vanishing field strength is often referred to as ``flux insertion,'' and it is done when the total field strength vanishes.  Here, the background magnetic field is nonzero, but we can still deform the theory with $\delta A$ with vanishing field strength and use the standard techniques of such defects. 

The second equation in \eqref{Trela} means that this discrete translation symmetry is realized projectively.  This can be interpreted as anomalies in the discrete translation symmetry and the global $\U$ symmetry.\footnote{See Appendix~A in \cite{Seiberg:2024yig}, for a pedagogical discussion of the well-known Landau levels from the perspective of 't Hooft and Adler-Bell-Jackiw anomalies.}  This interpretation is significant because it leads to 't Hooft anomaly matching conditions.  In particular, the same anomalous phases should exist in the IR theory.  We will use this constraint to classify the possible gapped phases.

To connect with the discussions in Sec.~\ref{flux threading}, we notice that $T_0^x$ changes the holonomy by $2\pi$. The same effect can be enacted by turning on $\delta A_y=2\pi/L_y$, which is the torus version of the flux-threading process. $T_0^x$ combined with the large gauge transformation $h_y$ yields the symmetry transformation $T^x$ of the system. 

\subsection{From the UV to the IR}\label{sec:UVIR}

The anomaly in \eqref{Trela} should be matched by the long-distance theory.  Given that we are interested in gapped phases, which are described by topological field theories, this raises three questions. 
\begin{itemize}
    \item How can the translation symmetry act in a topological theory, where the translation symmetry is trivial?  
    \item How can we realize an anomaly in that translation symmetry? 
    \item How can a symmetry with a macroscopic number of elements \eqref{Trela} act in a TQFT with a finite number of objects?
\end{itemize}

Let us start with the third question.  Most of the symmetry \eqref{Trela} is anomaly-free, and therefore it can act trivially in the IR.  Let the generators $T^i$ flow in the IR to $T_{\rm IR}^i$. Then, the elements $(T_{\rm IR}^i)^q$ do not participate in the anomaly and they can act trivially;  i.e., act as central elements in the IR theory.  We will allow a more general situation depending on another integer of order one $\ell$ and define\footnote{This matches with the notation used in Secs.~\ref{from response to} and \ref{sec:symmetrylines}.  Comparing with the discussions in Appendix~\ref{app:generator}, here we have $u=1$.}
\begin{equation}
\begin{split}
&s=\ell q\\
&r=\ell p\,.
\end{split}
\end {equation}
Then, for $\sigma_i=0$, we have
\begin{equation}\label{sdefa}
(T_{\rm IR}^i)^s=C^i \qquad , \qquad i=x,y\,,
\end{equation}
with $C^i$ acting as a central element in the IR theory.  Then, the number of symmetry elements in the IR is finite.

In the examples below, $C^i$ carries spin and charge. To match with \eqref{Trela}, $C^i$ has $\U$ charge $r$ (use, ${\cal N}/{\cal K}=r/s$) and then \eqref{Trela} becomes
\begin{equation}\label{TrelaIR}
\begin{split}
&(T_{\rm IR}^i)^s=e^{ir\sigma_i}C^i\\
&T_{\rm IR}^xT_{\rm IR}^y=e^{2\pi i {r\over s}} T_{\rm IR}^y T_{\rm IR}^x\,.
\end{split}
\end{equation}
This is a $\mathbb{Z}_s  \times \mathbb{Z}_s $ symmetry, where the two factors correspond to $x$ and $y$ translations, and it is realized projectively.  The anomalous phases, which are labeled by $r$, signal the 't Hooft anomaly.

In conclusion, the elements $T_{\rm IR}^i$ generate a finite symmetry \eqref{TrelaIR}, which is a quotient of the UV symmetry  \eqref{Trela}.  This quotient realizes the full anomaly.  We note that at this stage we could have set $\ell=1$, but below we will see why it is better to keep an arbitrary $\ell$.

To address the first two questions at the beginning of this subsection, let us see how this anomaly is matched in the simple case where $\nu=1/q$ and the IR theory is a $\U_q$ Chern-Simons theory with the Lagrangian
\begin{equation}\label{Uoneq}
{q\over 4\pi } b\wedge db -{1\over 2\pi} A\wedge db\,.
\end{equation}
Here, $b$ is a dynamical $\U$ gauge field and $A$ is the background gauge field.  For $A=0$, this Lagrangian has a continuous translation symmetry (in fact, it is topological) and a ${\mathbb Z}_q^{(1)}$ one-form symmetry shifting $b$. With $A={\cal K}A^{(0)}$, it seems that there is no continuous translation symmetry. However, the theory based on \eqref{Uoneq} is invariant under a continuous translation transformation $x\to x +\epsilon^x$ combined with the one-form transformation $b_y\to b_y+{2\pi {\cal K}\epsilon^x \over qL_xL_y}$ (use $A^{(0)}dA^{(0)}=0$) and similarly for $y$ translations.  Equivalently, we can redefine $\hat b=b-{\cal K}A^{(0)}/q$ (recall that we assumed that ${\cal K}/q\in \mathbb{Z}$) in \eqref{Uoneq} to write it as
\begin{equation}\label{bdbh}
{q\over 4\pi } \hat b\wedge d\hat b \,,
\end{equation}
which is manifestly translation invariant and has the ${\mathbb Z}_q^{(1)}$ one-form symmetry.  Furthermore, that ${\mathbb Z}_q^{(1)}$ one-form symmetry has an anomaly labeled by $p=1$.

To relate to \eqref{TrelaIR}, we can turn on $\delta A_i=\sigma_i/L_i$.  Then, \eqref{bdbh} becomes  
\begin{equation}
{q\over 4\pi } \hat b\wedge d\hat b -{1\over 2\pi} \delta A \wedge d\hat{b}\,
\end{equation}
and we have
\begin{equation}
    e^{iq\oint_i \hat b}=e^{i\sigma_i}\,
\end{equation}
where $\oint_i$ is a contour integral around the $i$ direction.  (For odd $q$ we also need a transparent fermion line on the right-hand side.)
This is consistent with the fact that this theory has $p=r=1$.

Tracking the exact discrete $\mathbb{Z}_{\cal K}  \times \mathbb{Z}_{\cal K} $ translation symmetry of the UV theory \eqref{coords} and its anomaly, we see that it becomes a discrete $\mathbb{Z}_{\cal K}^{(1)}$ one-form symmetry, with only a $\mathbb{Z}_q^{(1)}$ quotient of it acting faithfully.  The anomaly in the UV translations is matched by an anomaly in that one-form symmetry.\footnote{The fact that in this case, a UV translation symmetry can act in the IR as a one-form symmetry has already been pointed out in \cite{JensenRaz}. Other symmetry transmutations of zero-form symmetries into higher-form symmetries and their anomalies are discussed in \cite{Transmutaion}.} This anomaly matching was studied using the G-crossed category theory in \cite{Manjunath:2020kne}. 

We conclude that the UV symmetry \eqref{Trela} can be realized in the IR as a TQFT with a $\mathbb{Z}_s^{(1)}$ one-form symmetry with anomaly labeled by $r$.  Specifically, $T_{\rm IR}^i$ are realized as a one-form symmetry elements along the direction $i$.  Furthermore, the coupling of this TQFT to the background gauge field $A$ is labeled by the integer $r$.

  Finally, given that the IR version of the translation generators $T_{\rm IR}^{i}$ act in the IR as one-form symmetry operators, we can tie this system to the flux threading argument in Secs.~\ref{flux threading} and \ref{from response to}.  We identify $T_{\rm IR}^{i}$ with the one-form symmetry generated by the vison $v$.  Then, the first equation in \eqref{TrelaIR} is the statement that $v$ carries charge $r/s=p/q$, the second equation in \eqref{TrelaIR} is the statement that $B(v,v)=r/s=p/q$, and equation \eqref{sdefa}, which states which subgroup of the translation group is realized trivially in the IR, determines the power of the vison that is trivial.  In other words, we have learned that the UV precursor of the vison $v$ is a UV translation symmetry generator.

\section{Connection between translation and flux threading}\label{conntranf}

We now provide another perspective on the relation between the UV translations and the IR one-form symmetry, which is also applicable in other symmetry enriched topological (SET) phases.

\begin{figure*}
    \centering
    \subfigure[\ Setup]{
        \includegraphics[width=0.45\textwidth]{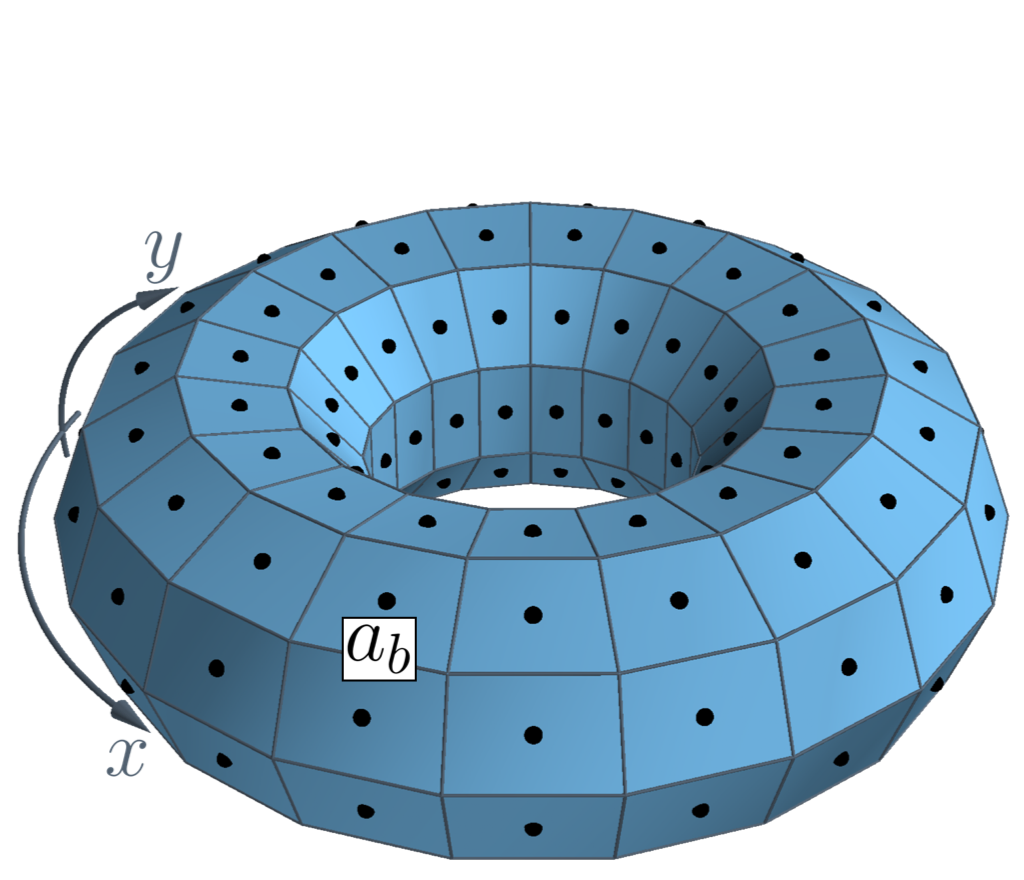}
        \label{fig:base}
            }
    ~ 
    \subfigure[\ Translation with flux]{
        \includegraphics[width=0.45\textwidth]{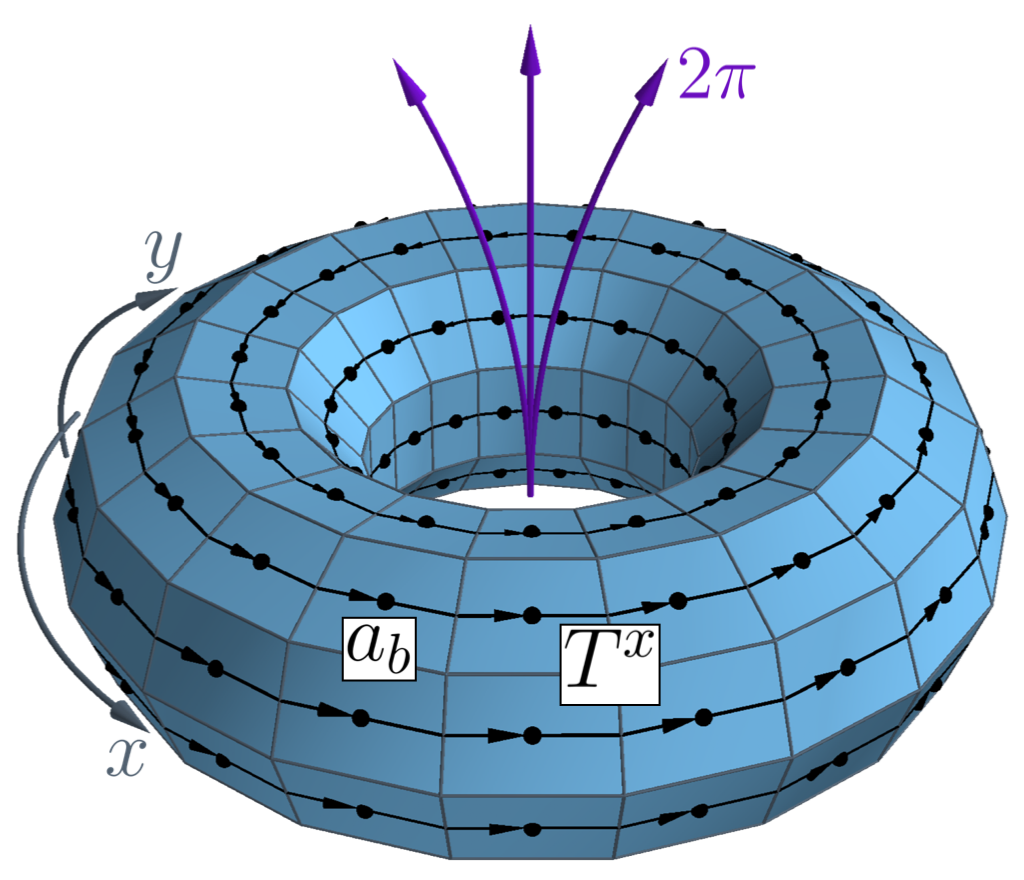}\
        \label{fig:translation}
            }
    \caption{A cartoon of the action of translation on a TQFT that respects translation and $\U$. (a) Display of the setup in a toroidal geometry with $N_x$ unit cells along the $x$ direction and $N_y$ along the $y$ direction. The filling constraints ensure that there is a background anyon, $a_b$, at the center of each unit cell with a charge $Q(a_b) = \nu_l$. If the system is translated in the $x$ direction by the size of the cell, as in (b), each $a_b$ moves to its neighbor. This can be seen to produce $N_y$ loop defects of $a_b$. In the presence of $2\pi$ flux threading through the hole of the torus, translation thus leads to a phase of $2\pi \nu_l N_y$. We note that we cannot embed a flat torus in three-dimensional space. Hence, the torus in the figure is not flat. However, since we discuss only discrete translation, we can have interactions preserving them.}
	\label{fig:torus}
\end{figure*}

Consider a system of particles, either fermions or bosons, with a conserved particle number corresponding to a global $\U$ symmetry. The particles are taken to move in a periodic potential with discrete translation symmetries $\hat{T}^x, \hat{T}^y$ along the $x,y$ directions of space. These translations commute with each other.  We assume a fixed rational filling per unit cell $\nu_l$.\footnote{We denote the lattice filling by $\nu_l$. It is defined as the ratio of the total number of particles to the total number of unit cells in the thermodynamic limit.}

We expect a gapped ground state to be described by a TQFT, which might be trivial.  Then, from the Lieb-Schulz-Mattis-Oshikawa-Hastings theorem \cite{lieb_two_1961,OshikawaLSM, oshikawa_commensurability_2000, hastings_lieb-schultz-mattis_2004}, we know that for fractional $\nu_l$ gapped ground states must have nontrivial topological order. In particular, they must have an Abelian background anyon, $a_b$, with charge $Q(a_b) = \nu_l$. This is because translation acts projectively on every anyon $a$ as $(\hat{T}^y)^{-1}(\hat{T}^x)^{-1}\hat{T}^y\hat{T}^x[a] = e^{2\pi i \varphi(a)}[a]$.\footnote{Here $[a]$ refers to the state of a single anyon. In the continuum theory, this is a defect line of $a$ stretching along the time direction.} In other words, the action of translation is given by a phase $\varphi$ that respects fusion. (We restrict attention to phases where the action of translation does not permute anyons.) The results of \cite{barkeshli_symmetry_2019}, summarized in Appendix~\ref{app:spinc-classification}, thus tell us that we can write $(\hat{T}^y)^{-1}(\hat{T}^x)^{-1}\hat{T}^y\hat{T}^x[a] = e^{2\pi i B(a,a_b)}[a]$ for some Abelian $a_b$. The projective action of translation thus means we can think of the background as having $a_b$ in each unit cell~\cite{jalabert1991spontaneous,senthil_z_2_2000, Zaletel:2014epa, cheng_translational_2016, bonderson2016topologicalenrichmentluttingerstheorem}. As such $a_b$ is often called the ``background anyon.''

Now thread $2\pi$ flux of the background $\U$ gauge field to nucleate the vison $v$ discussed in Sec.~\ref{flux threading}. Because of the assumed discrete translation symmetry, each unit cell should have charge $\nu_l$. That charge density can be measured by braiding the vison $v$ around the unit cell. But we saw earlier that the braiding phase of $v$ around the unit cell is given by $B(v,a_b)$. Thus $\nu_l = B(v,a_b) = Q(a_b)$. These statements on the background anyon have a very intuitive interpretation: at fractional lattice filling $\nu_l$, a symmetry-preserving gapped state is obtained by placing the Abelian anyon $a_b$ with fractional charge $\nu_l$ in each unit cell.
We note that these constraints can be generalized to the magnetic lattice translation symmetry~\cite{lu2020filling}.

Now, consider the action of a unit translation on the ground state. For concreteness, consider a system with $N_x \times N_y$ unit cells placed on a torus. We label the unit cells by $\vec R = m_x \hat{x} + m_y \hat{y}$, $m_i = 1,....., N_i$. Acting with, say, $\hat{T}^x$,  shifts $a_b(\vec R)$ to $a_b(\vec R + \hat{x})$. Equivalently, we can think of this process as creating a separate  $a_b - \bar{a_b}$ pair along each of the $N_y$ chains labeled by distinct $m_y$, moving each $a_b$  around its chain and annihilating with $\bar{a_b}$. In other words, $\hat{T}^x$ maps, in the IR theory, to a line operator of $a_b$ on each of the $N_y$ chains. Now consider acting with $\hat{T}^x$ in a ground state, where there is a vison line stretched along the $y$ cycle. Such a vison line can be created by ``threading a $2\pi$ flux of the background $\U$ gauge field" [i.e., creating a holonomy of the background $\U$ gauge field]. More precisely, starting with any ground state, we first apply a vison line operator along the $y$-cycle, and then act on the state with $\hat{T}^x$, and compare the result with the one obtained from acting in the opposite order. Because of the mutual braiding between $a_b$ and $v$, the $\hat{T}^x$ action now produces a phase $2\pi \nu_l N_y$. We show a cartoon of this in Fig.~\ref{fig:translation}. This is exactly the answer one would infer from the Oshikawa adiabatic flux-threading argument applied to this system \cite{bonderson2016topologicalenrichmentluttingerstheorem}. 

Given this general understanding of the translation action in SET phases with filling constraints, let us turn to the system of interest in this section, namely, charged particles in a Landau level. As we have seen, on a torus with $\cal K$ flux quanta, the system has an anomalous $\bZ_{\cal K} \times \bZ_{\cal K} $ translation symmetry. Below, we assume that the length of the torus along the $x$ ($y$) direction is $L_x$ ($L_y$).

To draw connections with the  discussions in  Sec.~\ref{sec:UV}, we start with the $\bZ_{\cal K} \times \bZ_{\cal K} $ magnetic translation group generated by $T^x$ and $T^y$ and consider an anomaly free subgroup generated by $\hat{T}^x=(T^x)^{n_x}$ and $\hat{T}^y=(T^y)^{n_y}$, with integers $n_x, n_y$ satisfying $n_xn_y={\cal K}$. It is easy to see that $\hat{T}^x$ and $\hat{T}^y$ commute and generate a $\bZ_{n_y} \times \bZ_{n_x} $ subgroup. This is similar to the lattice translations discussed earlier, and hence the same notations. In fact, one can imagine turning on a weak potential modulated periodically along the $x$-direction with periodicity $n_xL_x/{\cal K}=L_x/n_y$, and along the $y$ direction with periodicity $n_yL_x/{\cal K}=L_x/n_x$, so there are $n_xn_y$ unit cells, each containing exactly $2\pi$ flux. The average charge density per unit cell is ${\cal N}/(n_xn_y)=\nu=p/q$. The quantized Hall conductivity $\sigma_H$ and other universal aspects of the IR theory are preserved under this deformation. The FQH ground state remains invariant both under $\hat{T}^x, \hat{T}^y$ and the global $\U$. 

As with generic SETs with filling constraints, a background anyon $a_b$ with charge $\nu = p/q$ populates each unit cell in the ground state. Further, as we have seen, the presence of fractional $\sigma_H$ guarantees the presence of a vison $v$ with fractional charge $\sigma_H = p/q$. In contrast to a generic SET, in the FQHE state, the vison itself is identified with the background anyon. This can be understood as follows. Suppose we braid an anyon $a$ around the unit cell. Since the unit cell encloses $2\pi$ flux this will lead to a phase of $2\pi Q(a)$. But earlier we saw that braiding $a$ around a unit cell must lead to a phase of $2\pi B(a,a_b)$. So then $B(a,a_b) = Q(a)$. By definition of the vison $Q(a) = B(a,v)$, so $B(a,a_b) = B(a,v)$. Thus $a_b=v$. This makes physical sense; earlier we identified $v$ with $2\pi$ flux. Since each unit cell contains $2\pi$ flux each unit cell contains $v$.

Then, from our discussion earlier, the action of $\hat{T}^x$ corresponds to a line operator of $v$ around the $x$ cycle.  Since the unit cell has length $L_y/n_x$ along the $y$ direction, there are $n_x$ defect loops parallel to each other. We see that the action of $\hat{T}^x$  corresponds, in the IR theory, to a line operator of $v^{n_x}$ along the $x$ cycle. This identification does not depend on the strength of the perturbing potential, and hence holds even in the limit that the potential is turned off. Repeating the same argument for the $y$ direction, we see that $\hat{T}^y$ corresponds in the IR to a line defect of $v^{n_y}$ along the $y$ cycle. 

In Sec.~\ref{sec:UV}, we discussed the magnetic translation symmetry $\bZ_{\cal K} \times \bZ_{\cal K} $ of particles in a homogeneous magnetic field with its anomaly ${\cal N}\over {\cal K}$. The relation to the $\bZ_{n_x} \times \bZ_{n_y} $ translation symmetry discussed here is that the latter is an anomaly free subgroup of the former. 
For example, we can specialize to $n_x=1, n_y={\cal K}$, which leads to the conclusion that $\hat{T}^x=T^x$ creates a line defect of $v$ along the $x$ cycle, in complete agreement with the conclusion reached in the previous section, and the same for $T^y$ by choosing $n_x={\cal K}, n_y=1$.

\section{The symmetry lines}\label{sec:symmetrylines}

In this section, we will study the coupling of the TQFT $\cal T$ to the background gauge field $A$. 

We will start in  Sec.~\ref{pureTQFT} by ignoring the background $A$.  We will consider a TQFT $\cal T$ with a subset of Abelian anyons ${\cal A}\subseteq {\cal T}$.  This subset generates a one-form global symmetry $\otimes_I\mathbb{Z}_{s_I}^{(1)}$. Then, we will focus on a subset of the anyons in $\cal A$ generating a subgroup of that total one-form symmetry
\begin{equation}
\begin{split}
&\mathbb{Z}_s^{(1)}\subseteq \otimes_I\mathbb{Z}_{s_I}^{(1)}\\
&{\cal A}^{s,r}\subseteq {\cal A}\,.
\end{split}
\end{equation}
We will describe the index $r$ of ${\cal A}^{s,r}$ as associated with the 't Hooft anomaly in $\mathbb{Z}_s^{(1)}$.

In  Sec.~\ref{TQFTUb}, we will couple a bosonic TQFT to a background $\U$ gauge field $A$. And in  Sec.~\ref{TQFTUf}, we will discuss spin$^c$ theories, which also come equipped with a $\U$ gauge field (more precisely a spin$^c$ connection $A$.)

\subsection{Pure TQFT -- no charge}\label{pureTQFT}

We consider a TQFT $\cal T$ and use interchangeably the terms anyons and lines for its objects. 
A subset of them are the symmetry lines of an invertible one-form global symmetry $\cal A$.  The properties of these lines were analyzed in \cite{Hsin:2018vcg}.  (See \cite{Moore:1988qv} and \cite{bonderson_non-Abelian_2007} for earlier discussions.) We review the relevant results from \cite{Hsin:2018vcg} and extend them. 

Consider a $\mathbb{Z}_s^{(1)}$ one-form symmetry.  The symmetry lines $a_j$ are generated by $a_1$
\begin{equation}
a_j=a_1^j\,.
\end{equation}
Usually, the integer $j$ is taken to run over $j=0,1,\cdots,s-1$.  Instead, below, we will take it to be an arbitrary integer subject to the identification
\begin{equation}\label{jiden}
    j\sim j+s\,.
\end{equation}

In addition to the symmetry index $s$, the properties of the anyons $\cal A$ depend on another integer $r$ capturing the 't Hooft anomaly of the $\mathbb{Z}_s^{(1)}$ one-form symmetry.  Therefore, we denote the set of these $s$ symmetry lines as ${\cal A}^{s,r}$. This set of Abelian anyons was denoted by $\mathbb{Z}^{(r/2)}_s$ in \cite{bonderson_non-Abelian_2007}. We will follow our notation above, 
\begin{align}
   &\ell = \gcd(r,s)\\
& r = p\ell \quad ,\qquad s = q\ell \,. 
\end{align}

In Appendix~\ref{app:generator}, we will consider a generalization of this.

We should distinguish between bosonic theories and fermionic/spin theories.
\begin{itemize}
\item In a bosonic theory, the spins and braidings of these lines are
\begin{equation}\label{spinbb}
\begin{split}
&h(a_j)={rj^2\over 2s}\mod 1\\
&B(a_j,a_{j'})={rjj'\over s}\mod 1\,,
\end{split}
\end{equation}
and we have the relation
\begin{equation}\label{a1s}
    a_1^s=1\,.
\end{equation}
To make these expressions meaningful with the identification \eqref{jiden}, we need to impose
\begin{equation}\label{rsbos}
rs\in 2\mathbb{Z}\,.
\end{equation}
This restriction is part of the description of $r$ as labeling the anomaly.
Clearly, $r$ and $r+2s$ lead to the same results, i.e.,
\begin{equation}\label{2sshift}
{\cal A}^{s,r}\cong {\cal A}^{s,r+2s}\,.
\end{equation}
\item In a fermionic/spin theory \cite{Dijkgraaf:1989pz,Gaiotto:2015zta,Bhardwaj:2016clt}, it is common to introduce a transparent fermion line $c$ with spin $1\over 2$ satisfying $c^2=1$. In this section, we will identify $c$ with the trivial (identity) anyon. Then, the spins $h$ of the lines are defined modulo $1\over 2$, corresponding to the ambiguity of attaching $c$ to each anyon. The spins and braidings of the symmetry lines are
\begin{equation}\label{spinbf}
\begin{split}
&h(a_j)={rj^2\over 2s}\mod {1\over 2}\\
&B(a_j,a_{j'})={rjj'\over s}\mod 1\,.
\end{split}
\end{equation}
Here, we do not have the restriction \eqref{rsbos}.  One way to say it is that we can replace \eqref{a1s} with 
\begin{equation}
    a_1^s=c^{rs}\,.
\end{equation}
In a fermionic/spin theory, \eqref{2sshift} is replaced with
\begin{equation}\label{1sshift}
{\cal A}^{s,r}\cong {\cal A}^{s,r+s}\,.
\end{equation}
\end{itemize}

In addition to the identifications \eqref{2sshift} and \eqref{1sshift}, for any $m$ such that $\gcd(m,s)=1$, we can take the line $a_m=a_1^m$ to be the $\mathbb{Z}_s^{(1)}$ generator.  This has the effect of relating
\begin{equation}\label{changemg}
{\cal A}^{s,r}\cong {\cal A}^{s,m^2r}\qquad , \qquad \gcd(m,s)=1\,.
\end{equation}
As expected, this expression is invariant under $m\to m+s$.

The symmetry lines of any spin TQFT can be written as a product of a set of symmetry lines of a bosonic theory and the almost trivial, spin TQFT, $\cA^{2,2}\cong \cA^{2,0}\cong\{1,c\}$. See Appendix~\ref{Necessary definitions}. In particular, every Abelian spin TQFT is a product of an Abelian bosonic TQFT and $\{1,c\}$.

Let us demonstrate it for the symmetry lines $\cA^{s,r}$.  If $rs$ is even, $\cA^{s,r}$ can be viewed as bosonic, and the general statement above is trivially true.  If $rs$ is odd, $\cA^{s,r}$ is not bosonic.  However, using \eqref{1sshift}, as spin theories, ${\cal A}^{s,r}\cong {\cal A}^{s,r+s}$.  Since for odd $rs$, ${\cal A}^{s,r+s}$  also makes sense as a bosonic TQFT, the spin theory $\cA^{s,r}$ factorizes as $ {\cal A}^{s,r+s}\boxtimes\{1,c\}$, where the first factor is bosonic.  Alternatively, using $m=2$ in \eqref{changemg}, as spin theories, $\cA^{s,r}\cong\cA^{s,4r}$ . Since $\cA^{s,4r}$ also makes sense as a bosonic TQFT, the spin theory $\cA^{s,r}$ factorizes as $ \cA^{s,4r}\boxtimes\{1,c\}$, where the first factor is bosonic.

\subsubsection{$\ell=\gcd(r,s)=1$}

Interestingly, when
$\ell=\gcd(r,s)=1$,
the set of lines ${\cal A}^{s,r}$ forms a complete TQFT.  Furthermore, when our original TQFT $\cal T$  includes ${\cal A}^{s,r}$ and some other lines, it factorizes as \cite{drinfeld_braided_2010, Hsin:2018vcg}
\begin{equation}\label{bosfact}
{\cal T}={\cal A}^{s,r}\boxtimes {\cal T}'\qquad, \qquad \ell=\gcd(r,s)=1\,,
\end{equation}
where all the lines in ${\cal T}'$ are invariant under the $\mathbb{Z}_s^{(1)}$ one-form symmetry.

\subsubsection{$\ell=\gcd(r,s)\ne 1$}

When $\ell=\gcd(s,r)\ne 1$,
the lines ${\cal A}^{s,r}$ do not form a complete TQFT.  In particular, the line $a_1^{s/\ell} = a_1^q$  braids trivially with all the lines in ${\cal A}^{s,r}$.  One consequence of that is that ${\cal A}^{s,r}$ does not realize faithfully the $\mathbb{Z}_s^{(1)}$ symmetry. But since the whole TQFT $\cal T$ does realize that symmetry faithfully, it should include lines that braid nontrivially with $a_1^{s/\ell}=a_1^q$.  As a result, unlike the case of $\ell=1$, the whole TQFT $\cal T$ does not factorize.

Let us then consider the closed subset of lines generated by $a_1^{s/\ell} = a_1^q$. It leads to ${\cal A}^{\ell,pq\ell}\subset {\cal A}^{s,r}$.  Since $h(a_1^{q})= pq/2$ then in either a fermionic/spin theory or bosonic theory with even $pq$ we have ${\cal A}^{\ell, pq\ell} \simeq {\cal A}^{\ell, 0}$ and it can be gauged. We will also be interested in gauging that symmetry in a bosonic theory with odd $pq$.  In that case, the resulting theory is fermionic/spin. 

We denote the result of the gauging as ${\cal T}/{\cal A}^{\ell,{0}}$.  The effect of this gauging on the symmetry lines is to form a quotient
\begin{equation}\label{AmodA}
    {{\cal A}^{s,r}\over {\cal A}^{\ell,0}}\cong {\cal A}^{{q},{p}} \quad, \quad {\rm bosonic}\quad pq\in 2\mathbb{Z}\qquad \text{or  fermionic/spin}\,.
\end{equation}
Since the two labels of ${\cal A}^{{q},{p}}$ are coprime this means that after gauging the full theory factorizes as
\begin{equation}\label{TmodA}
   { {\cal T}\over {\cal A}^{\ell,0}}={\cal A}^{{q},{p}} \boxtimes {\cal T}' \quad, \quad {\rm bosonic}\quad pq\in 2\mathbb{Z}\qquad \text{or fermionic/spin}\,.
\end{equation}
In a bosonic theory with odd $pq$, which means that $\ell$ is even, we can gauge only ${\cal A}^{{\ell\over 2},0}$ to find
\begin{equation}\label{AmodAi}
    {{\cal A}^{s,r}\over {\cal A}^{{\ell\over 2},0}}\cong {\cal A}^{{2q},{2p}} \quad,\quad  {\rm bosonic}\quad pq\in 2\mathbb{Z}+1\,
\end{equation}
and $\cal T$ does not factorize.

\subsection{Bosonic TQFT coupled to $\U$}\label{TQFTUb}

Next, we add to the discussion in \cite{Hsin:2018vcg} the $\U$ (zero-form) symmetry.  The formalism to do it was derived in  \cite{barkeshli_symmetry_2019} (see also \cite{etingof2010fusion}).

Let us start with a bosonic theory with symmetry lines
${\cal A}^{s,r}$, where, as in \eqref{rsbos}, we should impose
\begin{equation}\label{rsbosc}
sr\in 2\mathbb{Z}\,.
\end{equation}
Once we have this group of symmetry lines, we should assign $\U$ charges. The $\U$ charges and the coupling to the background gauge field $A$ depend on a class in $H^2(\U,\mathbb{Z}_s)=\mathbb{Z}_s$.  This class is labeled by an integer $u$ modulo $s$.\footnote{As in \eqref{changemg}, it is sometimes the case that different values of $u$ are related by an automorphism of the TQFT and lead to essentially the same coupling of $A$ to the TQFT. This will be addressed in Appendix~\ref{app:generator}.} Physically, the value of $u$ corresponds to the choice of vison. For simplicity, in the main text, we consider the group of symmetry lines to be generated by the vison, which effectively sets $u=1$.  (In Appendix~\ref{app:generator} we will restore generic values of $u$.)

We will then refer to the symmetry lines with $u=1$ as ${\cal V}^{s,r}$ to highlight that they are generated by the vison, $v$.

In the problem without the $\U$ symmetry in  Sec.~\ref{pureTQFT}, we discussed the behavior of the TQFT $\cal T$ depending on whether $\ell=\gcd(r,s)= 1$ or $\ell\ne 1$.  The situation here is similar.

For $\ell=\gcd(r,s)= 1$, the $\mathbb{Z}_s^{(1)}$ one-form symmetry acts faithfully on ${\cal V}^{s,r}$ and we can take all the other lines in the TQFT $\cal T$ to be invariant under it and therefore carry no $\U$ charge.  Then, as in \eqref{bosfact}, the theory factorizes
\begin{equation}\label{bosonfac}
{\cal T}={\cal V}^{q,p} \boxtimes {\cal T}'\qquad, \qquad \ell=1\,,
\end{equation}
and all the lines in ${\cal T}'$ are $\U$ neutral. [More precisely, we can attach to them the transparent charged boson to set their $\U$ charge to zero.]

When $\ell=\gcd(r,s)\ne 1$, the symmetry lines ${\cal V}^{s,r}$ do not realize the $\mathbb{Z}_s^{(1)}$ symmetry faithfully.  In particular, $a_1^{s/\ell} = v^{s/ \ell}=v^q$ does not act on the lines in ${\cal V}^{s,r}$, but it does act on the other lines in the TQFT $\cal T$.  Therefore, $\cal T$ does not factorize.

In the case of $\ell=\gcd(r,s)\ne 1$, we can follow the discussion around \eqref{AmodA}-\eqref{AmodAi}, and try to gauge the $\mathbb{Z}_\ell^{(1)} \subset\mathbb{Z}_s^{(1)}$ that is generated by $a_1^{s/ \ell} = v^q$. This line generates ${\cal A}^{\ell,{pq\ell}}$. These lines have integer charges and can thus be taken to be neutral. For even $pq$, ${\cal A}^{\ell,{pq\ell}}\cong{\cal A}^{\ell,0}$ and we can gauge it, i.e., it plays the role of ${\cal A}_0$ in Sec.~\ref{sec:symm_anom_gauge}. Here it is important that the lines in $ {\cal A}^{\ell,0} $ are $\U$ neutral. Then, as in the problem without $\U$,
\begin{equation}\label{AmodAUb}
    {{\cal V}^{s,r}\over {\cal A}^{\ell,0}}\cong {\cal V}^{{q},{p}} \qquad,\qquad pq\in 2\mathbb{Z}\,.
\end{equation}
And after the gauging, the full theory factorizes as
\begin{equation}
   { {\cal T}\over {\cal A}^{\ell,0}}={\cal V}^{{q},{p}} \boxtimes {\cal T}' \qquad, \qquad pq\in 2\mathbb{Z}\,. \label{eqn:gauge_product_boson}
\end{equation}
And again, the lines in $\cal T'$ are $\U$ neutral.

As in \eqref{AmodAi}, if $pq$ is odd, and hence $\ell$ is even, we can gauge only a subgroup\footnote{In this case, we can still gauge the full $\mathbb{Z}_\ell$ and end up with a fermionic/spin theory.  This theory might not satisfy the spin/charge relation.}
\begin{equation}\label{AmodAiQ}
    {{\cal V}^{s,r}\over {\cal A}^{{\ell\over 2},0}}\cong {\cal V}^{{2q},{2p}} \quad , \quad pq\in 2\mathbb{Z}+1\,
\end{equation}
and $\cal T$ does not factorize.

\subsection{TQFT coupled to spin$^c$ connection}\label{TQFTUf}

Now we consider coupling our TQFT with symmetry lines ${\cal A}^{s,r}$ to a spin$^c$ connection. Recall that, as reviewed in Sec.~\ref{sec:tqft-basics}, a spin$^c$ theory is a fermionic/spin TQFT satisfying the spin/charge relation. Namely, all local particles with odd charge are identified with the electron $c$. This relation is often easier to deal with mathematically if we keep the electron as an anyon in our theory, so we make that choice here.

In Appendix~\ref{app:spinc-classification}, we will review the classification of spin$^c$ TQFTs in a way that incorporates information about the charges of all of the anyons in the theory. There, we will see that these theories still possess a unique Abelian vison. The braiding of this vison with all other anyons continues to determine their charge modulo one. Moreover, the vison continues to have $(h(v), Q(v)) \sim (\sigma_H/2, \sigma_H)$ modulo the identifications discussed in Sec.~\ref{sec:tqft-basics}. Thus, if the vison has order $s=q\ell$ it generates  ${\cal V}^{s,r}$.

The classification of charge assignments of anyons is more subtle in the spin{}$^c$ case, and we defer a full discussion to Appendix~\ref{app:spinc-classification}. The first new subtlety is that the spin/charge relation introduces additional consistency conditions and therefore the vison cannot be an arbitrary Abelian anyon.  The second subtlety is that some superficially distinct choices are actually the same.

Let us illustrate these two subtleties with the example of the spin{}$^c$ $\cA^{2,1}$ theory. It can be described as $\U_2\times \U_{-1}\cong \U_{-2}\times \U_1$.  Denote the charge of $a_1$ by $Q(a_1)$, which is defined modulo 2. The fusion rule $a_1^2=1$ and the spin/charge relation lead to
\begin{equation}
    2Q(a_1)=0\mod 2\,.
\end{equation}
This demonstrates the first subtlety above that excludes $Q(a_1)={1\over 2}$.  To demonstrate the second issue, we compare the choices $Q(a_1)=0\mod 2$ and  $Q(a_1)=1\mod 2$. By changing the generator of $\cA^{2,1}$ from $a_1$ to $a_1c$, the anyon now has vanishing charge, but its spin is shifted by $1/2$ from $1/4$ to $-1/4$ mod 1. In other words, $\U_2\times \U_{-1}$ with $Q(a_1)=1\mod 2$ is equivalent to $\U_{-2}\times \U_{1}$ with $Q(a_1)=0\mod 2$.  

One way to see that is to realize the theory by the Chern-Simons Lagrangian:
\begin{equation}
    \frac{2}{4\pi}b\wedge db - \frac{t}{2\pi}A\wedge db-{1\over 4\pi} \hat b \wedge \hat b -{1\over 2\pi} A\wedge d\hat b\,,
\end{equation}
where $t$ is an even integer\footnote{Clearly, this Lagrangian is consistent with the spin/charge relation \eqref{tplusK}. The dependence on $t\mod 4$ can be shown by shifting $b\to b+2A$, which shifts $t$ by $4$ and adds a counterterm.  Note that $b\to b+2A$ is compatible with the fact that $A$ is a spin$^c$ connection.}   and $Q(a_1)=t/2\mod 2$. For $t=0$, $b$ decouples from $A$.  And for $t=2$, we can use the redefinition
\begin{equation}
\begin{split}
    & b\to-b+\hat b\\
    &\hat b \to 2b-\hat b
\end{split}  
\end{equation} 
to turn the Lagrangian into 
\begin{equation}\label{shiftb}
    -\frac{2}{4\pi}b\wedge db +{1\over 4\pi} \hat b \wedge \hat b -{1\over 2\pi} A\wedge d\hat b \,,
\end{equation}
which again shows that $b$ decouples from $A$.

Having said all of this, we nonetheless can understand a lot about these theories by studying ${\cal V}^{s,r}$. In this case, the spin and charge of $a_1^s$ are compatible with the spin/charge relation, i.e., they are an integer multiple of $\left({1\over 2}, 1\right)$, only when 
\begin{equation}\label{rsrelspinc}
    r(s+1)\in 2\mathbb{Z}\,.
\end{equation}
This equation replaces the condition \eqref{rsbosc} in the bosonic theory. 

Consider first the case of $\ell=\gcd(r,s)= 1$. Then, as in \eqref{bosonfac}, the theory factorizes
\begin{equation}\label{spinc_fac}
{\cal T}={\cal V}^{q,p} \boxtimes {\cal T}'\qquad, \qquad \ell=1\,.
\end{equation}
As in the bosonic case, ${\cal T'}$ is invariant under the $\bb{Z}_s^{(1)}$ one-form symmetry generated by the vison.  Therefore, all the anyons in $\cT'$ carry integer charges and by multiplying them with the transparent charged fermion, we can set all their $\U$ charges to zero (see Corollary \ref{cor:MTC-zero-charge}).

As in the bosonic case, when $\ell=\gcd(r,s)\ne 1$, the symmetry lines ${\cal V}^{s,r}$ do not realize the $\mathbb{Z}_s^{(1)}$ symmetry faithfully. Therefore, $\cal T$ does not factorize. We can then follow the discussion around \eqref{AmodA}, \eqref{TmodA}, and \eqref{AmodAi}, and try to gauge the $\mathbb{Z}_\ell^{(1)} \subset\mathbb{Z}_s^{(1)}$ that is generated by $a_1^{s/ \ell} = v^q$.
Here we have to be careful that after gauging, we still have a spin$^c$ theory. This is the case when the line $v^q$ is transparent, which leads to $p(q+1)\in 2\mathbb{Z}$.  Since $p$ and $q$ are co-prime, this happens when $q$ is odd. Then, we can write this set as ${\cal A}^{\ell,0} $ and we can gauge it to find
\begin{equation}\label{AmodAUbx}
    {{\cal V}^{s,r}\over {\cal A}^{\ell,0}}\cong {\cal V}^{{q},{p}}  \qquad, \qquad q\in 2\mathbb{Z}+1\,.
\end{equation}
And after the gauging, the full theory factorizes as
\begin{equation}
   { {\cal T}\over {\cal A}^{\ell,0}}={\cal V}^{{q},{p}} \boxtimes {\cal T}' \qquad, \qquad q\in 2\mathbb{Z}+1\,. \label{eqn:gauge_product_bosonx}
\end{equation}
Just as in the case of $\ell = 1$, we can take all the anyons in ${\cal T}'$ to have vanishing $\U$ charge.

When $p(q+1)$ is odd and therefore $q$ is even and $p$ is odd, we cannot gauge the $\mathbb{Z}_\ell^{(1)}$ one-form symmetry to find a spin$^c$ theory.  Because of \eqref{rsrelspinc}, we should also have even $\ell$.  Then, as in \eqref{AmodAi} or  \eqref{AmodAiQ}, we can gauge only $ {\cal A}^{{\ell\over 2},0}$
\begin{equation}\label{AmodAUbxs}
    {{\cal V}^{s,r}\over {\cal A}^{{\ell\over 2},0}}\cong {\cal V}^{{2q},{2p}}  \qquad, \qquad q\in 2\mathbb{Z}\,.
\end{equation}
and the theory does not factorize.

\section{Ordering}
\label{sec:IR}

We would now like to use the relations derived in Sec.~\ref{sec:symmetrylines} to learn about the long-distance TQFTs. The presence of $\mathcal{V}^{s,r}$ will provide a tight constraint. These constraints can essentially be understood to arise from  \eqref{AmodAUb} - \eqref{AmodAiQ}, and \eqref{AmodAUbx} - \eqref{AmodAUbxs}. In this section, we will provide a picture for thinking about these equations as a sort of ``descent'' to the minimal theory. We will also demonstrate how this perspective is useful in a number of ways: it provides a natural ordering to FQH states, it makes constructing new FQH states easier, and it allows the quantum Hall numericist to use a few pieces of numerically accessible information to produce a finite list of consistent TQFTs. A nice corollary of this latter perspective is that it allows us to prove a few historical conjectures about the ``smallest'' topological order in a number of cases: the spin{}$^c$ TQFTs with the smallest number of anyons for $\sigma_H=1/2$ are generalizations of Moore-Read, the T-Pfaffian is the smallest surface TO of a class AII topological insulator, or  $\nu=1$ class AIII topological superconductor,  and $D(\bb{Z}_4)$ is the smallest TO for a $1/2+1/2$ quantum Hall bilayer. Our results regarding minimal theories have already proven useful in numerical studies of the FQHE at $\nu = 3/4$ \cite{huang_non-Abelian_2024}. 

For the benefit of the reader, we define and explain our perspective by first working through a few familiar examples. Since spin{}$^c$ theories are most common in condensed matter physics, we begin with a few familiar examples of single-layer spin{}$^c$ theories.

\subsection{Quantum dimension}

In preparation for the discussion below, we review the notions of the quantum dimension of an anyon $d_a$ and the total quantum dimension 
\begin{equation}
    {\cal D}=\sqrt{\sum_a d_a^2}\,.
\end{equation}

One way to define $d_a$ is the following. An important characteristic of a TQFT is its modular $S$-matrix.  (Hence the M in MTC.)  In terms of it, the three-sphere partition function with a closed loop of the anyon $a$ is \cite{Witten:1988hf}
\begin{equation}
    {\cal Z}(S^3, W_a) =S_{a1}\,,
\end{equation}
with the special case with $a=1$
\begin{equation}\label{ZS3D}
    {\cal Z}(S^3)=S_{11}\,.
\end{equation}
Then, we define the quantum dimension of $a$ as
\begin{equation}
    d_a=\langle W_a\rangle ={{\cal Z}(S^3, W_a)\over {\cal Z}(S^3)}={S_{a1}\over S_{11}}\,.
\end{equation}
Using the facts that $S_{a1}=S_{1a}$ are real, and the matrix $S$ is unitary, $\sum_aS_{a1}^2=1$, and therefore, the total quantum dimension is
\begin{equation}\label{totalquantd}
    {\cal D}=\sqrt{\sum_a d_a^2}={1\over S_{11}}\,.
\end{equation}

Below, we will use the fact that if $d_a=1$, the anyon $a$ is Abelian, and otherwise it is non-Abelian. 

This definition of $d_a$ does not explain in what sense it can be viewed as a dimension.  Here we give three explanations. 

First, consider the dimension of the Hilbert space of the TQFT on a sphere in the presence of $n$ identical defects of type $a$. One can show that as $n\rightarrow \infty$, the dimension grows asymptotically as $\sim d_a^n$. 

Alternatively, the original motivation for this definition came from the underlying RCFT, where it is related to the dimension of a representation $a$ of a chiral algebra.  Since these representations are infinite dimensional, they do not have a standard notion of dimension.  However, we can use the characters of these representations $\chi_a(\tau)$ (with $\tau$ the modular parameter), to define the quantum dimension of $a$ as \cite{Dijkgraaf:1988tf}
\begin{equation}
    d_a=\lim_{\tau\to 0}{\chi_a(\tau)\over \chi_1(\tau)}\,.
\end{equation}
Then, using the action of the modular group $S:\ \chi_a\to \sum_b S_{ab} \chi_b$ and the fact that the lowest dimension operator is the identity, 
\begin{equation}
    d_a=\lim_{\tau\to 0}{\chi_a(\tau)\over \chi_1(\tau)}=\lim_{\tau\to 0}{\sum_bS_{ab}\chi_b(-{1\over \tau})\over \sum_cS_{1c}\chi_c(-{1\over \tau})}={S_{a1}\over S_{11}}\,.
\end{equation}
A relation between these RCFT representations and the 2+1d TQFT is obtained by studying the TQFT on a disk with an anyon $a$.  The edge states are in the representation of the chiral algebra associated with $a$ \cite{Witten:1988hf,Elitzur:1989nr} and $\chi_a(\tau\rightarrow 0)$ is the trace of this Hilbert space. 

Finally, the Verlinde formula \cite{Verlinde:1988sn,Moore:1988uz} shows that the quantum dimensions satisfy $d_ad_b=\sum_c N_{ab}^c d_c$, where the non-negative integers $N_{ab}^c$ are the fusion coefficients.  The fact that this expression is similar to the analogous relation for multiplying ordinary finite characters gives another motivation to interpret them as dimensions.

\subsection{Single-layer spin{}$^c$ theories}

\subsubsection{Odd $q$, $\ell=1$}

Most historical examples of odd denominator FQH states have the order of the vison equal to $q$, i.e., $\ell = 1$. Thus, in this section, we will take $\ell = 1$, but generalize to larger values in the next section. Here, $s=q$ and $r=p$ and hence, the symmetry lines are ${\cal V}^{q,p}$.  As mentioned in Sec.~\ref{sec:symmetrylines}, this is a complete TQFT and the full TQFT thus factorizes as ${\cal V}^{q,p}\boxtimes {\cal T'}$, where anyons in ${\cal T'}$ carry integer charges. In this case, one can attach electrons to the $\cT'$ anyons to make all of them charge-neutral, so the theory can be formed with only bosonic degrees of freedom, i.e., a purely bosonic TO. This is formalized in Theorem~\ref{thm:triv_vison_split} in Appendix~\ref{app:spinc-classification}. 

Let us first discuss a few examples with trivial $\cal T'$. These examples were the first FQH states to be described. Starting with Laughlin \cite{Laughlin1}, wavefunction-based approaches to FQH states were developed \cite{HaldaneHierarchy, HalperinHierarchy, ReadHierarchy1990, JainCF, Moore:1991ks}. 
In the Laughlin states at filling $\nu = 1/q$, the TQFT is just our $\mathcal{V}^{q,1}$ theory. In Chern-Simons language, these are the $\U_q$ theories. 

Now we will show how more general Abelian FQH states fit in. First, hierarchical FQH states can be represented by a $K$ matrix that is tridiagonal with ones above and below the diagonal \cite{wen1995topological, zhang2025hierarchy} and a $t$ vector given by $t_{ij} = \delta_{i1}$. To satisfy the spin{}$^c$ condition, we need to have $K_{11}\in 2\bZ+1$ and $K_{jj}\in 2\bZ$ for $j>1$. These theories are just the $\mathcal{V}^{q,p}$ minimal theories with $q=|\det K|$ and $p = q K^{-1}_{1,1}$.  They are generated by the vison. 

Another set of examples are the Jain sequence states with $n$-dimensional $K$ matrix and $t$ vector \cite{wen1995topological}
\begin{equation}
    K_{ij} = \pm\delta_{ij}+ 2p\qquad , \qquad \qquad t_i = 1\qquad , \qquad \qquad 1 \leq i,j \leq n\,. 
\end{equation}
These states have filling $\nu =n/(2np\pm 1)$, and a $\mathbb{Z}_{2np \pm 1}^{(1)}$ one-form symmetry group. As before, the vison generates the entire symmetry group, and so the Jain states are really just the $\mathcal{V}^{2np\pm 1,n}$ minimal theories. In fact, all the states that arise from additional flux attachments to these Jain states are the $\mathcal{V}^{q,p}$ minimal models, as shown in Appendix~A of \cite{JensenRaz}. 

In general, every ${\cal V}^{q,p}$ can be realized by an Abelian Chern-Simons theory \cite{ Wang:2020nmz, ma_fractonic_2022}.   

Some examples with nontrivial $\cal T'$ are the Read-Rezayi states at filling $\nu = p/q$ with $p$ and $q$ odd. Let us review this fact. The Read-Rezayi states have filling fractions $\nu = k/(km+2)$ for $k, m\in \bb{Z}$ \cite{read_beyond_1999, bonderson_non-Abelian_2007}. For odd $m$, these are spin{}$^c$ theories, while for even $m$ they are bosonic. In all cases, they are constructed as the following Chern-Simons theory \cite{bonderson_non-Abelian_2007, seiberg2016gapped} 
\begin{equation}
    R_{m,k} =  \frac{{\rm U(2)}_{k,-2k} \times  \U_{k(km+2)}}{\mathbb{Z}_{k}}\,.
\end{equation}
Here, we used the notation for Chern-Simons theories ${\rm U}(2)_{k_2,k_1}=[{\rm SU}(2)_{k_2} \times \U_{k_1}]/\mathbb{Z}_2$.\footnote{The notation in \cite{bonderson_non-Abelian_2007} is slightly different. There, the  $\bb{Z}_k$ para-fermion theory ${\rm U}(2)_{k,-2k}$ is referred to as $\mathrm{Pf}_k$. The anyon content of $\mathrm{Pf}_k$ is written explicitly in Eq.~(5.50) along with the identifications.  Also, \cite{bonderson_non-Abelian_2007} uses $\bb{Z}_{2k}^{(-1/2)}$ in place of $\U_{-2k}$. Similarly, the anyon content of $R_{m,k}$ is written explicitly with the identifications in Eq.~(5.51), and it is identical to the $R_{m,k}$ written above. Note that here $\bb{Z}_{2k(km+2)}^{(1/2)}$ is used in place of $\U_{k(km+2)}$ for $k,m$ odd.}

When $p$ and $q$ are odd (and hence $k,m$ are odd), one can factorize 
 both ${\rm U}(2)_{k,-2k}$ and $\U_{k(km+2)}$, resulting in
\begin{equation}
    R_{m,k} = {\cal F}_k \boxtimes \mathcal{V}^{km+2,k}\,. 
\end{equation}
Here, the theory ${\cal F}_k$ is defined in Appendix~\ref{app:Fk}.
Thus, these states are all minimal states stacked with an additional theory. For $\nu = 13/5=2+3/5$ this additional theory is ${\cal F}_3$, the Fibonacci anyon model. The case of $\nu=12/5 = 2+2/5$ is the particle-hole conjugate of $13/5$. These states have been proposed to be realized at these fillings \cite{read_beyond_1999,sreejith2013tripartite, FibonacciDMRG1, FibonacciDMRG2, Pakrouski2016}, though there is some debate over their applicability. 

\subsubsection{Even $q$, $\ell=2$}

For even $q$ there is an important difference from odd $q$. Here, the spin/charge relation forces $\ell$ to be even. Historical examples again have the smallest allowable value, $\ell =2$.
When $\ell =2$, $s=2q$ and $r=2p$. Hence, the symmetry lines are ${\cal V}^{2q,2p}$.  This is not a complete TQFT, and the full low-energy TQFT does not factorize into this factor and the rest of the lines. 

Nonetheless, we can still learn through a few examples. We take the simplest case of $p=1,q=2$. Then one example of $\ell = 2$ is $\U_8$. This theory has a $\mathbb{Z}_8^{(1)}$ one-form symmetry, and we write it as ${\cal A}^{8,1}$.  The line $a_1^2$ of ${\cal A}^{8,1}$ is identified with the vison. It generates $\mathcal{V}^{4,2}$ and can thus be matched with the UV theory with $\nu=1/2$.  Another example is the Moore-Read state~\cite{Moore:1991ks}, whose Abelian sector is generated by the vison and also given by ${\cal V}^{4,2}$. 

The astute observer will note that both of these examples are special cases of the Pfaffian theories. They are defined as 
\begin{equation}
\mathsf{Pf}^f_{2,1,n}=\frac{\mathrm{Spin}(n)_1\boxtimes \mathcal{A}^{8,1}}{(\psi a_1^4c)}=\frac{\mathrm{Spin}(n)_1\times \U_8}{\mathbb{Z}_2}\qquad,\qquad n=0,1,\ldots, 7\label{eqn:fermion_mTOs},
\end{equation}
where the subscript indicates that we are gauging the $\bb{Z}_2^{(1)}$ one-form symmetry generated by the even-charge boson $\psi a_1^4c = \psi v^2c$. 
We chose the name $\mathsf{Pf}^f_{q,p,n}$ because for $p=1$ and $n$ odd, these are the (fermionic) Pfaffian states at $1/q$ filling. We discuss these theories in more detail in Appendix~\ref{app:anyontheory}. The $\U_8$ state is $\mathsf{Pf}^f_{2,1,0}$, while the Moore-Read state is $\mathsf{Pf}^f_{2,1,1}$. Indeed, \textit{all} of these theories contain the required ${\cal V}^{4,2}$ subgroup. They are thus all consistent with the required symmetries and anomalies. Four theories are Abelian with eight anyons, while the other four theories are non-Abelian. Moreover, they all have the same quantum dimension $\mathcal{D}=2\sqrt{2}$.

\subsection{Defining a descent}

The observation that many of the important FQH states seem to be as ``small'' as possible motivates us to define an ordering to the allowable FQH states. First, our observation in Sec.~\ref{sec:symmetrylines} that theories with $\ell =1$ factorize suggests the following rule: 
\begin{itemize}
    \item[1)] Suppose $\mathcal{T}$ factorizes as $\mathcal{T} = \mathcal{S}\boxtimes \cT'$ where $\cT'$ is nontrivial but does not couple to the U(1) background gauge field. Then we call $\mathcal{S}$ more minimal than $\cT$. Alternatively, we may say that ${\cal S}$ descends from $\mathcal{T}$, or that $\mathcal{T}$ is a parent TQFT for $\cal S$.
\end{itemize}

However, if $\ell \neq 1$ then the TQFT does not factorize. Nonetheless, we were able to produce a smaller TQFT by gauging a subgroup of the one-form symmetry. This motivates a second rule:
\begin{itemize}
    \item[2)] Let $\mathcal{T}$ have an anomaly free one-form symmetry, whose gauging leads to $\mathcal{S}$. If anyons generating the one-form symmetry are charge-neutral, the gauging does not affect the $\U$ symmetry, and $\mathcal{S}$ has the same Hall conductivity as $\mathcal{T}$. We then call $\mathcal{S}$ more minimal than $\mathcal{T}$.  Again, we may use parent/child language to emphasize the relationship between $\mathcal{S}$ and $\mathcal{T}$.
\end{itemize}

This allows us to define a kind of ``descent'' within the set of TQFTs with the same $\sigma_H$. Start with some TQFT $\mathcal{T}$ and first apply rule two, and then rule one. Note that this descent always decreases the quantum dimension \cite{Bais:2008ni, Eliens:2013epa} (see also Appendix~\ref{app:gauging-quantum-dim}). If ${\cal T}$ is Abelian, then it will also decrease the anyon count; however, if ${\cal T}$ is non-Abelian, then step two may increase the anyon count.\footnote{As an example of this, consider the bosonic theory $\cA^{n,k}$ with $nk\in 2\bZ$. Gauging the charge conjugation zero-form symmetry leads to a new theory $\cal T$ with $(n+7)/2$ anyons when $n$ is odd, and $n/2+7$ when $n$ is even. Now, gauging the (dual) one-form $\bZ_2^{(1)}$ symmetry in this new theory $\cal T$ increases the anyon number if $n$ is sufficiently large.}

Let us see what this does in the case of odd denominators. Applying rule two will take all theories with $\ell \neq 1$ to $\ell =1$, by gauging the one-form symmetry generated by $v^q$ (with suitable electrons attached). The resulting theory is still spin{}$^c$, with the vison $v$ generating ${\cal V}^{q,p}$. Further, by Sec.~\ref{sec:symmetrylines}, it factorizes as ${\cal V}^{q,p}\boxtimes {\cal T'}$. Applying rule one leaves ${\cal V}^{q,p}$ remaining. For single-layer spin{}$^c$ theories with odd denominators, this suggests we should call $\mathcal{V}^{q,p}$ \textit{the} minimal TQFT; it is the unique endpoint of our procedure to produce more minimal theories. We display this in Fig.~\ref{fig:odd}. Note that in this case, while the minimal TQFT is unique, there are infinitely many gapped phases corresponding to the same TQFT by stacking invertible phases (see  Sec.~\ref{sec:tqft-gappedphase}).

Next, consider the case of even denominators. Applying rule two, we gauge the one-form symmetry generated by $v^{2q}$. Since its charge is $0\mod 2$ the gauging does not affect the charge conservation, and the resulting theory is still spin{}$^c$ with the same Hall conductivity. This takes all theories to $\ell = 2$. However, unlike the case of odd $q$, now there is no longer a unique endpoint theory. For concreteness, specialize to $\sigma_H = 1/2$. One collection of endpoints are  $\mathsf{Pf}^f_{2,1,n}$ of \eqref{eqn:fermion_mTOs}. These theories have $\ell =2$, and it can be checked that there is no uncharged theory that factors out. There are also an infinite number of other endpoints, though as we will see later, all have larger numbers of anyons and quantum dimension. Example of these other endpoints are $[\mathrm{SU}(2)_{4k+2}\boxtimes \cA^{8,1}]/{(\Phi_{2k+1}a_1^4c)}$ where $k\in \bZ$.\footnote{Note that for $k=2$ [i.e., ${\rm U}(2)_{10,8}$], the $(\Phi_3,1)$ anyon is a charge-neutral non-Abelian boson that can be gauged. Gauging it ``descends" the theory to $\pf_{2,1,5}^f$.  This gauging is similar to the gauging that turns ${\rm SU}(2)_{10}$ to ${\rm Spin} (5)_1$, and it can be interpreted as gauging a non-invertible symmetry. (See footnote \ref{gaugingvscon}.) Thus, if we allow this more general type of gauging, then it is no longer an ``endpoint". However, even if this type of gauging is allowed, the other ${\rm U}(2)_{4k+2,8}$ theories with $k\neq 2$ remain endpoints.} 
Here $\Phi_j$ refers to the isospin-$j$ anyon in ${\rm SU}(2)_{4k+2}$ \{alternatively they can be represented as ${\rm U}(2)_{4k+2,8}$ Chern-Simons theories, see Appendix~C4 of \cite{seiberg2016gapped}\}. For a general $p,q$ with $q$ even, these theories are given by 
\begin{equation}\label{eqn:U_defn}
{\cal U}_{q,p,4k+2} = \frac{{\rm SU}(2)_{4k+2} \boxtimes {\cal A}^{4q,p}}{(\Phi_{2k+1}a_1^{2q}c)}\,,
\end{equation}
where $v = a_1^2 \in {\cal A}^{4q,p}$ and the boson $\Phi_{2k+1}a_1^{2q}c$ generates an order two subgroup. Note that $k=0$ is equivalent to $\pf_{q,p,3}$, but other values of $k$ are distinct.

\begin{figure*}
    \centering
    \subfigure[Descent for $q$ odd]{
        \includegraphics[width=0.45\textwidth]{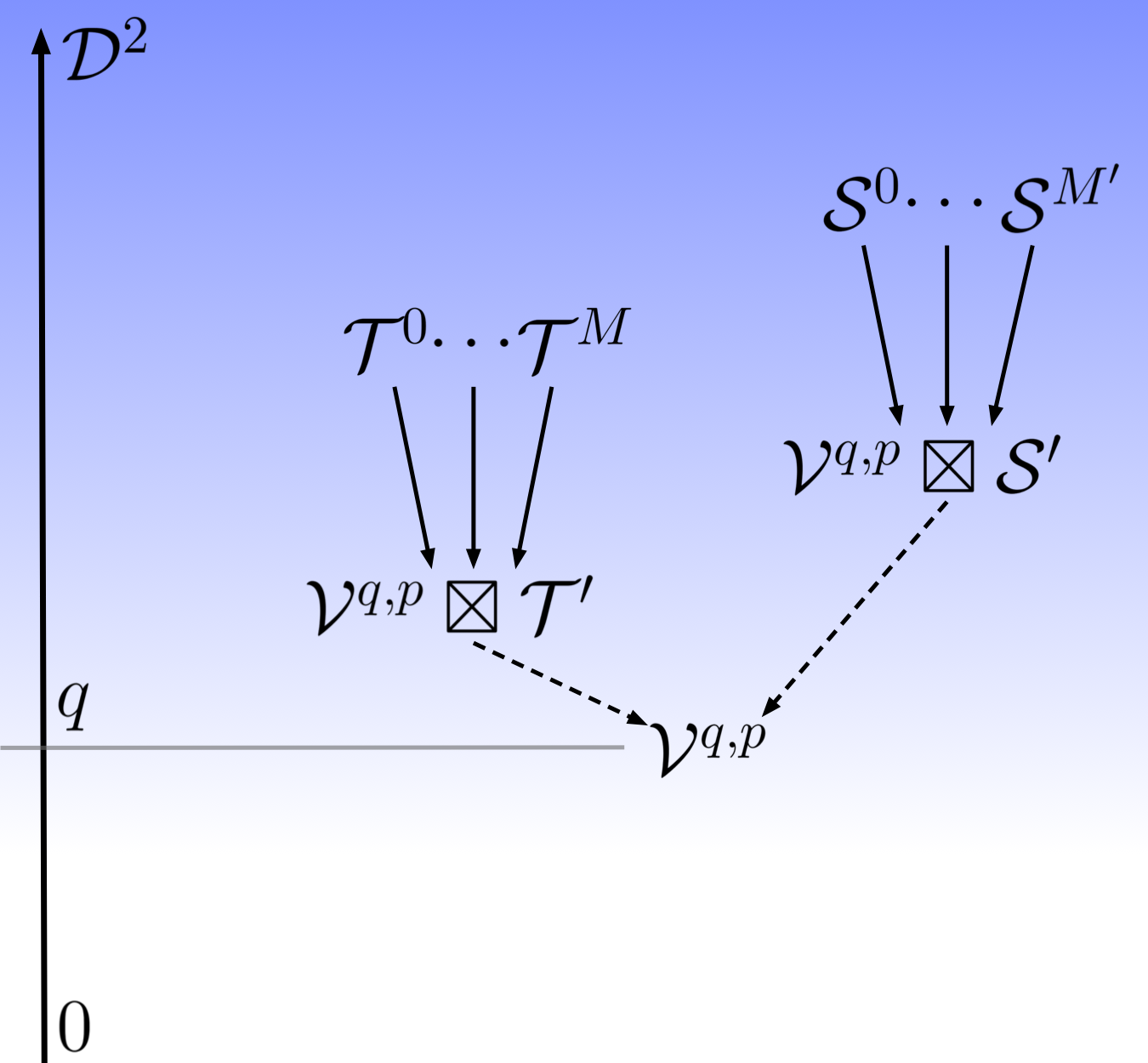}
        \label{fig:odd}
            }
    ~ 
    \subfigure[Descent for $q$ even]{
        \includegraphics[width=0.45\textwidth]{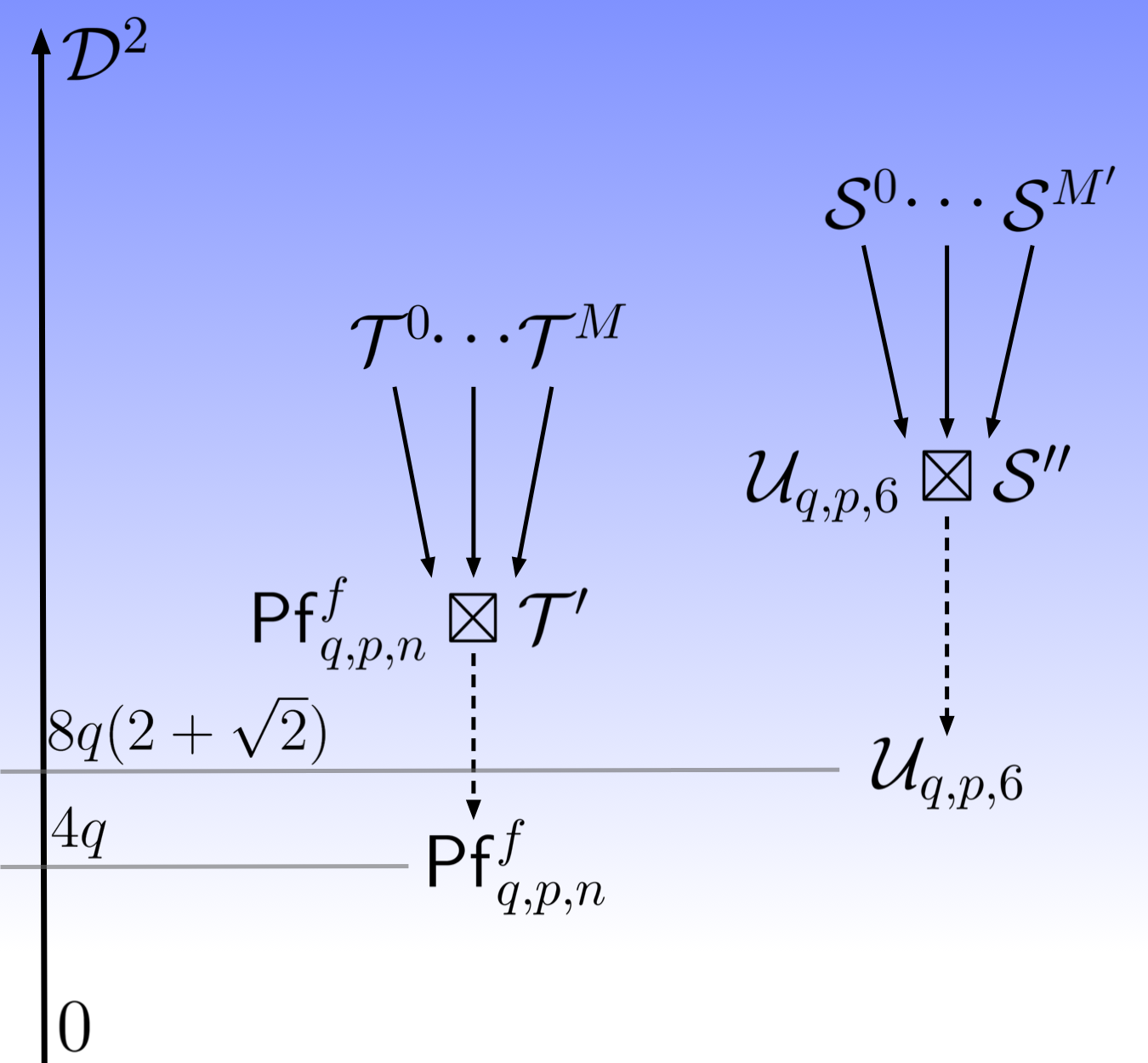}\
        \label{fig:even*}
            }
    \caption{Illustration of the descent of single-layer spin{}$^c$ theories with $\sigma_H = p/q$. Solid arrows indicate descent under gauging, while dashed arrows indicate removing neutral, decoupled theories. The vertical axes display theories with increasing total quantum dimension, $\mathcal{D}$. The larger $\mathcal{D}$, the larger the number of theories, illustrated by the darker color at larger values.  In (a), the lower bound on $\mathcal{D}^2$ is $q$, illustrated by the gray line. It is seen to be saturated by $\mathcal{V}^{q,p}$, which is the unique endpoint of the descent discussed in the main text. In (b), we see that for even $q$ the descent no longer has a single endpoint. Some, but not all, are $\mathsf{Pf}^f_{q,p,n}$ and ${\cal U}_{q,p,6}$, discussed in the main text.}
	\label{fig:flow}
\end{figure*}

\subsection{Ordering TQFTs using topological data beyond $\sigma_H$}

In the previous section, we defined a descent procedure and saw how it can be used to order TQFTs consistent with a given $\sigma_H$. As discussed in the introduction, this may be the only topological data accessible to experimentalists. However, numerical studies of FQH systems often have access to additional topological data. In this section, we will show how to use a few pieces of topological data beyond $\sigma_H$, along with the descent of the previous section, to further constrain the possible TQFTs. We will specialize to odd $q$, though the argument can be extended to even $q$. The information we will use is: 
\begin{itemize}
    \item the Hall conductivity (or filling fraction), $\sigma_H=p/q$;
    \item the number of anyons, $N_{{\cal T}}$. This can be determined by the torus ground-state degeneracy, measured using exact diagonalization (ED) in small system sizes;\footnote{In fermionic/spin theories, if the electron is counted as a nontrivial anyon, then the number of anyons is twice the torus ground-state degeneracy. We review this in more detail in Appendix~\ref{app:spinc-classification}.}
    \item the order of the vison, $s=q\ell$. This can be determined by studying how the degenerate ground states on the torus are permuted by threading flux. If it takes a flux of $2\pi s$ to return the system to itself, then $s = q\ell$ is the order of the vison;
    \item the total quantum dimension, $\mathcal{D}_{{\cal T}}$. It can be determined by studying the universal correction to the entanglement entropy, found via DMRG.
\end{itemize}
All of these can, in principle, be obtained in quantum Hall numerics. We note, however, that this may not be easy. Finite size effects often make determining $N_{{\cal T}}$ difficult in practice and the cylinder size required by DMRG to obtain ${\cal D}_{{\cal T}}$ can be prohibitively large. Nonetheless this information is sometimes possible to obtain, as in the numerical study of the FQHE at $\nu = 3/4$ \cite{huang_non-Abelian_2024} where our reasoning proved useful.

We want to use this information to construct a list of consistent TQFTs, ${\cal T}$, that are as minimal as possible. As in the previous section, we begin by gauging the one-form symmetry generated by $v^q$. This descends to $\mathcal{S} = \mathcal{V}^{q,p} \boxtimes \mathcal{T}'$, where ${\cal T}'$ is a bosonic TQFT that is neutral under the global $\U$. It has quantum dimension ${\cal D}_{\mathcal{T}'} = \mathcal{D}_{{\cal T}}/(\ell \sqrt{q})$, where $1/\sqrt{q}$ arises from removing ${\cal V}^{q,p}$ and $1/\ell$ from gauging the one-form symmetry (see Appendix~\ref{app:gauging-quantum-dim} for a discussion of this result from various perspectives).  We now enumerate the finitely\footnote{Note that  \cite{Bruillard:2013hdm} offers a proof that the number of TQFTs with a fixed number of anyons, $N_{{\cal T}}$, is finite. The proof can easily be extended to quantum dimension. Note that $N_{{\cal T}}\leq \mathcal{D}^2_{{\cal T}}$ in a TQFT. For a fixed $\mathcal{D}$ one simply enumerates all TQFTs, $\mathcal{T}$, with anyon number $N_{\mathcal{T}}\leq \mathcal{D}^2$ and keeps those with $\mathcal{D}_{\mathcal{T}} = \mathcal{D}$. There are finitely many of them by  \cite{Bruillard:2013hdm} and the inequality $N_{{\cal T}}\leq \mathcal{D}^2_{{\cal T}}$ ensures we found all possible TQFTs of a given quantum dimension.} many bosonic TQFTs ${\cal T}'$ with this quantum dimension. Since $\mathcal{D}_{\mathcal{T}'} < {\cal D}_{{\cal T}}$, this list of possibilities is smaller than enumerating all theories with ${\cal D}_{{\cal T}}$.

Having produced a finite list of descendant orders $\mathcal{S} = \mathcal{V}^{q,p} \boxtimes \mathcal{T}'$, we now want to ascend to obtain a list of parent orders. We do this by gauging the dual $\bb{Z}_\ell^{(0)}$ zero-form symmetry in ${\cal S}$ that was produced by gauging the one-form symmetry generated by $v^q$. Finding the possible parent orders then amounts to classifying $\bZ_{\ell}^{(0)}$ symmetry-enriched topological order $\mathcal{S}$. Now, the $\bZ_\ell^{(0)}$ symmetry action on $\mathcal{V}^{q,p}$ is determined by the fact that after gauging it, $\mathcal{V}^{q,p}$ is extended in the parent theory to $\mathcal{V}^{q\ell,p\ell}$ .  Therefore, in the parent theory, the generator $b$ of the global symmetry $\bZ_\ell^{(1)}$ satisfies $v^q=b$. (Then, gauging the $\bZ_\ell^{(1)}$ symmetry generated by $b$, we find back $\mathcal{V}^{q,p}$ in the child theory.), So, we only have to determine the $\bb{Z}_\ell^{(0)}$ symmetry action on $\mathcal{T}'$. Once this is done, we can produce the parent theories by gauging the $\bb{Z}_\ell^{(0)}$ symmetry in ${\cal S}$.

We have thus used $\sigma_H, \ell,$ and $\mathcal{D}_{{\cal T}}$ to produce a small list of candidate TQFTs. We did this by first descending and enumerating the possible neutral bosonic TQFTs ${\cal T}'$ with ${\cal D}_{{\cal T}'} = \mathcal{D}_{{\cal T}}/(\ell \sqrt{q})$ and the $\bb{Z}_\ell^{(0)}$ symmetry action on them. Then we gauged  this $\bZ_\ell^{(0)}$ zero-form symmetry to ascend to the list of candidate TQFTs. At this point, we can rule out all candidates that do not have the right number of anyons $N_{{\cal T}}$.

Let us give an example to illustrate how this works. Suppose that ${\cal D}_{{\cal T}} = \ell \sqrt{q}$. Then enumerating the orders with ${\cal D}_{{\cal T}'} = {\cal D}_{{\cal T}}/(\ell \sqrt{q}) = 1$ is simple; $\cal T'$ is trivial. Thus, there is only one possible child order, ${\cal S} = {\cal V}^{q,p}$. Our next step is to classify $\bb{Z}_\ell^{(0)}$ symmetry-enriched topological orders ${\cal S}$. The problem of classifying symmetry enrichment in a given TQFT has been systematically studied in recent literature. For bosonic TQFTs, a complete classification has been established in \cite{barkeshli_symmetry_2019}, and generalizations to fermionic/spin TQFTs were studied in \cite{Aasen:2021vva, Bulmash:2021hmb, Bulmash:2021ryq}, and these results can be generalized to spin$^c$. For our purposes, they reveal that if ${\cal W}^{q,p,\ell}$ is one possible parent TQFT\footnote{See Appendix~\ref{app:mme} for an explicit construction. When $p=1$, ${\cal W}^{q,p,\ell}$ can be taken to be ${\cal V}^{q\ell^2,1}$. \label{footnoteM}} then all others are given by
\begin{equation} \label{eq:mme_all}
{\cal T} = \frac{{\cal W}^{q,p,\ell} \boxtimes {\cal U}}{(v^{-q}b)}\,.
\end{equation}
Here, the theory ${\cal U}$  is the result of gauging the $\bZ_l^{(0)}$ symmetry of a fermionic SPT protected by $\bb{Z}_\ell^{(0)} \times \bb{Z}_2^{(0),f}$, and the boson $b$ is the generator of $\bZ_\ell^{(1)}$  in $\cal U$. These SPTs were classified in \cite{cheng_classification_2018}.  

For concreteness, suppose that $\ell$ is odd. Then the classification of fermionic SPT phases is identical to that of bosonic SPT phases with $\bZ_\ell^{(0)}$ symmetry. They are labeled by elements of $H^3(\bZ_\ell, \U)=\bZ_\ell$. Gauging $\bZ_\ell^{(0)}$ results in a $\bZ_\ell$ Dijkgraaf-Witten gauge theory \cite{Dijkgraaf:1989pz}, represented as $\U\times \U$ Chern-Simons theory with the following $K$ matrix \cite{Kapustin:2014gua}: 
\begin{equation}
    K_n = \begin{pmatrix} 0& \ell\\ \ell& 2n\end{pmatrix}\qquad, \qquad n=0,1,\ldots,\ell-1.
\end{equation}
Then, since $\ell$ is odd and this is a spin theory, there are precisely $\ell$ possible TQFTs consistent with ${\cal D}_{{\cal T}} = \ell \sqrt{q}$. Moreover, all of these theories are Abelian with a total of $N_{{\cal T}} = q\ell^2$ anyons.

We conclude by noting that in a few cases with a small anyon count, it is not necessary to go through the argument above. Instead, a list of candidate theories can be produced from knowledge of $\sigma_H$ and $N_{{\cal T}}$ alone. In the simplest case, where $N_{{\cal T}}=q$, this is enough to uniquely fix the state. As we have seen, only ${\cal V}^{q,p}$ has a total of $q$ anyons for $\sigma_H = p/q$ with $q$ odd.

Now suppose that $q<N_{{\cal T}}\leq 2q$. In this case we must still have $\ell = 1$, as $\mathcal{V}^{2q, 2p}$ is not modular. Thus we conclude that $\mathcal{T} = \mathcal{V}^{q,p}\boxtimes \mathcal{T}'$. The number of anyons contained in $\mathcal{T}'$, $N_{\mathcal{T}'}$, is given by $N_{\mathcal{T}'} = N_{{\cal T}}/q$. Since both $N_{{\cal T}}$ and $N_{\mathcal{T}'}$ must be integer we have $N_{{\cal T}} = 2q$, and $N_{\mathcal{T}'}=2$. The possibilities for $\mathcal{T}'$ are then $\U_{\pm 2}$, the Fibonacci theory, or its image under time reversal~\cite{Rowell:2007dge}.

\subsection{Minimality by anyon count/quantum dimension}
As we emphasized, the observation of a fractional $\sigma_H$ implies the existence of a set of Abelian anyons. We now pose the question of what the ``minimal" topological order is in a system with a given fractional $\sigma_H$. As some of the authors argued in a recent paper \cite{musser_fractionalization_2024}
two reasonable definitions are that minimal theories have the smallest: 1) number of anyons $N_{{\cal T}}$ or 2) total quantum dimension $\mathcal{D}_{{\cal T}}$, among all theories consistent with the IR symmetries and anomalies.  We will determine minimal theories for FQH systems through either definition below. 

These minimal theories are important for a number of reasons. First, as we discuss in Sec.~\ref{sec: summary}, the vast majority of (currently) experimentally observed FQH states seem consistent with the minimal theories described below. Thus, even in situations where the microscopic description is not fully understood, the minimal theories may be a good guess for the FQH state with a given $\sigma_H$. Second, in the theoretical discussion thus far, we considered theories that are endpoints under gauging and removing  neutral, decoupled factors that preserve the responses. While there is essentially a unique endpoint theory for odd $q$, for even $q$ there are infinitely many. Thus, it is also useful to impose a further minimality condition to narrow down the list.  Finally, defining minimality by $N_{{\cal T}}$ and ${\cal D}_{{\cal T}}$ can be useful in numerical studies which often find the lower bound on anyon count, see, e.g.,~\cite{huang_non-Abelian_2024}. 

Below, we take $N_\cT$ to be the total number of anyons, identifying $c\sim 1$ in the counting. We also define the quantum dimension ${\cal D}_{\cal T}$ with the same convention.

\subsubsection{Single-layer spin{}$^c$, even $q$}\label{singlelayers}

\begin{figure*}
\includegraphics[width=0.5\textwidth]{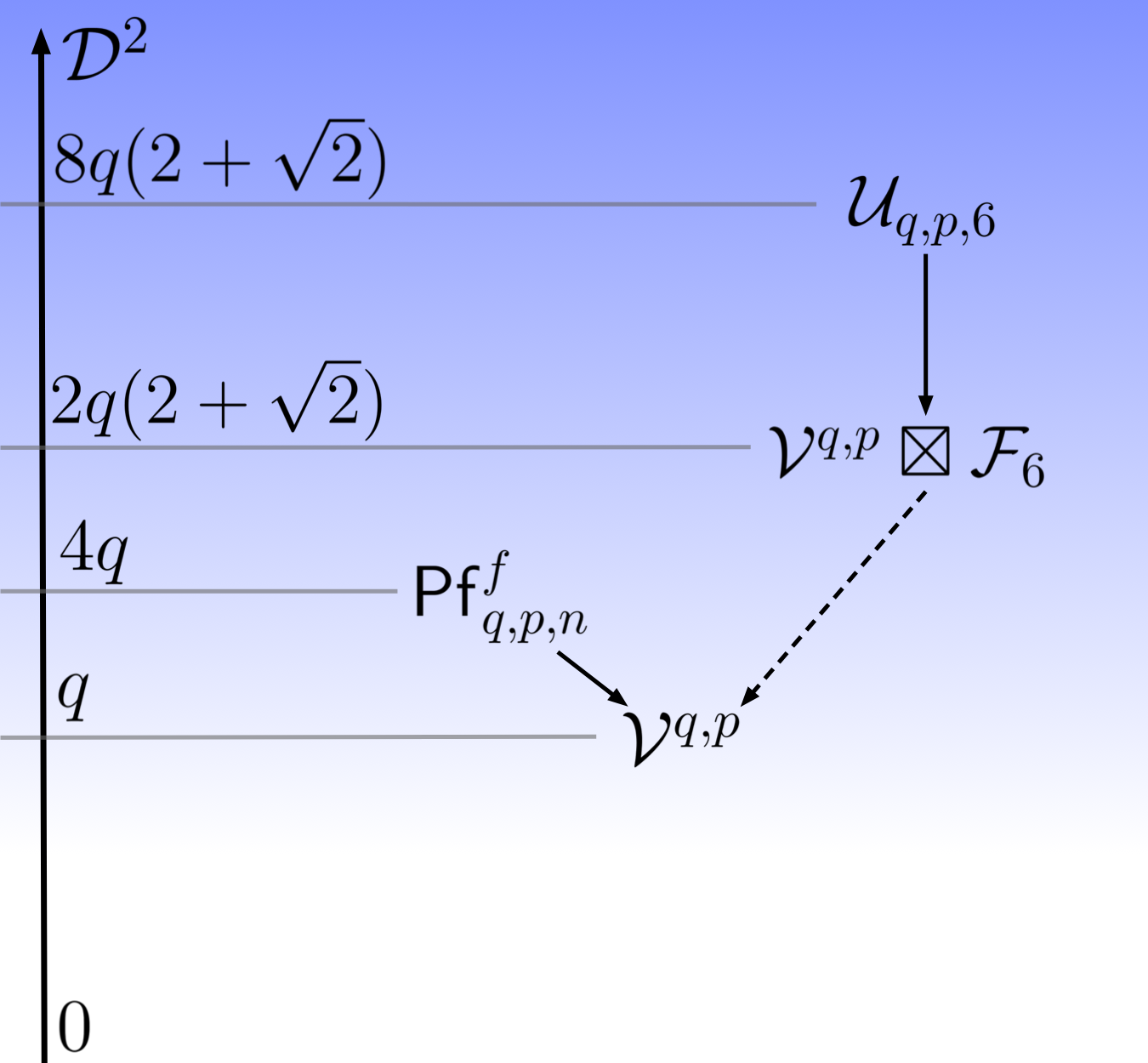}
\caption{For even $q$, we see that allowing ourselves to violate the spin/charge relation restores the unique endpoint. To do this, we gauge the one-form symmetry associated with $v^q$ and descend to $\ell=1$, indicated by solid arrows. We then remove the uncharged sectors, indicated by dashed arrows. As shown in the main text, the theories $\mathsf{Pf}^f_{q,p,n}$ are special because they are the only theories without an uncharged sector upon descending to $\ell=1$.}
\label{fig:even}
\end{figure*}

Let us set out our proof strategy for proving minimality by $N_{{\cal T}}$ or $\mathcal{D}_{{\cal T}}$. We denote these minimal theories by $\mathcal{T}$. 

Step one of our argument involves noting $\mathcal{V}^{2q,2p}\subseteq \mathcal{T}$. Thus the number of anyons and quantum dimension squared of $\mathcal{T}$, $N_{{\cal T}}$ and $\mathcal{D}^2_{{\cal T}}$, respectively, must be strictly larger than $2q$, as $\mathcal{V}^{2q,2p}$ is not a complete theory. Step two involves noting that $\pf_{q,p,n}^f$ is a complete theory that is consistent with the symmetries and anomalies. So $2q< N_{{\cal T}}\leq 3q$. This immediately implies that in the minimal theory by $N_{{\cal T}}$, $\ell =2$. Let us first establish what the values of $N_{{\cal T}}$ and $\mathcal{D}_{{\cal T}}$ will be in the minimal theory by anyon count, and then produce the list of possible minimal theories.

We start with $N_{{\cal T}}$. Because of charge conservation for any anyon $x\in \cT$, $x\times v^k$ for $k=1,\cdots, q-1$ are all distinct from each other because $v^k$ carries charge $pk/q$. This implies that $N_\cT$ must be a multiple of $q$. Together with the bounds found above, we conclude $N_\cT=3q$.

We next group anyons in $\cT$ according to their braiding with $v^q$. We take $\cT_{0}$ ($\cT_1$) to be those with braiding phase $0$ ($1/2$). Clearly ${\cal V}^{2q,2p}\subset \cT_0$. Also, since $\cT_1$ is non-empty, the argument above implies there are at least $q$ anyons in $\cT_1$. Since $N_{\cT} = 3q$ this tells us that these inequalities are equalities and $\cT_0={\cal V}^{2q,2p}$, $\cT_1=\{x \times v^k|k=0,1,\dots, q-1\}$. In particular, this implies $x\times v^qc=x$.\footnote{Here we have not identified $c\sim 1$ to make it manifest that this fusion rule obeys the spin/charge condition.} It is easy to see that $\cT = \cT_0\oplus \cT_1$ gives a $\bZ_2$ grading on $\cT$, using Lemma \ref{lemma:G-graded-D},  we see that $2q = {\cal D}^2_{{\cal T}_0} ={\cal D}_{{\cal T}_1}^2 = qd_x^2$ and thus that $d_x=\sqrt{2}$. The total quantum dimension squared is ${\cal D}^2_{{\cal T}} = 2{\cal D}^2_{{\cal T}_0}=4q$.

To construct this theory, we tentatively violate the spin$^c$ condition, as in Fig.~\ref{fig:even}, and gauge the one-form symmetry generated by the charge one boson $v^q$. Then, $\cT$ is reduced to ${\cal V}^{q,p}$. To recover the parent TQFTs, we gauge the dual zero-form $\bb{Z}_2^{(0)}$ symmetry acting in the child TQFT ${\cal A}^{q,p}$. The action on the anyons is determined such that after the gauging, ${\cal A}^{q,p}$ becomes ${\cal V}^{2q,2p}$. One choice for $\cT$ is ${\cal A}^{4q,p}$ with vison $v=a_2$. All other choices of $\cT$ can be obtained by tensoring invertible theories when gauging. In this case, the invertible theories are fermionic SPT phases with global $\bZ_2^{(0)}$ symmetry \cite{cheng_classification_2018}. Gauging the $\bZ_2^{(0)}$ symmetry in the fermionic SPT phases gives ${\rm Spin}(n)_1$, with $\psi c$ being the gauge charge. We thus obtain the full list of candidates for $\cT$, by tensoring ${\cal A}^{4q,p}$ and ${\rm Spin}(n)_1$, and then gauging the diagonal $\bZ_2^{(1)}$ one-form symmetry generated by $\psi a_{2q}c$,
\begin{equation}
    \frac{{\rm Spin}(n)_1\boxtimes {\cal A}^{4q,p}}{(\psi a_{2q} c)}\,,
\end{equation}
which is the definition of the $\pf_{q,p,n}^f$ theory. These TQFTs have $N_\cT=4q$ anyons for even $n$ and $3q$ anyons for odd $n$, all with quantum dimension squared $4q$. We conclude that the minimal theories by anyon count are precisely those with odd $n$.

Finding the minimal theories by quantum dimension is even more straightforward. Since the full theory contains $\mathcal{V}^{2q,2p}$ and $\pf_{q,p,n}^f$ is a complete theory we must have $2q< {\cal D}_{{\cal T}}^2 \leq 4q$. Then, since ${\cal V}^{2q,2p}\subset \cT_0$ we must have ${\cal D}_{{\cal T}}^2 = 2{\cal D}_{{\cal T}_0}^2 \geq 4q$ \cite{barkeshli_symmetry_2019}. The combination of these two bounds reveals that ${\cal D}_{{\cal T}}^2=4q$. Using the same logic as above, we conclude that the minimal theories by quantum dimension are $\pf_{q,p,n}^f$ for all values of $n$.
 
We stop to note that all these minimal TQFTs can also be understood through the parton construction. Write the electron as $c=fb$, where $f$ is the neutral fermion and $b$ is a charge-1 boson. The $b$ bosons are at filling $p/q$, and the unique minimal TQFT for them is $\mathcal{V}^{q,p}$.\footnote{We show this in Sec.~\ref{singlelayerbo}, when discussing minimal orders of bosonic FQH states.} We then let the neutral fermions form a chiral topological superconductor with Chern number $n$.

\subsubsection{Minimal surface TO of a 3D topological insulator}

As an application of the above result, we show that the T-Pfaffian \cite{chen_symmetry_2014, bonderson_time-reversal_2013} is the unique minimal TQFT with the anomaly of a class AII topological insulator (TI) surface \cite{metlitski_symmetry-respecting_2015, wang_gapped_2013, wang_interacting_2014}.

To connect the two problems, we use the slab trick~\cite{metlitski_symmetry-respecting_2015}. Consider a slab of TI, where the top surface hosts a time-reversal-preserving TQFT,  and the bottom surface breaks the time-reversal symmetry (by adding a mass term to the Dirac fermion). Such a slab is equivalent to a 2+1d time-reversal breaking TQFT with $\sigma_H=1/2$. On the other hand, since the bottom surface is a massive Dirac fermion, the TQFT of the slab is completely determined by the top surface. The above discussion gives the full list of minimal TQFTs with $\sigma_H=1/2$. 

The problem then reduces to finding which of these can be consistent with (anomalous) time reversal. Let $\sigma$ be one of the two anyons in $\pf^f_{2,1,n}$ that have charge $Q(\sigma)=1/4$. It can be checked that $h(\sigma) = (n+1)/16$. Then since the full symmetry is  $\U\rtimes \bb{Z}_2^{\cal T}$,  i.e., time reversal preserves the $\U$ charge, ${\cal T}\sigma = \sigma$ or $v^2c\times \sigma$, with the latter being distinct from $\sigma$ for even $n$. Time reversal is anti-unitary, so $h({\cal T}\sigma)=-h(\sigma) $. This, along with ${\cal T}\sigma = \sigma$ or $v^2c\times \sigma$, means that $2h(\sigma) = 0\pmod{1}$. Since $h(\sigma) = (n+1)/16$, the only time-reversal-invariant TQFT among this list is the T-Pfaffian TQFT, corresponding to $\pf^f_{2,1,-1}\simeq\pf^f_{2,1, 7}$ in the above notation. Moreover, this action of time reversal is anomalous, as can be seen from the fact that the vison $v$ carries charge $1/2$; in a non-anomalous time-reversal-invariant 2+1d system, the vison must be charge neutral because $\sigma_H=0$.

In \cite{metlitski_symmetry-respecting_2015, wang_gapped_2013, seiberg2016gapped}, another spin{}$^c$ TQFT was constructed preserving time-reversal symmetry. As a topological order, it can be described as $\pf^f_{2,1,1}\boxtimes \cA^{2,-1}$ (recall that $\pf^f_{2,1,1}$ is the Moore-Read theory). Here $\cA^{2,-1}$ is charge-neutral, so if  we ignore the time reversal symmetry, $\cA^{2,-1}$ can be removed to get a minimal theory. However, the time-reversal symmetry mixes the two factors, and as a result, no further descent is possible without breaking time reversal. Thus $\pf^f_{2,1,1}\boxtimes \cA^{2,-1}$  is a non-minimal endpoint theory on the TI surface.

\subsubsection{Bilayer spin{}$^c$ theories}\label{bilayersection}

The previous discussion can be generalized to multilayer FQH systems.  Here ``layer" should be understood generally as an independent, discrete quantum number. For example, it can also be the spin, or the valley degree of freedom in graphene or moir\'{e} bands. The key assumption is that each layer comes with a separate $\U$ charge, which are all preserved in the ground state. Note that this requirement means that we will not allow ``interlayer" tunneling. If we did, it would break the two $\U$ symmetries to a single $\U$, taking us back to our single-layer discussion. 

In order to determine the minimal TQFT by anyon count/quantum dimension we will now need the full Hall matrix, as defined in  Sec.~\ref{QHP}.\footnote{When the full Hall matrix is known  \eqref{bilaryJE} and \eqref{bilaryrhoB} can be summarized by the response theory $\sum_{ab}{\sigma_H^{ab}\over 4\pi}\int_\tmfd A_a\wedge dA_b$. 
\label{responsebilayer}}  While the analysis above can be generalized, the results are much more involved. This is because for certain values of the drag Hall conductivity, the visons in each layer can be identified, while for others they cannot.

We now specialize to the case of $\nu = 1/2+1/2$ bilayers of electrons moving in a uniform magnetic field, and where each layer is restricted to be in the lowest Landau level.  This system has, in addition to the $\U\times \U$ symmetry associated with charge conservation in each layer, an anti-unitary particle-hole symmetry (denoted ${\cal CT}$). This symmetry means that each half-filled Landau level can be viewed either in terms of electrons at filling $\nu_e = 1/2$, or as holes doped into a filled Landau level at hole filling $\nu_h = 1/2$. This ${\cal CT}$ symmetry implies that $\sigma_H^{ab} = \delta^{ab}/2$.  To see why, recall that under $\cal T$ the Hall matrix changes sign: $\sigma_H \rightarrow -\sigma_H$. Under $\cal C$ (a unitary symmetry),  a fully filled Landau level is added in each layer. Thus, under $\cal CT$, the Hall matrix transforms as
\begin{equation}
    \sigma_H^{ab} \rightarrow \delta^{ab}-\sigma_H^{ab}.
\end{equation}
The particle-hole symmetry then implies $\sigma_H^{ab}=\delta^{ab}-\sigma_H^{ab}$, and $\sigma_H^{ab}=\frac12 \delta^{ab}$.

If we refer to the two layers as 1 and 2, then the previous arguments reveal that we must have two Abelian visons $v_1$ and $v_2$ with $\U^{(1)}\times\U^{(2)}$ charges $(Q_1,Q_2)=(1/2,0)$ and $(0,1/2)$, respectively. Their topological spins are $h(v_1) = h(v_2) = 1/4$ and they braid trivially with one another, i.e., $B(v_1, v_2) = 0$. Just as in the single-layer case, we denote the order of $v_1$ ($v_2$) by $2\ell_1$ ($2\ell_2$) for integers $\ell_{1,2}$. One can see that there must be at least $ 4 \ell_1 \ell_2\geq 16$ anyons in $\sc{S} = \mathcal{A}^{2\ell_2,\ell_2} \boxtimes \mathcal{A}^{2\ell_1,\ell_1} \subseteq \sc{T}$, so the number of anyons and quantum dimension squared is lower bounded by 16.

In the bilayer example, the spin/charge relation requires all local anyons to be bosons (fermions) if they have even (odd) total charge, $Q_1+Q_2$. Since $v_1^2, v_2^2$ are both bosons with total charge $1$, the spin/charge relation means $v_1^2,v_2^2\neq 1$. Thus $\ell_1$ and $\ell_2$ must be even. There must then be at least eight anyons in $\sc{S}\subseteq \sc{T}$, lower bounding the anyon count and quantum dimension squared.\footnote{The lower bound is $8 = 4\times 4/2$ and not $16$ because of the possibility that $v_1^2 = v_2^2$, as seen in the example of $D(\bb{Z}_4)$.} On the other hand, there is a known candidate TQFT
\begin{equation}
\sc{T}_{\mathrm{candidate}} =  D(\bb{Z}_4)\,,
\end{equation}
with the identification that $v_1 = em$, $v_2 = e\overline{m}c$ \cite{sodemann_composite_2017}. This has $16$ total anyons, upper bounding the anyon count and quantum dimension squared.

This upper bound immediately tells us that in any minimal theory, we must have $\ell_1 = \ell_2 =  2$ and  $v_1^{2}v_2^2=1$. Just as in the single-layer case, we can gauge the $\bb{Z}_2^{(1)}$ one-form symmetry generated by $v_1^2=v_2^2$ to produce the child TQFT $\mathcal{A}^{2,1}\boxtimes \mathcal{A}^{2,1}$. We then gauge the dual $\bb{Z}_2^{(0)}$ $0$-form symmetry in this child TQFT, with the ambiguity of stacking fermionic SPT states protected by $\bb{Z}_2^{(0)}\times \bb{Z}_2^{(0),f}$. As in the single-layer case, this gives us the possible parent TQFTs
\begin{equation}
\sc{W}_n = \frac{\mathrm{Spin}(n)_1\boxtimes D(\bb{Z}_4)}{(\psi e^2m^2c)}\,,
\label{eqn:fermion_bilayer_mTOs}
\end{equation}
where again, the denominator indicates that we gauge the one-form symmetry generated by $\psi v_1^2c = \psi e^2m^2c$. Note that as spin TQFTs, ${\cal W}_n$ is equivalent to ${\cal W}_{n+8}$, but depending on the $\U$ charge assignment, they may be distinct spin{}$^c$ TQFTs.

For odd $n$ the TQFT ${\cal W}_n$ has twelve anyons, smaller than the sixteen of the $D(\bb{Z}_4)$ example. Four of these anyons are non-Abelian, two have topological spins of $n/16$, and two have $(n+8)/16$. Pick one of these, $\sigma$, such that $h(\sigma) = n/16$. Since ${\cal CT}$ is anti-unitary it must map $\sigma$ to another non-Abelian anyon, ${\cal CT}(\sigma)$, such that $h\big({\cal CT}(\sigma)\big) = -n/16\mod 1/2$. But the only other choice of $h$ is $(n+8)/16\neq -n/16$ mod $1/2$. So for odd $n$, these theories do not have any anti-unitary symmetry and are ruled out. This means that the smallest TQFTs that satisfy $\nu = 1/2+1/2$ and are ${\cal CT}$-invariant are Abelian with sixteen anyons.

A similar argument can be used to rule out $n=2\pmod 4$. So the only possible minimal theory apart from $D(\bb{Z}_4)$ is ${\cal W}_4$. Although this theory can be consistent with an anti-unitary symmetry \cite{jian_minimal_2024}, we will see that it is not consistent with the full $\U^{(1)}\times \U^{(2)} \times \bb{Z}_2^{{\cal CT}}$ symmetry. We begin by noting that, without loss of generality, ${\cal W}_4$ is generated by $a$ and $v_1$, where $a$ is given by the fusion of an order-two semion in ${\rm Spin}(4)_1$ and $e\in D(\bb{Z}_4)$. It is an order-four anyon with $h(a) = 1/4$ and $B(a,v_1) = 1/4$. Further, in terms of these generators, $v_2 = a^2\overline{v}_1c$ and $a$ has an integer total charge.

Now, assume, by contradiction, that ${\cal W}_4$ is consistent with all the symmetries. First, we want to determine the action of ${\cal CT}$. Since ${\cal CT}$ flips the sign of the $\U^{(i)}$ charge for all anyons \cite{sodemann_composite_2017} we must have that ${\cal CT}v_i = v_i$ or $v_ic$ for $i=1,2$. Then $h({\cal CT}v_i) = -h(v_i)=-1/4$ fixes ${\cal CT}v_i = v_ic$. Using this and $v_2 = a^2\overline{v}_1c$, we see that ${\cal CT}a^2=a^2$ and thus that ${\cal CT}a = a\times d$ for some order two anyon $d\in {\cal W}_4$. Since $a$ has integer total charge, this must mean that $d$ has even integer total charge. The only possibilities for $d$ are then $1,a^2,v_1^2c,a^2v_1^2c$. But it can be checked that for all of these possibilities $h(a\times d) = h(a) \neq -h(a) = h({\cal CT}a)$. So we have arrived at a contradiction and can rule out ${\cal W}_4$.

This provides a proof of the conjecture in \cite{sodemann_composite_2017} that $D(\bb{Z}_4)$ is indeed the unique minimal theory consistent with $\nu = 1/2+1/2$ and $\U^{(1)}\times \U^{(2)} \times \bb{Z}_2^{{\cal CT}}$ symmetry. 

\subsection{Single-layer bosonic theories}\label{singlelayerbo}

For bosonic theories, the discussion is different.  Now, the consistency condition is $rs\in 2\mathbb{Z}$.  Then, for even $pq$  we can have $\ell=1$, so the minimal theories are $\mathcal{V}^{q,p}$.  

However, when both $q$ and $p$ are odd, ${\cal V}^{q,p}$ is not a valid bosonic theory. Thus $\ell$ must be even. Therefore, the minimal theory has $\ell=2$ and includes\footnote{In the first step, we used the fact that for odd $q$, as groups $\bZ_{2q}=\bZ_2\otimes \bZ_q$. And then, there cannot be a mixed anomaly between the two factors. In the second step, we used the relations derived in Sec.~\ref{sec:symmetrylines} and the fact that $p$ and $q$ are odd.  It should also be noted that the vison in the left-hand side is the product of the generators of the two factors in the right-hand side.} $\mathcal{V}^{2q,2p}\cong \mathcal{A}^{2,2pq}\boxtimes\mathcal{A}^{q,4p} \cong \mathcal{A}^{2,2}\boxtimes\mathcal{A}^{q,4p}$.  Since this is not a complete theory, the full theory does not factorize, and there must be additional lines charged under the $\mathbb{Z}_{2}^{(1)}$ one-form symmetry. As an example, for $p=q=1$ (corresponding to the bosonic $\nu=1$ state) the minimal symmetry is ${\cal A}^{2,2}$. It includes the identity and a nontrivial vison with spin $1/2$.  This is not a complete TQFT.  It can be completed in various ways including  Spin$(n)_1$ and ${\rm SU}(2)_{4k+2}$. This leads to infinitely many endpoint theories, similar to the single-layer spin{}$^c$ case with even $q$. 

More generally, the minimal theories as measured by $N_{\cT}$ and $\mathcal{D}_{\cT}$ are generalizations of the bosonic Pfaffian states. They are given by
\begin{equation}
\pf^b_{q,p,n} = \mathrm{Spin}(n)_1 \boxtimes \mathcal{A}^{q,4p} \qquad\text{ for } \qquad,\qquad n=1,\ldots, 16\qquad,\qquad p,q\in 2\bZ+1\,,
\label{eqn:boson_mTOs}
\end{equation}
(following the notation in Appendix~\ref{app:anyontheory}). 

We note that for bosons in a Landau level at $\nu = 1$,  a Pfaffian ground state is found in numerical calculations with a contact~\cite{cooper2001quantum,regnault2003quantum,regnault2004quantum} or with a Coulomb~\cite{chang2005composite} repulsive interaction.  
This state (usually referred to as ``the" bosonic Pfaffian) is $\pf^b_{1,1,3}$ in our notation. This state, as well as the other minimal bosonic Pfaffian states at $\nu = 1$, can be given a composite fermion construction~\cite{wang2016composite}.  

All the theories in \eqref{eqn:boson_mTOs} are endpoints with minimal total quantum dimension.  Those with odd $n$ are minimal by number of anyons.

\section{Summary and discussion}
\label{sec: summary} 

The goal of this paper is to use the value of the fractionally quantized Hall conductivity to characterize the topological order (i.e., TQFT) of a quantum Hall system. Specifically, we answered two questions: 
\begin{itemize}
    \item 
What is the minimal topological order that is consistent with a given value of $\sigma_H$? 
\item Is there an ordering of various allowed topological orders for a given $\sigma_H$? 
\end{itemize}

The quantum Hall effect may be defined through the response of the system to a background gauge field $A$ that couples to the electric current.  
This response theory is
\begin{equation}\begin{split}\label{AdArespc}
    &{\sigma_H\over 4\pi } \int_\tmfd A\wedge dA\\
    &\sigma_H={p\over q} \qquad,\qquad p,q\in \mathbb{Z}\qquad,\qquad \gcd(p,q)=1\,.
\end{split}\end{equation}
We began with the crucial observation that a fractional $\sigma_H$ implies the existence of a fractionally charged Abelian anyon $v$, known as the vison. Equivalently, we can state that 
the IR TQFT has an anomalous global one-form symmetry. We identified these facts as the organizing principle of the quantum Hall systems.

The vison $v$ has a fractional charge $p/q$ and spin $p/2q$. The quasiparticle $v^j$ (obtained by fusion) has 
\begin{equation} 
Q(v^j) = \frac{pj}{q}\qquad, \qquad h(v^j) = \frac{pj^2}{2q} \,.
\end{equation} 

Before stating our results in the general case, let us specialize to the question of minimal topological order in electronic systems where the fractional quantum Hall effect has been observed in experiments. First, note that  $(1, v, v^2,.....v^{q-1})$ are all distinct quasiparticles.  Second, $v^q$ braids trivially with $v^j$ for all $j$.  For odd-denominator states (i.e., odd $q$), it has charge $p$ and Fermi (Bose) statistics when $p$ is odd (even). The minimal topological order is one where $v^q$ is simply identified with a local particle. Thus, the minimal topological order is unique and has exactly $q$ anyons (and is denoted ${\cal V}^{q,p}$). For even-denominator states, $v^q$ is an odd-charged boson, which therefore cannot be local in an electronic system. There must therefore be anyons distinct from those generated by the vison with which $v^q$ braids nontrivially. We showed that in the minimal topological order for even $q$, $v^{2q}$ can be identified with a local particle. Though the minimal order in this case is not unique, there are a small number of them that are all Pfaffian states. In particular, the minimal orders by anyon count have non-Abelian anyons with which $v^q$ braids nontrivially.  

Our conclusions about the minimal theories for a given $\sigma_H$ (by anyon count or quantum dimension) were summarized in Table \ref{tab:mTOs} in our summary of results.

Now, let us summarize the rest of our results using the framework of the anomalous one-form symmetries of the IR TQFT. 
In general, for our purposes, we distinguish between three kinds of quantum field theories (which are not necessarily topological):
\begin{itemize}
\item[Bosonic] If all the fundamental degrees of freedom are bosons, then our spacetime does not have to be a spin manifold.  And if it is a spin manifold, there is no dependence on its spin structure.  In that case, shifting $\sigma_H$ by an even integer corresponds to adding a local counterterm and does not affect the bulk dynamics.  Even though it can affect the boundary dynamics, we will ignore it and identify
    \begin{equation}\label{pidentifcb}
    p\sim p+2q\,.
\end{equation}
\item[Fermionic] If the fundamental degrees of freedom include also fermions, the theory can be formulated only on spin manifolds, and the physical answers can depend on the choice of spin structure. In that case, shifting $\sigma_H$ by an arbitrary integer corresponds to adding a local counterterm and does not affect the bulk dynamics.  Even though it can affect the boundary dynamics, we will ignore it and identify
     \begin{equation}\label{pidentif}
    p\sim p+q\,.
\end{equation}
\item[Electronic] If all the fundamental fermions have odd $\U$ charges and all the fundamental bosons have even $\U$ charges,\footnote{For example, this is not the case when neutral fermions play a role in the dynamics. Though not pertinent to electronic solids, this may be realized in other systems, e.g., Bose-Fermi cold atom mixtures, where we only keep track of the total number of atoms.} the theory is an electronic theory.  It can be formulated on nonspin manifolds, provided the background field $A$ is a spin$^c$ connection.  This means that its fluxes through various cycles are correlated with the obstruction to defining spinors.  Concretely, $\left(2\int {dA\over 2\pi}\right) \mod 2=w_2$.  In this case, shifting $\sigma_H$ by an arbitrary integer can also be ignored for the study of bulk properties, and we identify
    \begin{equation}\label{pidentisc}
    p\sim p+q\,.
\end{equation}
\end{itemize}
The largest class of theories is the fermionic one. Some of the fermionic theories are also bosonic.  Their correlation functions do not depend on a choice of spin structure.  And some of them are also electronic.

We assumed that the theory is gapped and is described by a TQFT.  Then, the response theory \eqref{AdArespc} arises because the TQFT has a $\bZ_s^{(1)}$ one-form global symmetry generated by the vison $v$.  Its anomaly is labeled by the integer $r$.  This means that the TQFT includes the symmetry lines
\begin{equation}\begin{split}
&{\cal V}^{s,r}=\{1,v,v^2,\cdots , v^{s-1}\}\subseteq {\cal T} \\
&r=\ell p\qquad, \qquad s=\ell q\qquad,\qquad \ell=\gcd(s,r)\,.
\end{split}
\end{equation}
The parameters $s$ and $r$ are always integers.  But their allowed values depend on the nature of the theory:
\begin{equation}\label{allowedsr}
\begin{split}
    &{\rm Fermionic\ theories:} \qquad\qquad\qquad\quad\qquad r\sim r+s\\
    &{\rm Bosonic\ theories:}\quad \quad rs\in 2\bZ \qquad ,\qquad\  r\sim r+2s\\
    &{\rm Electronic\ theories:} \quad r(s+1)\in 2\bZ \quad ,\quad  r\sim r+s\,.
\end{split}
\end{equation}

The theories with $\ell=1$, are the simplest ones.  In that case, the TQFT factorizes as a product of two decoupled TQFTs
\begin{equation} 
{\cal T}={\cal V}^{q,p} \boxtimes {\cal T}'\,,
\end{equation}
where the anyons in $\cal T'$ are $\U$ neutral and do not contribute to the response.  In that sense, $\cal T'$ is trivial and can be ignored.

If $\ell\ne 1$, no such factorization takes place.  However, in that case, we can attempt to gauge the one-form symmetry $\bZ_\ell^{(1)} \subset \bZ_s^{(1)}$ to find a more minimal theory.  In order to be able to do that, ${\cal V}^{q,p}$ should be a valid set of symmetry lines.  If this is the case, after gauging, we have $\ell=1$, the theory factorizes, and we end up with the minimal theory ${\cal V}^{q,p}$.

Let us examine it in more detail.

For fermionic theories, there is no restriction on $q$ and $p$, and therefore, we can always gauge the $\bZ_\ell^{(1)}$ one-form symmetry to find the minimal theory ${\cal V}^{q,p}$.

For bosonic theories with $pq\in 2\mathbb{Z}$, we can gauge $\mathbb{Z}_\ell^{(1)}$ and end up with the minimal theory ${\cal V}^{q,p}$.  Otherwise, i.e., when $pq\in 2\mathbb{Z}+1$, we can gauge only  $\mathbb{Z}_{\ell\over 2}^{(1)}$, leading to the symmetry lines ${\cal V}^{2q,2p}$.  Then, the theory does not factorize.

For electronic theories, with $p(q+1)\in 2\mathbb{Z}$, which means that $q\in 2\bZ+1$, we can gauge $\mathbb{Z}_\ell^{(1)}$ and end up with the minimal theory ${\cal V}^{q,p}$.  Otherwise, i.e., when $q\in 2\bZ$, we can gauge only  $\mathbb{Z}_{\ell\over 2}^{(1)}$, leading to the symmetry lines ${\cal V}^{2q,2p}$.  Then, the theory does not factorize.

Let us now discuss how our results on the minimal TQFTs (defined by anyons count) and the more general ordering for a given $\sigma_H$ pertain to other known results on quantum Hall systems. Experimentally, the fractional quantum Hall effect has only been observed in electronic systems thus far (corresponding to our spin$^c$ theories), and we will focus on these. 

There are on the order of $100$ fractional quantum Hall states that have been observed thus far. The vast majority of them occur in the lowest Landau level. Direct experimental measurements of the topological data are mostly limited to the Hall conductivity. For a small number of states, the thermal Hall conductivity, which corresponds to the chiral central charge, has also been measured~\cite{banerjee2017observed,banerjee2018observation}. The fractional charge~\cite{saminadayar1997observation,de1998direct,martin2004localization,dolev2008observation,radu2008quasi} and fractional statistics~\cite{nakamura2020direct,bartolomei2020fractional} of quasiparticles have been measured in even fewer states (for a review, see \cite{feldman2021fractional}). However, given the simplicity of the $2d$ electron gas in a Landau level, highly accurate microscopic calculations are possible to nail down the state. For electrons in the lowest Landau level, the prominently observed states occur at the Jain fillings $\nu = \frac{n}{2np \pm 1}$, and are well described by the composite fermion construction (for a review, see  \cite{jain2007composite}). As we discussed, these are described by minimal TQFTs ${\cal V}^{2np \pm 1,n}$. 

Interestingly,  Jain fractions (with $p = 1$) are also what are observed in the FQAH systems discovered recently~\cite{cai_signatures_2023,park_observation_2023,zeng_thermodynamic_2023, lu_fractional_2024}. Existing numerical studies~\cite{reddy2023toward,yu2024fractional} in ${\rm tMoTe}_2$ support the identification of these with the standard Jain states. The microscopic situation in rhombohedral graphene is less clear, but the appearance of FQAH states at the Jain fractions~\cite{lu_fractional_2024}, together with the composite Fermi liquid at half-filling of the moir\'{e} lattice, suggest that these FQAH states are also Jain states, and hence are described by minimal TQFTs. 

All of these observed states have $\sigma_H = p/q$ with $q$ odd. An even denominator state occurs, famously, for electrons in a magnetic field in the second Landau level at filling $\frac{5}{2}$, and evidence has accumulated for a paired quantum Hall state with non-Abelian quasiparticles. The main candidates are the Moore-Read Pfaffian state $\pf^f_{2,1,1}$, its particle-hole conjugate (the anti-Pfaffian), and the T-Pfaffian $\pf^f_{2,1,-1}\simeq \pf^f_{2,1,7}$. There is some uncertainty still on exactly which of these states is realized. For our purposes, it is enough to note that these are also minimal states. 

Thus, we see that the vast majority of observed fractional quantum Hall states are described by minimal TQFTs. However, based on numerical calculations, a few non-minimal TQFTs have been proposed as well, primarily in the second Landau level.  The most prominent of such states is one at Landau level filling $\nu = \frac{12}{5}$. Numerical studies~\cite{read_beyond_1999,sreejith2013tripartite, FibonacciDMRG1, FibonacciDMRG2, Pakrouski2016} (for the Coulomb interaction) appear consistent with the particle-hole conjugate $\bar{R}_{1,3}$ of the Read-Rezayi state $R_{1,3}$, though a Jain state ${\cal V}^{5,2}$ is allowed at that filling. As $ \bar{R}_{1,3} = \bar{{\cal F}_3} \boxtimes \mathcal{V}^{5,2}$, the Read-Rezayi state has five more anyons than what is needed for minimality. In experiments in GaAs~\cite{xia2004electron,kumar2010nonconventional}, the corresponding state at $\nu = \frac{13}{5}$ is not observed, an effect that has been attributed to Landau level mixing~\cite{FibonacciDMRG2,Pakrouski2016}. For a few other proposed nonminimal states in the second Landau level, see \cite{balram2020fractional,coimbatore2021non,balram2020z} though these have not been studied as extensively as the possibility of the $\bar{R}_{1,3}$ state at filling $\frac{12}{5}$. 

Interesting possible examples of non-Jain states occur at fillings such as $\frac{8}{17}$ or $\frac{7}{13}$~\cite{singh2024topological}. Though conventional Jain states are allowed, these have been proposed to be so-called ``daughter states" of parent quantum Hall states at the proximate filling $\nu = \frac{1}{2}$~\cite{levin2009collective}. The parent $\nu = \frac{1}{2}$ states are Pfaffians: at this filling, the composite fermions interact with a zero effective magnetic field, and the Pfaffian states result when Cooper pairs of composite fermions are condensed.   When the filling is moved away from $1/2$, the composite fermion Cooper pairs interact with a non-zero effective magnetic field, which suppresses their condensation. Then, the daughter states appear at fillings for which the composite fermion Cooper pairs form bosonic integer quantum Hall states rather than condensates~\cite{yutushui2024paired}. Different Pfaffian parent states have daughters at different fillings, and thus the observation of these daughters has been suggested to be a fingerprint of which Pfaffian state is realized at half-filling~\cite{yutushui2024paired,zheltonozhskii2024identifying}, a question that so far has not yet been settled.  From the point of view in the present paper, these daughter states are {\it minimal} at their fillings. This is because they can all be described as Jain states tensored with invertible electrically neutral $E_8$ states~\cite{zheltonozhskii2024identifying,zhang2025hierarchy}. Thus, the difference with the Jain states is not at the level of anyon content, and these daughter states have the minimal number of anyons at their fillings.  

More generally, our discussion is intrinsic to the TQFT and does not rely on the dynamical mechanism underlying its creation.  This is to be contrasted with the analysis based on the composite fermions and their interactions, which is more detailed and relies on additional information.

Why are minimal TQFTs realized so commonly in the fractional quantum Hall effect?  For the Coulomb interaction, non-interacting (i.e., a free fermion Hamiltonian, perhaps with the inclusion of pairing terms) composite fermions provide a good ``mean-field" starting point to describe most FQH states, and these inevitably lead to minimal states.\footnote{In this construction, the quantum Hall states correspond to composite fermions in an integer quantum Hall state (the Jain states) or they form pair condensates (the Pfaffian states). Both sets of states are minimal according to our definition.} Thus, the question of why minimal states are apparently preferred can be replaced with the question of why the non-interacting composite fermion construction is so successful. Of course, if the interaction is modified, then other non-minimal states can occur, as exemplified by the exact parent Hamiltonians for many quantum Hall wavefunctions.  

The commonality of minimal states makes them good guesses as candidates for future realizations of FQH states in platforms very different from electrons in a Landau level, such as in FQAH systems on a lattice. Even in the absence of a microscopic understanding in terms of standard composite fermions, it may be reasonable to hypothesize that the observed state is a minimal one, and see if that is supported by further experiments/theory.

\acknowledgements 
We thank Maissam Barkeshli, Jaume Gomis, Zhaoyu Han, Jainendra Jain, Kristan Jensen, Chao-Ming Jian, Anton Kapustin, Zohar Komargodski, Ethan Lake, Ho Tat Lam, Greg Moore, Salvatore Pace, Shu-Heng Shao, Sahand Seifnashri, G. Sreejith, Cenke Xu, and Carolyn Zhang for discussions. MC is partially supported by NSF Grant DMR-2424315. SM is supported by the Laboratory for Physical Sciences. AR is partially supported by the U.S. Department of Energy under Grant No. DE-SC0022021. NS is partially supported by DOE Grant DE-SC0009988. TS is partially supported by NSF Grant DMR-2206305. AR, NS, and TS are also partially supported by the Simons Collaboration on Ultra-Quantum Matter, which is a grant from the Simons Foundation (Grant No. 651678, AK; Grant No. 651444, NS; and Grant No. 651446, TS).

\appendix

\section{Useful examples of some anyon theories}
\label{app:anyontheory}

Here we review a few anyon theories that we reference throughout the paper. For a more thorough review of these theories, see, e.g.,  \cite{bonderson_non-Abelian_2007}. Unless otherwise specified, we follow the notations of this reference.

\subsection{$\bb{Z}_N$ gauge theory or $D(\bb{Z}_N)$}

Here $N\in \bb{Z}$ is a positive integer. The $D$ indicates that this theory is the quantum double of $\bb{Z}_N$, i.e., $D(\bb{Z}_N)\simeq \bb{Z}_N \otimes \bb{Z}_N$ as an Abelian group under fusion. It describes the anyons (or equivalently, one-form symmetries) in a $\bZ_N$ gauge theory. The anyons in this theory can thus be written as a pair of integers modulo $N$, $(a_0, a_1)$. The identity anyon is given by $(0,0)$ and the fusion rules are 
\begin{equation}
(a_0, a_1)\times (b_0, b_1) = (a_0 + b_0, a_1 + b_1)\,.
\end{equation}
It is common to take $e=(1,0)$ and $m=(0,1)$ as the generators of $D(\bb{Z}_N)$, and we follow this notation. The $e$ and $m$ anyons can be understood as the fundamental electric and magnetic charges in the $\bZ_N$ gauge theory.

The braiding data is given by
\begin{equation}
h(e^{a_0}m^{a_1}) = \frac{a_0a_1}{N}\qquad , \qquad B(e^{a_0}m^{a_1}, e^{b_0}m^{b_1})=\frac{a_0b_1+b_0a_1}{N}\,.
\end{equation}
Clearly, this braiding is nondegenerate for all values of $N$, and hence, this theory is modular.  This braiding means that each $\mathbb{Z}_N^{(1)}$ one-form symmetry is anomaly free, but there is a mixed anomaly between the two $\mathbb{Z}_N$ factors. For even $N$, the theory does not  factorize in terms of independent $\mathbb{Z}_N$ factors. But for odd $N$ it factorizes as $\cA^{N,2}\boxtimes\cA^{N,-2}$, which are generated by $em$ and $em^{-1}$.

This theory can also be described using the  Chern-Simons Lagrangian \cite{Maldacena:2001ss},
\begin{equation}
    \frac{N}{2\pi}b_1\wedge db_2+\frac{t^1}{2\pi}b_1\wedge dA+\frac{t^2}{2\pi}b_2\wedge dA\,.
\end{equation}
Here, we introduced a background $\U$ gauge field $A$, and $t^1, t^2$ are its couplings. The corresponding vison is $(t^1,t^2)$.

\subsection{Spin$(n)_1$ and Spin$(n)_1\boxtimes\{1,c\}$}

Here, Spin$(n)_1$ with $n\mod {16}$ denotes the TQFT of the level-$1$ Spin$(n)$ Chern-Simons theory. Alternatively, it is the $n$th TQFT in Kitaev's 16-fold way \cite{kitaev_anyons_2006}. The corresponding RCFTs are theories of $n$ Majorana fermions summed over their spin structures.

This is a bosonic theory with $\sc{D}^2=4$.  It possesses an Abelian fermion $\psi \in \mathrm{Spin}(n)_1$, which squares to the identity. Indeed, these TQFTs can be understood as the ``minimal modular extensions'' of the fermionic/spin TQFT given by $\{1,\psi\}$.  

We first review these theories with $n$ an odd integer between $1$ and $15$. They are non-Abelian TQFTs with anyon content and fusion rules of the Ising TQFT, i.e.,
\begin{align}
\mathrm{Spin}(n)_1 &= \{1,\sigma,\psi\}\\
\psi^2 &= 1\\
\psi \times \sigma &= \sigma\\
\sigma \times \sigma &= 1+\psi\,.
\end{align}
From this it is clear that $d_1 = 1 = d_\psi$ and $d_{\sigma} = \sqrt{2}$. Eight choices of exchange and braiding statistics are consistent with these fusion rules, as classified in Appendix~D in \cite{Moore:1988qv} and in \cite{kitaev_anyons_2006},
\begin{align}
B(\psi, \sigma) &= {1\over 2}\,,\\
h(\psi) &= {1\over 2}\,,\\
h(\sigma) &= \frac{n}{16}\,.
\end{align}
{This theory has a $\mathbb{Z}_2^{(1)}$ one-form symmetry, generated by $\psi$.  Using our notation, the symmetry lines $\{1,\psi\}$ are ${\cal A}^{2,2}$.}  The chiral central charge of the theory is given by $c_- = n/2\mod {8}$, so they are half-integer.  

For $n$ an even integer, these are Abelian TQFTs. They were reviewed extensively by Kitaev \cite{kitaev_anyons_2006}, and we refer the reader to that reference. They happen to coincide with some of our previous definitions,
\begin{equation}
\mathrm{Spin}(n)_1 = \begin{cases} \cA^{4,n/2} &\mbox{if } n=2\mod {4}\\
\cA^{2,n/4}\boxtimes \cA^{2,n/4} &\mbox{if } n=4\mod {8}\\
D(\bb{Z}_2) &\mbox{if } n =0\mod {16}\,.
\end{cases}
\end{equation} 
Their chiral central charge is again given by $c_- = n/2\mod {8}$ and is thus an integer. For $n=8$ mod 16, the one-form global symmetry is $\mathbb{Z}_2^{(1)}\otimes\mathbb{Z}_2^{(1)}$. Each factor has its own anomaly, and there is also a mixed anomaly between the two factors.  Therefore, this theory does not factorize.

We now turn to the spin theories $\mathrm{Spin}(n)_1\boxtimes\{1,c\}$. Since $h_{{\rm Spin}(n)}(\sigma\times c) = h(\sigma)+\frac12 = h_{{\rm Spin}(n+8)}(\sigma)$, they satisfy $\mathrm{Spin}(n)_1\boxtimes\{1,c\} \simeq \mathrm{Spin}(n+8)_1\boxtimes\{1,c\}$.  There are thus only eight distinct spin theories consistent with these fusion rules: four Abelian and four non-Abelian. We note that in this case, the chiral central charge cannot be used to tell these TOs apart, since the chiral central charge in a spin TQFT can always be changed by $1/2$ by stacking invertible $p+ip$ superconductors. 

\subsection{${\cal F}_k$}
\label{app:Fk}

${\cal F}_k$ is defined as the integer-spin subcategory of SU(2)$_k$.\footnote{In \cite{bonderson_non-Abelian_2007}, this subcategory was referred to as ``SO(3)$_k$."  However, it is important to note that ${\cal F}_k$ is in general different from the standard definition of SO(3)$_k$ in RCFT, where it describes the 1+1d WZW models on the group manifold SO(3) with coefficient $k$ \cite{Gepner:1986wi}, or its chiral vertex operator description in \cite{Moore:1988ss}.  The latter definition also fits the 2+1d Chern-Simons theory with SO(3) gauge group with level $k$ \cite{Moore:1989yh}.  These SO(3)$_k$ theories are the result of gauging a $\bZ_2^{(0)}$ zero-form symmetry in the 1+1d SU(2)$_k$, or a $\bZ_2^{(1)}$ one-form symmetry in the 2+1d SU(2)$_k$.  
This can be done consistently only for $k= 0$ mod 4 as a bosonic TQFT, or for $k=2$ mod 4 as a fermionic/spin TQFT. Interestingly, ${\cal F}_k$ and SO(3)$_k$ coincide when $k= 2$ mod 4.\label{comment on notation}} 

\begin{description}
    \item[$k$ odd] ${\cal F}_k$ is a bosonic TQFT. For $k=1$, it is the trivial theory. For $k=3$, this is the Fibonacci anyon theory. In general we have ${\rm SU(2)}_k = {\cal F}_k\boxtimes \cA^{2,k}$.  Note that this kind of factorization, which follows from \cite{Hsin:2018vcg} (see also \cite{drinfeld_braided_2010}), is the one we use often in this paper.
    \item[$k=2$ mod 4] ${\cal F}_k$ is a spin TQFT, with $j=k/2$ the transparent fermion. In this case, the symmetry lines of SU(2)$_k$ are in ${\cal A}^{2,k}\cong{\cal A}^{2,2}$ and since $\ell=\gcd(2,2)\ne 1$, SU(2)$_k$ does not factorize. Instead, as we mentioned in footnote \ref{comment on notation}, ${\cal F}_k={{\rm SU(2)}_k\over \mathbb{Z}_2}$.
    \item[$k= 0$ mod 4] ${\cal F}_k$ is a premodular theory, with $k/2$ the transparent boson.
\end{description}

\subsection{Pfaffian TOs}\label{PfaffianTO}

In this section, we review the TOs $\pf^f_{q,p,n}$ given in  \eqref{eqn:fermion_mTOs} for $q$ even. As above, we denote the anyon content of $\mathrm{Spin}(n)_1$ as $\{1,\sigma,\psi\}$ for $n$ odd. $\pf^f_{q,p,n}$ is defined as 
\begin{equation}
    \pf^f_{q,p,n}=\frac{{\rm Spin}(n)_1\boxtimes \cA^{4q,p}}{(\psi a_{2q} c)}\,.
\end{equation}
 We label the anyons in $\pf^f_{q,p,n}$ as $x_j$, where $x=1,\sigma,\psi$ and $j=0,1,\dots,4q-1$. To have trivial braiding with $\psi a_{2q}$, we impose the following rule: if $x=1$ or $\psi$, $j$ must be even. If $x=\sigma$, $j$ must be odd.

One particularly special case comes when $p=1,q=2$. Then $n=1$ [so Spin(1)$_1={\rm Ising}$] gives the Moore-Read state. $n=7$ (alternatively, $n=-1$) gives the T-Pfaffian state (also called PH-Pfaffian in the context of half-filled Landau level).

Other authors denote the Abelian anyons as $I_0, I_2,\ldots, I_{4q-2}, \psi_0, \psi_2, \ldots, \psi_{4q-2}$ and the non-Abelian anyons as $\sigma_{1},\sigma_{3},\ldots,\sigma_{4q-1}$. Their fusion rules are given as Eq.~(20) in \cite{bonderson_time-reversal_2013} for the case of $q=2$; these generalize to other even $q$ in a straightforward manner.

Let us review a field theory description of the Moore-Read state $\pf_{2,1,1}^f$.
Following \cite{seiberg2016gapped}, we start with  ${\rm U}(2)_{2,-4}\times \U_{4q}$ Chern-Simons theory,
\begin{equation}
    \frac{2}{4\pi}\Tr (c\wedge dc+\frac23 c\wedge c \wedge c)-\frac{2}{4\pi}(\Tr c)\wedge d(\Tr c) + \frac{4q}{4\pi}b\wedge db  + \frac{2}{2\pi}A\wedge db.
\end{equation}

Then we gauge the diagonal $\bZ_2^{(1)}$. To do this, we make the following change of variables:
\begin{equation}
    \tilde{c}=c-b\mathbf{1}\quad,\quad \tilde{b}=2b\,.
\end{equation}
These combinations were chosen such that the $\bZ_2^{(1)}$ one-form global symmetry does not act
on them. Using the new variables, the action becomes 
\begin{equation}\label{tildebtildec}
    \frac{2}{4\pi}\Tr (\tilde{c}\wedge d\tilde{c}+\frac23 \tilde{c}\wedge \tilde c \wedge \tilde c)-\frac{2}{4\pi}(\Tr \tilde{c})\wedge d(\Tr \tilde{c}) + \frac{q-1}{4\pi}\tilde{b}\wedge d\tilde{b} -{1\over 2\pi} (\Tr \tilde c) \wedge d\tilde b + \frac{1}{2\pi}A\wedge d\tilde{b}.
\end{equation}

We readily recognize the Wilson line $e^{i\oint \tilde b}$ as the vison. It generates a $\mathbb{Z}_{2q}^{(1)}$ one-form symmetry with anomaly $1$ and hence the symmetry lines are ${\cal V}^{2q,2}$.

\section{More general coupling of bosonic TQFT to background $\U$}
\label{app:generator}

We start this appendix by showing that what we did in Sec.~\ref{from response to} was essentially the only way to reproduce the correct response theory. In doing so, we will see the role played by some integers $u_I$, which define the coupling of the theory to the background $\U$ gauge field. Later, we will work out some relations that allow us to change these $u_I$ without changing the TQFT. This generalizes the discussion about changing the generator of the one-form symmetry in Sec.~\ref{sec:symmetrylines}.  For simplicity, we will limit ourselves to bosonic theories.  The extension to electronic theories is straightforward.

\subsection{Generalizing Sec.~\ref{from response to}}

Our goal is to characterize bosonic TQFTs ${\cal T}$ such that upon integrating them out, they lead to the response theory \eqref{fourda},
\begin{equation}
\frac{\sigma_H}{4\pi}\int_{\fmfd} dA\wedge dA.
\end{equation}

Let us start with a generic TQFT $\cal T$.  A subset of its anyons ${\cal A}\subseteq {\cal T}$ are Abelian. The set $\cal A$ is an Abelian group $\otimes_I \mathbb{Z}_{s_I}^{(1)}$, which is the one-form global symmetry of $\cal T$.  Given such a symmetry, it is common to couple it to background two-form gauge fields ${\cal B}_I\in H^2(M,\mathbb{Z}_{s_I})$.  This coupling has an anomaly given by a quadratic expression in ${\cal B}_I$, similar to \eqref{BB}.  Then, we couple our system to $A$ in way that generalizes \eqref{BdA}
\begin{equation}
    {\cal B}_I= \left[{u_IdA\over 2\pi}\right]_{s_I}\,.
    \label{eqn:BI}
\end{equation}
This coupling depends on the integers $u_I$.

As we vary $A$, the background fields ${\cal B}_I$ vary and select a $\mathbb{Z}_s^{(1)}$ subgroup of the one-form symmetry group  $\otimes_I \mathbb{Z}_{s_I}^{(1)}$.
Specifically, let $X_I$ be the generator of $\mathbb{Z}_{s_I}$.  Then, the fact that all ${\cal B}_I$ are determined by the same $dA$, means that the group elements that $A$ couples to are powers of $X=\prod_I X_I^{u_I}$, which satisfies $X^s=1$ with $s$ the smallest nonzero solution of $s u_I =0\mod s_I$ for all $I$.  Furthermore, the background $\cal B$ that couples to this $X$ satisfies, as in \eqref{BdA}, ${\cal B}= \left[dA/2\pi\right]_{s}$.  

Therefore, we recognize the special construction in Sec.~\ref{from response to}, where we assumed that our TQFT possessed a one-form $\bb{Z}_s^{(1)}$ symmetry, leading to \eqref{dAdA} as the essence of the more general case with arbitrary $u_I$.

In more physical terms, we recognize $X$ as associated with the vison $v\in {\cal A}$. Then, the anomaly expression for general ${\cal B}_{I}$ determines $r$ for $\cal B$ and therefore $\sigma_H=r/s$. The vison generates the Abelian set of anyons ${\cal V}^{s,r}$ discussed in Sec.~\ref{sec:symmetrylines}.

Suppose, however, that we did not want to single out the vison as the generator of the $\bb{Z}_s^{(1)}$ one-form symmetry. Then, motivated by  \eqref{eqn:BI}, we can pick a (possibly larger) $\bb{Z}_s$ symmetry with an arbitrary $u$
\begin{equation}
 {\cal B} = \left[{u dA\over 2\pi}\right]_{s}\,.
\end{equation}
Inserting this into  \eqref{BB} reveals that this gives us the correct response theory if
\begin{equation}
\frac{ru^2}{s} = \sigma_H =  \frac{p}{q} \mod 2\,.
\end{equation}

This theory describes a slightly larger set of anyons ${\cal A}^{s,r}_u $ whose lines are given by $a_j$. They have spins and $\U$ charges
\begin{equation}
\begin{split}
    &a_j=a_1^j\qquad,\qquad j=0,1\cdots, s-1\\
    &h(a_j)={rj^2\over 2s}\mod 1\\
    &Q(a_j)={ruj\over s}\mod 1\,.
\end{split}
\end{equation}
The vison discussed above corresponds to
\begin{equation}
\begin{split}
    &v=a_1^u\\
    &h(v)={ru^2\over 2s} = \frac{p}{2q}\mod 1\\
    &Q(v)={ru^2\over s} = \frac{p}{q}\mod 1\,.
\end{split}
\end{equation}
It generates a $\mathbb{Z}_{s'}^{(1)}\subseteq \mathbb{Z}_s^{(1)}$ with $s'={s/\gcd(s,u)}$ one-form symmetry with anomaly  $r'={ru^2/\gcd(s,u)}$ and $u$-parameter $1$, i.e., ${\cal V}^{s',r'}$. We see that 
\begin{equation}
{\cal V}^{s',r'}\subseteq {\cal A}^{s,r}_u \subseteq {\cal A}\,.
\end{equation}

Let us demonstrate it in a simple example with $\sigma_H={1/2}$.  We need to have ${r'/s'}={ru^2/s}={1/2}$.  The simplest option is ${\cal A}^{2,1}_1\cong {\cal V}^{2,1}$.  This is $\U_2$ with the vison $v=a_1$ with $h={1/4}$ and $Q={1/2}$.  Alternatively, we can have ${\cal A}^{8,1}_2$, i.e.,  $\U_8$ with the vison $v=a_1^2$ also with $h={1/4}$ and $Q={1/2}$.  Either way, the vison generates ${\cal V}^{4,2}$.

\subsection{Identifications of ${\cal A}^{s,r}_u$}

We now want to discuss the identifications in ${\cal A}^{s,r}_u$. First, as in \eqref{2sshift}, the results depend only on $r\mod  2s$.   Similarly, they depend only on  $u\mod s$.  
Therefore,
\begin{equation}\label{bcident}
\begin{split}
&{\cal A}_u^{s,r}\cong {\cal A}^{s,r+2s}_{ u}\\
&{\cal A}_u^{s,r}\cong {\cal A}^{s,r}_{ u+s}\,.
\end{split}
\end{equation}
The first relation is the same as in the bosonic theory before we coupled to $A$ \eqref{2sshift}.  The second relation reflects the fact that $u$ labels the vison and our labeling of anyons is by $j\sim j+s$.\footnote{We could have replaced the second identification with a stronger version $\cA^{s,r}_{u}\cong\cA^{s,r}_{u+{s/\ell}}$.  We did not do it because even though this does not change $h$ and $Q$ of the anyons in $\cA^{s,r}_{u}$, it does change the values of $Q$ of  other anyons in $\cal T$.\label{strongeriden}}

In addition, as in \eqref{changemg}, we can also pick a different generator for the $\mathbb{Z}_s^{(1)}$ symmetry.  For every integer $m$ such that $\gcd(s,m)=1$, we can take $a_m$ to be the generator.  As in \eqref{changemg}, this change has the effect of mapping $r\to m^2 r$.  The corresponding change in $u$ is $u\to \hat m u$, where $\hat m$ is the inverse of $m$ modulo $s$,  i.e., the solution of
\begin{equation}\label{mmhat}
\hat m m+\hat s s=1 \qquad, \qquad \hat m,\hat s \in \mathbb{Z}\,.
\end{equation}
Therefore, taking $a_m$ as the generator, maps
\begin{equation}\label{changegb}
    \begin{split}
        &j\to j \hat m \\
        &r\to r m^2 \\
        &u\to u \hat m\,
    \end{split}
\end{equation}
and the relation \eqref{changemg} is extended to
\begin{equation}\label{changemge}
{\cal A}_u^{s,r}\cong {\cal A}^{s,m^2r}_{\hat m u}\qquad , \qquad \gcd(m,s)=1\,,
\end{equation}
which preserves the fact that $rs$ is even.  Note also that $m$ and $\hat m$ are defined by \eqref{mmhat} only modulo $s$.  This fact is consistent with this identification.

As checks, the spins and the charges of the lines
\begin{equation}
\begin{split}
&h(a_j)={rj^2\over 2 s}\mod 1\\
&Q(a_j)={ruj\over s}\mod 1
\end{split}
\end{equation}
as well as $\sigma_H=ru^2/s\mod  2$ are invariant under the identifications \eqref{bcident}, \eqref{changegb}, and \eqref{changemge}. (When checking them, recall that $rs$ is even.)

In the special case of $\gcd(s,u)=1$, the vison $v=a_u$ generates all the symmetry lines.  Therefore, it is natural to take it as the generator.  This is achieved by the identification \eqref{changemge} with $m=u$
\begin{equation}
{\cal A}^{s,r}_u \cong {\cal 
V}^{{s},{ru^2}} \qquad, \qquad \gcd(s,u)=1\,.
\end{equation}
Here we used
${\cal V}^{s,r}={\cal A}^{s,r}_1$.

When $\gcd(s,u)\ne 1$, we can consider the subset of ${\cal A}^{s,r}_u$ that is generated by the vison $v=a_u$. 
As discussed above, this is given by ${\cal V}^{s',r'}\subseteq {\cal A}^{s,r}_u$ with $s' = s/\gcd(s,u)$ and $r' = ru^2/\gcd(s,u)$.

For this reason, in most of our discussions, we focus on the set of symmetry lines with $u=1$.  Then, the only identification we should consider is
\begin{equation}\label{upluss}
    {\cal V}^{s,r}\cong {\cal V}^{s,r+2s}\,.
\end{equation}

\section{Algebraic classification of $\U$ SETs}
\label{app:spinc-classification}

In this appendix, we will classify spin$^{c}$ TQFTs, that is, fermionic/spin TQFTs enriched by $\U$ that obey the spin/charge relation. Along the way, we will review the classification of bosonic TQFTs enriched by $\U$, as well as classify fermionic/spin TQFTs enriched by $\U$ which do not satisfy the spin/charge relation. For a brief introduction, we refer the reader to Sec.~\ref{sec:tqft-basics}.  A general classification of spin TQFTs enriched by global symmetry has been developed in \cite{Bulmash:2021hmb}. Discussions of various features of spin$^c$ TQFTs are given in \cite{kobayashi_31-dimensional_2024}, although we will be self-contained in this appendix.

\subsection{Necessary definitions and results}\label{Necessary definitions}

We will extensively use the formalism of modular tensor categories (MTC) in this appendix, which are the formal mathematical objects describing bosonic TQFTs. As such, we use the terms MTC and bosonic TQFT synonymously. For reviews of the mathematical theory, see \cite{kitaev_anyons_2006, bonderson_non-Abelian_2007}. 

Some key properties are summarized in the following theorems. Along the way, we supplement the theorems with examples and remarks to add some insight.

\begin{theorem}\label{thm:TBone}
     In an MTC, ${\cal T}$, suppose each anyon type $x \in {\cal T}$ is associated with a phase factor $e^{i\phi(x)}$, such that
     \begin{equation}
     e^{i\phi(x)}e^{i\phi(y)}=e^{i\phi(z)}\qquad, \qquad N_{xy}^z>0\,.
        \label{TBone-eq}
     \end{equation}
     Then there exists a unique Abelian anyon $w$ such that 
     \begin{equation}
         e^{i\phi(x)}=e^{2\pi i B(x,w)}\qquad ,\qquad \forall x\,.
     \end{equation}
\end{theorem}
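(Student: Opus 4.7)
The approach is to use the Verlinde formula to translate the hypothesis on $e^{i\phi}$ into the statement that a specific column of the $S$-matrix has a prescribed form, and then to use unitarity of $S$ to force the associated anyon to be Abelian. I would form the candidate vector $v^\phi \in \mathbb{C}^{\mathcal{T}}$ with components $v^\phi_x := (d_x/\mathcal{D})\, e^{-i\phi(x)}$; the normalization is chosen so that $v^\phi_1 = 1/\mathcal{D} = S_{11}$, matching the expected first entry of a column of $S$. The main technical step, which is where the hypothesis \eqref{TBone-eq} enters, is to show that $v^\phi$ is a simultaneous eigenvector of every fusion matrix $\hat{N}_y$. Writing
\[
(\hat{N}_y v^\phi)_x = \sum_z N_{yz}^x\, v^\phi_z,
\]
the hypothesis yields $e^{-i\phi(z)} = e^{-i\phi(x)}\, e^{i\phi(y)}$ whenever $N_{yz}^x > 0$, so this common phase pulls out of the sum, and the identity $\sum_z N_{yz}^x\, d_z = d_x d_y$ then produces $\hat{N}_y v^\phi = d_y\, e^{i\phi(y)}\, v^\phi$.

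By the Verlinde formula, the simultaneous eigenvectors of the fusion matrices are exactly the columns $S_{\cdot, a}$, with eigenvalues $\chi_a(y) = S_{ya}/S_{1a}$, and modularity guarantees that the joint eigenspaces are one-dimensional. Hence there is a unique $a \in \mathcal{T}$ with $v^\phi = S_{\cdot, a}/d_a$, equivalently $d_y\, e^{i\phi(y)} = \chi_a(y)$ for all $y$. I would then show that this $a$ must be Abelian: from $|v^\phi_x| = d_x/\mathcal{D}$ one reads off $|S_{xa}| = d_a d_x/\mathcal{D}$ for every $x$, and the unitarity relation $\sum_x |S_{xa}|^2 = 1$, together with $\sum_x d_x^2 = \mathcal{D}^2$, collapses to $d_a^2 = 1$, hence $d_a = 1$.

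Finally, for any Abelian $a$ the ribbon formula for the $S$-matrix gives $\chi_a(y) = d_y\, e^{-2\pi i B(y, a)}$, so combining with the eigenvalue relation yields $e^{i\phi(y)} = e^{2\pi i B(y, \bar a)}$; thus $w := \bar a$ is the sought Abelian anyon. Uniqueness is immediate from modularity: if $w, w'$ both worked, then $B(y, w) = B(y, w') \bmod 1$ for every $y \in \mathcal{T}$, so $B(y, w\bar{w}') = 0$ by bilinearity of $B$ on Abelian anyons, forcing $w = w'$. The only genuinely nontrivial step is the eigenvector computation in the first paragraph; once that is in hand, the remainder is standard Verlinde and $S$-matrix unitarity.
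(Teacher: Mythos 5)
Your proof is correct, and since the paper only cites the result (referring to Eq.~(47) of Barkeshli--Bonderson--Cheng--Wang) rather than proving it, you have essentially reconstructed the standard argument used there: a fusion-respecting phase assignment makes $x\mapsto d_x e^{i\phi(x)}$ a character of the Verlinde algebra, hence a column of $S$, and unitarity of $S$ forces the corresponding anyon to have quantum dimension one. The only caveat is a harmless sign-convention ambiguity in whether $w$ comes out as $a$ or $\bar a$, which does not affect existence or the uniqueness argument via non-degeneracy of $B$.
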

\begin{proof}
The proof can be found around Eq.~(47) of \cite{barkeshli_symmetry_2019}.
\end{proof}

Another way of phrasing the theorem is that the phase factors in \eqref{TBone-eq} are the eigenvalues of a one-form global symmetry, and $w$ is its symmetry line \cite{Gaiotto:2014kfa,Hsin:2018vcg}.

\begin{theorem}
If an MTC $\cal{T}$ has a subcategory $\cal{T}'$ which is also modular, then $\cal{T}$ factorizes as ${\cal T}={\cal T}'\boxtimes {\cal T}''$, where ${\cal T}''$ is another MTC.
    \label{mtc-factorization}
\end{theorem}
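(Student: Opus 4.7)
The plan is to construct a candidate complement $\mathcal{T}''$ inside $\mathcal{T}$ and then exhibit a braided equivalence $\mathcal{T}' \boxtimes \mathcal{T}'' \xrightarrow{\sim} \mathcal{T}$. I would define $\mathcal{T}''$ to be the M\"uger centralizer of $\mathcal{T}'$ in $\mathcal{T}$: the full subcategory whose simple objects $z$ have trivial double braiding (monodromy) with every simple $y \in \mathcal{T}'$, i.e.\ $S_{zy} = d_z d_y S_{11}$ for all such $y$. This set is clearly closed under fusion and contains the unit, so it is a fusion subcategory of $\mathcal{T}$, and it inherits a braiding from $\mathcal{T}$.

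Next I would construct the candidate equivalence $F\colon \mathcal{T}' \boxtimes \mathcal{T}'' \to \mathcal{T}$, $y \boxtimes z \mapsto y \otimes z$, with the obvious associator. Because any $y \in \mathcal{T}'$ and $z \in \mathcal{T}''$ have trivial mutual monodromy by construction, the naive tensor product is a well-defined braided monoidal functor. Fully faithfulness then reduces to the orthogonality statement $\dim\mathrm{Hom}(y_1 \otimes z_1,\, y_2 \otimes z_2) = \delta_{y_1 y_2}\delta_{z_1 z_2}$ for simples, which I would obtain from the non-degeneracy of the $S$-matrix of $\mathcal{T}'$: applying Theorem \ref{thm:TBone} in $\mathcal{T}'$ to the fusion-respecting phases coming from braiding with a given Abelian anyon allows one to separate the $\mathcal{T}'$ and $\mathcal{T}''$ labels of any composite simple.

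The essential surjectivity is the main obstacle and is where the modularity of $\mathcal{T}'$ really enters. The cleanest route is a global quantum-dimension count: the image of $F$ is a fusion subcategory of $\mathcal{T}$ with total dimension $\mathcal{D}(\mathcal{T}')\,\mathcal{D}(\mathcal{T}'')$, while M\"uger's double-centralizer theorem (which uses exactly the modularity of $\mathcal{T}'$ to conclude $Z_{\mathcal{T}}(Z_{\mathcal{T}}(\mathcal{T}')) = \mathcal{T}'$) yields the identity
\begin{equation}
   \mathcal{D}(\mathcal{T})^2 \;=\; \mathcal{D}(\mathcal{T}')^2\, \mathcal{D}(\mathcal{T}'')^2.
\end{equation}
Hence the image of $F$ has the same global dimension as $\mathcal{T}$, and because it is a fusion subcategory this forces $F$ to be essentially surjective, and therefore a braided equivalence.

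Finally, to see that $\mathcal{T}''$ is itself modular, note that the $S$-matrix of $\mathcal{T}$ factorizes under $F$ as the tensor product of the $S$-matrices of $\mathcal{T}'$ and $\mathcal{T}''$; since the $S$-matrix of $\mathcal{T}$ is non-degenerate and that of $\mathcal{T}'$ is non-degenerate by hypothesis, the $S$-matrix of $\mathcal{T}''$ must be non-degenerate as well. The technical heart of the argument is thus the M\"uger dimension identity; everything else is formal once that is in hand.
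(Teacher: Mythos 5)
Your proposal is essentially a correct reconstruction of the standard proof of this theorem, whereas the paper itself offers no proof at all — it simply cites Theorem 3.13 of \cite{drinfeld_braided_2010} (M\"uger's factorization theorem). Your route is the canonical one: take $\mathcal{T}''$ to be the M\"uger centralizer $Z_{\mathcal{T}}(\mathcal{T}')$, use trivial mutual monodromy to make $y\boxtimes z\mapsto y\otimes z$ a braided functor, and invoke the dimension identity $\mathcal{D}(\mathcal{T}')\,\mathcal{D}(Z_{\mathcal{T}}(\mathcal{T}'))=\mathcal{D}(\mathcal{T})$ for surjectivity; the final non-degeneracy claim for $\mathcal{T}''$ also follows as you say. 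The one step I would push back on is your argument for full faithfulness: Theorem \ref{thm:TBone} concerns one-dimensional (phase-valued) characters of the fusion ring and only produces Abelian anyons, so it cannot separate the labels of non-Abelian simples. The standard and cleaner argument is to write $\mathrm{Hom}(y_1\otimes z_1,\,y_2\otimes z_2)\cong\mathrm{Hom}(y_2^*\otimes y_1,\,z_2\otimes z_1^*)$ and observe that any common simple constituent of $y_2^*\otimes y_1\in\mathcal{T}'$ and $z_2\otimes z_1^*\in\mathcal{T}''$ lies in $\mathcal{T}'\cap Z_{\mathcal{T}}(\mathcal{T}')=Z_{\mathcal{T}'}(\mathcal{T}')$, which is trivial precisely because $\mathcal{T}'$ is modular; this gives the required orthogonality $\delta_{y_1y_2}\delta_{z_1z_2}$ directly. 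A minor logical quibble: the dimension identity is the input from which the double-centralizer theorem is usually deduced, not the other way around, but both are available as prior results, so nothing in your argument breaks. With the full-faithfulness step repaired as above, your proof is complete and has the advantage over the paper of being self-contained.
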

\begin{proof}
This is Theorem 3.13 in \cite{drinfeld_braided_2010}.
\end{proof}

\begin{definition}
A super-modular category \cite{Bruillard2017}\footnote{Other terminologies exist. For example, it was called ``slightly degenerate braided fusion category" in \cite{Johnson-Freyd_Reutter_2024}.} is a premodular tensor category whose transparent anyons are $\{1,c\}$. Here $h(c) = 1/2$ and $c^2=1$, so $c$ is a fermion that squares to the identity.
\end{definition}

\begin{remark}
MTCs describe line operators in a bosonic TQFT. Super-modular categories describe line operators in a spin TQFT. In this case, it may appear that one should identify $1$ and $c$, and reduce the super-modular category to an even smaller theory with half the number of anyons. For example, the Moore-Read state can be described by a super-modular category with 12 anyon types (including $c$). After identifying $c$ and $1$, one can write down a theory of six anyons. However, it is known that this theory does not admit any consistent braiding~\cite{bonderson_non-Abelian_2007}, i.e., it is merely a unitary fusion category, but not braided. Thus, we will often keep $c$ as an element of ${\cal T}$ throughout this appendix.
\end{remark}

\begin{definition}
A super-modular category splits if it can be written as ${\cal M}\boxtimes \{1,c\}$, where $\cal M$ is an MTC.
\end{definition}

Note that every Abelian super-modular category splits. This is proven in the appendices of \cite{cheng_fermionic_2019, ma_fractonic_2022}. A similar observation was made in  Sec.~\ref{sec:symmetrylines} for spin TQFTs $\cA^{s,r}$ when $sr$ is odd: we showed that it is equivalent to $\cA^{s,4r}\boxtimes \{1,c\}$, where $\cA^{s,4r}$ is bosonic. 

\begin{theorem} \label{thm:TBone_smtc}
    In a super-modular category, if each anyon type is associated with a phase factor $e^{i\phi(x)}$, with $e^{i\phi(c)}=1$, and satisfies \eqref{TBone-eq}, then there exists an Abelian anyon $w$ such that 
    \begin{equation}
        e^{i\phi(x)}=e^{2\pi i B(x,w)}\qquad,\qquad \forall x\,.
        \label{braiding-gen-char-2}
    \end{equation}
    The anyon $w$ is uniquely determined up to fusing with $c$.
\end{theorem}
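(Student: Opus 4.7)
The strategy is to reduce this to the modular case already handled by Theorem~\ref{thm:TBone}. The hypothesis says that $e^{i\phi}$ is a one-dimensional character of the Grothendieck fusion ring of $\cT$, and the extra input $e^{i\phi(c)}=1$ means this character factors through the fusion with $c$. Since $\cT$ itself is not modular, we cannot directly invoke Theorem~\ref{thm:TBone}; but we can embed $\cT$ in an MTC and transfer the problem there.

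Concretely, I would take a minimal modular extension $\widetilde\cT$ of $\cT$ (such an extension exists for any super-modular category; this is by now a theorem, though we could also simply assume a distinguished one, e.g.\ ${\rm Spin}(n)_1\boxtimes \cT$-type constructions relevant to our applications). In $\widetilde\cT$ the fermion $c$ is no longer transparent but generates a $\bZ_2^{(1)}$ one-form symmetry whose anyons-which-braid-trivially-with-$c$ subcategory is precisely $\cT$. The next step is to extend the character $e^{i\phi}$ on the Grothendieck ring of $\cT$ to a character $e^{i\widetilde\phi}$ on that of $\widetilde\cT$. The anyons of $\widetilde\cT\setminus\cT$ organize into $c$-orbits of ``twist'' objects whose fusion lies back in $\cT$, so a choice of $\widetilde\phi$ on one representative per orbit determines it on the orbit; multiplicativity against all relevant fusion channels reduces to a handful of consistency conditions that are satisfied precisely because $e^{i\phi}$ is already a well-defined multiplicative character of $\cT$ with $e^{i\phi(c)}=1$. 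This extension step is the main technical obstacle.

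Granted that extension, I apply Theorem~\ref{thm:TBone} to $(\widetilde\cT,e^{i\widetilde\phi})$ and obtain an Abelian anyon $\widetilde w\in\widetilde\cT$ with $e^{i\widetilde\phi(x)}=e^{2\pi iB(x,\widetilde w)}$ for all $x\in\widetilde\cT$. Taking $x=c$ and using $e^{i\phi(c)}=1$ gives $B(c,\widetilde w)=0\bmod 1$, so $\widetilde w$ braids trivially with $c$ in $\widetilde\cT$; by the centralizer property of the modular extension, this forces $\widetilde w\in\cT$. Call this anyon $w$; restricted to $x\in\cT$ it satisfies \eqref{braiding-gen-char-2}.

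Uniqueness up to $c$ is immediate from super-modularity: if $w$ and $w'$ both work, then $B(x,w\bar w')=0\bmod 1$ for every $x\in\cT$, meaning $w\bar w'$ is transparent in $\cT$; but the transparent anyons of a super-modular category are exactly $\{1,c\}$, so $w'\in\{w,wc\}$. If we wished to avoid relying on the existence of a minimal modular extension, an alternative would be to imitate the S-matrix proof of Theorem~\ref{thm:TBone} directly, exploiting the known block structure of the super-modular $S$-matrix (rank $|\cT|-1$ with null space spanned by $1$ and $c$): characters satisfying $\phi(c)=0$ lie in the image of the map $a\mapsto B(-,a)$, whose kernel is $\{1,c\}$. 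Either route meets the same essential content; the bookkeeping in the MME approach is the piece I would expect to require the most care.
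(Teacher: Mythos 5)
Your overall architecture is reasonable and parts of it are solid: the deduction that $\widetilde w$ lies in $\cT$ (because $B(c,\widetilde w)=0$ and $\cT$ is the centralizer of $c$ in the modular extension) and the uniqueness argument (transparency forces $w\bar w'\in\{1,c\}$) are both correct. Note that the paper itself does not give an argument at all — it cites Eq.~(2.3) of \cite{bruillard_classification_2017}, which is the factorization $S=\hat S\otimes\frac{1}{\sqrt2}\bigl(\begin{smallmatrix}1&1\\1&1\end{smallmatrix}\bigr)$ with $\hat S$ unitary; the theorem then follows by running the character/S-matrix argument of Thm.~\ref{thm:TBone} on $\hat S$. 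So any self-contained proof is "different," but yours has a genuine gap exactly where you flagged it.

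The gap is the extension of the character to $\widetilde\cT_1$. The hypothesis \eqref{TBone-eq} only constrains fusion channels among anyons of $\cT$; it says nothing about fusions $\bm a\times\bm b$ with $\bm a,\bm b\in\widetilde\cT_1$. To define $\widetilde\phi(\bm a)$ consistently you need, at minimum, that whenever $z,z'\in\cT$ both appear in $\bm a\times\bm b$, one has $e^{i\phi(z)}=e^{i\phi(z')}$, i.e.\ $e^{i\phi(w'')}=1$ for some (hence every) $w''\in\bar z\times z'$. This is not "a handful of consistency conditions that are satisfied because $e^{i\phi}$ is multiplicative on $\cT$": in the Moore-Read example with $\widetilde\cT={\rm Ising}\boxtimes\U_8$ and $\bm a=\sigma$, one has $\sigma\times\sigma=1+\psi$ with $\psi\in\cT$ but $\psi\neq 1,c$, so you need $\phi(\psi)=0$. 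That does hold, but only because of a separate fusion rule \emph{inside} $\cT$ (namely $(\sigma a_1)^2\ni a_2,\psi a_2$), and you offer no general mechanism forcing such relations; a priori the only reason the required relations always hold is the conclusion of the theorem itself, which makes the step circular as written. (When $\widetilde\cT_1$ happens to contain an Abelian anyon $\bm\xi$ the extension is easy via $\widetilde\cT_1=\bm\xi\times\cT$, but that is exactly the spin$^c$ case and fails for, e.g., ${\cal F}_{4m+2}$, whose odd sectors are entirely non-Abelian.) Your fallback route also misstates the structure it relies on: the super-modular $S$-matrix has rank $|\cT|/2$, with kernel spanned by the differences $x-xc$ over all $x$, not rank $|\cT|-1$ with kernel spanned by $1$ and $c$. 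The correct version of that fallback — working with the unitary $\hat S$ on $c$-orbits, whose columns realize precisely the fusion-ring characters killing $c-1$ — is the clean way to close the argument, and is what the cited reference provides.
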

\begin{proof}
This is a consequence of  (2.3) of  \cite{bruillard_classification_2017}.
\end{proof}

\begin{theorem} \label{thm:smtc-factorization}
If a super-modular category $\cal{T}$ has a modular subcategory $\cal{T}'$, then $\cal{T}$ factorizes as ${\cal T}={\cal T}'\boxtimes {\cal T}''$, where ${\cal T''}$ is a super-modular category.
\end{theorem}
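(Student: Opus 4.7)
The plan is to mimic the proof of Theorem \ref{mtc-factorization}, but in the super-modular setting. First I would define
\begin{equation}
    {\cal T}'' \;:=\; Z_{\cal T}({\cal T}')\,,
\end{equation}
the M\"uger centralizer of ${\cal T}'$ inside ${\cal T}$, i.e.\ the full subcategory of anyons $x\in {\cal T}$ with $B(x,y)=0\mod 1$ for all $y\in{\cal T}'$. This is automatically a braided fusion subcategory. Since the transparent anyon $c$ of ${\cal T}$ braids trivially with everything, in particular with all of ${\cal T}'$, we have $c\in{\cal T}''$. Moreover, because ${\cal T}'$ is modular, the only anyon of ${\cal T}'$ that braids trivially with every object of ${\cal T}'$ is the identity, so ${\cal T}'\cap {\cal T}''=\{1\}$; in particular $c\notin {\cal T}'$.

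The key input is the generalization of Theorem \ref{mtc-factorization} from MTCs to arbitrary braided fusion categories: if a modular subcategory ${\cal T}'$ sits inside a braided fusion category ${\cal T}$, then the braided fusion functor ${\cal T}'\boxtimes Z_{\cal T}({\cal T}') \to {\cal T}$ given by fusion is an equivalence (this is the non-modular version of the Drinfeld--Gelaki--Nikshych--Ostrik / M\"uger centralizer decomposition; it holds whenever the smaller factor is non-degenerate, with no modularity requirement on the ambient category). Applying it here gives
\begin{equation}
    {\cal T} \;=\; {\cal T}' \boxtimes {\cal T}''\,.
\end{equation}

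It remains to show that ${\cal T}''$ is super-modular, i.e.\ that the M\"uger center of ${\cal T}''$ is exactly $\{1,c\}$. One containment is already clear: $1,c\in{\cal T}''$, and both are transparent in ${\cal T}''$ because they are transparent in all of ${\cal T}$. For the other direction, suppose $z\in{\cal T}''$ is transparent in ${\cal T}''$. By definition of ${\cal T}''$, $z$ also braids trivially with every object of ${\cal T}'$. Using the factorization ${\cal T}={\cal T}'\boxtimes {\cal T}''$, every object of ${\cal T}$ is a summand of some $x'\times x''$ with $x'\in{\cal T}'$, $x''\in{\cal T}''$, and hence $z$ braids trivially with everything in ${\cal T}$. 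By super-modularity of ${\cal T}$, this forces $z\in\{1,c\}$, as desired. Thus ${\cal T}''$ is super-modular, completing the argument.

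The only genuinely non-trivial step is the invocation of the centralizer decomposition theorem for a modular subcategory of a (possibly non-modular) braided fusion category; the rest is bookkeeping about where $c$ sits and what survives as transparent in the centralizer. I would expect the write-up to be brief, essentially consisting of the three observations above.
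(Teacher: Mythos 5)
Your proposal is correct and matches the standard argument: the paper's ``proof'' is only a citation to Lemma~2.5 of \cite{bruillard_classification_2017}, and the M\"uger-centralizer argument you give (define ${\cal T}''=Z_{\cal T}({\cal T}')$, invoke the decomposition theorem for a non-degenerate subcategory of a braided fusion category, then check that the transparent objects of ${\cal T}''$ are exactly $\{1,c\}$) is precisely how that lemma is proved. The only input you flag as nontrivial is indeed available: the general form of the centralizer decomposition for a non-degenerate subcategory of a not-necessarily-modular braided fusion category is exactly the statement of the reference the paper already cites for Thm.~\ref{mtc-factorization}.
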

\begin{proof}
This is Lemma 2.5 in \cite{bruillard_classification_2017}.
\end{proof}

It will be crucial for our later discussion to consider the minimal modular extension ${\cal T}^{\rm ext}$ of a super-modular category $\cal T$. 
The minimal modular extension is a minimal (by total quantum dimension) MTC $\cal{T}^{\rm ext}$ that contains $\cal{T}$ as a subcategory. It is $\bZ_2$ graded, ${\cal T}^{\rm ext}={\cal T}_0^{\rm ext}\oplus {\cal T}_1^{\rm ext}$, where ${\cal T}_0^{\rm ext}$ contains all anyons braiding trivially with $c$, and ${\cal T}_1^{\rm ext}$ contains all anyons which have $-1$ braiding with $c$. Moreover, ${\cal T}_0^{\rm ext} = {\cal T}$. Then using Lemma \ref{lemma:G-graded-D}, the $\bZ_2$ grading implies that $\mathcal{D}_{\cal{T}^{\rm ext}}=\sqrt{2}\mathcal{D}_{\cal{T}}$. Mathematically, the existence of a minimal modular extension for any super-modular category was recently established in
 \cite{Johnson-Freyd_Reutter_2024}.

We will sometimes refer to anyons in $\cT^{\rm ext}_1$ as ``Ramond anyons/lines."\footnote{These lines correspond to states in the Hilbert space of the theory on a 2D torus with spin structures $(+,+)$ and $(+,-)$.  Here, the two signs represent the boundary conditions of spinors around the two cycles of the torus.  $+$ means that they are periodic (Ramond or R in the string theory terminology), and $-$ means that they are anti-periodic (Neveu–Schwarz or NS in the string theory terminology).  The first sign corresponds to the boundary conditions as the spinors circle around the line, and the second sign corresponds to the boundary conditions along the line.}  We will denote them by bold fonts, e.g., $\bm {a}$.  Ramond anyons come in two types: Majorana and non-Majorana. An anyon $\bm{a} \in \cT^{\rm ext}_1$ is Majorana if $c\times \bm{a}=\bm{a}$, otherwise it is non-Majorana. The Majorana anyons correspond to states with odd fermion parity on the torus in the $(+,+)$ sector, which will be discussed further in  Appendix~\ref{HilbertspaceT}. 

As a TQFT, a minimal modular extension can be obtained from gauging the fermion parity, or summing over spin structures in the spin TQFT. It is often referred to as a ``bosonic shadow" of the spin TQFT \cite{Gaiotto:2015zta}.

Without gauging/summing over the spin structure, the Ramond ``anyons" are not actual anyonic excitations in the fermionic topological order because they have $-1$ braiding with the fermions. In other words, they are extrinsic objects carrying $\pi$ fluxes (in the sense of Aharonov-Bohm phase). A concrete example is the vortices in an (electronic) superconductor.  In the TQFT, they are line operators attached to fermion parity surfaces. After gauging, they become genuine line operators of the bosonic shadow.\footnote{Here we follow the terminology of \cite{Kapustin:2014gua}, according to which nongenuine lines are lines attached to topological surfaces.  The nongenuine lines, like this Ramond line, are higher-dimensional versions of the familiar disorder operator in 1+1d.  The latter is a non-genuine point operator, because it is attached to a topological line, which is the symmetry line of a global $\mathbb{Z}_2$ symmetry.}

Here we use the minimal modular extension mainly as a device to describe the fusion and braiding properties of the $\pi$-flux/Ramond objects, which will prove to be convenient for the classification of spin{}$^c$ theories. The reason behind the effectiveness of this approach is that the $\pi$-flux objects can be obtained through flux threading, which provides a continuous interpolation between $\pi$-flux objects and anyons. Equivalently, as will be discussed in the following subsection, different spin structures on the torus (which are labeled by objects in the modular extension) are essentially the same in a spin{}$^c$ theory because one can continuously change the boundary conditions to interpolate between them.

\begin{remark} \label{rem:all_shadows}
For any $\cal{T}$, there are 16 different minimal modular extensions, which have different chiral central charges (mod 8). The data of the minimal modular extensions, such as the $S$ and $T$ matrices, is completely determined by the data of the super-modular category $\cT$ and the integer $n$ below.   If we have one minimal modular extension ${\cal T}^{\rm ext}$, then all 16 of them can be obtained by
\begin{equation} \label{eqn:all_shadows}
{\cal T} = \frac{{\rm Spin}(n)_1 \boxtimes {\cal T}^{\rm ext}}{(\psi c)} \qquad,\qquad n = 0,\ldots, 15\,.
\end{equation}
This can be understood as stacking $n$ copies of $p+ip$ superconductors before gauging.
Equivalently, as a TQFT, different choices of the modular extension differ by phase factors in the sum over spin structures (see Appendix~C.5 in \cite{seiberg2016gapped}).\footnote{In various contexts, the freedom in such phases in the process of gauging is known as discrete torsion or as adding Dijkgraaf-Witten terms.  Equivalently, they correspond to adding local counterterms or tensoring an invertible field theory before the gauging.}

\end{remark}

Let us discuss some examples.

\begin{enumerate}
    \item The simplest super-modular category is $\{1,c\}$. The minimal modular extensions are given by the 16-fold way, ${\rm Spin}(n)_1$ for $n=0,1,\cdots, 15$. They can be thought of as gauging $n$ copies of $p+ip$ superconductors. If $n$ is odd, the theory has a single Ramond anyon, which is Majorana.
    \item ${\cal F}_{4m+2}$ can be extended to SU(2)$_{4m+2}$. The isospin-$\left(m+\frac12\right)$ anyon is a Majorana anyon.
    \item The Moore-Read theory can be extended to ${\rm Ising}\boxtimes \U_8$. All Ramond anyons are non-Majorana. 
\end{enumerate}

\subsection{The Hilbert space of the TQFT on a two-torus}\label{HilbertspaceT}

Consider the TQFT on a three-torus.  It has eight spin structures labeled in an obvious way as $(\pm,\pm,\pm)$.  The $(+,+,+) $ spin structure is referred to as odd, and the other seven spin structures are even.  We can change the cycles of the torus using SL$(3,\bZ)$ transformations.  This has the effect of permuting the even spin structures, while leaving the odd spin structure invariant.  

In a spin TQFT, the partition function depends on the spin structure.  We will denote it as  ${\cal Z}_{(\pm,\pm,\pm)}$.  Using  SL$(3,\bZ)$ transformations ${\cal Z}_{(\pm,\pm,\pm)}$ can have only two different values.  The partition functions for the even spin structures are $N_e$ and ${\cal Z}_{(+,+,+)}=N_o$.

These partition functions can be interpreted as traces over the Hilbert space of a spatial two-torus. The four spin structures of the spatial torus are labeled by $(\pm,\pm)$, and their Hilbert spaces are ${\cal H}_{(\pm,\pm)}$.  Interpreting the first index in ${\cal Z}_{(\pm,\pm,\pm)}$ as the spin structure around the compact Euclidean time and the other ones as associated with the spatial cycles, we have
\begin{equation}
\begin{split}
&{\cal Z}_{(-,\pm,\pm)}=\Tr_{{\cal H}_{(\pm,\pm)}} 1=N_e\\
&{\cal Z}_{(+,-,+)}=\Tr_{{\cal H}_{(-,+)}}(-1)^F =N_e\\
&{\cal Z}_{(+,+,-)}=\Tr_{{\cal H}_{(+,-)}}(-1)^F =N_e\\
&{\cal Z}_{(+,-,-)}=\Tr_{{\cal H}_{(-,-)}}(-1)^F =N_e\\
&{\cal Z}_{(+,+,+)}=\Tr_{{\cal H_{(+,+)}}}(-1)^F =N_o\,,
\end{split}
\end{equation}
where $(-1)^F$ is the fermion parity.  We immediately learn some facts about these four Hilbert spaces ${\cal H}_{(\pm,\pm)}$: 
\begin{itemize}
    \item The dimensions of ${\cal H}_{(\pm,\pm)}$ are the same and are given by the  positive integer $N_e$.  
    \item All the states in the Hilbert spaces of the even spatial spin structures have $(-1)^F=+1$.  
    \item The Hilbert space of the odd spatial spin structure ${\cal H}_{(+,+)}$ has $N_e+N_o\over 2$ states with $(-1)^F=+1$ and  $N_e-N_o\over 2$ states with $(-1)^F=-1$.  Therefore, $N_o$ must be an  integer satisfying $N_o=N_e\mod 2$ and $|N_o|\le N_e$.
\end{itemize}

 These integers are related to the data of the bosonic shadow \cite{Delmastro:2021xox}. (See Appendix~\ref{Necessary definitions}.) Denote the number of anyon types (including $c$) in the super-modular category $\cT$ by $|\cT|$, and similarly define $|\cT^{\rm ext}_{0,1}|$. Further we denote the number of Majorana-type anyons in $\cT^{\rm ext}_1$ as $|\cT^{\rm ext}_{\rm M}|$. Then we have 
\begin{equation}\label{numberof states}
    N_e=\frac12 |\cT|, \quad N_o=\frac12 |\cT|-2|\cT^{\rm ext}_{\rm M}|.
\end{equation}
(Note that since $\cT$ is a super-modular category and it includes $c$, $|\cT|$ must be even.) In other words, the number of $(-1)^F=-1$ states in ${\cal H}_{(+,+)}$ is $|\cT^{\rm ext}_{\rm M}|$. 

One way to understand this is the following.  Observe that $(-1)^F=-1$ means that there is a single $c$ defect on the torus. Therefore, such a state is obtained by inserting a defect of Majorana-type anyon in the bosonic shadow inside the solid torus and performing the path integral.  This explains why the number of such states is equal to $|\cT^{\rm ext}_{\rm M}|$. It is worth noting that the TQFT states in the $(+,+)$ sector depend to some extent on the choice of the minimal modular extension, or equivalently the chiral central charge [namely, for modular extensions given in \eqref{eqn:all_shadows}, $N_o$ depends only on the even/oddness of $n$].

If the TQFT is also a spin$^c$ theory, the previous discussion is still valid, but we can also turn on background $A$. That background allows us to interpolate continuously between the distinct spin structures without $A$ we discussed above.  Let us turn on only spatial $A$.  The expressions above in terms of traces are still valid.  And since they are given by integers, they cannot vary with $A$.  Therefore, $N_e=N_o$.  This means that none of the four Hilbert spaces ${\cal H}_{(\pm,\pm)}$ have any states with $(-1)^F=-1$. In particular, it implies that there are no Majorana anyons.  A simple way to understand it follows from the definition of  Majorana anyons as anyons $\bm{a}$, such that $c\times \bm{a}=\bm{a}$.  In a spin$^c$ theory, $c$ carries charge one under $\U$ and therefore there cannot be any such $\bm{a}$. We will discuss it in more detail in  Sec.~\ref{secondderiv}.

For example, $\U_k$ with odd $k$ is a spin$^c$ theory with $N_e=N_o=k$. It has $k$ states on a spatial two-torus for each spin structure, all with $(-1)^F=+1$ \cite{Delmastro:2021xox}. On the other hand, ${\rm SO}(3)_k={{\rm SU}(2)_k\over \bZ_2}$ with $k=2\mod 4$ is a spin theory, but it is not a spin$^c$ theory.  It has $N_e={k+2\over 4}$ and $N_o={k-6\over 4}$.  This corresponds to ${k+2\over 4}$ states for each spin structure on a spatial two-torus and only one state with $(-1)^F=-1$ in the $(+,+)$ spin structure \cite{Delmastro:2021xox}.

The key fact we used about spin$^c$ theories is that, unlike ordinary (non-spin$^c$) fermionic theories, in these cases, we can interpolate continuously between the different spin structures.  This continuous interpolation is a generalization of the flux threading argument, reviewed in Sec.~\ref{flux threading}.  The original $2\pi$-flux threading maps the theory back to itself, perhaps with some nontrivial action on the Hilbert space.  In spin$^c$ theories, we can extend it by threading $\pi$ flux.  It does not map the theory back to itself, but it changes the spatial spin structure, thus relating the Hilbert spaces with different spin structures.

Using the same reasoning, we can obtain a stronger result.  In spin$^c$ theories, not only is the dimension of the Hilbert space in all four spatial spin structures the same, $N_e$, there is also a one-to-one map between these states, obtained by tracking them during the flux-threading process.  Below, we will discuss this one-to-one map in more detail and will also see that this map is not unique.  

So far, we have been studying the Hilbert space on the torus.  Next, we consider a torus with an insertion of a defect $x$.  Again, we can find different states depending on the spin structure.  The states on the torus with a defect $x$ are in one-to-one correspondence with anyons $a$ such that the fusion of $x$ and $a$ includes $a$. The one-form symmetry of the problem leads to a selection rule stating that $x$ should be invariant under the one-form symmetry. 
For example, the $\U_k$ theory has a $\bZ_k^{(1)}$ one-form symmetry, and therefore, any nontrivial defect is charged under it, and the torus Hilbert space in the presence of such a defect is empty. The same applies to all Abelian TQFTs. However, the one-form symmetry of the Moore-Read state is $\bZ_4^{(1)}$, and the $v^2c=\psi_0$ anyon is invariant under it.  Consequently, a torus with an $x=v^2c$ defect has a non-empty Hilbert space.  In particular, it is two-dimensional, and the two states correspond to the two non-Abelian anyons of this model $a$, for which $x\times a$ includes $a$.

Both kinds of states on the torus, with or without a defect, are useful as the TQFT description of the ground states of FQH systems on a torus, which are routinely studied in finite-size numerical simulations. In general, one can study the system with any number of electrons ${\cal N}$ and any number of flux quanta ${\cal K}$ (as long as they are sufficiently large so that the thermodynamic limit is a valid approximation). For a FQH phase at filling fraction $\nu$, in order for the ground states on a torus to be described by the TQFT (possibly with a defect insertion), it is necessary that ${\cal N}/{\cal K}=\nu$. Otherwise, if the deviation ${\cal K}-\frac{1}{\nu}{\cal N}$ is nonzero, but of order one, one expects that in general the system is in an excited state with an order one number of anyon excitations, thus outside the realm of TQFT. Below we focus on the case when the filling condition is satisfied, with filling factor $\nu=p/q$.

The precise relationship between the microscopic FQH and the TQFT Hilbert space can be understood as follows: since there are ${\cal K}$ units of flux through the torus, and each flux corresponds to a $v$ defect, the FQH ground state should be associated with the TQFT with the $v^{\cal K}c^{\cal N}$ defect inserted.  

First, we consider the case of theories where $\ell=1$.  Here, we have $v^q=c^p$. Since ${\cal N}/{\cal K}=p/q$, we conclude that $v^{\cal K}c^{\cal N}=c^{2p}=1$ and the ground states on the torus always correspond to TQFT states without any defect.

Next, we consider the general case where $\ell>1$, e.g., the minimal theories for even $q$. Then we have $v^{\ell q}=c^{p\ell}$, or $(v^q c^p)^{\ell}=1$, and ${\cal K}=q\frac{{\cal N}}{p}$. Thus, the inserted defect is 
\begin{equation}
    v^{\cal K}c^{\cal N}=v^{q{\cal N}/p}c^{\cal N}=(v^qc^p)^{{\cal N}/{p}\mod \ell}. 
\end{equation}
That is, there is a defect on the torus depending on ${\cal N}/p\mod \ell$. So only when ${\cal N}$ is a multiple of $p\ell$, do the microscopic FQH ground states correspond to TQFT states with no defect insertions.

To illustrate, let us now specialize to the minimal theories $\pf^f_{q,p=1,n}$ with $\ell=2$ for $q$ even. The result now depends on whether $n$ is even or odd:
\begin{itemize}
    \item Odd $n$: If ${\cal N}$ is even, there is no defect insertion and the number of ground states is $3q$ for each spin structure. For ${\cal N}$ odd, the TQFT has a defect $v^qc=\psi_0$. There are $q$ lines that remain the same under fusing with $v^qc$. Using the notations in App.~\ref{PfaffianTO}, they are $\sigma_k$ with $k=1,3,\dots, 2q-1$.   Thus, we conclude that when ${\cal N}$  is odd (while maintaining the filling fraction), there are $q$ states for each spin structure. 
    \item Even $n$: Since the theory is Abelian, the Hilbert space with a nontrivial defect is empty. Therefore, in this case, the system cannot be described by states in the TQFT Hilbert space when ${\cal N}$ is odd.  If ${\cal N}$ is even, there are $4q$ states.
\end{itemize}

Finally, we notice that the same argument applies to topological phases in lattice systems at a given (rational) filling. Assuming that microscopic translations do not permute the anyons of the macroscopic theory, the microscopic ground states correspond to TQFT states with an $a_b^{N_{\rm uc}}$ defect, where $N_{\rm uc}$ is the total number of unit cells and $a_b$ is the background anyon.

\subsection{Classifying charge assignments for bosonic and spin TQFTs}
\label{charge-assignment-classification}

First, we review the classification of bosonic and fermionic/spin (which are not necessarily spin$^c$) $\U$ SETs. Here, by ``classification," we mean classifying $\U$ SETs given a certain topological order (i.e., anyon data and $c_-$). We will first classify the $\U$ symmetry action on the anyons, which is equivalent to assigning charges consistently to anyons.  These charges are defined modulo one. Then, the theory is essentially fixed, up to stacking invertible theories.  Later, we will use this information and the value of $\sigma_H$ to constrain the TQFT.

\begin{theorem} \label{thm:charge_assign_boson}
The charge assignments of a bosonic TQFT are classified by $\cA$. The charge assignments in a spin TQFT with unit charge bosons and fermions are classified by $\cA/\{1,c\}$.
\end{theorem}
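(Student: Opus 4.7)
The plan is to identify a charge assignment with a character of the fusion ring valued in $\mathbb{R}/\mathbb{Z}$, and then invoke Theorems \ref{thm:TBone} and \ref{thm:TBone_smtc} to put such characters in bijection with the appropriate set of Abelian anyons.

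First I would make precise what a ``charge assignment'' is. Because $\U$ charges are defined modulo 1 (attaching a unit-charge transparent boson in the bosonic case, or a unit-charge boson/fermion in the spin case), an assignment is a function $Q:\cT\to \mathbb{R}/\mathbb{Z}$ such that $Q(1)=0$ and
\begin{equation}
Q(z)=Q(x)+Q(y) \mod 1 \quad \text{whenever } N_{xy}^z>0.
\end{equation}
Equivalently, the phases $e^{2\pi i Q(x)}$ obey the multiplicativity hypothesis of Theorem \ref{thm:TBone} (resp.\ Theorem \ref{thm:TBone_smtc}). In the spin case there is in addition the constraint $Q(c)\equiv 1 \equiv 0\mod 1$, i.e., $e^{2\pi i Q(c)}=1$, which is the same hypothesis $e^{i\phi(c)}=1$ appearing in Theorem \ref{thm:TBone_smtc}.

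Next I would apply the theorems. For a bosonic TQFT $\cT$, Theorem \ref{thm:TBone} produces a \emph{unique} $w\in \cA$ with $Q(x)=B(x,w)\mod 1$ for all $x\in\cT$. Conversely, every $w\in\cA$ defines a valid charge assignment, because the braiding $B(\cdot,w)$ is bilinear, in particular additive under fusion when restricted to Abelian $w$. Uniqueness of $w$ gives injectivity of the map $\cA\to\{\text{charge assignments}\}$, and the theorem gives surjectivity. The group structure on charge assignments (pointwise addition in $\mathbb{R}/\mathbb{Z}$) matches that on $\cA$ (fusion). For a spin TQFT whose super-modular category is $\cT$, Theorem \ref{thm:TBone_smtc} produces such a $w\in\cA$, now unique only up to fusion with $c$. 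Again, every $w$ gives a valid $Q$, so the assignment map $\cA\to\{\text{charge assignments}\}$ is surjective, and its kernel is generated by $c$: since $B(x,c)=0$ for all $x$ in a super-modular category, $w$ and $w\times c$ give identical $Q$. This yields the classification $\cA/\{1,c\}$.

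The main obstacle is conceptual rather than computational: we must justify that the braiding non-degeneracy built into ``modular'' (resp.\ ``super-modular'') is strong enough both to produce the anyon $w$ from a multiplicative phase, and to control precisely the redundancy in the choice of $w$. In the bosonic case, full non-degeneracy gives strict uniqueness of $w$; in the spin case, the transparent center $\{1,c\}$ is exactly the source of the ambiguity, which is why the quotient $\cA/\{1,c\}$ and no smaller one appears. Once those points are unpacked, the consistency on $c$ in the spin case (i.e., $Q(c)=0$) is automatic from the theorem's hypothesis, so no further conditions on $w$ need to be imposed. The residual ambiguity in the \emph{physical} theory coming from the freedom of stacking invertible phases does not affect the classification of $Q$ and can be discussed separately.
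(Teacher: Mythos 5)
Your proposal is correct and follows essentially the same route as the paper: impose additivity of $Q$ mod 1 under fusion, then invoke Theorem \ref{thm:TBone} (bosonic) or Theorem \ref{thm:TBone_smtc} (spin) to identify the assignment with braiding against a unique Abelian anyon, unique up to fusion with $c$ in the spin case. The only difference is that you spell out the converse direction (every Abelian anyon yields a valid assignment, giving a bijection), which the paper leaves implicit; this is a welcome but minor addition.
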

\begin{proof}
In both of these theories, the fact that charges are defined modulo one means that
\begin{equation}\label{Qselec}
    Q(x)+Q(y) = Q(z) \mod {1}\qquad,\qquad N_{xy}^z>0\,.
\end{equation}
for anyon types $x,y,z$. In other words, $e^{2\pi iQ(x)}e^{2\pi iQ(y)}=e^{2\pi i Q(z)}$. 

Specialize first to the bosonic case. Then we can use Theorem~\ref{thm:TBone} to see that there exists a unique Abelian anyon $v$ such that 
\begin{equation}
    B(v,x) = Q(x) \mod {1}.
\end{equation}
Since this anyon detects the charges of all other anyons by braiding with them, it must be the quasiparticle created by threading $2\pi$ flux. We henceforth refer to it as the vison. The different charge assignments are therefore parametrized by the different choices of an Abelian vison, i.e., by $\cA$.

In the fermionic/spin case Theorem~\ref{thm:TBone_smtc} means that there exists an Abelian anyon $v$ such that 
\begin{equation}
    B(v,x) = Q(x) \mod {1},
\end{equation}
and $v$ is unique up to fusion with $c$. Thus, the different charge assignments are parametrized by the Abelian anyons in $\cA/\{1,c\}$.
\end{proof}

In the bosonic case, this is an MTC way of understanding the results we have found in App.~\ref{app:generator}.

If the Hall conductivity is fixed as $\sigma_H$, then only some choices of the vison are acceptable. These correspond to all $v\in \cA$ satisfying
\begin{equation}
    h(v)=\frac{\sigma_H}{2}\mod 1\,.
\end{equation}
Physically, the even integer ambiguity in $\sigma_H$ corresponds to the stacking of bosonic SPT phases.

\subsection{From spin TQFT to spin$^c$ TQFTs}

Here we consider the classification of fermionic systems that satisfy the standard spin/charge relation, or spin$^c$. Again, we focus on classifying the charge assignment of anyons, after which the remaining choices are just stacking invertible phases. Equivalently, we are classifying distinct ways to couple a spin TQFT to $\U$ so that it becomes a spin$^c$ TQFT. 

\begin{definition} \label{defn:spinc}
A super-modular category $\cal T$ is spin{}$^c$ if it can be endowed with the following data. Each anyon $x$ is associated with a charge $Q(x)$ defined mod 2, subject to the following relations:
\begin{equation}
    \begin{split}
        Q(1)&= 0 \mod {2}\\
        Q(c)&= 1 \mod {2}\\
    Q(x)+Q(y)-Q(z)&= 0\mod {2}\qquad,\qquad \text{ if } N_{xy}^z>0\,.
    \end{split}
\end{equation}
We will sometimes call such a $\cal T$ a spin$^c$ category.
\end{definition} 

We will present two derivations of the assignment of charges modulo 2 and will explore the properties of the spin$^c$ category. Our first approach will be to try generalizing the bosonic approach of Appendix~\ref{charge-assignment-classification}. We will see that this approach falls short of a full classification, motivating us to employ the bosonic shadow.

\subsubsection{First derivation: Generalizing the bosonic case}\label{first deriv}

We start by following the logic of Theorem~\ref{thm:charge_assign_boson}, using the consistency of  $Q(x)\mod 1$ to  expressed it in terms of the vison
\begin{equation}
    Q(x)=B(v,x)\mod 1\,.
\end{equation}
Note that here Theorem~\ref{thm:TBone_smtc} means the charge assignment only fixes $v$ up to $c$. 

Now we need to further determine $Q(x)\mod 2$. Let us choose an arbitrary lifting of $B$ mod 2, call it $\tilde{B}$. Then write 
\begin{equation}
    Q(x)=\left(\tilde{B}(x,v)+t(x)\right)\mod 2\,.
\end{equation}
Here $t(x)\in \{0,1\}\mod 2$.
We need to set $\tilde{B}(1,x)=\tilde{B}(c,x)=0$ mod 2, and $t(1)=0$. Then we have
\begin{equation}
    t(x)+t(y)-t(z)=\left(\tilde{B}(v,z)-\tilde{B}(v,x)-\tilde{B}(v,y)\right)\mod 2 \qquad,\qquad \text{ if } N_{xy}^z>0\,.
    \label{t-fusion}
\end{equation}
Any solution to these equations gives a consistent charge assignment. As we will see, not every $v$ allows solutions. Those choices of $v$ are not compatible with the spin$^c$ condition.

Assuming that \eqref{t-fusion} allows solutions, we can study the space of solutions. Start from one solution, say $t_0$, and write $t=t_0+\xi\mod 2$. Then $\xi$ should satisfy
\begin{equation}\label{xidef}
    \xi(x)+\xi(y)-\xi(z)=0\mod 2\qquad, \qquad N^{xy}_z>0\,.
\end{equation}
Since $Q(c)=1\mod 2$, we have $\xi(c)=0\mod 2$. According to Theorem \ref{thm:TBone_smtc}, there is an Abelian anyon $w$ that satisfies
\begin{equation}
    (-1)^{\xi(x)}=e^{2\pi i B(w,x)}\,.
\end{equation}
Since these braiding phases should all be $\pm 1$, $w^2=1$. In other words, the choices are parametrized by a torsor over the 2-torsion subgroup of $\cA$. Here torsor means that $\xi$ measures the difference between different charge assignments (all with the same $Q$ mod 1, or the same vison).

Note that the quadratic refinement discussed in Sec.~\ref{sec:tqft-basics}, $\hat{h}$, is
\begin{equation}
    \hat{h}(x)=\left(h(x)+\frac{\tilde{B}(x,v)}{2}+\frac{t(x)}{2}\right) \mod 1.
\end{equation}
If we denote the one for $t_0$ as $\hat{h}_0$, then
\begin{equation}
    \hat{h}(x)=\left(\hat{h}_0(x)+\frac12 \xi(x) \right)\mod 1= \left(\hat{h}_0(x)+B(w,x)\right)\mod 1\,.
\end{equation}

\begin{example}
We give a somewhat trivial example of assigning charges to a spin TQFT to turn it into a spin$^c$ TQFT. Suppose $\cal{T}$ splits, i.e., ${\cal T} = {\cal M}\boxtimes\{1,c\}$, where ${\cal M}$ is a bosonic TQFT. Then, one consistent choice is $Q(x)=0$ mod 2 for all anyons in ${\cal M}$, which corresponds to $v=1$. According to our classification, any other charge assignment with the same $Q$ mod 1 is given by
\begin{equation}
    (-1)^{Q(x)}=e^{2\pi i B(w,x)}\,.
\end{equation}
where $w^2=1$.

Unlike the bosonic TQFT, in general the vison can not be an arbitrary Abelian anyon owing to the spin-charge relation. Below we illustrate the subtlety with the example of ${\cal M}={\rm Ising}$ and provide a complete classification of charge assignments in this theory.

First, we show that $v=\psi$ is inconsistent, so $v=1$ is the only consistent choice. This can be proved using the generalized ribbon identity. Let us instead show it using \eqref{t-fusion}. Choose the lifting to be $\tilde{B}(\psi,\psi)=0, \tilde{B}(\psi,\sigma)=1/2\mod 2$. Then
\begin{equation}
    2t(\sigma)=-2\tilde{B}(v,\sigma)  \mod 2.
\end{equation}
Since $2t(\sigma)$ is even, we have $B(v,\sigma)=0\mod 1$, so the only allowed vison is $v=1$. In particular, $v$ cannot be $\psi$.

With $v=1$, $w$ has two choices: $w=1$ or $w=\psi$, corresponding to two distinct spin$^c$ Ising theories. They are distinguished by the quadratic refinement of the $\sigma$ anyon: $\hat{h}(\sigma)=1/16$ and $\hat{h}(\sigma)=9/16$, corresponding to $Q(\sigma)=0\mod 2$ or $Q(\sigma)=1\mod 2$, respectively.
\end{example}

As the above example reveals, this approach to classifying spin$^c$ TQFTs leaves something to be desired. It necessitates manually checking all Abelian anyons to find the choices consistent with being the vison of a spin$^c$ TQFT. After this is done, one needs to find all order two Abelian anyons to finish the classification. This motivates us to consider another approach.

\subsubsection{Second derivation: From a spin theory to its modular extension and back to a spin$^c$ theory}\label{secondderiv}

Now we discuss a second approach, which is more systematic and will allow us to find more complete consistency constraints on the vison.

We start with a spin$^c$ theory $\cal T$ and try to explore its properties.  In particular, we would like to classify its possible $\U$ symmetry enrichment.  We do this by using its minimal modular extension/bosonic shadow ${\cal T}^{\rm ext}$.  As reviewed earlier, the latter is a bosonic TQFT coupled to $\U$.  It can be decomposed as ${\cal T}^{\rm ext}={\cal T}^{\rm ext}_0\oplus {\cal T}^{\rm ext}_1$, with ${\cal T}^{\rm ext}_0$ being the original spin$^c$ theory $\cal T$.

We denote the U(1) charge in the bosonic TQFT as $Q^{\rm ext}$, normalized such that $Q^{\rm ext}=\frac12 Q\mod 1$.
Then local bosons in $\cT^{\rm ext}$ have integer charges, and $Q^{\rm ext}(c)=\frac12\mod 1$.  Conversely, suppose such a $Q^{\rm ext}$ can be assigned to ${\cal T}^{\rm ext}$. Then $2Q^{\rm ext}$ restricted to ${\cal T} = {\cal T}^{\rm ext}_0$ is a charge assignment satisfying Definition \ref{defn:spinc}.

As done in Sec.~\ref{flux threading}, to classify such charge assignment in $\cT^{\rm ext}$,  we can adiabatically thread $2\pi$ flux ($\pi$ flux) with respect to $Q^{\rm ext}$ ($Q=2Q^{\rm ext}$). As in Sec.~\ref{flux threading} we identify the result of doing so with an Abelian anyon $\bm{\gamma}$. In other words, $\bm{\gamma}$ is a vison for the bosonic theory ${\cal T}^{\rm ext}$. This anyon must have $\pi$ braiding with $c$ to give $Q^{\rm ext}(c)=\frac12 \mod 1$, so $\bm{\gamma} \in {\cal T}^{\rm ext}_1$.   The charge assignment is then completely determined by $\bm{\gamma}$,
\begin{align}\label{QinText}
Q(a)=2Q^{\rm ext}(a) &= 2B(a, \bm{\gamma}) \mod{2}\,.
\end{align}

Again, as in Sec.~\ref{flux threading}, the charge and topological spin of $\bm{\gamma}$ are given by
\begin{equation}
\begin{aligned}
\label{eqn:gamma-Q-h}
Q(\bm{\gamma}) &= 2Q^{\rm ext}(\bm{\gamma})=2\sigma_H^{\rm ext}=\frac{\sigma_H}{2}\mod 2\\
h(\bm{\gamma}) &= \frac{\sigma_H^{\rm ext}}{2} = \frac{\sigma_H}{8} \mod{1}\,.
\end{aligned}
\end{equation}
Note that this is because $Q=2Q^{\rm ext} \mod 2$ means $\sigma_H = 4\sigma_H^{\rm ext} \mod 8$.

Let us relate this construction to our first derivation in Appendix~\ref{first deriv}. The vison $v\in \cT^{\rm ext}_0=\cT$ is determined by 
\begin{equation}
    Q(a)=B(v,a)\mod 1.
\end{equation}
We thus find that $B(v,a)=2B(\bm{\gamma},a)=B(\bm{\gamma}^2,a)\mod 1$. Thus, we must have $v=\bm{\gamma}^2$.  Physically, this is obvious; $\bm\gamma$ corresponds to $\pi$ flux of $Q$ and $v$ corresponds to $2\pi$ flux of $Q$.

Thus, if the vison in $\cal{T}$ is given, then  $\bm{\gamma}^2=v$ means that the different choices form a torsor over the subgroup of order-2 elements in $\cA$. This is indeed just what we saw in Appendix~\ref{first deriv}.

A useful result, which is formalized as Lemma \ref{lemma_1to1} below, is that there is a one-to-one correspondence between $\cT$ and $\cT_1^{\rm ext}$:

\begin{lemma} \label{lemma_1to1}
Suppose ${\cal T}$ has a bosonic shadow ${\cal T}^{\rm ext}$ with an Abelian $\bm{\xi}\in {\cal T}^{\rm ext}_1$. Then the anyons in ${\cal T}^{\rm ext}_1$ are in one-to-one correspondence with those in $\cT^{\rm ext}_0={\cal T}$. In particular,
\begin{equation}\label{ximap01}
{\cal T}^{\rm ext}_1 =\bm{\xi}\times \cT^{\rm ext}_0= \{ \bm{\xi} \times a \mid a \in {\cal T}^{\rm ext}_0\}.
\end{equation}
\end{lemma}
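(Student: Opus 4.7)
The plan is to observe that fusion with an Abelian anyon is always a bijection on the full set of anyons, and then show that when the Abelian anyon sits in the nontrivial graded component $\mathcal{T}^{\rm ext}_1$, this bijection must interchange $\mathcal{T}^{\rm ext}_0$ and $\mathcal{T}^{\rm ext}_1$. Concretely, I define the map $F_{\bm{\xi}}\colon \mathcal{T}^{\rm ext}\to \mathcal{T}^{\rm ext}$ by $x \mapsto \bm{\xi}\times x$. Because $\bm{\xi}$ is Abelian, $\bm{\xi}\times x$ is a single anyon for every $x$, and $\bm{\xi}$ has a well-defined inverse $\bm{\xi}^{-1}$ under fusion, so $F_{\bm{\xi}^{-1}}$ is a two-sided inverse of $F_{\bm{\xi}}$. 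This immediately yields that $F_{\bm{\xi}}$ is a bijection of the anyon set of $\mathcal{T}^{\rm ext}$.

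Next I would check that $F_{\bm{\xi}}$ is compatible with the $\mathbb{Z}_2$ grading in the expected way. Since $\bm{\xi}\in \mathcal{T}^{\rm ext}_1$ one has $B(\bm{\xi},c)=\tfrac12\bmod 1$, and linearity of the braiding in the Abelian slot gives $B(\bm{\xi}\times a, c) = B(\bm{\xi},c) + B(a,c)$. Thus for any $a\in \mathcal{T}^{\rm ext}_0$ (which satisfies $B(a,c)=0$), the image $\bm{\xi}\times a$ has $B(\bm{\xi}\times a,c)=\tfrac12$, so it lies in $\mathcal{T}^{\rm ext}_1$. Similarly $F_{\bm{\xi}}$ maps $\mathcal{T}^{\rm ext}_1$ into $\mathcal{T}^{\rm ext}_0$. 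Combined with the bijectivity shown above, $F_{\bm{\xi}}$ restricts to a bijection $\mathcal{T}^{\rm ext}_0 \xrightarrow{\sim} \mathcal{T}^{\rm ext}_1$, which is precisely the identification \eqref{ximap01}.

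There is really no substantial obstacle here: the argument relies only on two inputs that are already in hand, namely (i) that an Abelian anyon is invertible under fusion, which is standard, and (ii) that the decomposition $\mathcal{T}^{\rm ext} = \mathcal{T}^{\rm ext}_0 \oplus \mathcal{T}^{\rm ext}_1$ is precisely the eigenspace decomposition for braiding with $c$, which is the definition of the grading recalled in App.~\ref{Necessary definitions}. If one preferred a non-constructive verification of surjectivity, one could instead invoke the fact that the $\mathbb{Z}_2$ grading is faithful, so $\mathcal{D}^2_{\mathcal{T}^{\rm ext}_0} = \mathcal{D}^2_{\mathcal{T}^{\rm ext}_1}$ by Lemma~\ref{lemma:G-graded-D}, and then note that $F_{\bm{\xi}}$ preserves quantum dimensions (because $d_{\bm{\xi}}=1$), so its injective image in $\mathcal{T}^{\rm ext}_1$ must exhaust the full squared quantum dimension. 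Either route completes the proof, and the direct inverse-map argument is the shortest.
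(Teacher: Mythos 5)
Your proposal is correct, and your primary argument is genuinely different from --- and more elementary than --- the one in the paper. The paper establishes \eqref{ximap01} by a counting argument: since $\bm{\xi}$ is Abelian, $\bm{\xi}\times \cT^{\rm ext}_0$ has the same cardinality and the same total quantum dimension as $\cT^{\rm ext}_0$, and since ${\cal D}_{{\cal T}^{\rm ext}_0}={\cal D}_{{\cal T}^{\rm ext}_1}$ (Lemma~\ref{lemma:G-graded-D}, applied to the $\bZ_2$ grading by braiding with $c$), the image saturates $\cT^{\rm ext}_1$ and must therefore equal it. That is precisely the ``non-constructive verification of surjectivity'' you relegate to a remark at the end. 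Your main route instead exhibits the two-sided inverse $F_{\bar{\bm{\xi}}}$ explicitly and checks that $F_{\bm{\xi}}$ exchanges the graded components via $B(\bm{\xi}\times a,c)=B(\bm{\xi},c)+B(a,c)$; this gets injectivity and surjectivity simultaneously without invoking the quantum-dimension equality of the graded sectors, so it is self-contained and shorter. The trade-off is minor: the paper's route reuses a lemma it needs elsewhere anyway and generalizes immediately to the statement about Abelian subsets (${\cal A}_1=\bm{\xi}\times{\cal A}_0$) by the same dimension count, while your route handles that case equally well since $F_{\bm{\xi}}$ preserves quantum dimensions and hence Abelian-ness. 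Both arguments are valid; yours is arguably the cleaner proof of the lemma as stated.
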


Additionally if ${\cal A}_1 \subseteq {\cal T}_1^{\rm ext}$ is the subset of Abelian anyons (note this does not form a group), then ${\cal A}_1 = \bm{\xi}\times {\cal A}_0$, where ${\cal A}_0 \subseteq {\cal T}_0^{\rm ext}={\cal T}$ are the original Abelian anyons. 

\begin{proof}
Since $\bm{\xi}$ is Abelian,  the set $\bm{\xi}\times \cT^{\rm ext}_0\subseteq {\cal T}^{\rm ext}_1$ contains exactly the same number of elements as $\cT^{\rm ext}_0$. Further, since $\bm{\xi}$ is Abelian $d_{\bm{\xi}_a} = d_a$. Then the total quantum dimension of $\bm{\xi}\times \cT^{\rm ext}_0$ is equal to that of ${\cal T}^{\rm ext}_0$. Since ${\cal D}_{{\cal T}^{\rm ext}_0}={\cal D}_{{\cal T}^{\rm ext}_1}$, as proven in Appendix~\ref{app:gauging-quantum-dim}, this set saturates the total quantum dimension of $\cT^{\rm ext}_1$. Thus ${\cal T}^{\rm ext}_1 = \bm{\xi}\times {\cal T}^{\rm ext}_0$.

The same logic shows that multiplying by $\bm{\xi}$ is a one-to-one mapping between ${\cal A}_0$ and ${\cal A}_1$.
\end{proof}

Note that this fact is consistent with the statement in Appendix~\ref{HilbertspaceT} that the different spin structures are related to each other via continuously tuning boundary conditions. The continuous change in the boundary conditions is the same as the process of flux threading.  (In other contexts, it is known as spectral flow.) And the one-to-one map of the anyons is clear by following them during the continuous flux-threading process.  In the case of the $\pi$ flux threading, the map \eqref{ximap01} is with $\bm{\xi}=\bm{\gamma}$, i.e., with the vison of ${\cal T}^{\rm ext}$.

We have shown that charge assignment in the bosonic TQFT $\cal T^{\rm ext}$  such that $Q^{\rm ext}(c)=1/2$ mod 1 is parametrized by ${\cal A}_1$. When restricted to $\cT$, since anyons in $\cT$ braid trivially with $c$, different choices are classified by $\cA_1/\{1,c\}$. 
We can show that the classification is complete: that all consistent solutions of $Q(a)$ for $a\in \cT$ satisfying the conditions in Definition \ref{defn:spinc} are obtained from \eqref{QinText} for $\bm{\gamma}\in\cA_1/\{1,c\}$.

We have now accomplished our goal of classifying the symmetry enrichment of a spin$^c$ theory. Thus far, however, our classification relied on a particular choice of a modular extension ${\cal T}^{\rm ext}$, which possessed an Abelian anyon $\bm{\gamma} \in {\cal T}^{\rm ext}_1$.
We could accept this, and simply state that the modular extension we chose is the unique one chosen by gauging the fermion parity, as in  Remark~\ref{rem:all_shadows}. Instead, we show that our classification is independent of this choice. 
First, we prove two simple corollaries.

\begin{corollary}\label{corr:ab_majorana}
    Let ${\cal T}$ be a spin TQFT. Then if ${\cal T}^{\rm ext}_1$ contains an Abelian anyon $\bm{\gamma}$, it has no Majorana anyons.
\end{corollary}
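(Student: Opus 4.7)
My strategy is to derive a contradiction by combining the Abelian structure provided by $\bm{\gamma}$ with the transparency of $c$ in the super-modular category $\cT$. Suppose, for contradiction, there is a Majorana anyon $\bm{a}\in \cT^{\rm ext}_1$, meaning $c\times\bm{a}=\bm{a}$. The first step is to apply Lemma~\ref{lemma_1to1} with $\bm{\xi}=\bm{\gamma}$ to write $\bm{a}=\bm{\gamma}\times a$ for some $a\in\cT^{\rm ext}_0=\cT$. Substituting into the Majorana condition yields $c\times\bm{\gamma}\times a = \bm{\gamma}\times a$, and then fusing both sides with the dual $\bar{\bm{\gamma}}$ (which exists because every Abelian anyon is invertible, with $\bm{\gamma}\times\bar{\bm{\gamma}}=1$) gives the cleaner identity $c\times a = a$ for an honest anyon $a\in\cT$.

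The second step is to rule out this identity using the braiding/spin structure of the super-modular category. Since $c$ is transparent in $\cT$, $B(c,a)=0\bmod 1$, so the ribbon relation \eqref{eqn:braid_defn} gives $h(c\times a)=h(c)+h(a)+B(c,a)=h(a)+\tfrac12\bmod 1$. On the other hand, $c\times a=a$ forces $h(c\times a)=h(a)\bmod 1$, yielding $\tfrac12=0\bmod 1$, a contradiction. Hence no Majorana anyon can exist in $\cT^{\rm ext}_1$ whenever it contains any Abelian element.

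I expect the only delicate point to be justifying the ``cancellation" of $\bm{\gamma}$ in the fusion equation. This is legitimate because $\bm{\gamma}$ is Abelian (so invertible as a simple object), and because $c\times \bm{\gamma}$ is also a single simple anyon (since $c$ itself is Abelian), so no multiplicities can obscure the equality. Everything else reduces to a one-line topological-spin calculation using that $c$ has spin $\tfrac12$ and braids trivially with anyons in $\cT$ by the super-modularity assumption. It is worth noting, as a sanity check, why this same argument does \emph{not} rule out Majorana anyons in general super-modular categories: when $\bm{\gamma}$ is not available, one cannot reduce the Majorana condition to a statement inside $\cT$, and within $\cT^{\rm ext}_1$ itself the braiding $B(c,\bm{a})=\tfrac12$ precisely compensates the $h(c)=\tfrac12$ contribution, so no spin contradiction arises.
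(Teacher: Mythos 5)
Your proof is correct and follows essentially the same route as the paper's: the paper likewise invokes Lemma~\ref{lemma_1to1} to write ${\cal T}^{\rm ext}_1 = \bm{\gamma}\times{\cal T}^{\rm ext}_0$ and then notes that no anyon of ${\cal T}^{\rm ext}_0$ is invariant under fusion with $c$. The only difference is that you spell out that last fact explicitly via the topological-spin computation ($c$ transparent and $h(c)=\tfrac12$ force $h(a)=h(a)+\tfrac12$), which the paper treats as known.
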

\begin{proof}
    This follows directly from Lemma \ref{lemma_1to1}, as ${\cal T}^{\rm ext}_1 = \bm{\gamma} \times {\cal T}^{\rm ext}_0$ and no anyons ${\cal T}^{\rm ext}_0$ are invariant under fusion with $c$.
\end{proof}

This also proves the contrapositive:
\begin{corollary} \label{cor_abormaj}
     Let ${\cal T}$ be a spin TQFT. Then if ${\cal T}^{\rm ext}_1$ contains a Majorana anyon, it has no Abelian anyons.
\end{corollary}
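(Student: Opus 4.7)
The statement to establish is precisely the contrapositive of the immediately preceding Corollary~\ref{corr:ab_majorana}, which asserts that the presence of an Abelian anyon in ${\cal T}^{\rm ext}_1$ forces the absence of Majorana anyons there. So my plan is simply to invoke contrapositive reasoning: assume for contradiction that ${\cal T}^{\rm ext}_1$ contains both a Majorana anyon $\bm{m}$ (so $c\times \bm{m}=\bm{m}$) and an Abelian anyon $\bm{\gamma}$. Then Corollary~\ref{corr:ab_majorana} applied to $\bm{\gamma}$ directly yields the conclusion that no Majorana anyon can exist in ${\cal T}^{\rm ext}_1$, contradicting the existence of $\bm{m}$.

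The only step to verify is that the hypothesis of Corollary~\ref{corr:ab_majorana} is indeed met under our assumption; this is immediate because $\bm{\gamma}\in{\cal T}^{\rm ext}_1$ is Abelian by assumption. There is no real obstacle to the argument — the entire content resides in the previous corollary, whose proof relied on Lemma~\ref{lemma_1to1} (the one-to-one correspondence ${\cal T}^{\rm ext}_1=\bm{\gamma}\times{\cal T}^{\rm ext}_0$ forcing every element of ${\cal T}^{\rm ext}_1$ to be of the form $\bm{\gamma}\times a$ with $a\in{\cal T}^{\rm ext}_0$, none of which is invariant under fusion with $c$ since no anyon in $\cT^{\rm ext}_0$ is).

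Thus the proof is essentially one line, noting that ``$P \Rightarrow \neg Q$'' is logically equivalent to ``$Q \Rightarrow \neg P$''. No new technical ingredients are required beyond what already appears in Corollary~\ref{corr:ab_majorana} and Lemma~\ref{lemma_1to1}.
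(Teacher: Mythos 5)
Your proposal is correct and matches the paper exactly: the paper introduces this corollary with the words ``This also proves the contrapositive,'' i.e., it is obtained from Corollary~\ref{corr:ab_majorana} by contraposition, just as you argue. No further comment is needed.
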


One immediate consequence of these results is that since $\cT^{\rm ext}_1$ has to contain the Abelian anyon $\bm{\gamma}$, a necessary condition for a spin{}$^c$ TQFT $\cT$ is that $\cT^{\rm ext}_1$ cannot contain any Majorana anyon. This is in accord with the observation made after \eqref{numberof states}, that in a spin{}$^c$ theory $N_o=N_e$.

Now, starting from a minimal modular extension ${\cal T}^{\rm ext}$ (which contains $c$), all other minimal modular extensions can be constructed as follows:
\begin{equation} \label{eq:extension}
    {\cal T}'=\frac{{\rm Spin}(n)_1\boxtimes {\cal T}^{\rm ext}}{(\psi c)}\,,
\end{equation}
where $c\in {\cal T}^{\rm ext}$ and $\psi\in {\rm Spin}(n)_1$. 

Let us first discuss the case of odd $n$ and argue that it is inconsistent with our assumptions.  To see that, we will argue that all the $\pi$-flux anyons are non-Abelian.  Consider the anyon content of ${\cal T}'$ explicitly. We follow the steps outlined in Sec.~\ref{sec:tqft-basics} to gauge the one-form symmetry generated by $\psi c$. First, following steps one and two, the anyons that have trivial braiding with $\psi c$ are either contained in $\cT_0$ or of the form $\sigma\bm{a}$ for $\sigma \in {\rm Spin}(n)_1$, $\bm{a}\in \cT^{\rm ext}_1$. Next, following step three, we need to determine if any of these anyons are fixed points under fusion with $\psi c$. The anyons in ${\cal T}_0$ are clearly not. For the anyons of the form $\sigma \bm{a}$ we have $\sigma\bm{a} \times \psi c=\sigma (\bm{a}\times c)\neq \sigma\bm{a}$ by the results of Corrollary~\ref{corr:ab_majorana} and the fact that ${\cal T}^{\rm ext}$ was chosen such that there is an Abelian anyon in ${\cal T}_1^{\rm ext}$. Thus these anyons are not fixed points either and ${\cal T}' = {\cal T}_0'\oplus {\cal T}'_1$ with ${\cal T}_0' = {\cal T}_0$ and ${\cal T}_1' = \{\sigma \bm{\gamma}a \mid \sigma \in {\rm Spin}(n)_1, a\in \cT_0/\{1,c\}\}$. Since $\sigma$ is non-Abelian, all of the anyons in ${\cal T}_1'$ are non-Abelian. Thus, they cannot be the result of $\pi$-flux threading since these anyons must be Abelian, as discussed earlier.
Physically, since getting to ${\cal T}'$ from ${\cal T}^{\rm ext}$ corresponds to stacking $p+ip$ superconductors, the charge conservation is broken. 

Thus, we take $n=2k$. Then, one can write the anyons in Spin$(2k)_1$ as $\{1,m,\psi, m\psi\}$ with $B(m,\psi)=1/2, h(m)=k/8$. The anyons in ${\cal T}'={\cal T}'_0\oplus {\cal T}'_1$ can be described as follows: ${\cal T}'_0$ is the same as ${\cal T}^{\rm ext}_0=\cal{T}$, and ${\cal T}'_1=m\times {\cal T}_1^{\rm ext}$ (thus a one-to-one mapping). Because $m$ is Abelian, there is also a one-to-one mapping between the Abelian subset ${\cal A}_1'\subset {\cal T}_1'$, and ${\cal A}_1$. In this process $\bm{\gamma}$ is shifted to $m\times \bm{\gamma}$, but this does not affect its braiding with ${\cal T}_0 = {\cal T}_0'$. Hence, the classification is not affected. 

Note that the change of $\bm{\gamma}$ to $m\times \bm{\gamma}$ shifts the topological spin by $h(m) = k/8$. Since the chiral central charge is shifted by $k$, $c_-/8 - h(\bm{\gamma})$ is invariant under this change. Therefore $c_-/8 - h(\bm{\gamma}) = (c_- -\sigma_H)/8$ is an invariant of the choice of bosonic shadow.

Let us summarize what we have learned.  We started with a spin$^c$ theory $\cal T$ and derived its minimal bosonic extension ${\cal T}^{\mathrm{ext}}$.\footnote{Our discussions up to this point in fact provides an explicit construction of a minimal modular extension $\cT^{\rm ext}$, given a spin{}$^c$ charge assignment on $\cT$ and the Hall conductivity $\sigma_H$. More concretely, $\cT^{\rm ext}_1$ is given by $\bm{\gamma}\times \cT$, where $\bm{\gamma}$ has $h(\bm{\gamma})=\frac{\sigma_H}{8}\mod 1$ and $B(a,\bm{\gamma})=\frac12 Q(a)\mod 1$. Then the full data of the modular extension is determined.}  This can be done in eight different ways, labeled by stacking SO$(2k)_1$ with $k=0,1,\cdots,7$ before we sum over the spin structures.  The difference between these theories is in $\sigma_H\to \left(\sigma_H+k\right) \mod 8$ and consequently in $h(\bm{\gamma})\to \left(h(\bm{\gamma})+{k\over 8}\right)\mod 1$ and $Q(\bm{\gamma})\to \left(Q(\bm{\gamma})+{k\over 2}\right)\mod 2$.  The rest of the modular data is determined up to this $k$ dependence; however, the charge assignment on $\mathcal{T}$ is independent of $k$.\footnote{In Sec.~\ref{sec:tqft-gappedphase}, we argued that in spin$^c$ theories, we should identify theories with the same $\sigma_H\mod1$.  How come they appear to be different here?  The point is that, as we argued in Sec.~\ref{sec:tqft-gappedphase}, shifting $\sigma_H$ by an integer does not change $\cal T$. It affects only contact terms and the behavior of $\cal T$ in the presence of boundaries.  Indeed, the Ramond line is an anyon in ${\cal T}^\mathrm{ext}$, but it is not an anyon in $\cal T$.  One way to think about it in $\cal T$ is by removing a line from our three-manifold and specifying Ramond boundary conditions for the fermions around it, or equivalently, holonomy $-1$ for $A$ around it. This can be thought of as a boundary in $\cal T$.}

We now prove several useful facts about spin$^c$ theories. The first is a generalized version of the Gauss-Milgram sum for spin$^c$ theories that appeared in \cite{lapa_anomaly_2019}. We will not use it in our classification, but state it here for its usefulness.

\begin{definition}
Let ${\cal T}$ be a spin$^c$ theory. Then define the following quantity for anyons in $\cal T$:
\begin{equation}
\begin{aligned}
    \hat h(a)&=h(a)+\frac{Q(a)}{2}\mod 1\\
    &= h(a) + B(\bm{\gamma}, a)\\
    &= h(\bm{\gamma}\times a) - h(\bm{\gamma})\,.
\end{aligned}
\end{equation}
\end{definition}

This quantity was discussed in  \cite{Hsin:2019gvb}. It is well defined for the spin$^c$ theory even if $c$ is identified with the vacuum, as reviewed in Sec.~\ref{sec:tqft-basics}. It is a quadratic refinement of the topological spin $h$, so the ribbon identity holds.

\begin{theorem}\label{thm:Gauss-Milgram}
In a spin{}$^c$ category $\cal T$ we have a generalized Gauss-Milgram sum:
\begin{equation}
     e^{\frac{2\pi i}{8}(c_--\sigma_H)}=\frac{1}{\sqrt{2}{\cal D}_{\cal T}}\sum_{a\in {\cal T}}d_a^2e^{2\pi i \hat h(a)}.
\label{gauss2}
\end{equation}
\end{theorem}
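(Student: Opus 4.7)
My plan is to reduce the generalized Gauss-Milgram sum on the spin$^c$ category $\cT$ to the ordinary Gauss-Milgram sum on its minimal modular extension $\cT^{\rm ext}$, using the structural results about $\cT^{\rm ext}_1$ that were established earlier in App.~\ref{app:spinc-classification}. Recall that $\cT^{\rm ext}$ is an honest MTC with ${\cal D}_{\cT^{\rm ext}}=\sqrt{2}\,{\cal D}_\cT$ and chiral central charge $c_-$, so the usual identity gives
\begin{equation}
\sum_{x\in\cT^{\rm ext}} d_x^2\, e^{2\pi i h(x)} \;=\; {\cal D}_{\cT^{\rm ext}}\, e^{2\pi i c_-/8} \;=\; \sqrt{2}\,{\cal D}_\cT\, e^{2\pi i c_-/8}.
\end{equation}
I would split the left-hand side along the $\bZ_2$ grading $\cT^{\rm ext}=\cT^{\rm ext}_0\oplus\cT^{\rm ext}_1$, with $\cT^{\rm ext}_0=\cT$.

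First I would observe that the $\cT^{\rm ext}_0$-piece vanishes. Since $c$ is transparent in $\cT$, the involution $a\mapsto a\times c$ is a free $\bZ_2$ action on $\cT$ (a fixed point would be a Majorana anyon, ruled out in the spin$^c$ setting by Corr.~\ref{cor_abormaj} and the existence of the Abelian $\bm{\gamma}\in\cT^{\rm ext}_1$). Under it, $d_{ac}=d_a$ but $h(ac)=h(a)+\tfrac12$, so the contributions of $a$ and $ac$ cancel in the sum $\sum_{a\in\cT} d_a^2 e^{2\pi i h(a)}$.

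Next I would evaluate the $\cT^{\rm ext}_1$-piece using Lemma \ref{lemma_1to1}, which gives the bijection $\cT^{\rm ext}_1=\bm{\gamma}\times\cT$. Because $\bm{\gamma}$ is Abelian, $d_{\bm{\gamma}\times a}=d_a$, and the ribbon identity together with $B(\bm{\gamma},a)=\tfrac12 Q(a)\bmod 1$ yields
\begin{equation}
h(\bm{\gamma}\times a) \;=\; h(\bm{\gamma})+h(a)+B(\bm{\gamma},a) \;=\; h(\bm{\gamma})+\hat h(a)\;\bmod 1.
\end{equation}
Substituting $h(\bm{\gamma})=\sigma_H/8\bmod 1$ from \eqref{eqn:gamma-Q-h} and pulling out the overall phase gives
\begin{equation}
\sum_{\bm{a}\in\cT^{\rm ext}_1} d_{\bm{a}}^2 e^{2\pi i h(\bm{a})} \;=\; e^{2\pi i\sigma_H/8}\sum_{a\in\cT} d_a^2\, e^{2\pi i \hat h(a)}.
\end{equation}
Combining with the vanishing of the $\cT^{\rm ext}_0$-sum and the full Gauss-Milgram identity then rearranges into \eqref{gauss2}.

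The main obstacle is really just the vanishing argument in step two: it requires that the involution $a\mapsto ac$ has no fixed points in $\cT$, i.e.\ no anyons in $\cT$ are Majorana. This is exactly the content of the spin$^c$ condition translated to the modular extension (the existence of the Abelian Ramond object $\bm{\gamma}$), and was already noted in the discussion after \eqref{numberof states}. Everything else is bookkeeping with the ribbon identity and the bijection $\cT^{\rm ext}_1\simeq\bm{\gamma}\times\cT$ from Lemma \ref{lemma_1to1}. A mild subtlety worth checking is that the $\bmod 1$ ambiguity in $h(\bm{\gamma})=\sigma_H/8$ is harmless because it appears in an exponential $e^{2\pi i(\cdots)}$; similarly, the $\bmod 1$ definition of $\hat h$ is precisely what the exponential needs.
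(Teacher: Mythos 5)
Your proof is correct and follows essentially the same route as the paper's (which reproduces the argument of \cite{lapa_anomaly_2019}): apply the ordinary Gauss--Milgram identity to the minimal modular extension, drop the $\cT^{\rm ext}_0$ piece, and rewrite the $\cT^{\rm ext}_1$ piece via the bijection $\cT^{\rm ext}_1=\bm{\gamma}\times\cT$ together with $h(\bm{\gamma}\times a)=h(\bm{\gamma})+\hat h(a)$ and $h(\bm{\gamma})=\sigma_H/8$. One minor remark: the freeness of $a\mapsto a\times c$ on $\cT$ needs no appeal to the absence of Majorana anyons (which by definition live in $\cT^{\rm ext}_1$) --- it already follows in any super-modular category from $h(a\times c)=h(a)+\tfrac12\neq h(a)$, which is the same fact driving the pairwise cancellation.
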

\begin{proof}
This is established in \cite{lapa_anomaly_2019}, and we reproduce the proof here. Pick a bosonic shadow ${\cal T}^{\rm ext}$ with an Abelian anyon $\bm{\gamma}\in {\cal T}^{\rm ext}_1$. Since this is a bosonic theory, we have the following Gauss-Milgram sum:
\begin{equation}
\begin{split}
    e^{\frac{2\pi i}{8}c_-}&=
    \frac{1}{\cal D_{{\cal T}^{\rm ext}}}\left(\sum_{a\in {\cal T}}d_{a}^2e^{2\pi i h(a)}+ \sum_{\bm{\alpha}\in {\cal T}^{\rm ext}_1}d_{\bm{\alpha}}^2e^{2\pi i h(\bm{\alpha})}\right)\\
    &=\frac{1}{{\cal D}_{{\cal T}^{\rm ext}}}\sum_{\bm{\alpha}\in {\cal T}^{\rm ext}_1}d_{\bm{\alpha}}^2e^{2\pi i h(\bm{\alpha})}\\
    &=\frac{1}{\sqrt{2}{\cal D}_{\cal T}}
    \sum_{\bm{\alpha}\in {\cal T}^{\rm ext}_1}d_{\bm{\alpha}}^2e^{2\pi i h(\bm{\alpha})}\\
    &=\frac{1}{\sqrt{2}{\cal D}_{\cal T}}\sum_{a\in {\cal T}}d_a^2e^{2\pi i h(\bm{\gamma}_a)}\\
    &=\frac{1}{\sqrt{2}{\cal D}_{\cal T}}e^{2\pi ih(\bm{\gamma})}\sum_{a\in {\cal T}}d_a^2e^{2\pi i \hat h(a)}
\end{split}
\end{equation}
Recall that $c_-/8 - h(\bm{\gamma}) = (c_- - \sigma_H)/8$ is independent of the choice of bosonic shadow.
\end{proof}

We now present a few results that will prove useful in our classification.

\begin{theorem} \label{thm:triv_vison_split}
If all anyons in a spin$^c$ theory ${\cal T}$ carry integer charges, then the super-modular category must split.
\end{theorem}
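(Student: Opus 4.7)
The plan is to exhibit a modular subcategory $\cM \subset \cT$ whose complement is exactly $\{1,c\}$, and then invoke Theorem \ref{thm:smtc-factorization}.

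The natural candidate is
\begin{equation*}
\cM := \{ a \in \cT \mid Q(a) \equiv 0 \bmod 2 \}.
\end{equation*}
By hypothesis every anyon has $Q(a) \in \bZ$, so $Q(a) \bmod 2 \in \{0,1\}$, and the spin/charge relation gives $Q(c) = 1$, so $c \notin \cM$. Closure under fusion is immediate from Definition \ref{defn:spinc}: if $a,b \in \cM$ and $N_{ab}^d > 0$, then $Q(d) \equiv Q(a) + Q(b) \equiv 0 \bmod 2$, so $d \in \cM$. The braiding is inherited from $\cT$, making $\cM$ a braided fusion subcategory.

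Next I would show $\cM$ is modular. Every $y \in \cT \setminus \cM$ has $Q(y \times c) \equiv 0$, so $y \times c \in \cM$ and $y = (y\times c) \times c \in \cM \times c$. Thus $\cT = \cM \oplus (\cM \times c)$ as a $\bZ_2$-graded category. Now suppose $x \in \cM$ braids trivially with every object of $\cM$. For any $y \in \cT \setminus \cM$, writing $y = a \times c$ with $a \in \cM$, we get
\begin{equation*}
B(x,y) = B(x,a) + B(x,c) = 0 + 0 \bmod 1,
\end{equation*}
using $B(x,a)=0$ and the transparency of $c$ in the super-modular category $\cT$. Hence $x$ is transparent in $\cT$, so $x \in \{1,c\}$; since $x \in \cM$ we get $x = 1$. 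This shows $\cM$ is modular.

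Finally I would apply Theorem \ref{thm:smtc-factorization}: the modular subcategory $\cM \subset \cT$ forces $\cT = \cM \boxtimes \cT''$ for some super-modular $\cT''$. From the $\bZ_2$-grading, $\cD_\cT^2 = 2\,\cD_\cM^2$, so $\cD_{\cT''} = \sqrt{2}$. Since $\{1,c\} \subseteq \cT''$ and already saturates this quantum dimension, $\cT'' = \{1,c\}$, giving $\cT = \cM \boxtimes \{1,c\}$ as desired. There is no real obstacle here — the only subtle point is the modularity argument, which works precisely because the spin/charge relation keeps $c$ out of $\cM$ while the integer-charge hypothesis guarantees $\cT \setminus \cM = \cM \times c$, giving a complete $\bZ_2$ grading.
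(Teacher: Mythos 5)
Your proof is correct, but it takes a genuinely different route from the paper's. The paper first invokes Theorem \ref{thm:TBone_smtc} to conclude that the vison is $1$ or $c$, then passes to the minimal modular extension ${\cal T}^{\rm ext}$: since $\bm{\gamma}^2\in\{1,c\}$ and $B(\bm{\gamma},c)=\tfrac12$, the rank-four pointed subcategory $\{1,\bm{\gamma},c,\bm{\gamma}c\}$ is already modular, so Theorem \ref{mtc-factorization} splits it off ${\cal T}^{\rm ext}$ and restricting to ${\cal T}^{\rm ext}_0$ gives the claim. You instead work entirely inside the super-modular category $\cT$: you grade it by charge parity as $\cT=\cM\oplus(\cM\times c)$, prove $\cM$ is modular by pushing any element of its M\"uger center into the M\"uger center $\{1,c\}$ of $\cT$ (using transparency of $c$ to handle the odd sector), and then apply Theorem \ref{thm:smtc-factorization} plus the quantum-dimension count $\cD_{\cT}^2=2\cD_{\cM}^2$ to pin down $\cT''=\{1,c\}$. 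Your argument is more self-contained — it never needs the bosonic shadow or the existence of the Abelian $\pi$-flux $\bm{\gamma}$ — while the paper's version slots into its broader shadow-based classification and additionally identifies which $\mathrm{Spin}(4k)_1$ or $\mathrm{Spin}(4k+2)_1$ factor splits off the extension; the two constructions agree, since the paper's $\cT'$ consists precisely of the anyons braiding trivially with $\bm{\gamma}$, i.e.\ your even-charge sector $\cM$. The only presentational caveat: for non-Abelian $x$, ``braids trivially'' in your modularity step should be read as triviality of the full double braiding rather than of the phase $B(x,y)$, but the factorization of the monodromy of $x$ with $a\times c$ into its monodromies with $a$ and with the invertible transparent $c$ goes through verbatim, so this is not a gap.
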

\begin{proof}
If all anyons carry integer charges, then by Theorem~\ref{thm:TBone_smtc} the vison $v$ is either $1$ or $c$. We know that $\bm{\gamma}^2=v$, so $\bm{\gamma}^2=1$ or $c$. In either case the set ${\cal M} = \{1,\bm{\gamma},c,\bm{\gamma}c\}$ is closed under fusion and has non-degenerate braiding. Thus ${\cal M}$ is a MTC.\footnote{There are eight options given by Spin$(4k)_1$ and Spin$(4k+2)_1 = {\cal A}^{4,2k+1}$ for $k=0,1,2,3$. In the first $v=1$ and in the latter $v=c$.}

Now by Theorem~\ref{mtc-factorization}, ${\cal T}^{\rm ext}$ must factorize as ${\cal T}'\boxtimes \{1,\bm{\gamma},c,\bm{\gamma}c\}$ where $\cal T'$ is an MTC. Recall that ${\cal T}^{\rm ext} = {\cal T} \oplus {\cal T}\times \bm{\gamma}$, where the elements in ${\cal T}$ are precisely those that braid trivially with $c$. It is then not hard to see that since ${\cal T}^{\rm ext} = {\cal T}'\boxtimes \{1,\bm{\gamma},c,\bm{\gamma}c\}$ for some MTC ${\cal T}'$, then the elements braiding trivially with $c$ are precisely ${\cal T}'\times \{1,c\}$. So then ${\cal T} = {\cal T}' \times \{1,c\}$ and we see our super-modular category must split.
\end{proof}

\begin{corollary}
\label{cor:MTC-zero-charge}
Consider a split spin$^c$ theory, $\cT={\cal M}\boxtimes \{1,c\}$, with an MTC $\cal M$, where all the anyons have integer charge. Then, there exists an MTC ${\cal M}'$ such that $\cT={\cal M}'\boxtimes\{1,c\}$, and the anyons in ${\cal M}'$ have zero charges mod 2.
\end{corollary}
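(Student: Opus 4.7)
The plan is to construct $\cM'$ explicitly as a relabeling of $\cM$ inside $\cT$, where each anyon $a \in \cM$ whose charge is odd gets multiplied by the transparent fermion $c$ to make its charge even. Concretely, I will write every anyon of $\cT$ in the form $(a,\epsilon)$ with $a \in \cM$ and $\epsilon \in \{1,c\}$, and define
\begin{equation}
\cM' = \{\,(a,\epsilon(a)) : a \in \cM\,\}, \qquad \epsilon(a) = \begin{cases} 1 & \text{if } Q(a) \equiv 0 \pmod 2,\\ c & \text{if } Q(a) \equiv 1 \pmod 2. \end{cases}
\end{equation}
By construction, every object in $\cM'$ has $Q \equiv 0 \pmod 2$, since $Q(c) = 1$ mod 2. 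So the only content is to check that $\cM'$ is a modular subcategory of $\cT$ and that $\cT = \cM' \boxtimes \{1,c\}$.

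First, I would verify that $\cM'$ is closed under fusion. If $N_{ab}^{d} > 0$ in $\cM$, then the spin$^c$ consistency of $\cT$ (Definition~\ref{defn:spinc}) forces $Q(a) + Q(b) \equiv Q(d) \pmod 2$, so $\epsilon(a)\times\epsilon(b) = \epsilon(d)$ in $\{1,c\}$. Hence $(a,\epsilon(a))\times(b,\epsilon(b))$ contains $(d,\epsilon(d))$, and $\cM'$ is a fusion subcategory. Next, I would verify non-degeneracy of the braiding. Since $\{1,c\}$ is transparent within itself, one computes $B((a,\epsilon(a)),(b,\epsilon(b))) = B(a,b)$, which is exactly the braiding of $\cM$. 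Modularity of $\cM$ then immediately implies modularity of $\cM'$: if $(a,\epsilon(a))$ braids trivially with all of $\cM'$, then $a$ braids trivially with all of $\cM$, forcing $a = 1$ and hence $\epsilon(a) = 1$.

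Once $\cM'$ is an MTC inside $\cT$, the factorization $\cT = \cM' \boxtimes \{1,c\}$ follows from Theorem~\ref{thm:smtc-factorization} (modular subcategories of a super-modular category split off), or directly from the fact that $\{(a,\epsilon(a))\}\times\{1,c\} = \{(a,1),(a,c)\}_{a\in\cM}$ exhausts $\cT$. One could alternatively observe that the two factors braid trivially with each other and have total quantum dimension $|\cM|\cdot 2 = |\cT|$.

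The only real subtlety in the argument is the consistency of $\epsilon(a)$ with fusion, i.e.\ making sure the assignment $Q \bmod 2$ on $\cM$ is actually a well-defined group homomorphism from the Abelian group of fusion channels to $\mathbb{Z}_2$. This is guaranteed by the hypothesis that all charges in $\cT$ are integers (so $Q \bmod 2$ makes sense) combined with spin$^c$ additivity of $Q \bmod 2$ across fusion. Beyond that point, the proof is a straightforward verification, and no further machinery (such as the choice of modular extension or Thm.~\ref{thm:TBone_smtc}) is needed.
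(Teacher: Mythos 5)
Your proposal is correct and follows essentially the same route as the paper: both define $\cM'$ by fusing each odd-charged anyon of $\cM$ with $c$ (your $(a,\epsilon(a))$ is exactly the paper's $x'=xc^{Q(x)}$), use the spin$^c$ additivity of $Q\bmod 2$ to get closure under fusion, observe that transparency of $c$ leaves the $S$-matrix (hence modularity) untouched, and then invoke Thm.~\ref{thm:smtc-factorization} plus a quantum-dimension count to conclude $\cT=\cM'\boxtimes\{1,c\}$. The only cosmetic difference is that the paper spells out the modified $F$ and $R$ symbols (including the sign $(-1)^{Q(x)Q(y)}$ in $R^{x',y'}$) and re-verifies pentagon/hexagon, whereas you inherit all consistency data by realizing $\cM'$ directly as a modular subcategory of $\cT$; both are valid.
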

\begin{proof}
We will directly construct ${\cal M}'$. The anyon types are in one-to-one correspondence with the anyon types of ${\cal M}$, through the following map:
\begin{equation}
    x\rightarrow x'=x c^{Q(x)}\quad,\qquad x\in {\cal M}.
\end{equation}
It is clear that $\{x'\}$ follow the same fusion rules of $\cal M$, i.e., $N^{x'y'}_{z'}=N^{xy}_z$.

The F and R symbols of the theory are given by
\begin{equation}
    F^{x',y',z'}=F^{xyz}\quad, \qquad R^{x',y'}=R^{xy} (-1)^{Q(x)Q(y)}.
\end{equation}
It is straightforward to check that the pentagon and hexagon identities are still satisfied. The modular data of ${\cal M}'$ is given by 
\begin{equation}
    h(x')=h(x)+\frac12 Q(x)\quad,\qquad S_{x',y'}=S_{x,y}\,.
\end{equation}
Thus, the datum defines an MTC, where the charges satisfy $Q(x')=0\mod 2$ for all $x'$. 

We finish by using Theorem~\ref{thm:smtc-factorization} to write ${\cal T} = {\cal M}' \boxtimes {\cal T}'$, where ${\cal T}'$ is a super-modular category. Since ${\cal D}_{{\cal M}'} = {\cal D}_{{\cal M}} = {\cal D}_{{\cal T}}/\sqrt{2}$, this mean ${\cal D}_{{\cal T}'} = \sqrt{2}$. Then, we conclude ${\cal T}' = \{1,c\}$.
\end{proof}

\begin{theorem} \label{theo:notspinc}
If a super modular category $\mathcal{T}$ has a minimal modular extension ${\cal T}^{\rm ext}$ such that ${\cal T}^{\rm ext}_1$ contains both a Majorana anyon and non-Majorana anyon, then $\mathcal{T}$ is not spin$^c$.
\end{theorem}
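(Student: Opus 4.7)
The plan is to prove the contrapositive: if $\mathcal{T}$ is spin$^c$, then for every minimal modular extension $\mathcal{T}^{\rm ext}$ of $\mathcal{T}$, the non-trivial sector $\mathcal{T}^{\rm ext}_1$ is either entirely free of Majorana anyons or consists entirely of Majorana anyons. The theorem then follows immediately.

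First, I would invoke the discussion in Sec.~\ref{secondderiv} to construct a distinguished minimal modular extension $\widehat{\mathcal{T}}^{\rm ext}$ associated with the spin$^c$ structure: the $\pi$-flux quasiparticle furnishes an Abelian vison $\bm{\gamma}\in \widehat{\mathcal{T}}^{\rm ext}_1$ with $h(\bm{\gamma})=\sigma_H/8$ and $Q(\bm{\gamma})=\sigma_H/2$. Applying Corollary~\ref{corr:ab_majorana}, the presence of an Abelian anyon in $\widehat{\mathcal{T}}^{\rm ext}_1$ forces $\widehat{\mathcal{T}}^{\rm ext}_1$ to contain no Majorana anyons. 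Thus, the distinguished extension already satisfies the desired property. By Lemma~\ref{lemma_1to1}, every element of $\widehat{\mathcal{T}}^{\rm ext}_1$ has the form $\bm{\gamma}\times a$ with $a\in\mathcal{T}$, a fact that will be useful in comparing with other extensions.

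Next, by Remark~\ref{rem:all_shadows}, every other minimal modular extension takes the form $\mathcal{T}^{\rm ext}=[\mathrm{Spin}(n)_1\boxtimes \widehat{\mathcal{T}}^{\rm ext}]/(\psi c)$ for some $n\in\{0,1,\ldots,15\}$. I would split into two cases. For even $n$, $\mathrm{Spin}(n)_1$ is Abelian with generators $\{1,m,\psi,m\psi\}$; after gauging $(\psi c)$, the surviving lines in the odd sector take the form $(m,\bm\gamma a)$ up to orbit identification, and the new $\bm\gamma'$ remains Abelian. Applying Corollary~\ref{corr:ab_majorana} again, $\mathcal{T}^{\rm ext}_1$ has no Majorana anyons. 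For odd $n$, the Ising-like $\mathrm{Spin}(n)_1=\{1,\sigma,\psi\}$ is non-Abelian; checking which lines braid trivially with $(\psi,c)$ and quotienting by the $\mathbb{Z}_2^{(1)}$ action shows that $\mathcal{T}^{\rm ext}_1$ consists of orbits $\{(\sigma,\bm{a}),(\sigma,c\bm{a})\}$ with $\bm{a}\in\widehat{\mathcal{T}}^{\rm ext}_1$. No orbit splits, because $\widehat{\mathcal{T}}^{\rm ext}_1$ has no Majoranas (from step one), so no $\bm{a}$ is fixed by fusion with $c$.

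The key step — and the main technical point — is to identify the transparent fermion $c'$ of the new extension and evaluate the Majorana condition in $\mathcal{T}^{\rm ext}$. Using the identification $(\psi,c)\sim 1$ in the gauged theory, one finds $c'=(\psi,1)\sim(1,c)$, and verifies directly that $c'$ braids trivially with the orbits originating from $\{1,\psi\}\times\mathcal{T}$ (these form $\mathcal{T}^{\rm ext}_0=\mathcal{T}$) and non-trivially with the orbits originating from $\{\sigma\}\times\widehat{\mathcal{T}}^{\rm ext}_1$ (so these lie in $\mathcal{T}^{\rm ext}_1$). Then for any $(\sigma,\bm{a})\in\mathcal{T}^{\rm ext}_1$ one computes
\begin{equation}
c'\times(\sigma,\bm{a})=(\psi,1)\times(\sigma,\bm{a})=(\psi\sigma,\bm{a})=(\sigma,\bm{a}),
\end{equation}
since $\psi\times\sigma=\sigma$ in $\mathrm{Spin}(n)_1$. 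Hence every anyon in $\mathcal{T}^{\rm ext}_1$ is Majorana. Combining both cases, every minimal modular extension of $\mathcal{T}$ has $\mathcal{T}^{\rm ext}_1$ exclusively Majorana or exclusively non-Majorana, contradicting the hypothesis of the theorem. I expect the bookkeeping of the orbit structure and the re-identification of the transparent fermion after gauging to be the most delicate part — fortunately, the constraints $\psi\times\sigma=\sigma$ and the absence of Majoranas in $\widehat{\mathcal{T}}^{\rm ext}_1$ conspire to make the odd-$n$ case work cleanly.
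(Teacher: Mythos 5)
Your proof is correct and follows essentially the same route as the paper's: both rely on Remark~\ref{rem:all_shadows} to relate all minimal modular extensions via $\mathrm{Spin}(n)_1$ stacking, split into even and odd $n$, and use $\psi\times\sigma=\sigma$ together with Corollary~\ref{corr:ab_majorana} to control the Majorana content of each sector. The only difference is organizational: the paper propagates outward from the hypothesized mixed extension to show every extension contains a Majorana anyon (contradicting the existence of an extension with an Abelian $\pi$-flux), whereas you propagate from the distinguished spin$^c$ extension to show no extension can be mixed --- the same argument read in the opposite direction.
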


\begin{proof}
Suppose, by contradiction, that ${\cal T}$ is spin$^c$. Then at least one of its minimal modular extensions must contain an Abelian anyon. By Corollary~\ref{cor_abormaj}, this means that at least one minimal modular extension cannot contain a Majorana anyon. 

We start with the minimal modular extension that contains both a Majorana and a non-Majorana anyon. Then all other minimal modular extensions can be constructed by
    \begin{equation}\label{TTprime}
        {\cal T}'=\frac{{\rm Spin}(n)_1\boxtimes {\cal T}^{\rm ext}}{(\psi c)}.
    \end{equation}
    
For $n$ even, we saw earlier that ${\cal T'}^{\rm ext}_1 = m \times {\cal T}^{\rm ext}_1$ for an Abelian $m\in {\rm Spin}(n)_1$, and so the anyons are in one-to-one correspondence with the anyons in ${\cal T}^{\rm ext}_1$. Thus ${\cal T'}^{\rm ext}_1$ contains a Majorana anyon as ${\cal T}^{\rm ext}_1$ contains one.

On the other hand, for $n$ odd, we can consider the anyon $\sigma a$, where $\bm{a} \in {\cal T}^{\rm ext}_1$ is non-Majorana. As $\bm{a} \neq c \bm{a}$, it is not a fixed-point of the gauging. Furthermore, $\sigma \bm{a}$ and $ \sigma c\bm{a}$ will be identified after gauging as they differ by $ c\psi$. Thus $\sigma \bm{a} \in  {\cal T'}^{\rm ext}_1$ is a Majorana anyon.

    Since all minimal modular extensions contain a Majorana anyon we arrive at a contradiction and conclude that ${\cal T}$ is not spin$^c$.
\end{proof}

\begin{corollary}
    ${\cal F}_{4m+2}$ is not spin$^c$ for $m> 0$.
\end{corollary}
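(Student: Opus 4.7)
My plan is to use Theorem~\ref{theo:notspinc} directly: the strategy is to exhibit a minimal modular extension of $\mathcal{F}_{4m+2}$ whose Ramond sector $\mathcal{T}_1^{\rm ext}$ contains both a Majorana and a non-Majorana anyon.

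The natural candidate extension is $\mathrm{SU}(2)_{4m+2}$ itself, which, as noted in App.~\ref{app:Fk}, contains $\mathcal{F}_{4m+2}$ as its integer-isospin subcategory. The first step is to verify that this is indeed a minimal modular extension by checking that the half-integer isospin anyons $\Phi_{1/2}, \Phi_{3/2}, \ldots, \Phi_{(4m+1)/2}$ sit in $\mathcal{T}_1^{\rm ext}$, i.e., braid with the transparent fermion $c = \Phi_{(4m+2)/2} = \Phi_{2m+1}$ by $-1$. A short computation using $h(\Phi_j) = j(j+1)/(k+2)$ and the fusion rule $\Phi_{k/2}\times \Phi_j = \Phi_{k/2-j}$ gives $B(\Phi_{k/2},\Phi_j) = -j \bmod 1$, which is $1/2 \bmod 1$ for half-integer $j$ and $0$ for integer $j$. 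So the even/odd decomposition $\mathrm{SU}(2)_{4m+2} = \mathcal{T}_0^{\rm ext}\oplus \mathcal{T}_1^{\rm ext}$ matches integer vs.\ half-integer isospin, and $\mathcal{D}_{\mathrm{SU}(2)_{4m+2}}=\sqrt{2}\,\mathcal{D}_{\mathcal{F}_{4m+2}}$ as required.

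Next, I would identify the Majorana and non-Majorana anyons in $\mathcal{T}_1^{\rm ext}$ using the criterion $c\times \Phi_j = \Phi_j$. Since $c = \Phi_{2m+1}$ and $\Phi_{2m+1}\times \Phi_j = \Phi_{2m+1-j}$, the anyon $\Phi_j$ is Majorana iff $j = (2m+1)/2 = m+\tfrac{1}{2}$. Thus $\Phi_{m+1/2}\in\mathcal{T}_1^{\rm ext}$ is Majorana. On the other hand, for $m>0$ one has $m+\tfrac{1}{2}\neq \tfrac{1}{2}$, so $\Phi_{1/2}\in \mathcal{T}_1^{\rm ext}$ is non-Majorana (it exists because $\tfrac{1}{2}\le (4m+1)/2$ for all $m\ge 0$).

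With these two anyons in hand, Theorem~\ref{theo:notspinc} applies immediately and yields the conclusion that $\mathcal{F}_{4m+2}$ is not spin$^c$. I do not anticipate a genuine obstacle; the only point that needs care is to confirm that $\mathrm{SU}(2)_{4m+2}$ really is a \emph{minimal} modular extension (not merely some modular extension), but this follows from the quantum-dimension count $|\mathcal{T}_0^{\rm ext}| = |\mathcal{T}_1^{\rm ext}| = 2m+2$ giving $\mathcal{D}^2_{\mathrm{SU}(2)_{4m+2}} = 2\mathcal{D}^2_{\mathcal{F}_{4m+2}}$, which saturates the bound for any modular extension. The hypothesis $m>0$ enters in exactly one place: ensuring the existence of a half-integer isospin distinct from the middle one $m+\tfrac{1}{2}$, which is precisely what fails in the trivial case $\mathcal{F}_2$ (the semion theory, which is bosonic rather than spin anyway).
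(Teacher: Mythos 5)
Your proof is correct and follows essentially the paper's own route: take ${\rm SU}(2)_{4m+2}$ as the minimal modular extension of $\mathcal{F}_{4m+2}$, note that its Ramond sector contains the Majorana anyon $\Phi_{m+1/2}$ alongside non-Majorana half-integer-isospin anyons once $m>0$, and invoke Theorem~\ref{theo:notspinc}. (One cosmetic slip: the odd sector has $2m+1$ anyons rather than $2m+2$, but minimality is unaffected since the $\mathbb{Z}_2$ grading and Lemma~\ref{lemma:G-graded-D} give $\mathcal{D}^2_{\mathcal{T}_1^{\rm ext}}=\mathcal{D}^2_{\mathcal{T}_0^{\rm ext}}$ independently of the anyon count.)
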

Note that the same result was found in Appendix~\ref{HilbertspaceT}.  

For $m=0$, as a super-modular category, ${\cal F}_2$ is identical to the trivial theory $\{1,c\}$. Whether it is spin{}$^c$ or not now depends on additional data, namely the choice of modular extension, which is determined by the chiral central charge $c_-$. If $c_-\in \bZ+1/2$, then the theory is not spin{}$^c$.

\subsection{Examples}

Below, we consider three classes of examples.

 \begin{enumerate}
     \item  When the theory ${\cal T}$ can be factorized as ${\cal M}\boxtimes \{1,c\}$ where ${\cal M}$ is an MTC, a minimal modular extension is ${\cal M}\boxtimes \{1,c,\bm{m},\bm{m}c\}$.  
     Then there exists an Abelian $b\in {\cal M}$ such that $\bm{\gamma}=\bm{m}\times b$, and $e^{i\pi Q(a)}=e^{2\pi iB(\bm{m}\times b,a)}=e^{2\pi iB(b,a)}$  for $a\in{\cal M}$, since $B(\bm{m},a)=0$. In other words, $Q(a)=2B(b,a)$ mod 2. Note that the vison $v=\bm{\gamma}^2=\bm{m}^2b^2$. Depending on the minimal modular extension, $\bm{m}^2$ may be $1$ or $c$.

    The quadratic refinement is
    \begin{equation}
        \hat{h}(a)=h(a)+B(b,a)\mod 1.
        \label{Lorentz-sym-frac}
    \end{equation}

    In particular, if $v=1$, then we must have $b^2=1$, i.e., $b$ generates a $\bZ_2^{(1)}$ one-form symmetry. In this case, \eqref{Lorentz-sym-frac} can be viewed as a mapping from $\cal T$ to another bosonic TQFT, which is also the one discussed in Corollary~\ref{cor:MTC-zero-charge}.
    A similar map between bosonic TQFTs was studied in \cite{Hsin:2019gvb} as Lorentz symmetry fractionalization (because attaching a transparent fermion changes the spin by $1/2$).
    
     Let us further consider $\cA^{s,r}\subset {\cal M}$, which are the symmetry lines. Assume that $b\in \cA^{s,r}$. In this case, we can write $b=a_k$, then we find
\begin{equation}
    Q(a_j)=\frac{2rkj}{s}\mod {2}\,.
\end{equation}
The vison is $v=b^2=a_{2k}$. The quadratic refinement in the spin$^c$ theory is given by
\begin{equation}
    \hat{h}(a_j)=\frac{rj^2}{2s} + \frac{rkj}{s} \mod 1\,.
\end{equation}

\item  For a nonsplit example, consider the Moore-Read state $\pf_{q,1,1}^f$ (the physical realization has $\sigma_H=\frac{1}{q}$ and $c_-=\frac32$).  $\psi_{2q}$ is the transparent fermion. The unique minimal modular extension with $c_-=\frac32$ is $\cT^{\rm ext}={\rm Ising}\boxtimes \cA^{4q,1}$. In this theory the Abelian  $\pi$ flux anyons $\cA_1$ are $1_n$ and $\psi_n$  with $n=1,3,\dots, 4q-1$, where we use the notation of Appendix~\ref{PfaffianTO}. One can take $\cA_1/\{1,c\}=\{1_n|n=1,3,\dots,4q-1\}$.  Suppose $\bm{\gamma}=1_{n}$ (so the vison $v=\bm{\gamma}^2=1_{2n}$). Using the relation $h(\bm{\gamma})=\frac{\sigma_H}{8}$ mod 1, we find $\sigma_H=\frac{n^2}{q}$ mod 8. The non-Abelian anyon $\sigma_1$ then has charge $Q(\sigma_1)= \frac{n}{2q}$ mod 2. 

 \end{enumerate}

\subsection{Restriction to symmetry lines}
\label{app:rest_symm_lines}

Let us apply to ${\cal T}^{\rm ext}$ what we learned about coupling a bosonic theory TQFT to $\U$. $\cal T$ is coupled to a spin$^c$ connection $A$.  Therefore, its bosonic extension ${\cal T}^{\rm ext}$ can be thought of as coupled to an ordinary $\U$ gauge field $A^{\rm ext}=2 A$, whose fluxes obey the standard quantization.

First, it has its own vison  $\bm{\gamma}$ generating ${\cal V}^{s',r'}$ with some integers $s'$ and $r'$ to be determined.  Also, $c$ generates ${\cal V}^{2,2}$. Comparing with the discussion in Appendix~\ref{secondderiv}, the anyons in ${\cal T}^{\rm ext}_0$ are even under the $\mathbb{Z}_2^{(1)}$ one-form symmetry generated by $c$ and the anyons in ${\cal T}^{\rm ext}_1$ are odd under that symmetry.

The fact that  $Q^{\rm ext}(c)={1\over 2} \mod 1$ means that $B(\bm{\gamma},c)={1\over 2} \mod 1$, and hence $B(\bm{\gamma}^{s'},c)={s'\over 2} \mod 1={1\over 2}\mod 1$.  We conclude that $s'$ should be even.  

We should distinguish between two case depending on whether $c\in {\cal V}^{2,2}\subseteq {\cal V}^{s',r'} $ or $c\in {\cal V}^{2,2}\not \subseteq {\cal V}^{s',r'} $.

\subsubsection{$c\in {\cal V}^{s',r'}$} 
Let us first assume that $c\in {\cal V}^{s',r'}$.  Since $c$ is of order two, $c=\bm{\gamma}^{s'\over 2}$.  Therefore, it has $h(c)={r' s'\over 8}\mod 1={1\over 2}\mod 1$ and $Q^{\rm ext}(c)= {r'\over 2} \mod 1={1\over 2}\mod 1$.  Therefore, $r'$ should be odd and $s'=4s$ with an odd integer $s$. 

Now, we find the fermionic theory by gauging $\mathbb{Z}_2^{(1)} \subset \mathbb{Z}_{s'}^{(1)}$ generated by $c=\bm{\gamma}^{s'\over 2}$. This turns  ${\cal V}^{s',r'}$ into $ {\cal V}^{s,r'}$, which is generated by $v=\bm{\gamma}^2$.

Setting $r'=r$ and using the information in the bosonic theory with ${\cal V}^{4s,r}$, the elements of $ {\cal V}^{s,r}$ have
\begin{equation}\label{cin V}
    \begin{split}
        &h(v^j)={r j^2\over 2 s}\mod {1\over 2}\\
        &Q(v^j)=\left(2 Q^{\rm ext}(v^j)\right)\mod 1 ={r j\over s}\mod 1\\
         &\hat h(v^j)=\left(h(v^j) + {1\over 2} Q(v^j)\right) \mod 1=\left({r j^2\over 2 s}+{r j\over 2s}\right)\mod 1\\
         & r,s \in 2\mathbb{Z}+1\,.
    \end{split}
\end{equation}

As an example, consider $r'=1$ where ${\cal V}^{s',r'}$ is $\U_{4s}$ and the theory after gauging is $\U_s$ with odd $s$.

\subsubsection{$c\not\in {\cal V}^{s',r'}$}

Next, we consider the case where $c$ is not in ${\cal V}^{s',r'}$.  Now,  $B(\bm{\gamma},c)={1\over 2} \mod 1$ still means that $s'$ should be even.  It also means that we have a mixed anomaly between ${\cal V}^{2,2}$ of $c$ and ${\cal V}^{s',r'}$ of $\bm{\gamma}$ and therefore, even though as groups, we have $\mathbb{Z}_2^{(1)}\otimes \mathbb {Z}_{s'}^{(1)}$, the two ${\cal V}$'s do not factorize.

Now, we find the fermionic theory by gauging $\mathbb{Z}_2^{(1)}$ generated by $c$.  Since $B(\bm{\gamma},c)={1\over 2} \mod 1$, the gauging removes the lines $\bm{\gamma}^j$ with odd $j$ and leaves only powers of $v=\bm{\gamma}^2$.  The resulting one-form symmetry is $\mathbb{Z}_s^{(1)}$ with $s={s'\over 2}$.  And its anomaly is $r=2 r'$, i.e., the gauging leads to ${\cal V}^{{s'\over 2},2r'}={\cal V}^{s,r}$.  Using the information in the bosonic theory, we again find
\begin{equation}\label{cnin V}
    \begin{split}
        &h(v^j)={r j^2\over 2 s}\mod {1\over 2}\\
        &Q(v^j)=\left(2 Q^{\rm ext}(v^j)\right)\mod 1 ={r j\over s}\mod 1\\
         &\hat h(v^j)=\left(h(v^j) + {1\over 2} Q(v^j)\right) \mod 1=\left({r j^2\over 2 s}+{r j\over 2s}\right)\mod 1\\
         &r\in 2\mathbb{Z}\,.
    \end{split}
\end{equation}

For example, in the Moore-Read state, we start with a bosonic theory with $s'=4q$ and $r'=1$, i.e., ${\cal V}^{4q,1}$.  And we end up with ${\cal V}^{2q,2}$.

Comparing the two constructions based on $c\in {\cal V}$ and $c\not \in {\cal V}$ ending with \eqref{cin V} and \eqref{cnin V} respectively, we see that they are in correspondence with the various options that are consistent with the spin/charge relation $r(s+1)\in 2\mathbb{Z}$.

\section{Gauging $\bZ_\ell^{(1)}$ reduces the total quantum dimension by $\ell$}
\label{app:gauging-quantum-dim}

Suppose we gauge a $\bZ_\ell^{(1)}$ one-form symmetry generated by a bosonic anyon $b$ of order $\ell$ in a TQFT $\cT$, resulting in a new theory $\cT'$. In this appendix, we prove

\begin{theorem}\label{DTDT'}
    $\mathcal{D}_{\cT'}=\frac{1}{\ell}\mathcal{D}_{\cT}$.
\end{theorem}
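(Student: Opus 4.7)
The plan is to exploit the partition-function formula \eqref{ZsumB} on $\tmfd = S^3$, together with the identification $\mathcal{Z}(S^3) = S_{11} = 1/\mathcal{D}$ from \eqref{ZS3D} and \eqref{totalquantd}. Since $S^3$ is simply connected, $H^1(S^3,\bZ_\ell) = 0$ and $H^2(S^3,\bZ_\ell) = 0$, while $H^0(S^3,\bZ_\ell) = \bZ_\ell$. Consequently the gauge-group volume in \eqref{ZsumB} reads $\mathcal{G} = 1/\ell$, and the sum over two-form backgrounds $\mathfrak{b}$ collapses to its unique trivial representative. Substituting immediately gives
\begin{equation}
    \mathcal{Z}_{\cT'}(S^3) \;=\; \ell\, \mathcal{Z}_{\cT}(S^3) \quad\Longrightarrow\quad \frac{1}{\mathcal{D}_{\cT'}} \;=\; \frac{\ell}{\mathcal{D}_{\cT}},
\end{equation}
which is the desired relation $\mathcal{D}_{\cT'} = \mathcal{D}_{\cT}/\ell$.

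To expose the categorical content and as a cross-check, I would also present a direct derivation using the three-step gauging procedure of Sec.~\ref{sec:symm_anom_gauge}. The generator $b$ endows $\cT$ with a $\bZ_\ell$ grading $\cT = \bigoplus_{k=0}^{\ell-1} \cT_k$ according to the braiding phase $e^{2\pi i k/\ell}$ with $b$, and the standard fact that all components of a faithful $G$-grading of a fusion category have equal global dimension yields $\mathcal{D}_{\cT_k}^2 = \mathcal{D}_{\cT}^2/\ell$ for every $k$. Step~1 projects onto $\cT_0$; Step~2 collapses $b$-orbits; and Step~3 splits each $b$-fixed point $x$ with stabilizer $\cA_x \subseteq \langle b \rangle$ into $|\cA_x|$ descendants of quantum dimension $d_x/|\cA_x|$ each. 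Summing over orbit representatives,
\begin{equation}
    \mathcal{D}_{\cT'}^2 \;=\; \sum_{[x]} |\cA_x| \left(\frac{d_x}{|\cA_x|}\right)^{\!2} \;=\; \frac{1}{\ell} \sum_{[x]} \frac{\ell}{|\cA_x|}\, d_x^2 \;=\; \frac{1}{\ell} \sum_{x\in\cT_0} d_x^2 \;=\; \frac{\mathcal{D}_{\cT_0}^2}{\ell} \;=\; \frac{\mathcal{D}_{\cT}^2}{\ell^2}\,,
\end{equation}
in agreement with the partition-function computation.

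The only nontrivial input for the categorical derivation is the dimension $d_x/|\cA_x|$ assigned to each split descendant in Step~3, which is a standard result of equivariantization (see e.g.\ \cite{Eliens:2013epa}) but would need to be quoted or briefly justified. The partition-function route bypasses this input entirely and is therefore the cleaner main proof, with the orbit-counting derivation presented as a remark that makes the cancellations between grading, orbit length, and splitting explicit.
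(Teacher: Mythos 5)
Your proposal is correct and contains essentially the same two arguments the paper gives in App.~\ref{app:gauging-quantum-dim}: the orbit-counting computation using the $\bZ_\ell$-grading lemma (the paper's primary proof) and the $S^3$ partition-function argument via \eqref{ZsumB} and \eqref{ZS3D} (which the paper presents as a supplementary derivation, along with an RCFT version). You have merely swapped which argument plays the lead role, and both computations check out, including the Step~3 bookkeeping $d_x/|\cA_x|$ for split fixed points.
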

This is a special case of the general relation between the quantum dimensions of theories related by ``anyon condensation,'' or gauging (possibly non-invertible) one-form symmetry. See \cite{kirillov2002q, Kong:2013aya, neupert2016boson} for the general result. Here we present an elementary proof for the case where the one-form symmetry is $\bZ_\ell^{(1)}$, which is of most interest in this work.

Before going to the proof, we need a lemma from Ref.~\cite{barkeshli_symmetry_2019}, which is of independent interest and is used in other places in the text:
\begin{lemma}\label{lemma:G-graded-D}
    Suppose $\cal T$ has  $G$-graded fusion rules where $G$ is a finite group, i.e., ${\cal T}=\oplus_{g\in G} {\cal T}_g$ and ${\cal T}_g\times {\cal T}_h\in {\cal T}_{gh}$, then ${\cal D}_g^2=\sum_{a_g\in {\cal T}_g}d_{a_g}^2={\cal T}_0^2$, where $0$ denotes the identity element of the group $G$.  In particular, ${\cal D}_{g}$ is independent of $g$.
\end{lemma}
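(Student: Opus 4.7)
The goal is to show that in a $G$-graded fusion category $\mathcal{T}=\oplus_{g\in G}\mathcal{T}_g$, the quantity $\mathcal{D}_g^2\equiv\sum_{a\in\mathcal{T}_g}d_a^2$ does not depend on $g$. The plan is to exploit two basic properties of the quantum dimension that were recalled earlier in the paper: multiplicativity under fusion ($d_a d_b=\sum_c N_{ab}^c d_c$, the Verlinde-type relation that follows from $d$ being the Perron--Frobenius eigenvalue of the fusion matrices), and Frobenius reciprocity ($N_{ab}^c=N_{\bar a c}^b$, where $\bar a$ denotes the dual object, which lies in $\mathcal{T}_{g^{-1}}$ when $a\in\mathcal{T}_g$).

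First I would reduce to the case of a faithful grading, i.e.\ one in which every $\mathcal{T}_g$ is non-empty. If some $\mathcal{T}_g$ were empty, then also $\mathcal{T}_{g^n}$ would be empty for all $n$ with $g^n$ outside the support, and the grading would refine through the subgroup $H\subseteq G$ of elements $h$ with $\mathcal{T}_h\neq\emptyset$; since the identity component always contains $1$, we have $\mathcal{D}_0\neq0$, so any empty component would make the claim $\mathcal{D}_g^2=\mathcal{D}_0^2$ fail. Thus the substantive content of the lemma is precisely for the faithfully graded pieces.

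The computational core is a short rearrangement. Fix any $g\in G$ and any $a_g\in\mathcal{T}_g$. Note that for $a_0\in\mathcal{T}_0$ the fusion $a_g\times a_0$ lies entirely in $\mathcal{T}_g$, so
\begin{equation}
d_{a_g}\,\mathcal{D}_0^2
= d_{a_g}\!\sum_{a_0\in\mathcal{T}_0}\!d_{a_0}^2
= \sum_{a_0\in\mathcal{T}_0} d_{a_0}\bigl(d_{a_g}d_{a_0}\bigr)
= \sum_{a_0\in\mathcal{T}_0} d_{a_0}\!\sum_{c_g\in\mathcal{T}_g} N_{a_g a_0}^{c_g}\,d_{c_g}.
\end{equation}
Swap the two sums and apply Frobenius reciprocity, $N_{a_g a_0}^{c_g}=N_{\bar a_g c_g}^{a_0}$. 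Since $\bar a_g\in\mathcal{T}_{g^{-1}}$ and $c_g\in\mathcal{T}_g$, their fusion lies in $\mathcal{T}_0$, so summing over $a_0\in\mathcal{T}_0$ captures every term in $d_{\bar a_g}d_{c_g}=\sum_{a_0}N_{\bar a_g c_g}^{a_0}d_{a_0}$. Using $d_{\bar a_g}=d_{a_g}$, this gives
\begin{equation}
d_{a_g}\,\mathcal{D}_0^2 \;=\; \sum_{c_g\in\mathcal{T}_g} d_{c_g}\,d_{a_g}d_{c_g} \;=\; d_{a_g}\,\mathcal{D}_g^2,
\end{equation}
and cancelling $d_{a_g}>0$ yields $\mathcal{D}_g^2=\mathcal{D}_0^2$.

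The steps themselves are routine; the only real obstacle is conceptual, namely making sure one has the right formulation of Frobenius reciprocity and that the sums land in the correct graded pieces. Everything else is a three-line manipulation of the Verlinde-type identity, with the grading tracking automatically through $\mathcal{T}_g\cdot\mathcal{T}_h\subseteq\mathcal{T}_{gh}$ and $\overline{\mathcal{T}_g}=\mathcal{T}_{g^{-1}}$.
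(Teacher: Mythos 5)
Your proof is correct and is essentially the paper's own argument run in the opposite direction: the paper fixes $b_g\in\mathcal{T}_g$, inserts $d_{b_g}^{-1}\cdots d_{b_g}$ into $\mathcal{D}_g^2$ and uses the same three ingredients (multiplicativity $d_ad_b=\sum_c N_{ab}^c d_c$, Frobenius reciprocity, $d_a=d_{\bar a}$ with $\bar a_g\in\mathcal{T}_{g^{-1}}$) to land on $\mathcal{D}_0^2$, whereas you start from $d_{a_g}\mathcal{D}_0^2$ and arrive at $d_{a_g}\mathcal{D}_g^2$. Your added remark about faithfulness of the grading is a reasonable observation (the paper implicitly assumes each $\mathcal{T}_g$ is nonempty by picking an anyon in it), but the core computation is the same.
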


\begin{proof}
Denote by $b_g$ an arbitrary anyon in ${\cal T}_g$. We will need the following basic properties of quantum dimensions and fusion coefficients (see \cite{Moore:1989vd,kitaev_anyons_2006,bonderson_non-Abelian_2007}): i) $d_a=d_{\bar{a}}$, where $\bar{a}$ is the dual anyon of $a$. ii) $d_ad_b=\sum_{x}N_{ab}^x d_x$. iii) $N_{a\bar{b}}^x=N_{xb}^a$. We also note that with $G$-graded fusion, $\overline{b_g}\in {\cal T}_{g^{-1}}$. 

Then
    \begin{equation}
        \begin{split}
            \mathcal{D}_g^2&=\sum_{a_g\in {\cal T}_g}d_{a_g}^2\\
            &=\sum_{a_g\in {\cal T}_g}d_{b_g}^{-1}d_{a_g}d_{a_g}d_{b_g}\\
            &=\sum_{a_g\in {\cal T}_g}d_{b_g}^{-1}d_{a_g}d_{a_g}d_{\overline{b_g}}\\
            &=\sum_{a_g\in {\cal T}_g} \sum_{x_0\in {\cal T}_0} d_{b_g}^{-1}d_{a_g}N_{a_g\overline{b_g}}^{x_0}d_{x_0}\\
            &=\sum_{a_g\in {\cal T}_g}\sum_{x_0\in {\cal T}_0} d_{b_g}^{-1}d_{x_0}N_{x_0b_g}^{a_g}d_{a_g}\\
            &=\sum_{x_0\in {\cal T}_0}d_{b_g}^{-1}d_{x_0}d_{x_0}d_{b_g}={\cal D}_0^2.
        \end{split}
    \end{equation}
\end{proof}
Now we prove the main theorem.
\begin{proof}
First, we group all anyons in $\mathcal{T}$ according to their one-form charge under $\bZ_\ell^{(1)}$ (equivalently, the braiding phase with $b$).   Denote by ${\cal T}_k$ the collection of anyons with charge $k\mod \ell$ under the one-form symmetry.  It is clear that this one-form charge gives a $\mathbb{Z}_\ell^{(1)}$-grading structure to $\mathcal{T}$, and from Lemma \ref{lemma:G-graded-D}, we know that all the quantum dimensions $\mathcal{D}_{\mathcal{T}_k}^2$ are the same. Therefore,
$
\mathcal{D}_{\mathcal{T}}^2 = \ell \, \mathcal{D}_{\mathcal{T}_0}^2.$

We now apply the three-step gauging procedure. In step 1, we remove all $\mathcal{T}_k$ with $k \neq 0\mod \ell$. For $\mathcal{T}_0$, we apply steps 2 and 3. 
We will now show that the quantum dimension is 
$
\mathcal{D}_{\mathcal{T}'}^2 = \mathcal{D}_{\mathcal{T}_0}^2/\ell
$.

To do this systematically, we need to consider what happens if an anyon $a \in \mathcal{T}_0$ is invariant under fusion with some power of $b$. Given some anyon $a$, suppose $0< r\leq\ell$ is the minimal integer such that $a=a\times b^r$.  Clearly $r$ is a factor of $\ell$ since $b^r$ has to generate a $\bZ_{\ell/r}$ subgroup of $\bZ_\ell$. We thus have an orbit
 $\{a, a \times b, \ldots, a \times b^{r-1}\}$ under fusion with $b$'s. All anyons in ${\cal T}_0$ can be grouped into such orbits.

Let us count how the quantum dimensions (squared) of these anyons change for the orbit. The anyons in the orbit $\{a, a \times b, \ldots, a \times b^{r-1}\}$ have quantum dimension squared $r d_a^2$. Next, we apply rule 2, so all the anyons in the orbit are identified and are called $a$. Applying rule 3, this anyon appears $\ell\over r$ times with quantum dimension $(r/\ell) d_a$, so the total quantum dimension squared of these anyons is
\begin{equation}
\frac{\ell}{r} \left( \frac{r}{\ell} d_a \right)^2 = \frac{r}{\ell} d_a^2\,.
\end{equation}

Therefore, for this orbit of anyons under fusion with $b$, the total quantum dimension squared shrinks by a factor of $\ell$. This applies to each orbit, therefore
$
\mathcal{D}_{\mathcal{T}'}^2 = \mathcal{D}_{\mathcal{T}_0}^2/\ell=\mathcal{D}_{\mathcal{T}}^2/\ell^2
$, so $\mathcal{D}_{\mathcal{T}'} = \mathcal{D}_{\mathcal{T}}/\ell
$.
\end{proof}

We can also derive this result from the perspective of the partition function of the 2+1d TQFT.  The three-sphere partition function before gauging is given by \eqref{ZS3D} $ {\cal Z}_{\cal T}(S^3)=S_{11}=\frac{1}{{\cal D}_{\cal T}}$.  The partition function of the theory after gauging the $\bZ_\ell^{(1)}$ one-form symmetry on a closed manifold $\tmfd$ is given by \eqref{ZsumB}
\begin{equation}
    {\cal Z}_{\cal T'}(\tmfd)={|H^0(\tmfd,\bZ_\ell)|\over |H^1(\tmfd,\bZ_\ell)|}\sum_{{\frak b}\in H^2(\tmfd, \bZ_\ell)}{\cal Z}_{\cal T}(\tmfd, {\frak b}).
\end{equation}
For $\tmfd=S^3$, both $H^1(S^3, \bZ_\ell)$ and $H^2(S^3, \bZ_\ell)$ are trivial, and $H^0(S^3, \bZ_\ell)=\bZ_\ell$.  Therefore, the sum over ${\frak b}\in H^2(\tmfd, \bZ_\ell)$ includes only one term with $\frak b =0$, which is ${\cal Z}_{\cal T}(\tmfd, 0)= {\cal Z}_{\cal T}(S^3)=S_{11}$.  So we find $ {\cal Z}_{\cal T'}(S^3)=\ell{\cal Z}_{\cal T}(S^3)=\ell S_{11}$,
which also leads to 
\begin{equation}
\mathcal{D}_{\mathcal{T}'} = {\mathcal{D}_{\mathcal{T}}\over \ell}\,.
\end{equation}

Finally, we can also prove Theorem \ref{DTDT'},  using the RCFT perspective.  We are interested in gauging the one-form symmetry $\bZ_\ell^{(1)}$ generated by $b$ to map the TQFT $\cal T$ to ${\cal T'}={\cal T}/\bZ_\ell^{(1)}$.  Since this one-form symmetry is anomaly free, its elements $b^j$ braid trivially with each other, and therefore in terms of the elements of the $S$-matrix, $S_{b^j,b^i}=S_{11}$.

Now, let us compare  $\cal T$ and  $\cal T'$.  First, rule 1 removes the anyons in $\cal T$ that do not braid trivially with $b$.  The other anyons are grouped into  orbits $\{a, a \times b, \ldots, a \times b^{r-1}\}$.  The subtleties of rule 3 do not apply to the orbit of the identity, $a=1$; hence, in this orbit, $r=\ell$. Let $\chi_x$ be the character of the anyon $x$ before gauging.  Then, using rule 2, the identity character of the RCFT after gauging is $\chi'_1=\sum_{j=0}^{\ell-1}\chi_{b^j}$.   Applying $S$ to it, $\chi'_1 \to \sum_{j=0}^{\ell-1}\sum_{x\in {\cal T}}S_{b^j,x}\chi_x$.    This can be expressed as a linear combination of characters of the anyons in ${\cal T}'$. The terms involving $\chi_{b^j}$ are $\sum_{i,j=0}^{\ell-1}S_{b^j,b^i}\chi_{b^i}=\ell S_{11}\sum_{i=0}^{\ell-1}\chi_{b^i}=\ell S_{11}\chi'_1$.  Therefore, after gauging $S'_{11}=\ell S_{11}$ and hence, 
\begin{equation}
    \mathcal{D}_{\mathcal{T}'} = {\mathcal{D}_{\mathcal{T}}\over \ell}\,.
\end{equation}

\section{Minimal modular extensions of $\cA^{q\ell, p\ell}$}
\label{app:mme}

In this appendix, we construct an explicit Abelian minimal modular extension of $\cA^{q\ell, p\ell}$, which we call ${\cal W}^{q,p,\ell}$.  Other minimal modular extensions of $\cA^{q\ell, p\ell}$ can be constructed using ${\cal W}^{q,p,\ell}$ and Eq.~\eqref{eq:mme_all}.  

For simplicity, we limit ourselves to bosonic theories.  Then, $pq\ell$ has to be even, and there are no further restrictions caused by the global $\U$ symmetry.

To start, let us factorize $\ell = \ell_1 \cdot \ell_2$  such that $\ell_2$ is the largest divisor of $\ell$ satisfying $\gcd(q,\ell_2)=1$.  Therefore, $\gcd(q\ell_1,\ell_2)=1$.  Then, using $\gcd(p,q)=1$ and $\gcd(p,\ell_1)=1$, we can factor
\begin{equation}
   \cA^{q\ell, p\ell}  = \cA^{q\ell_1 \ell_2, p\ell_1 \ell_2} = \cA^{q\ell_1, p\ell_1 \ell_2^2} \boxtimes \cA^{\ell_2, pq\ell_1^2\ell_2}\,.
\end{equation}

Next, we present a minimal modular extension of the two factors. For the first factor, we use the fact that $\gcd(q\ell_1^2,p\ell_2^2)=1$ to find that $\cA^{q\ell_1^2, p\ell_2^2} $ is a minimal modular extension of $\cA^{q\ell_1, p\ell_1 \ell_2^2}$. For the second factor, we use the fact that for even $pq$, $\cA^{\ell_2, pq\ell_1^2\ell_2}\cong \cA^{\ell_2,0}$ and therefore, it is minimally contained in $D(\mathbb{Z}_{\ell_2})$.  And if $pq$ is odd, $\cA^{\ell_2, pq\ell_1^2\ell_2}\cong \cA^{\ell_2,\ell_2}$ is minimally contained in $\cA^{\ell_2^2,1}$. (Here we used the fact that for a bosonic theory with odd $pq$, $\ell_2$ should be even.) Thus
\begin{equation}
{\cal W}^{q,p,\ell} = \left\{ \begin{array}{cc} \cA^{q\ell_1^2, p\ell_2^2} \boxtimes D(\mathbb{Z}_{\ell_2}), \qquad & pq\in 2 \mathbb{Z}, \\ \cA^{q\ell_1^2, p\ell_2^2} \boxtimes \cA^{\ell_2^2,1}, \qquad & pq\in 2 \mathbb{Z} +1\,.
\end{array}\right.
\end{equation}

\bibliographystyle{apsrev4-2_custom}
\bibliography{Hall.bib}

\begin{thebibliography}{168}%
\makeatletter
\providecommand \@ifxundefined [1]{%
 \@ifx{#1\undefined}
}%
\providecommand \@ifnum [1]{%
 \ifnum #1\expandafter \@firstoftwo
 \else \expandafter \@secondoftwo
 \fi
}%
\providecommand \@ifx [1]{%
 \ifx #1\expandafter \@firstoftwo
 \else \expandafter \@secondoftwo
 \fi
}%
\providecommand \natexlab [1]{#1}%
\providecommand \enquote  [1]{``#1''}%
\providecommand \bibnamefont  [1]{#1}%
\providecommand \bibfnamefont [1]{#1}%
\providecommand \citenamefont [1]{#1}%
\providecommand \href@noop [0]{\@secondoftwo}%
\providecommand \href [0]{\begingroup \@sanitize@url \@href}%
\providecommand \@href[1]{\@@startlink{#1}\@@href}%
\providecommand \@@href[1]{\endgroup#1\@@endlink}%
\providecommand \@sanitize@url [0]{\catcode `\\12\catcode `\$12\catcode `\&12\catcode `\#12\catcode `\^12\catcode `\_12\catcode `\%12\relax}%
\providecommand \@@startlink[1]{}%
\providecommand \@@endlink[0]{}%
\providecommand \url  [0]{\begingroup\@sanitize@url \@url }%
\providecommand \@url [1]{\endgroup\@href {#1}{\urlprefix }}%
\providecommand \urlprefix  [0]{URL }%
\providecommand \Eprint [0]{\href }%
\providecommand \doibase [0]{http://dx.doi.org/}%
\providecommand \selectlanguage [0]{\@gobble}%
\providecommand \bibinfo  [0]{\@secondoftwo}%
\providecommand \bibfield  [0]{\@secondoftwo}%
\providecommand \translation [1]{[#1]}%
\providecommand \BibitemOpen [0]{}%
\providecommand \bibitemStop [0]{}%
\providecommand \bibitemNoStop [0]{.\EOS\space}%
\providecommand \EOS [0]{\spacefactor3000\relax}%
\providecommand \BibitemShut  [1]{\csname bibitem#1\endcsname}%
\let\auto@bib@innerbib\@empty
\bibitem [{\citenamefont {Girvin}\ and\ \citenamefont {Yang}(2019)}]{girvin2019modern}%
  \BibitemOpen
  \bibfield  {author} {\bibinfo {author} {\bibfnamefont {S.~M.}\ \bibnamefont {Girvin}}\ and\ \bibinfo {author} {\bibfnamefont {K.}~\bibnamefont {Yang}},\ }\bibfield  {title} {\enquote {\bibinfo {title} {Modern condensed matter physics},}\ }\href@noop {} {\  (\bibinfo {year} {2019})}\BibitemShut {NoStop}%
\bibitem [{\citenamefont {Cai}\ \emph {et~al.}(2023)\citenamefont {Cai}, \citenamefont {Anderson}, \citenamefont {Wang}, \citenamefont {Zhang}, \citenamefont {Liu}, \citenamefont {Holtzmann}, \citenamefont {Zhang}, \citenamefont {Fan}, \citenamefont {Taniguchi}, \citenamefont {Watanabe}, \citenamefont {Ran}, \citenamefont {Cao}, \citenamefont {Fu}, \citenamefont {Xiao}, \citenamefont {Yao},\ and\ \citenamefont {Xu}}]{cai_signatures_2023}%
  \BibitemOpen
  \bibfield  {author} {\bibinfo {author} {\bibfnamefont {J.}~\bibnamefont {Cai}}, \bibinfo {author} {\bibfnamefont {E.}~\bibnamefont {Anderson}}, \bibinfo {author} {\bibfnamefont {C.}~\bibnamefont {Wang}}, \bibinfo {author} {\bibfnamefont {X.}~\bibnamefont {Zhang}}, \bibinfo {author} {\bibfnamefont {X.}~\bibnamefont {Liu}}, \bibinfo {author} {\bibfnamefont {W.}~\bibnamefont {Holtzmann}}, \bibinfo {author} {\bibfnamefont {Y.}~\bibnamefont {Zhang}}, \bibinfo {author} {\bibfnamefont {F.}~\bibnamefont {Fan}}, \bibinfo {author} {\bibfnamefont {T.}~\bibnamefont {Taniguchi}}, \bibinfo {author} {\bibfnamefont {K.}~\bibnamefont {Watanabe}}, \bibinfo {author} {\bibfnamefont {Y.}~\bibnamefont {Ran}}, \bibinfo {author} {\bibfnamefont {T.}~\bibnamefont {Cao}}, \bibinfo {author} {\bibfnamefont {L.}~\bibnamefont {Fu}}, \bibinfo {author} {\bibfnamefont {D.}~\bibnamefont {Xiao}}, \bibinfo {author} {\bibfnamefont {W.}~\bibnamefont {Yao}}, \ and\ \bibinfo {author} {\bibfnamefont {X.}~\bibnamefont {Xu}},\ }\bibfield  {title}
  {\enquote {\bibinfo {title} {Signatures of fractional quantum anomalous {Hall} states in twisted {MoTe2}},}\ }\href {\doibase 10.1038/s41586-023-06289-w} {\bibfield  {journal} {\bibinfo  {journal} {Nature}\ }\textbf {\bibinfo {volume} {622}},\ \bibinfo {pages} {63} (\bibinfo {year} {2023})}\BibitemShut {NoStop}%
\bibitem [{\citenamefont {Park}\ \emph {et~al.}(2023)\citenamefont {Park}, \citenamefont {Cai}, \citenamefont {Anderson}, \citenamefont {Zhang}, \citenamefont {Zhu}, \citenamefont {Liu}, \citenamefont {Wang}, \citenamefont {Holtzmann}, \citenamefont {Hu}, \citenamefont {Liu}, \citenamefont {Taniguchi}, \citenamefont {Watanabe}, \citenamefont {Chu}, \citenamefont {Cao}, \citenamefont {Fu}, \citenamefont {Yao}, \citenamefont {Chang}, \citenamefont {Cobden}, \citenamefont {Xiao},\ and\ \citenamefont {Xu}}]{park_observation_2023}%
  \BibitemOpen
  \bibfield  {author} {\bibinfo {author} {\bibfnamefont {H.}~\bibnamefont {Park}}, \bibinfo {author} {\bibfnamefont {J.}~\bibnamefont {Cai}}, \bibinfo {author} {\bibfnamefont {E.}~\bibnamefont {Anderson}}, \bibinfo {author} {\bibfnamefont {Y.}~\bibnamefont {Zhang}}, \bibinfo {author} {\bibfnamefont {J.}~\bibnamefont {Zhu}}, \bibinfo {author} {\bibfnamefont {X.}~\bibnamefont {Liu}}, \bibinfo {author} {\bibfnamefont {C.}~\bibnamefont {Wang}}, \bibinfo {author} {\bibfnamefont {W.}~\bibnamefont {Holtzmann}}, \bibinfo {author} {\bibfnamefont {C.}~\bibnamefont {Hu}}, \bibinfo {author} {\bibfnamefont {Z.}~\bibnamefont {Liu}}, \bibinfo {author} {\bibfnamefont {T.}~\bibnamefont {Taniguchi}}, \bibinfo {author} {\bibfnamefont {K.}~\bibnamefont {Watanabe}}, \bibinfo {author} {\bibfnamefont {J.-H.}\ \bibnamefont {Chu}}, \bibinfo {author} {\bibfnamefont {T.}~\bibnamefont {Cao}}, \bibinfo {author} {\bibfnamefont {L.}~\bibnamefont {Fu}}, \bibinfo {author} {\bibfnamefont {W.}~\bibnamefont {Yao}}, \bibinfo {author}
  {\bibfnamefont {C.-Z.}\ \bibnamefont {Chang}}, \bibinfo {author} {\bibfnamefont {D.}~\bibnamefont {Cobden}}, \bibinfo {author} {\bibfnamefont {D.}~\bibnamefont {Xiao}}, \ and\ \bibinfo {author} {\bibfnamefont {X.}~\bibnamefont {Xu}},\ }\bibfield  {title} {\enquote {\bibinfo {title} {Observation of fractionally quantized anomalous {Hall} effect},}\ }\href {\doibase 10.1038/s41586-023-06536-0} {\bibfield  {journal} {\bibinfo  {journal} {Nature}\ }\textbf {\bibinfo {volume} {622}},\ \bibinfo {pages} {74} (\bibinfo {year} {2023})}\BibitemShut {NoStop}%
\bibitem [{\citenamefont {Zeng}\ \emph {et~al.}(2023)\citenamefont {Zeng}, \citenamefont {Xia}, \citenamefont {Kang}, \citenamefont {Zhu}, \citenamefont {Knüppel}, \citenamefont {Vaswani}, \citenamefont {Watanabe}, \citenamefont {Taniguchi}, \citenamefont {Mak},\ and\ \citenamefont {Shan}}]{zeng_thermodynamic_2023}%
  \BibitemOpen
  \bibfield  {author} {\bibinfo {author} {\bibfnamefont {Y.}~\bibnamefont {Zeng}}, \bibinfo {author} {\bibfnamefont {Z.}~\bibnamefont {Xia}}, \bibinfo {author} {\bibfnamefont {K.}~\bibnamefont {Kang}}, \bibinfo {author} {\bibfnamefont {J.}~\bibnamefont {Zhu}}, \bibinfo {author} {\bibfnamefont {P.}~\bibnamefont {Knüppel}}, \bibinfo {author} {\bibfnamefont {C.}~\bibnamefont {Vaswani}}, \bibinfo {author} {\bibfnamefont {K.}~\bibnamefont {Watanabe}}, \bibinfo {author} {\bibfnamefont {T.}~\bibnamefont {Taniguchi}}, \bibinfo {author} {\bibfnamefont {K.~F.}\ \bibnamefont {Mak}}, \ and\ \bibinfo {author} {\bibfnamefont {J.}~\bibnamefont {Shan}},\ }\bibfield  {title} {\enquote {\bibinfo {title} {Thermodynamic evidence of fractional {Chern} insulator in moiré {MoTe2}},}\ }\href {\doibase 10.1038/s41586-023-06452-3} {\bibfield  {journal} {\bibinfo  {journal} {Nature}\ }\textbf {\bibinfo {volume} {622}},\ \bibinfo {pages} {69} (\bibinfo {year} {2023})}\BibitemShut {NoStop}%
\bibitem [{\citenamefont {Lu}\ \emph {et~al.}(2024)\citenamefont {Lu}, \citenamefont {Han}, \citenamefont {Yao}, \citenamefont {Reddy}, \citenamefont {Yang}, \citenamefont {Seo}, \citenamefont {Watanabe}, \citenamefont {Taniguchi}, \citenamefont {Fu},\ and\ \citenamefont {Ju}}]{lu_fractional_2024}%
  \BibitemOpen
  \bibfield  {author} {\bibinfo {author} {\bibfnamefont {Z.}~\bibnamefont {Lu}}, \bibinfo {author} {\bibfnamefont {T.}~\bibnamefont {Han}}, \bibinfo {author} {\bibfnamefont {Y.}~\bibnamefont {Yao}}, \bibinfo {author} {\bibfnamefont {A.~P.}\ \bibnamefont {Reddy}}, \bibinfo {author} {\bibfnamefont {J.}~\bibnamefont {Yang}}, \bibinfo {author} {\bibfnamefont {J.}~\bibnamefont {Seo}}, \bibinfo {author} {\bibfnamefont {K.}~\bibnamefont {Watanabe}}, \bibinfo {author} {\bibfnamefont {T.}~\bibnamefont {Taniguchi}}, \bibinfo {author} {\bibfnamefont {L.}~\bibnamefont {Fu}}, \ and\ \bibinfo {author} {\bibfnamefont {L.}~\bibnamefont {Ju}},\ }\bibfield  {title} {\enquote {\bibinfo {title} {Fractional quantum anomalous {Hall} effect in multilayer graphene},}\ }\href {\doibase 10.1038/s41586-023-07010-7} {\bibfield  {journal} {\bibinfo  {journal} {Nature}\ }\textbf {\bibinfo {volume} {626}},\ \bibinfo {pages} {759} (\bibinfo {year} {2024})}\BibitemShut {NoStop}%
\bibitem [{\citenamefont {Spanton}\ \emph {et~al.}(2018)\citenamefont {Spanton}, \citenamefont {Zibrov}, \citenamefont {Zhou}, \citenamefont {Taniguchi}, \citenamefont {Watanabe}, \citenamefont {Zaletel},\ and\ \citenamefont {Young}}]{spanton2018observation}%
  \BibitemOpen
  \bibfield  {author} {\bibinfo {author} {\bibfnamefont {E.~M.}\ \bibnamefont {Spanton}}, \bibinfo {author} {\bibfnamefont {A.~A.}\ \bibnamefont {Zibrov}}, \bibinfo {author} {\bibfnamefont {H.}~\bibnamefont {Zhou}}, \bibinfo {author} {\bibfnamefont {T.}~\bibnamefont {Taniguchi}}, \bibinfo {author} {\bibfnamefont {K.}~\bibnamefont {Watanabe}}, \bibinfo {author} {\bibfnamefont {M.~P.}\ \bibnamefont {Zaletel}}, \ and\ \bibinfo {author} {\bibfnamefont {A.~F.}\ \bibnamefont {Young}},\ }\bibfield  {title} {\enquote {\bibinfo {title} {Observation of fractional chern insulators in a van der waals heterostructure},}\ }\href@noop {} {\bibfield  {journal} {\bibinfo  {journal} {Science}\ }\textbf {\bibinfo {volume} {360}},\ \bibinfo {pages} {62} (\bibinfo {year} {2018})}\BibitemShut {NoStop}%
\bibitem [{\citenamefont {Xie}\ \emph {et~al.}(2021)\citenamefont {Xie}, \citenamefont {Pierce}, \citenamefont {Park}, \citenamefont {Parker}, \citenamefont {Khalaf}, \citenamefont {Ledwith}, \citenamefont {Cao}, \citenamefont {Lee}, \citenamefont {Chen}, \citenamefont {Forrester} \emph {et~al.}}]{xie2021fractional}%
  \BibitemOpen
  \bibfield  {author} {\bibinfo {author} {\bibfnamefont {Y.}~\bibnamefont {Xie}}, \bibinfo {author} {\bibfnamefont {A.~T.}\ \bibnamefont {Pierce}}, \bibinfo {author} {\bibfnamefont {J.~M.}\ \bibnamefont {Park}}, \bibinfo {author} {\bibfnamefont {D.~E.}\ \bibnamefont {Parker}}, \bibinfo {author} {\bibfnamefont {E.}~\bibnamefont {Khalaf}}, \bibinfo {author} {\bibfnamefont {P.}~\bibnamefont {Ledwith}}, \bibinfo {author} {\bibfnamefont {Y.}~\bibnamefont {Cao}}, \bibinfo {author} {\bibfnamefont {S.~H.}\ \bibnamefont {Lee}}, \bibinfo {author} {\bibfnamefont {S.}~\bibnamefont {Chen}}, \bibinfo {author} {\bibfnamefont {P.~R.}\ \bibnamefont {Forrester}},  \emph {et~al.},\ }\bibfield  {title} {\enquote {\bibinfo {title} {Fractional chern insulators in magic-angle twisted bilayer graphene},}\ }\href@noop {} {\bibfield  {journal} {\bibinfo  {journal} {Nature}\ }\textbf {\bibinfo {volume} {600}},\ \bibinfo {pages} {439} (\bibinfo {year} {2021})}\BibitemShut {NoStop}%
\bibitem [{\citenamefont {Aronson}\ \emph {et~al.}(2024)\citenamefont {Aronson}, \citenamefont {Han}, \citenamefont {Lu}, \citenamefont {Yao}, \citenamefont {Watanabe}, \citenamefont {Taniguchi}, \citenamefont {Ju},\ and\ \citenamefont {Ashoori}}]{aronson2024displacement}%
  \BibitemOpen
  \bibfield  {author} {\bibinfo {author} {\bibfnamefont {S.~H.}\ \bibnamefont {Aronson}}, \bibinfo {author} {\bibfnamefont {T.}~\bibnamefont {Han}}, \bibinfo {author} {\bibfnamefont {Z.}~\bibnamefont {Lu}}, \bibinfo {author} {\bibfnamefont {Y.}~\bibnamefont {Yao}}, \bibinfo {author} {\bibfnamefont {K.}~\bibnamefont {Watanabe}}, \bibinfo {author} {\bibfnamefont {T.}~\bibnamefont {Taniguchi}}, \bibinfo {author} {\bibfnamefont {L.}~\bibnamefont {Ju}}, \ and\ \bibinfo {author} {\bibfnamefont {R.~C.}\ \bibnamefont {Ashoori}},\ }\bibfield  {title} {\enquote {\bibinfo {title} {Displacement field-controlled fractional chern insulators and charge density waves in a graphene/hbn moir$\backslash$'e superlattice},}\ }\href@noop {} {\bibfield  {journal} {\bibinfo  {journal} {arXiv preprint arXiv:2408.11220}\ } (\bibinfo {year} {2024})}\BibitemShut {NoStop}%
\bibitem [{\citenamefont {{Zhang}}\ \emph {et~al.}(2019)\citenamefont {{Zhang}}, \citenamefont {{Mao}}, \citenamefont {{Cao}}, \citenamefont {{Jarillo-Herrero}},\ and\ \citenamefont {{Senthil}}}]{Zhang2018_nearly_flat}%
  \BibitemOpen
  \bibfield  {author} {\bibinfo {author} {\bibfnamefont {Y.-H.}\ \bibnamefont {{Zhang}}}, \bibinfo {author} {\bibfnamefont {D.}~\bibnamefont {{Mao}}}, \bibinfo {author} {\bibfnamefont {Y.}~\bibnamefont {{Cao}}}, \bibinfo {author} {\bibfnamefont {P.}~\bibnamefont {{Jarillo-Herrero}}}, \ and\ \bibinfo {author} {\bibfnamefont {T.}~\bibnamefont {{Senthil}}},\ }\bibfield  {title} {\enquote {\bibinfo {title} {{Nearly flat Chern bands in moir{\'e} superlattices}},}\ }\href {\doibase 10.1103/PhysRevB.99.075127} {\bibfield  {journal} {\bibinfo  {journal} {\prb}\ }\textbf {\bibinfo {volume} {99}},\ \bibinfo {eid} {075127} (\bibinfo {year} {2019})},\ \Eprint {http://arxiv.org/abs/1805.08232} {arXiv:1805.08232 [cond-mat.str-el]} \BibitemShut {NoStop}%
\bibitem [{\citenamefont {{Ledwith}}\ \emph {et~al.}(2020)\citenamefont {{Ledwith}}, \citenamefont {{Tarnopolsky}}, \citenamefont {{Khalaf}},\ and\ \citenamefont {{Vishwanath}}}]{Ledwith2019_FCITBG}%
  \BibitemOpen
  \bibfield  {author} {\bibinfo {author} {\bibfnamefont {P.~J.}\ \bibnamefont {{Ledwith}}}, \bibinfo {author} {\bibfnamefont {G.}~\bibnamefont {{Tarnopolsky}}}, \bibinfo {author} {\bibfnamefont {E.}~\bibnamefont {{Khalaf}}}, \ and\ \bibinfo {author} {\bibfnamefont {A.}~\bibnamefont {{Vishwanath}}},\ }\bibfield  {title} {\enquote {\bibinfo {title} {{Fractional Chern insulator states in twisted bilayer graphene: An analytical approach}},}\ }\href {\doibase 10.1103/PhysRevResearch.2.023237} {\bibfield  {journal} {\bibinfo  {journal} {Physical Review Research}\ }\textbf {\bibinfo {volume} {2}},\ \bibinfo {eid} {023237} (\bibinfo {year} {2020})},\ \Eprint {http://arxiv.org/abs/1912.09634} {arXiv:1912.09634 [cond-mat.str-el]} \BibitemShut {NoStop}%
\bibitem [{\citenamefont {{Repellin}}\ and\ \citenamefont {{Senthil}}(2020)}]{Repellin2019_TBGChern}%
  \BibitemOpen
  \bibfield  {author} {\bibinfo {author} {\bibfnamefont {C.}~\bibnamefont {{Repellin}}}\ and\ \bibinfo {author} {\bibfnamefont {T.}~\bibnamefont {{Senthil}}},\ }\bibfield  {title} {\enquote {\bibinfo {title} {{Chern bands of twisted bilayer graphene: Fractional Chern insulators and spin phase transition}},}\ }\href {\doibase 10.1103/PhysRevResearch.2.023238} {\bibfield  {journal} {\bibinfo  {journal} {Physical Review Research}\ }\textbf {\bibinfo {volume} {2}},\ \bibinfo {eid} {023238} (\bibinfo {year} {2020})},\ \Eprint {http://arxiv.org/abs/1912.11469} {arXiv:1912.11469 [cond-mat.str-el]} \BibitemShut {NoStop}%
\bibitem [{\citenamefont {{Wilhelm}}\ \emph {et~al.}(2021)\citenamefont {{Wilhelm}}, \citenamefont {{Lang}},\ and\ \citenamefont {{L{\"a}uchli}}}]{Wilhelm2020_TBGFCI}%
  \BibitemOpen
  \bibfield  {author} {\bibinfo {author} {\bibfnamefont {P.}~\bibnamefont {{Wilhelm}}}, \bibinfo {author} {\bibfnamefont {T.~C.}\ \bibnamefont {{Lang}}}, \ and\ \bibinfo {author} {\bibfnamefont {A.~M.}\ \bibnamefont {{L{\"a}uchli}}},\ }\bibfield  {title} {\enquote {\bibinfo {title} {{Interplay of fractional Chern insulator and charge density wave phases in twisted bilayer graphene}},}\ }\href {\doibase 10.1103/PhysRevB.103.125406} {\bibfield  {journal} {\bibinfo  {journal} {\prb}\ }\textbf {\bibinfo {volume} {103}},\ \bibinfo {eid} {125406} (\bibinfo {year} {2021})},\ \Eprint {http://arxiv.org/abs/2012.09829} {arXiv:2012.09829 [cond-mat.str-el]} \BibitemShut {NoStop}%
\bibitem [{\citenamefont {{Abouelkomsan}}\ \emph {et~al.}(2020)\citenamefont {{Abouelkomsan}}, \citenamefont {{Liu}},\ and\ \citenamefont {{Bergholtz}}}]{Abouelkomsan2019_MoireFCI}%
  \BibitemOpen
  \bibfield  {author} {\bibinfo {author} {\bibfnamefont {A.}~\bibnamefont {{Abouelkomsan}}}, \bibinfo {author} {\bibfnamefont {Z.}~\bibnamefont {{Liu}}}, \ and\ \bibinfo {author} {\bibfnamefont {E.~J.}\ \bibnamefont {{Bergholtz}}},\ }\bibfield  {title} {\enquote {\bibinfo {title} {{Particle-Hole Duality, Emergent Fermi Liquids, and Fractional Chern Insulators in Moir{\'e} Flatbands}},}\ }\href {\doibase 10.1103/PhysRevLett.124.106803} {\bibfield  {journal} {\bibinfo  {journal} {\prl}\ }\textbf {\bibinfo {volume} {124}},\ \bibinfo {eid} {106803} (\bibinfo {year} {2020})},\ \Eprint {http://arxiv.org/abs/1912.04907} {arXiv:1912.04907 [cond-mat.mes-hall]} \BibitemShut {NoStop}%
\bibitem [{\citenamefont {{Wu}}\ \emph {et~al.}(2019)\citenamefont {{Wu}}, \citenamefont {{Lovorn}}, \citenamefont {{Tutuc}}, \citenamefont {{Martin}},\ and\ \citenamefont {{MacDonald}}}]{Wu2018_TMD_topo}%
  \BibitemOpen
  \bibfield  {author} {\bibinfo {author} {\bibfnamefont {F.}~\bibnamefont {{Wu}}}, \bibinfo {author} {\bibfnamefont {T.}~\bibnamefont {{Lovorn}}}, \bibinfo {author} {\bibfnamefont {E.}~\bibnamefont {{Tutuc}}}, \bibinfo {author} {\bibfnamefont {I.}~\bibnamefont {{Martin}}}, \ and\ \bibinfo {author} {\bibfnamefont {A.~H.}\ \bibnamefont {{MacDonald}}},\ }\bibfield  {title} {\enquote {\bibinfo {title} {{Topological Insulators in Twisted Transition Metal Dichalcogenide Homobilayers}},}\ }\href {\doibase 10.1103/PhysRevLett.122.086402} {\bibfield  {journal} {\bibinfo  {journal} {\prl}\ }\textbf {\bibinfo {volume} {122}},\ \bibinfo {eid} {086402} (\bibinfo {year} {2019})},\ \Eprint {http://arxiv.org/abs/1807.03311} {arXiv:1807.03311 [cond-mat.mes-hall]} \BibitemShut {NoStop}%
\bibitem [{\citenamefont {{Yu}}\ \emph {et~al.}(2019)\citenamefont {{Yu}}, \citenamefont {{Chen}},\ and\ \citenamefont {{Yao}}}]{Yu2019_TMD_topo}%
  \BibitemOpen
  \bibfield  {author} {\bibinfo {author} {\bibfnamefont {H.}~\bibnamefont {{Yu}}}, \bibinfo {author} {\bibfnamefont {M.}~\bibnamefont {{Chen}}}, \ and\ \bibinfo {author} {\bibfnamefont {W.}~\bibnamefont {{Yao}}},\ }\bibfield  {title} {\enquote {\bibinfo {title} {{Giant magnetic field from moir{\'e} induced Berry phase in homobilayer semiconductors}},}\ }\href {\doibase 10.48550/arXiv.1906.05499} {\bibfield  {journal} {\bibinfo  {journal} {arXiv e-prints}\ ,\ \bibinfo {eid} {arXiv:1906.05499}} (\bibinfo {year} {2019})},\ \Eprint {http://arxiv.org/abs/1906.05499} {arXiv:1906.05499 [cond-mat.mes-hall]} \BibitemShut {NoStop}%
\bibitem [{\citenamefont {{Devakul}}\ \emph {et~al.}(2021)\citenamefont {{Devakul}}, \citenamefont {{Cr{\'e}pel}}, \citenamefont {{Zhang}},\ and\ \citenamefont {{Fu}}}]{Devakul2021_TMD_topo}%
  \BibitemOpen
  \bibfield  {author} {\bibinfo {author} {\bibfnamefont {T.}~\bibnamefont {{Devakul}}}, \bibinfo {author} {\bibfnamefont {V.}~\bibnamefont {{Cr{\'e}pel}}}, \bibinfo {author} {\bibfnamefont {Y.}~\bibnamefont {{Zhang}}}, \ and\ \bibinfo {author} {\bibfnamefont {L.}~\bibnamefont {{Fu}}},\ }\bibfield  {title} {\enquote {\bibinfo {title} {{Magic in twisted transition metal dichalcogenide bilayers}},}\ }\href {\doibase 10.1038/s41467-021-27042-9} {\bibfield  {journal} {\bibinfo  {journal} {Nature Communications}\ }\textbf {\bibinfo {volume} {12}},\ \bibinfo {eid} {6730} (\bibinfo {year} {2021})},\ \Eprint {http://arxiv.org/abs/2106.11954} {arXiv:2106.11954 [cond-mat.mes-hall]} \BibitemShut {NoStop}%
\bibitem [{\citenamefont {Li}\ \emph {et~al.}(2021)\citenamefont {Li}, \citenamefont {Kumar}, \citenamefont {Sun},\ and\ \citenamefont {Lin}}]{li2021spontaneous}%
  \BibitemOpen
  \bibfield  {author} {\bibinfo {author} {\bibfnamefont {H.}~\bibnamefont {Li}}, \bibinfo {author} {\bibfnamefont {U.}~\bibnamefont {Kumar}}, \bibinfo {author} {\bibfnamefont {K.}~\bibnamefont {Sun}}, \ and\ \bibinfo {author} {\bibfnamefont {S.-Z.}\ \bibnamefont {Lin}},\ }\bibfield  {title} {\enquote {\bibinfo {title} {Spontaneous fractional chern insulators in transition metal dichalcogenide moir{\'e} superlattices},}\ }\href@noop {} {\bibfield  {journal} {\bibinfo  {journal} {Physical Review Research}\ }\textbf {\bibinfo {volume} {3}},\ \bibinfo {pages} {L032070} (\bibinfo {year} {2021})}\BibitemShut {NoStop}%
\bibitem [{\citenamefont {Reddy}\ and\ \citenamefont {Fu}(2023)}]{reddy2023toward}%
  \BibitemOpen
  \bibfield  {author} {\bibinfo {author} {\bibfnamefont {A.~P.}\ \bibnamefont {Reddy}}\ and\ \bibinfo {author} {\bibfnamefont {L.}~\bibnamefont {Fu}},\ }\bibfield  {title} {\enquote {\bibinfo {title} {Toward a global phase diagram of the fractional quantum anomalous hall effect},}\ }\href@noop {} {\bibfield  {journal} {\bibinfo  {journal} {Physical Review B}\ }\textbf {\bibinfo {volume} {108}},\ \bibinfo {pages} {245159} (\bibinfo {year} {2023})}\BibitemShut {NoStop}%
\bibitem [{\citenamefont {Yu}\ \emph {et~al.}(2024{\natexlab{a}})\citenamefont {Yu}, \citenamefont {Herzog-Arbeitman}, \citenamefont {Wang}, \citenamefont {Vafek}, \citenamefont {Bernevig},\ and\ \citenamefont {Regnault}}]{yu2024fractional}%
  \BibitemOpen
  \bibfield  {author} {\bibinfo {author} {\bibfnamefont {J.}~\bibnamefont {Yu}}, \bibinfo {author} {\bibfnamefont {J.}~\bibnamefont {Herzog-Arbeitman}}, \bibinfo {author} {\bibfnamefont {M.}~\bibnamefont {Wang}}, \bibinfo {author} {\bibfnamefont {O.}~\bibnamefont {Vafek}}, \bibinfo {author} {\bibfnamefont {B.~A.}\ \bibnamefont {Bernevig}}, \ and\ \bibinfo {author} {\bibfnamefont {N.}~\bibnamefont {Regnault}},\ }\bibfield  {title} {\enquote {\bibinfo {title} {Fractional chern insulators versus nonmagnetic states in twisted bilayer mote 2},}\ }\href@noop {} {\bibfield  {journal} {\bibinfo  {journal} {Physical Review B}\ }\textbf {\bibinfo {volume} {109}},\ \bibinfo {pages} {045147} (\bibinfo {year} {2024}{\natexlab{a}})}\BibitemShut {NoStop}%
\bibitem [{\citenamefont {Dong}\ \emph {et~al.}(2024{\natexlab{a}})\citenamefont {Dong}, \citenamefont {Patri},\ and\ \citenamefont {Senthil}}]{dong2024theory}%
  \BibitemOpen
  \bibfield  {author} {\bibinfo {author} {\bibfnamefont {Z.}~\bibnamefont {Dong}}, \bibinfo {author} {\bibfnamefont {A.~S.}\ \bibnamefont {Patri}}, \ and\ \bibinfo {author} {\bibfnamefont {T.}~\bibnamefont {Senthil}},\ }\bibfield  {title} {\enquote {\bibinfo {title} {Theory of quantum anomalous hall phases in pentalayer rhombohedral graphene moir{\'e} structures},}\ }\href@noop {} {\bibfield  {journal} {\bibinfo  {journal} {Physical Review Letters}\ }\textbf {\bibinfo {volume} {133}},\ \bibinfo {pages} {206502} (\bibinfo {year} {2024}{\natexlab{a}})}\BibitemShut {NoStop}%
\bibitem [{\citenamefont {Zhou}\ \emph {et~al.}(2024)\citenamefont {Zhou}, \citenamefont {Yang},\ and\ \citenamefont {Zhang}}]{zhou2024fractional}%
  \BibitemOpen
  \bibfield  {author} {\bibinfo {author} {\bibfnamefont {B.}~\bibnamefont {Zhou}}, \bibinfo {author} {\bibfnamefont {H.}~\bibnamefont {Yang}}, \ and\ \bibinfo {author} {\bibfnamefont {Y.-H.}\ \bibnamefont {Zhang}},\ }\bibfield  {title} {\enquote {\bibinfo {title} {Fractional quantum anomalous hall effect in rhombohedral multilayer graphene in the moir{\'e}less limit},}\ }\href@noop {} {\bibfield  {journal} {\bibinfo  {journal} {Physical Review Letters}\ }\textbf {\bibinfo {volume} {133}},\ \bibinfo {pages} {206504} (\bibinfo {year} {2024})}\BibitemShut {NoStop}%
\bibitem [{\citenamefont {Dong}\ \emph {et~al.}(2024{\natexlab{b}})\citenamefont {Dong}, \citenamefont {Wang}, \citenamefont {Wang}, \citenamefont {Soejima}, \citenamefont {Zaletel}, \citenamefont {Vishwanath},\ and\ \citenamefont {Parker}}]{dong2024anomalous}%
  \BibitemOpen
  \bibfield  {author} {\bibinfo {author} {\bibfnamefont {J.}~\bibnamefont {Dong}}, \bibinfo {author} {\bibfnamefont {T.}~\bibnamefont {Wang}}, \bibinfo {author} {\bibfnamefont {T.}~\bibnamefont {Wang}}, \bibinfo {author} {\bibfnamefont {T.}~\bibnamefont {Soejima}}, \bibinfo {author} {\bibfnamefont {M.~P.}\ \bibnamefont {Zaletel}}, \bibinfo {author} {\bibfnamefont {A.}~\bibnamefont {Vishwanath}}, \ and\ \bibinfo {author} {\bibfnamefont {D.~E.}\ \bibnamefont {Parker}},\ }\bibfield  {title} {\enquote {\bibinfo {title} {Anomalous hall crystals in rhombohedral multilayer graphene. i. interaction-driven chern bands and fractional quantum hall states at zero magnetic field},}\ }\href@noop {} {\bibfield  {journal} {\bibinfo  {journal} {Physical Review Letters}\ }\textbf {\bibinfo {volume} {133}},\ \bibinfo {pages} {206503} (\bibinfo {year} {2024}{\natexlab{b}})}\BibitemShut {NoStop}%
\bibitem [{\citenamefont {Yu}\ \emph {et~al.}(2024{\natexlab{b}})\citenamefont {Yu}, \citenamefont {Herzog-Arbeitman}, \citenamefont {Kwan}, \citenamefont {Regnault},\ and\ \citenamefont {Bernevig}}]{yu2024moir}%
  \BibitemOpen
  \bibfield  {author} {\bibinfo {author} {\bibfnamefont {J.}~\bibnamefont {Yu}}, \bibinfo {author} {\bibfnamefont {J.}~\bibnamefont {Herzog-Arbeitman}}, \bibinfo {author} {\bibfnamefont {Y.~H.}\ \bibnamefont {Kwan}}, \bibinfo {author} {\bibfnamefont {N.}~\bibnamefont {Regnault}}, \ and\ \bibinfo {author} {\bibfnamefont {B.~A.}\ \bibnamefont {Bernevig}},\ }\bibfield  {title} {\enquote {\bibinfo {title} {Moir$\backslash$'e fractional chern insulators iv: Fluctuation-driven collapse of fcis in multi-band exact diagonalization calculations on rhombohedral graphene},}\ }\href@noop {} {\bibfield  {journal} {\bibinfo  {journal} {arXiv preprint arXiv:2407.13770}\ } (\bibinfo {year} {2024}{\natexlab{b}})}\BibitemShut {NoStop}%
\bibitem [{\citenamefont {Huang}\ \emph {et~al.}(2025)\citenamefont {Huang}, \citenamefont {Das~Sarma},\ and\ \citenamefont {Li}}]{huang2025fractional}%
  \BibitemOpen
  \bibfield  {author} {\bibinfo {author} {\bibfnamefont {K.}~\bibnamefont {Huang}}, \bibinfo {author} {\bibfnamefont {S.}~\bibnamefont {Das~Sarma}}, \ and\ \bibinfo {author} {\bibfnamefont {X.}~\bibnamefont {Li}},\ }\bibfield  {title} {\enquote {\bibinfo {title} {Fractional quantum anomalous hall effect in rhombohedral multilayer graphene with a strong displacement field},}\ }\href@noop {} {\bibfield  {journal} {\bibinfo  {journal} {Physical Review B}\ }\textbf {\bibinfo {volume} {111}},\ \bibinfo {pages} {075130} (\bibinfo {year} {2025})}\BibitemShut {NoStop}%
\bibitem [{\citenamefont {Levin}\ and\ \citenamefont {Stern}(2009)}]{levin2009fractional}%
  \BibitemOpen
  \bibfield  {author} {\bibinfo {author} {\bibfnamefont {M.}~\bibnamefont {Levin}}\ and\ \bibinfo {author} {\bibfnamefont {A.}~\bibnamefont {Stern}},\ }\bibfield  {title} {\enquote {\bibinfo {title} {Fractional topological insulators},}\ }\href@noop {} {\bibfield  {journal} {\bibinfo  {journal} {Physical review letters}\ }\textbf {\bibinfo {volume} {103}},\ \bibinfo {pages} {196803} (\bibinfo {year} {2009})}\BibitemShut {NoStop}%
\bibitem [{\citenamefont {Kong}\ and\ \citenamefont {Wen}(2020)}]{Kong:2020bmb}%
  \BibitemOpen
  \bibfield  {author} {\bibinfo {author} {\bibfnamefont {L.}~\bibnamefont {Kong}}\ and\ \bibinfo {author} {\bibfnamefont {X.-G.}\ \bibnamefont {Wen}},\ }\bibfield  {title} {\enquote {\bibinfo {title} {{Relation between chiral central charge and ground-state degeneracy in $(2+1)$ -dimensional topological orders}},}\ }\href {\doibase 10.1103/PhysRevResearch.2.033344} {\bibfield  {journal} {\bibinfo  {journal} {Phys. Rev. Res.}\ }\textbf {\bibinfo {volume} {2}},\ \bibinfo {pages} {033344} (\bibinfo {year} {2020})},\ \Eprint {http://arxiv.org/abs/2004.11904} {arXiv:2004.11904 [cond-mat.str-el]} \BibitemShut {NoStop}%
\bibitem [{\citenamefont {Dehghani}\ \emph {et~al.}(2021)\citenamefont {Dehghani}, \citenamefont {Cian}, \citenamefont {Hafezi},\ and\ \citenamefont {Barkeshli}}]{Dehghani:2020jls}%
  \BibitemOpen
  \bibfield  {author} {\bibinfo {author} {\bibfnamefont {H.}~\bibnamefont {Dehghani}}, \bibinfo {author} {\bibfnamefont {Z.-P.}\ \bibnamefont {Cian}}, \bibinfo {author} {\bibfnamefont {M.}~\bibnamefont {Hafezi}}, \ and\ \bibinfo {author} {\bibfnamefont {M.}~\bibnamefont {Barkeshli}},\ }\bibfield  {title} {\enquote {\bibinfo {title} {{Extraction of the many-body Chern number from a single wave function}},}\ }\href {\doibase 10.1103/PhysRevB.103.075102} {\bibfield  {journal} {\bibinfo  {journal} {Phys. Rev. B}\ }\textbf {\bibinfo {volume} {103}},\ \bibinfo {pages} {075102} (\bibinfo {year} {2021})},\ \Eprint {http://arxiv.org/abs/2005.13677} {arXiv:2005.13677 [cond-mat.str-el]} \BibitemShut {NoStop}%
\bibitem [{\citenamefont {Randal-Williams}\ \emph {et~al.}(2020)\citenamefont {Randal-Williams}, \citenamefont {Tsui},\ and\ \citenamefont {Wen}}]{Randal-Williams:2020hls}%
  \BibitemOpen
  \bibfield  {author} {\bibinfo {author} {\bibfnamefont {O.}~\bibnamefont {Randal-Williams}}, \bibinfo {author} {\bibfnamefont {L.}~\bibnamefont {Tsui}}, \ and\ \bibinfo {author} {\bibfnamefont {X.-G.}\ \bibnamefont {Wen}},\ }\bibfield  {title} {\enquote {\bibinfo {title} {{Quantization of Chern-Simons topological invariants for H-type and L-type quantum systems}},}\ }\href@noop {} {\  (\bibinfo {year} {2020})},\ \Eprint {http://arxiv.org/abs/2008.02613} {arXiv:2008.02613 [cond-mat.str-el]} \BibitemShut {NoStop}%
\bibitem [{\citenamefont {Cheng}\ and\ \citenamefont {Jian}(2022)}]{Cheng:2022nds}%
  \BibitemOpen
  \bibfield  {author} {\bibinfo {author} {\bibfnamefont {M.}~\bibnamefont {Cheng}}\ and\ \bibinfo {author} {\bibfnamefont {C.-M.}\ \bibnamefont {Jian}},\ }\bibfield  {title} {\enquote {\bibinfo {title} {{Gauging U(1) symmetry in (2+1)d topological phases}},}\ }\href {\doibase 10.21468/SciPostPhys.12.6.202} {\bibfield  {journal} {\bibinfo  {journal} {SciPost Phys.}\ }\textbf {\bibinfo {volume} {12}},\ \bibinfo {pages} {202} (\bibinfo {year} {2022})},\ \Eprint {http://arxiv.org/abs/2201.07239} {arXiv:2201.07239 [cond-mat.str-el]} \BibitemShut {NoStop}%
\bibitem [{\citenamefont {Musser}\ \emph {et~al.}(2024)\citenamefont {Musser}, \citenamefont {Cheng},\ and\ \citenamefont {Senthil}}]{musser_fractionalization_2024}%
  \BibitemOpen
  \bibfield  {author} {\bibinfo {author} {\bibfnamefont {S.}~\bibnamefont {Musser}}, \bibinfo {author} {\bibfnamefont {M.}~\bibnamefont {Cheng}}, \ and\ \bibinfo {author} {\bibfnamefont {T.}~\bibnamefont {Senthil}},\ }\href {\doibase 10.48550/arXiv.2408.03984} {\enquote {\bibinfo {title} {Fractionalization as an alternate to charge ordering in electronic insulators},}\ } (\bibinfo {year} {2024}),\ \bibinfo {note} {arXiv:2408.03984 [cond-mat, physics:hep-th, physics:quant-ph]}\BibitemShut {NoStop}%
\bibitem [{\citenamefont {Jian}\ \emph {et~al.}(2024)\citenamefont {Jian}, \citenamefont {Cheng},\ and\ \citenamefont {Xu}}]{jian_minimal_2024}%
  \BibitemOpen
  \bibfield  {author} {\bibinfo {author} {\bibfnamefont {C.-M.}\ \bibnamefont {Jian}}, \bibinfo {author} {\bibfnamefont {M.}~\bibnamefont {Cheng}}, \ and\ \bibinfo {author} {\bibfnamefont {C.}~\bibnamefont {Xu}},\ }\href {http://arxiv.org/abs/2403.07054} {\enquote {\bibinfo {title} {Minimal fractional topological insulator in half-filled conjugate moir\'{e} chern bands},}\ } (\bibinfo {year} {2024}),\ \bibinfo {note} {arXiv:2403.07054 [cond-mat]}\BibitemShut {NoStop}%
\bibitem [{\citenamefont {Witten}(1989)}]{Witten:1988hf}%
  \BibitemOpen
  \bibfield  {author} {\bibinfo {author} {\bibfnamefont {E.}~\bibnamefont {Witten}},\ }\bibfield  {title} {\enquote {\bibinfo {title} {{Quantum Field Theory and the Jones Polynomial}},}\ }\href {\doibase 10.1007/BF01217730} {\bibfield  {journal} {\bibinfo  {journal} {Commun. Math. Phys.}\ }\textbf {\bibinfo {volume} {121}},\ \bibinfo {pages} {351} (\bibinfo {year} {1989})}\BibitemShut {NoStop}%
\bibitem [{\citenamefont {Barkeshli}\ \emph {et~al.}(2019)\citenamefont {Barkeshli}, \citenamefont {Bonderson}, \citenamefont {Cheng},\ and\ \citenamefont {Wang}}]{barkeshli_symmetry_2019}%
  \BibitemOpen
  \bibfield  {author} {\bibinfo {author} {\bibfnamefont {M.}~\bibnamefont {Barkeshli}}, \bibinfo {author} {\bibfnamefont {P.}~\bibnamefont {Bonderson}}, \bibinfo {author} {\bibfnamefont {M.}~\bibnamefont {Cheng}}, \ and\ \bibinfo {author} {\bibfnamefont {Z.}~\bibnamefont {Wang}},\ }\bibfield  {title} {\enquote {\bibinfo {title} {Symmetry fractionalization, defects, and gauging of topological phases},}\ }\href {\doibase 10.1103/PhysRevB.100.115147} {\bibfield  {journal} {\bibinfo  {journal} {Physical Review B}\ }\textbf {\bibinfo {volume} {100}},\ \bibinfo {pages} {115147} (\bibinfo {year} {2019})}\BibitemShut {NoStop}%
\bibitem [{\citenamefont {Zhang}\ \emph {et~al.}(2025)\citenamefont {Zhang}, \citenamefont {Vishwanath},\ and\ \citenamefont {Wen}}]{zhang2025hierarchy}%
  \BibitemOpen
  \bibfield  {author} {\bibinfo {author} {\bibfnamefont {C.}~\bibnamefont {Zhang}}, \bibinfo {author} {\bibfnamefont {A.}~\bibnamefont {Vishwanath}}, \ and\ \bibinfo {author} {\bibfnamefont {X.-G.}\ \bibnamefont {Wen}},\ }\bibfield  {title} {\enquote {\bibinfo {title} {Hierarchy construction for non-abelian fractional quantum hall states via anyon condensation},}\ }\href@noop {} {\bibfield  {journal} {\bibinfo  {journal} {Physical Review B}\ }\textbf {\bibinfo {volume} {112}},\ \bibinfo {pages} {125116} (\bibinfo {year} {2025})}\BibitemShut {NoStop}%
\bibitem [{\citenamefont {Moore}\ and\ \citenamefont {Seiberg}(1989{\natexlab{a}})}]{Moore:1988qv}%
  \BibitemOpen
  \bibfield  {author} {\bibinfo {author} {\bibfnamefont {G.~W.}\ \bibnamefont {Moore}}\ and\ \bibinfo {author} {\bibfnamefont {N.}~\bibnamefont {Seiberg}},\ }\bibfield  {title} {\enquote {\bibinfo {title} {{Classical and Quantum Conformal Field Theory}},}\ }\href {\doibase 10.1007/BF01238857} {\bibfield  {journal} {\bibinfo  {journal} {Commun. Math. Phys.}\ }\textbf {\bibinfo {volume} {123}},\ \bibinfo {pages} {177} (\bibinfo {year} {1989}{\natexlab{a}})}\BibitemShut {NoStop}%
\bibitem [{\citenamefont {Moore}\ and\ \citenamefont {Seiberg}(1989{\natexlab{b}})}]{Moore:1989vd}%
  \BibitemOpen
  \bibfield  {author} {\bibinfo {author} {\bibfnamefont {G.~W.}\ \bibnamefont {Moore}}\ and\ \bibinfo {author} {\bibfnamefont {N.}~\bibnamefont {Seiberg}},\ }\bibfield  {title} {\enquote {\bibinfo {title} {{Lectures on RCFT}},}\ }in\ \href@noop {} {\emph {\bibinfo {booktitle} {{1989 Banff NATO ASI: Physics, Geometry and Topology}}}}\ (\bibinfo {year} {1989})\ pp.\ \bibinfo {pages} {1--129},\ \bibinfo {note} {\url{https://www.physics.rutgers.edu/~gmoore/LecturesRCFT.pdf}}\BibitemShut {NoStop}%
\bibitem [{\citenamefont {Frohlich}\ \emph {et~al.}(2001)\citenamefont {Frohlich}, \citenamefont {Pedrini}, \citenamefont {Schweigert},\ and\ \citenamefont {Walcher}}]{Frohlich:2000qs}%
  \BibitemOpen
  \bibfield  {author} {\bibinfo {author} {\bibfnamefont {J.}~\bibnamefont {Frohlich}}, \bibinfo {author} {\bibfnamefont {B.}~\bibnamefont {Pedrini}}, \bibinfo {author} {\bibfnamefont {C.}~\bibnamefont {Schweigert}}, \ and\ \bibinfo {author} {\bibfnamefont {J.}~\bibnamefont {Walcher}},\ }\bibfield  {title} {\enquote {\bibinfo {title} {{Universality in quantum Hall systems: Coset construction of incompressible states}},}\ }\href {\doibase 10.1023/A:1010389232079} {\bibfield  {journal} {\bibinfo  {journal} {J. Stat. Phys.}\ }\textbf {\bibinfo {volume} {103}},\ \bibinfo {pages} {527} (\bibinfo {year} {2001})},\ \Eprint {http://arxiv.org/abs/cond-mat/0002330} {arXiv:cond-mat/0002330} \BibitemShut {NoStop}%
\bibitem [{\citenamefont {Cappelli}\ and\ \citenamefont {Viola}(2011)}]{Cappelli:2010jv}%
  \BibitemOpen
  \bibfield  {author} {\bibinfo {author} {\bibfnamefont {A.}~\bibnamefont {Cappelli}}\ and\ \bibinfo {author} {\bibfnamefont {G.}~\bibnamefont {Viola}},\ }\bibfield  {title} {\enquote {\bibinfo {title} {{Partition Functions of Non-Abelian Quantum Hall States}},}\ }\href {\doibase 10.1088/1751-8113/44/7/075401} {\bibfield  {journal} {\bibinfo  {journal} {J. Phys. A}\ }\textbf {\bibinfo {volume} {44}},\ \bibinfo {pages} {075401} (\bibinfo {year} {2011})},\ \Eprint {http://arxiv.org/abs/1007.1732} {arXiv:1007.1732 [cond-mat.mes-hall]} \BibitemShut {NoStop}%
\bibitem [{\citenamefont {Gaiotto}\ \emph {et~al.}(2015)\citenamefont {Gaiotto}, \citenamefont {Kapustin}, \citenamefont {Seiberg},\ and\ \citenamefont {Willett}}]{Gaiotto:2014kfa}%
  \BibitemOpen
  \bibfield  {author} {\bibinfo {author} {\bibfnamefont {D.}~\bibnamefont {Gaiotto}}, \bibinfo {author} {\bibfnamefont {A.}~\bibnamefont {Kapustin}}, \bibinfo {author} {\bibfnamefont {N.}~\bibnamefont {Seiberg}}, \ and\ \bibinfo {author} {\bibfnamefont {B.}~\bibnamefont {Willett}},\ }\bibfield  {title} {\enquote {\bibinfo {title} {{Generalized Global Symmetries}},}\ }\href {\doibase 10.1007/JHEP02(2015)172} {\bibfield  {journal} {\bibinfo  {journal} {JHEP}\ }\textbf {\bibinfo {volume} {02}},\ \bibinfo {pages} {172} (\bibinfo {year} {2015})},\ \Eprint {http://arxiv.org/abs/1412.5148} {arXiv:1412.5148 [hep-th]} \BibitemShut {NoStop}%
\bibitem [{\citenamefont {Seiberg}\ and\ \citenamefont {Seifnashri}(2025)}]{Transmutaion}%
  \BibitemOpen
  \bibfield  {author} {\bibinfo {author} {\bibfnamefont {N.}~\bibnamefont {Seiberg}}\ and\ \bibinfo {author} {\bibfnamefont {S.}~\bibnamefont {Seifnashri}},\ }\bibfield  {title} {\enquote {\bibinfo {title} {{Symmetry transmutation and anomaly matching}},}\ }\href {\doibase 10.1007/JHEP09(2025)014} {\bibfield  {journal} {\bibinfo  {journal} {JHEP}\ }\textbf {\bibinfo {volume} {09}},\ \bibinfo {pages} {014} (\bibinfo {year} {2025})},\ \Eprint {http://arxiv.org/abs/2505.08618} {arXiv:2505.08618 [hep-th]} \BibitemShut {NoStop}%
\bibitem [{\citenamefont {Kitaev}(2006)}]{kitaev_anyons_2006}%
  \BibitemOpen
  \bibfield  {author} {\bibinfo {author} {\bibfnamefont {A.}~\bibnamefont {Kitaev}},\ }\bibfield  {title} {\enquote {\bibinfo {title} {Anyons in an exactly solved model and beyond},}\ }\href {\doibase 10.1016/j.aop.2005.10.005} {\bibfield  {journal} {\bibinfo  {journal} {Annals of Physics}\ }\bibinfo {series} {January {Special} {Issue}},\ \textbf {\bibinfo {volume} {321}},\ \bibinfo {pages} {2} (\bibinfo {year} {2006})}\BibitemShut {NoStop}%
\bibitem [{\citenamefont {Witten}(2016)}]{Witten:2015aoa}%
  \BibitemOpen
  \bibfield  {author} {\bibinfo {author} {\bibfnamefont {E.}~\bibnamefont {Witten}},\ }\bibfield  {title} {\enquote {\bibinfo {title} {{Three lectures on topological phases of matter}},}\ }\href {\doibase 10.1393/ncr/i2016-10125-3} {\bibfield  {journal} {\bibinfo  {journal} {Riv. Nuovo Cim.}\ }\textbf {\bibinfo {volume} {39}},\ \bibinfo {pages} {313} (\bibinfo {year} {2016})},\ \Eprint {http://arxiv.org/abs/1510.07698} {arXiv:1510.07698 [cond-mat.mes-hall]} \BibitemShut {NoStop}%
\bibitem [{\citenamefont {Huang}\ and\ \citenamefont {Wu}(2024)}]{huang_non-Abelian_2024}%
  \BibitemOpen
  \bibfield  {author} {\bibinfo {author} {\bibfnamefont {K.-W.}\ \bibnamefont {Huang}}\ and\ \bibinfo {author} {\bibfnamefont {Y.-H.}\ \bibnamefont {Wu}},\ }\href {\doibase 10.48550/arXiv.2408.16275} {\enquote {\bibinfo {title} {Non-{Abelian} fractional quantum {Hall} states at filling factor 3/4},}\ } (\bibinfo {year} {2024}),\ \bibinfo {note} {arXiv:2408.16275 [cond-mat]}\BibitemShut {NoStop}%
\bibitem [{\citenamefont {Laughlin}(1981)}]{Laughlin1}%
  \BibitemOpen
  \bibfield  {author} {\bibinfo {author} {\bibfnamefont {R.~B.}\ \bibnamefont {Laughlin}},\ }\bibfield  {title} {\enquote {\bibinfo {title} {Quantized hall conductivity in two dimensions},}\ }\href {\doibase 10.1103/PhysRevB.23.5632} {\bibfield  {journal} {\bibinfo  {journal} {Phys. Rev. B}\ }\textbf {\bibinfo {volume} {23}},\ \bibinfo {pages} {5632} (\bibinfo {year} {1981})}\BibitemShut {NoStop}%
\bibitem [{\citenamefont {Halperin}(1984)}]{HalperinHierarchy}%
  \BibitemOpen
  \bibfield  {author} {\bibinfo {author} {\bibfnamefont {B.~I.}\ \bibnamefont {Halperin}},\ }\bibfield  {title} {\enquote {\bibinfo {title} {Statistics of quasiparticles and the hierarchy of fractional quantized hall states},}\ }\href {\doibase 10.1103/PhysRevLett.52.1583} {\bibfield  {journal} {\bibinfo  {journal} {Phys. Rev. Lett.}\ }\textbf {\bibinfo {volume} {52}},\ \bibinfo {pages} {1583} (\bibinfo {year} {1984})}\BibitemShut {NoStop}%
\bibitem [{\citenamefont {Arovas}\ \emph {et~al.}(1984)\citenamefont {Arovas}, \citenamefont {Schrieffer},\ and\ \citenamefont {Wilczek}}]{arovas1984fractional}%
  \BibitemOpen
  \bibfield  {author} {\bibinfo {author} {\bibfnamefont {D.}~\bibnamefont {Arovas}}, \bibinfo {author} {\bibfnamefont {J.~R.}\ \bibnamefont {Schrieffer}}, \ and\ \bibinfo {author} {\bibfnamefont {F.}~\bibnamefont {Wilczek}},\ }\bibfield  {title} {\enquote {\bibinfo {title} {Fractional statistics and the quantum hall effect},}\ }\href@noop {} {\bibfield  {journal} {\bibinfo  {journal} {Physical review letters}\ }\textbf {\bibinfo {volume} {53}},\ \bibinfo {pages} {722} (\bibinfo {year} {1984})}\BibitemShut {NoStop}%
\bibitem [{\citenamefont {Zhang}\ \emph {et~al.}(1989)\citenamefont {Zhang}, \citenamefont {Hansson},\ and\ \citenamefont {Kivelson}}]{zhang1989effective}%
  \BibitemOpen
  \bibfield  {author} {\bibinfo {author} {\bibfnamefont {S.~C.}\ \bibnamefont {Zhang}}, \bibinfo {author} {\bibfnamefont {T.~H.}\ \bibnamefont {Hansson}}, \ and\ \bibinfo {author} {\bibfnamefont {S.}~\bibnamefont {Kivelson}},\ }\bibfield  {title} {\enquote {\bibinfo {title} {Effective-field-theory model for the fractional quantum hall effect},}\ }\href@noop {} {\bibfield  {journal} {\bibinfo  {journal} {Physical review letters}\ }\textbf {\bibinfo {volume} {62}},\ \bibinfo {pages} {82} (\bibinfo {year} {1989})}\BibitemShut {NoStop}%
\bibitem [{\citenamefont {Read}(1989)}]{read1989order}%
  \BibitemOpen
  \bibfield  {author} {\bibinfo {author} {\bibfnamefont {N.}~\bibnamefont {Read}},\ }\bibfield  {title} {\enquote {\bibinfo {title} {Order parameter and ginzburg-landau theory for the fractional quantum hall effect},}\ }\href@noop {} {\bibfield  {journal} {\bibinfo  {journal} {Physical Review Letters}\ }\textbf {\bibinfo {volume} {62}},\ \bibinfo {pages} {86} (\bibinfo {year} {1989})}\BibitemShut {NoStop}%
\bibitem [{\citenamefont {Lopez}\ and\ \citenamefont {Fradkin}(1991)}]{lopez1991fractional}%
  \BibitemOpen
  \bibfield  {author} {\bibinfo {author} {\bibfnamefont {A.}~\bibnamefont {Lopez}}\ and\ \bibinfo {author} {\bibfnamefont {E.}~\bibnamefont {Fradkin}},\ }\bibfield  {title} {\enquote {\bibinfo {title} {Fractional quantum hall effect and chern-simons gauge theories},}\ }\href@noop {} {\bibfield  {journal} {\bibinfo  {journal} {Physical Review B}\ }\textbf {\bibinfo {volume} {44}},\ \bibinfo {pages} {5246} (\bibinfo {year} {1991})}\BibitemShut {NoStop}%
\bibitem [{\citenamefont {Sachdev}(2023)}]{sachdev2023quantum}%
  \BibitemOpen
  \bibfield  {author} {\bibinfo {author} {\bibfnamefont {S.}~\bibnamefont {Sachdev}},\ }\href@noop {} {\emph {\bibinfo {title} {Quantum phases of matter}}}\ (\bibinfo  {publisher} {Cambridge University Press},\ \bibinfo {year} {2023})\BibitemShut {NoStop}%
\bibitem [{\citenamefont {Wen}(2004)}]{wen2004quantum}%
  \BibitemOpen
  \bibfield  {author} {\bibinfo {author} {\bibfnamefont {X.-G.}\ \bibnamefont {Wen}},\ }\href@noop {} {\emph {\bibinfo {title} {Quantum field theory of many-body systems: From the origin of sound to an origin of light and electrons}}}\ (\bibinfo  {publisher} {Oxford university press},\ \bibinfo {year} {2004})\BibitemShut {NoStop}%
\bibitem [{\citenamefont {Wu}(1984)}]{tao1984gauge}%
  \BibitemOpen
  \bibfield  {author} {\bibinfo {author} {\bibfnamefont {Y.-S.}\ \bibnamefont {Wu}},\ }\bibfield  {title} {\enquote {\bibinfo {title} {Gauge invariance and fractional quantum hall effect},}\ }\href@noop {} {\bibfield  {journal} {\bibinfo  {journal} {Physical Review B}\ }\textbf {\bibinfo {volume} {30}},\ \bibinfo {pages} {1097} (\bibinfo {year} {1984})}\BibitemShut {NoStop}%
\bibitem [{\citenamefont {Kivelson}\ and\ \citenamefont {Ro{\v{c}}ek}(1985)}]{kivelson1985consequences}%
  \BibitemOpen
  \bibfield  {author} {\bibinfo {author} {\bibfnamefont {S.}~\bibnamefont {Kivelson}}\ and\ \bibinfo {author} {\bibfnamefont {M.}~\bibnamefont {Ro{\v{c}}ek}},\ }\bibfield  {title} {\enquote {\bibinfo {title} {Consequences of gauge invariance for fractionally charged quasi-particles},}\ }\href@noop {} {\bibfield  {journal} {\bibinfo  {journal} {Physics Letters B}\ }\textbf {\bibinfo {volume} {156}},\ \bibinfo {pages} {85} (\bibinfo {year} {1985})}\BibitemShut {NoStop}%
\bibitem [{\citenamefont {Einarsson}(1990)}]{einarsson1990fractional}%
  \BibitemOpen
  \bibfield  {author} {\bibinfo {author} {\bibfnamefont {T.}~\bibnamefont {Einarsson}},\ }\bibfield  {title} {\enquote {\bibinfo {title} {Fractional statistics on a torus},}\ }\href@noop {} {\bibfield  {journal} {\bibinfo  {journal} {Physical review letters}\ }\textbf {\bibinfo {volume} {64}},\ \bibinfo {pages} {1995} (\bibinfo {year} {1990})}\BibitemShut {NoStop}%
\bibitem [{\citenamefont {Wen}\ and\ \citenamefont {Niu}(1990)}]{wen1990ground}%
  \BibitemOpen
  \bibfield  {author} {\bibinfo {author} {\bibfnamefont {X.-G.}\ \bibnamefont {Wen}}\ and\ \bibinfo {author} {\bibfnamefont {Q.}~\bibnamefont {Niu}},\ }\bibfield  {title} {\enquote {\bibinfo {title} {Ground-state degeneracy of the fractional quantum hall states in the presence of a random potential and on high-genus riemann surfaces},}\ }\href@noop {} {\bibfield  {journal} {\bibinfo  {journal} {Physical Review B}\ }\textbf {\bibinfo {volume} {41}},\ \bibinfo {pages} {9377} (\bibinfo {year} {1990})}\BibitemShut {NoStop}%
\bibitem [{\citenamefont {Oshikawa}\ and\ \citenamefont {Senthil}(2006)}]{oshikawa_fractionalization_2006}%
  \BibitemOpen
  \bibfield  {author} {\bibinfo {author} {\bibfnamefont {M.}~\bibnamefont {Oshikawa}}\ and\ \bibinfo {author} {\bibfnamefont {T.}~\bibnamefont {Senthil}},\ }\bibfield  {title} {\enquote {\bibinfo {title} {Fractionalization, topological order, and quasiparticle statistics},}\ }\href {\doibase 10.1103/PhysRevLett.96.060601} {\bibfield  {journal} {\bibinfo  {journal} {Physical Review Letters}\ }\textbf {\bibinfo {volume} {96}},\ \bibinfo {pages} {060601} (\bibinfo {year} {2006})}\BibitemShut {NoStop}%
\bibitem [{\citenamefont {Goldhaber}\ \emph {et~al.}(1989)\citenamefont {Goldhaber}, \citenamefont {MacKenzie},\ and\ \citenamefont {Wilczek}}]{Goldhaber:1988iw}%
  \BibitemOpen
  \bibfield  {author} {\bibinfo {author} {\bibfnamefont {A.~S.}\ \bibnamefont {Goldhaber}}, \bibinfo {author} {\bibfnamefont {R.}~\bibnamefont {MacKenzie}}, \ and\ \bibinfo {author} {\bibfnamefont {F.}~\bibnamefont {Wilczek}},\ }\bibfield  {title} {\enquote {\bibinfo {title} {{FIELD CORRECTIONS TO INDUCED STATISTICS}},}\ }\href {\doibase 10.1142/S0217732389000046} {\bibfield  {journal} {\bibinfo  {journal} {Mod. Phys. Lett. A}\ }\textbf {\bibinfo {volume} {4}},\ \bibinfo {pages} {21} (\bibinfo {year} {1989})}\BibitemShut {NoStop}%
\bibitem [{\citenamefont {Greiter}\ and\ \citenamefont {Wilczek}(2022)}]{Greiter:2022iph}%
  \BibitemOpen
  \bibfield  {author} {\bibinfo {author} {\bibfnamefont {M.}~\bibnamefont {Greiter}}\ and\ \bibinfo {author} {\bibfnamefont {F.}~\bibnamefont {Wilczek}},\ }\bibfield  {title} {\enquote {\bibinfo {title} {{Fractional Statistics}},}\ }\href@noop {} {\  (\bibinfo {year} {2022})},\ \Eprint {http://arxiv.org/abs/2210.02530} {arXiv:2210.02530 [cond-mat.str-el]} \BibitemShut {NoStop}%
\bibitem [{\citenamefont {Kapustin}\ and\ \citenamefont {Sopenko}(2020)}]{Kapustin:2020bwt}%
  \BibitemOpen
  \bibfield  {author} {\bibinfo {author} {\bibfnamefont {A.}~\bibnamefont {Kapustin}}\ and\ \bibinfo {author} {\bibfnamefont {N.}~\bibnamefont {Sopenko}},\ }\bibfield  {title} {\enquote {\bibinfo {title} {{Hall conductance and the statistics of flux insertions in gapped interacting lattice systems}},}\ }\href {\doibase 10.1063/5.0022944} {\bibfield  {journal} {\bibinfo  {journal} {J. Math. Phys.}\ }\textbf {\bibinfo {volume} {61}},\ \bibinfo {pages} {101901} (\bibinfo {year} {2020})}\BibitemShut {NoStop}%
\bibitem [{\citenamefont {Lu}\ and\ \citenamefont {Vishwanath}(2012)}]{lu_theory_2012}%
  \BibitemOpen
  \bibfield  {author} {\bibinfo {author} {\bibfnamefont {Y.-M.}\ \bibnamefont {Lu}}\ and\ \bibinfo {author} {\bibfnamefont {A.}~\bibnamefont {Vishwanath}},\ }\bibfield  {title} {\enquote {\bibinfo {title} {Theory and classification of interacting integer topological phases in two dimensions: {A} {Chern}-{Simons} approach},}\ }\href {\doibase 10.1103/PhysRevB.86.125119} {\bibfield  {journal} {\bibinfo  {journal} {Physical Review B}\ }\textbf {\bibinfo {volume} {86}},\ \bibinfo {pages} {125119} (\bibinfo {year} {2012})}\BibitemShut {NoStop}%
\bibitem [{\citenamefont {Senthil}\ and\ \citenamefont {Levin}(2013)}]{senthil2013integer}%
  \BibitemOpen
  \bibfield  {author} {\bibinfo {author} {\bibfnamefont {T.}~\bibnamefont {Senthil}}\ and\ \bibinfo {author} {\bibfnamefont {M.}~\bibnamefont {Levin}},\ }\bibfield  {title} {\enquote {\bibinfo {title} {Integer quantum hall effect for bosons},}\ }\href@noop {} {\bibfield  {journal} {\bibinfo  {journal} {Physical review letters}\ }\textbf {\bibinfo {volume} {110}},\ \bibinfo {pages} {046801} (\bibinfo {year} {2013})}\BibitemShut {NoStop}%
\bibitem [{\citenamefont {Belov}\ and\ \citenamefont {Moore}(2005)}]{Belov:2005ze}%
  \BibitemOpen
  \bibfield  {author} {\bibinfo {author} {\bibfnamefont {D.}~\bibnamefont {Belov}}\ and\ \bibinfo {author} {\bibfnamefont {G.~W.}\ \bibnamefont {Moore}},\ }\bibfield  {title} {\enquote {\bibinfo {title} {{Classification of Abelian spin Chern-Simons theories}},}\ }\href@noop {} {\  (\bibinfo {year} {2005})},\ \Eprint {http://arxiv.org/abs/hep-th/0505235} {arXiv:hep-th/0505235} \BibitemShut {NoStop}%
\bibitem [{\citenamefont {Stirling}(2008)}]{Stirling:2008bq}%
  \BibitemOpen
  \bibfield  {author} {\bibinfo {author} {\bibfnamefont {S.~D.}\ \bibnamefont {Stirling}},\ }\emph {\bibinfo {title} {{Abelian Chern-Simons theory with toral gauge group, modular tensor categories, and group categories}}},\ \href@noop {} {Ph.D. thesis},\ \bibinfo  {school} {Texas U., Math Dept.} (\bibinfo {year} {2008}),\ \Eprint {http://arxiv.org/abs/0807.2857} {arXiv:0807.2857 [hep-th]} \BibitemShut {NoStop}%
\bibitem [{\citenamefont {Ma}\ \emph {et~al.}(2022)\citenamefont {Ma}, \citenamefont {Shirley}, \citenamefont {Cheng}, \citenamefont {Levin}, \citenamefont {McGreevy},\ and\ \citenamefont {Chen}}]{ma_fractonic_2022}%
  \BibitemOpen
  \bibfield  {author} {\bibinfo {author} {\bibfnamefont {X.}~\bibnamefont {Ma}}, \bibinfo {author} {\bibfnamefont {W.}~\bibnamefont {Shirley}}, \bibinfo {author} {\bibfnamefont {M.}~\bibnamefont {Cheng}}, \bibinfo {author} {\bibfnamefont {M.}~\bibnamefont {Levin}}, \bibinfo {author} {\bibfnamefont {J.}~\bibnamefont {McGreevy}}, \ and\ \bibinfo {author} {\bibfnamefont {X.}~\bibnamefont {Chen}},\ }\bibfield  {title} {\enquote {\bibinfo {title} {Fractonic order in infinite-component {Chern}-{Simons} gauge theories},}\ }\href {\doibase 10.1103/PhysRevB.105.195124} {\bibfield  {journal} {\bibinfo  {journal} {Physical Review B}\ }\textbf {\bibinfo {volume} {105}},\ \bibinfo {pages} {195124} (\bibinfo {year} {2022})}\BibitemShut {NoStop}%
\bibitem [{\citenamefont {Wen}\ and\ \citenamefont {Zee}(1992)}]{wen1992classification}%
  \BibitemOpen
  \bibfield  {author} {\bibinfo {author} {\bibfnamefont {X.~G.}\ \bibnamefont {Wen}}\ and\ \bibinfo {author} {\bibfnamefont {A.}~\bibnamefont {Zee}},\ }\bibfield  {title} {\enquote {\bibinfo {title} {Classification of abelian quantum hall states and matrix formulation of topological fluids},}\ }\href@noop {} {\bibfield  {journal} {\bibinfo  {journal} {Physical Review B}\ }\textbf {\bibinfo {volume} {46}},\ \bibinfo {pages} {2290} (\bibinfo {year} {1992})}\BibitemShut {NoStop}%
\bibitem [{\citenamefont {Frohlich}\ \emph {et~al.}(1995)\citenamefont {Frohlich}, \citenamefont {Kerler}, \citenamefont {Studer},\ and\ \citenamefont {Thiran}}]{Frohlich:1995qm}%
  \BibitemOpen
  \bibfield  {author} {\bibinfo {author} {\bibfnamefont {J.}~\bibnamefont {Frohlich}}, \bibinfo {author} {\bibfnamefont {T.}~\bibnamefont {Kerler}}, \bibinfo {author} {\bibfnamefont {U.~M.}\ \bibnamefont {Studer}}, \ and\ \bibinfo {author} {\bibfnamefont {E.}~\bibnamefont {Thiran}},\ }\bibfield  {title} {\enquote {\bibinfo {title} {{Structure the set of imcompressible quantum hall fluids}},}\ }\href {\doibase 10.1016/0550-3213(95)00426-S} {\bibfield  {journal} {\bibinfo  {journal} {Nucl. Phys. B}\ }\textbf {\bibinfo {volume} {453}},\ \bibinfo {pages} {670} (\bibinfo {year} {1995})},\ \Eprint {http://arxiv.org/abs/cond-mat/9505156} {arXiv:cond-mat/9505156} \BibitemShut {NoStop}%
\bibitem [{\citenamefont {Dijkgraaf}\ and\ \citenamefont {Witten}(1990)}]{Dijkgraaf:1989pz}%
  \BibitemOpen
  \bibfield  {author} {\bibinfo {author} {\bibfnamefont {R.}~\bibnamefont {Dijkgraaf}}\ and\ \bibinfo {author} {\bibfnamefont {E.}~\bibnamefont {Witten}},\ }\bibfield  {title} {\enquote {\bibinfo {title} {{Topological Gauge Theories and Group Cohomology}},}\ }\href {\doibase 10.1007/BF02096988} {\bibfield  {journal} {\bibinfo  {journal} {Commun. Math. Phys.}\ }\textbf {\bibinfo {volume} {129}},\ \bibinfo {pages} {393} (\bibinfo {year} {1990})}\BibitemShut {NoStop}%
\bibitem [{\citenamefont {Gaiotto}\ and\ \citenamefont {Kapustin}(2016)}]{Gaiotto:2015zta}%
  \BibitemOpen
  \bibfield  {author} {\bibinfo {author} {\bibfnamefont {D.}~\bibnamefont {Gaiotto}}\ and\ \bibinfo {author} {\bibfnamefont {A.}~\bibnamefont {Kapustin}},\ }\bibfield  {title} {\enquote {\bibinfo {title} {{Spin TQFTs and fermionic phases of matter}},}\ }\href {\doibase 10.1142/S0217751X16450445} {\bibfield  {journal} {\bibinfo  {journal} {Int. J. Mod. Phys. A}\ }\textbf {\bibinfo {volume} {31}},\ \bibinfo {pages} {1645044} (\bibinfo {year} {2016})},\ \Eprint {http://arxiv.org/abs/1505.05856} {arXiv:1505.05856 [cond-mat.str-el]} \BibitemShut {NoStop}%
\bibitem [{\citenamefont {Bhardwaj}\ \emph {et~al.}(2017)\citenamefont {Bhardwaj}, \citenamefont {Gaiotto},\ and\ \citenamefont {Kapustin}}]{Bhardwaj:2016clt}%
  \BibitemOpen
  \bibfield  {author} {\bibinfo {author} {\bibfnamefont {L.}~\bibnamefont {Bhardwaj}}, \bibinfo {author} {\bibfnamefont {D.}~\bibnamefont {Gaiotto}}, \ and\ \bibinfo {author} {\bibfnamefont {A.}~\bibnamefont {Kapustin}},\ }\bibfield  {title} {\enquote {\bibinfo {title} {{State sum constructions of spin-TFTs and string net constructions of fermionic phases of matter}},}\ }\href {\doibase 10.1007/JHEP04(2017)096} {\bibfield  {journal} {\bibinfo  {journal} {JHEP}\ }\textbf {\bibinfo {volume} {04}},\ \bibinfo {pages} {096} (\bibinfo {year} {2017})},\ \Eprint {http://arxiv.org/abs/1605.01640} {arXiv:1605.01640 [cond-mat.str-el]} \BibitemShut {NoStop}%
\bibitem [{\citenamefont {Hsin}\ \emph {et~al.}(2019)\citenamefont {Hsin}, \citenamefont {Lam},\ and\ \citenamefont {Seiberg}}]{Hsin:2018vcg}%
  \BibitemOpen
  \bibfield  {author} {\bibinfo {author} {\bibfnamefont {P.-S.}\ \bibnamefont {Hsin}}, \bibinfo {author} {\bibfnamefont {H.~T.}\ \bibnamefont {Lam}}, \ and\ \bibinfo {author} {\bibfnamefont {N.}~\bibnamefont {Seiberg}},\ }\bibfield  {title} {\enquote {\bibinfo {title} {{Comments on One-Form Global Symmetries and Their Gauging in 3d and 4d}},}\ }\href {\doibase 10.21468/SciPostPhys.6.3.039} {\bibfield  {journal} {\bibinfo  {journal} {SciPost Phys.}\ }\textbf {\bibinfo {volume} {6}},\ \bibinfo {pages} {039} (\bibinfo {year} {2019})},\ \Eprint {http://arxiv.org/abs/1812.04716} {arXiv:1812.04716 [hep-th]} \BibitemShut {NoStop}%
\bibitem [{\citenamefont {Kapustin}\ and\ \citenamefont {Seiberg}(2014)}]{Kapustin:2014gua}%
  \BibitemOpen
  \bibfield  {author} {\bibinfo {author} {\bibfnamefont {A.}~\bibnamefont {Kapustin}}\ and\ \bibinfo {author} {\bibfnamefont {N.}~\bibnamefont {Seiberg}},\ }\bibfield  {title} {\enquote {\bibinfo {title} {{Coupling a QFT to a TQFT and Duality}},}\ }\href {\doibase 10.1007/JHEP04(2014)001} {\bibfield  {journal} {\bibinfo  {journal} {JHEP}\ }\textbf {\bibinfo {volume} {04}},\ \bibinfo {pages} {001} (\bibinfo {year} {2014})},\ \Eprint {http://arxiv.org/abs/1401.0740} {arXiv:1401.0740 [hep-th]} \BibitemShut {NoStop}%
\bibitem [{\citenamefont {Benini}\ \emph {et~al.}(2019)\citenamefont {Benini}, \citenamefont {C\'ordova},\ and\ \citenamefont {Hsin}}]{Benini:2018reh}%
  \BibitemOpen
  \bibfield  {author} {\bibinfo {author} {\bibfnamefont {F.}~\bibnamefont {Benini}}, \bibinfo {author} {\bibfnamefont {C.}~\bibnamefont {C\'ordova}}, \ and\ \bibinfo {author} {\bibfnamefont {P.-S.}\ \bibnamefont {Hsin}},\ }\bibfield  {title} {\enquote {\bibinfo {title} {{On 2-Group Global Symmetries and their Anomalies}},}\ }\href {\doibase 10.1007/JHEP03(2019)118} {\bibfield  {journal} {\bibinfo  {journal} {JHEP}\ }\textbf {\bibinfo {volume} {03}},\ \bibinfo {pages} {118} (\bibinfo {year} {2019})},\ \Eprint {http://arxiv.org/abs/1803.09336} {arXiv:1803.09336 [hep-th]} \BibitemShut {NoStop}%
\bibitem [{\citenamefont {Whitehead}(1949)}]{whitehead1949on}%
  \BibitemOpen
  \bibfield  {author} {\bibinfo {author} {\bibfnamefont {J.~H.~C.}\ \bibnamefont {Whitehead}},\ }\bibfield  {title} {\enquote {\bibinfo {title} {On simply connected, 4-dimensional polyhedra},}\ }\href@noop {} {\bibfield  {journal} {\bibinfo  {journal} {Comment. Math. Helv.}\ }\textbf {\bibinfo {volume} {22}},\ \bibinfo {pages} {48} (\bibinfo {year} {1949})}\BibitemShut {NoStop}%
\bibitem [{\citenamefont {Kapustin}\ and\ \citenamefont {Thorngren}(2014)}]{Kapustin:2013qsa}%
  \BibitemOpen
  \bibfield  {author} {\bibinfo {author} {\bibfnamefont {A.}~\bibnamefont {Kapustin}}\ and\ \bibinfo {author} {\bibfnamefont {R.}~\bibnamefont {Thorngren}},\ }\bibfield  {title} {\enquote {\bibinfo {title} {{Topological Field Theory on a Lattice, Discrete Theta-Angles and Confinement}},}\ }\href {\doibase 10.4310/ATMP.2014.v18.n5.a4} {\bibfield  {journal} {\bibinfo  {journal} {Adv. Theor. Math. Phys.}\ }\textbf {\bibinfo {volume} {18}},\ \bibinfo {pages} {1233} (\bibinfo {year} {2014})},\ \Eprint {http://arxiv.org/abs/1308.2926} {arXiv:1308.2926 [hep-th]} \BibitemShut {NoStop}%
\bibitem [{\citenamefont {Moore}\ and\ \citenamefont {Seiberg}(1989{\natexlab{c}})}]{Moore:1988ss}%
  \BibitemOpen
  \bibfield  {author} {\bibinfo {author} {\bibfnamefont {G.~W.}\ \bibnamefont {Moore}}\ and\ \bibinfo {author} {\bibfnamefont {N.}~\bibnamefont {Seiberg}},\ }\bibfield  {title} {\enquote {\bibinfo {title} {{Naturality in Conformal Field Theory}},}\ }\href {\doibase 10.1016/0550-3213(89)90511-7} {\bibfield  {journal} {\bibinfo  {journal} {Nucl. Phys. B}\ }\textbf {\bibinfo {volume} {313}},\ \bibinfo {pages} {16} (\bibinfo {year} {1989}{\natexlab{c}})}\BibitemShut {NoStop}%
\bibitem [{\citenamefont {Moore}\ and\ \citenamefont {Seiberg}(1989{\natexlab{d}})}]{Moore:1989yh}%
  \BibitemOpen
  \bibfield  {author} {\bibinfo {author} {\bibfnamefont {G.~W.}\ \bibnamefont {Moore}}\ and\ \bibinfo {author} {\bibfnamefont {N.}~\bibnamefont {Seiberg}},\ }\bibfield  {title} {\enquote {\bibinfo {title} {{Taming the Conformal Zoo}},}\ }\href {\doibase 10.1016/0370-2693(89)90897-6} {\bibfield  {journal} {\bibinfo  {journal} {Phys. Lett. B}\ }\textbf {\bibinfo {volume} {220}},\ \bibinfo {pages} {422} (\bibinfo {year} {1989}{\natexlab{d}})}\BibitemShut {NoStop}%
\bibitem [{\citenamefont {Bais}\ and\ \citenamefont {Slingerland}(2009)}]{Bais:2008ni}%
  \BibitemOpen
  \bibfield  {author} {\bibinfo {author} {\bibfnamefont {F.~A.}\ \bibnamefont {Bais}}\ and\ \bibinfo {author} {\bibfnamefont {J.~K.}\ \bibnamefont {Slingerland}},\ }\bibfield  {title} {\enquote {\bibinfo {title} {{Condensate induced transitions between topologically ordered phases}},}\ }\href {\doibase 10.1103/PhysRevB.79.045316} {\bibfield  {journal} {\bibinfo  {journal} {Phys. Rev. B}\ }\textbf {\bibinfo {volume} {79}},\ \bibinfo {pages} {045316} (\bibinfo {year} {2009})},\ \Eprint {http://arxiv.org/abs/0808.0627} {arXiv:0808.0627 [cond-mat.mes-hall]} \BibitemShut {NoStop}%
\bibitem [{\citenamefont {Cheng}\ \emph {et~al.}()\citenamefont {Cheng}, \citenamefont {Seiberg},\ and\ \citenamefont {Senthil}}]{gaugingcondensation}%
  \BibitemOpen
  \bibfield  {author} {\bibinfo {author} {\bibfnamefont {M.}~\bibnamefont {Cheng}}, \bibinfo {author} {\bibfnamefont {N.}~\bibnamefont {Seiberg}}, \ and\ \bibinfo {author} {\bibfnamefont {T.}~\bibnamefont {Senthil}},\ }\bibfield  {title} {\enquote {\bibinfo {title} {{Gauging vs. condensation in (2+1)d topological phases}},}\ }\href@noop {} {\bibinfo  {journal} {to appear}\ }\BibitemShut {NoStop}%
\bibitem [{\citenamefont {Eli\"ens}\ \emph {et~al.}(2014)\citenamefont {Eli\"ens}, \citenamefont {Romers},\ and\ \citenamefont {Bais}}]{Eliens:2013epa}%
  \BibitemOpen
\bibfield  {journal} {  }\bibfield  {author} {\bibinfo {author} {\bibfnamefont {I.~S.}\ \bibnamefont {Eli\"ens}}, \bibinfo {author} {\bibfnamefont {J.~C.}\ \bibnamefont {Romers}}, \ and\ \bibinfo {author} {\bibfnamefont {F.~A.}\ \bibnamefont {Bais}},\ }\bibfield  {title} {\enquote {\bibinfo {title} {{Diagrammatics for Bose condensation in anyon theories}},}\ }\href {\doibase 10.1103/PhysRevB.90.195130} {\bibfield  {journal} {\bibinfo  {journal} {Phys. Rev. B}\ }\textbf {\bibinfo {volume} {90}},\ \bibinfo {pages} {195130} (\bibinfo {year} {2014})},\ \Eprint {http://arxiv.org/abs/1310.6001} {arXiv:1310.6001 [cond-mat.str-el]} \BibitemShut {NoStop}%
\bibitem [{\citenamefont {Aharony}\ \emph {et~al.}(2013)\citenamefont {Aharony}, \citenamefont {Seiberg},\ and\ \citenamefont {Tachikawa}}]{Aharony:2013hda}%
  \BibitemOpen
  \bibfield  {author} {\bibinfo {author} {\bibfnamefont {O.}~\bibnamefont {Aharony}}, \bibinfo {author} {\bibfnamefont {N.}~\bibnamefont {Seiberg}}, \ and\ \bibinfo {author} {\bibfnamefont {Y.}~\bibnamefont {Tachikawa}},\ }\bibfield  {title} {\enquote {\bibinfo {title} {{Reading between the lines of four-dimensional gauge theories}},}\ }\href {\doibase 10.1007/JHEP08(2013)115} {\bibfield  {journal} {\bibinfo  {journal} {JHEP}\ }\textbf {\bibinfo {volume} {08}},\ \bibinfo {pages} {115} (\bibinfo {year} {2013})},\ \Eprint {http://arxiv.org/abs/1305.0318} {arXiv:1305.0318 [hep-th]} \BibitemShut {NoStop}%
\bibitem [{\citenamefont {Closset}\ \emph {et~al.}(2012)\citenamefont {Closset}, \citenamefont {Dumitrescu}, \citenamefont {Festuccia}, \citenamefont {Komargodski},\ and\ \citenamefont {Seiberg}}]{Closset:2012vp}%
  \BibitemOpen
  \bibfield  {author} {\bibinfo {author} {\bibfnamefont {C.}~\bibnamefont {Closset}}, \bibinfo {author} {\bibfnamefont {T.~T.}\ \bibnamefont {Dumitrescu}}, \bibinfo {author} {\bibfnamefont {G.}~\bibnamefont {Festuccia}}, \bibinfo {author} {\bibfnamefont {Z.}~\bibnamefont {Komargodski}}, \ and\ \bibinfo {author} {\bibfnamefont {N.}~\bibnamefont {Seiberg}},\ }\bibfield  {title} {\enquote {\bibinfo {title} {{Comments on Chern-Simons Contact Terms in Three Dimensions}},}\ }\href {\doibase 10.1007/JHEP09(2012)091} {\bibfield  {journal} {\bibinfo  {journal} {JHEP}\ }\textbf {\bibinfo {volume} {09}},\ \bibinfo {pages} {091} (\bibinfo {year} {2012})},\ \Eprint {http://arxiv.org/abs/1206.5218} {arXiv:1206.5218 [hep-th]} \BibitemShut {NoStop}%
\bibitem [{\citenamefont {Delmastro}\ \emph {et~al.}(2023)\citenamefont {Delmastro}, \citenamefont {Gomis}, \citenamefont {Hsin},\ and\ \citenamefont {Komargodski}}]{Delmastro:2022pfo}%
  \BibitemOpen
  \bibfield  {author} {\bibinfo {author} {\bibfnamefont {D.~G.}\ \bibnamefont {Delmastro}}, \bibinfo {author} {\bibfnamefont {J.}~\bibnamefont {Gomis}}, \bibinfo {author} {\bibfnamefont {P.-S.}\ \bibnamefont {Hsin}}, \ and\ \bibinfo {author} {\bibfnamefont {Z.}~\bibnamefont {Komargodski}},\ }\bibfield  {title} {\enquote {\bibinfo {title} {{Anomalies and symmetry fractionalization}},}\ }\href {\doibase 10.21468/SciPostPhys.15.3.079} {\bibfield  {journal} {\bibinfo  {journal} {SciPost Phys.}\ }\textbf {\bibinfo {volume} {15}},\ \bibinfo {pages} {079} (\bibinfo {year} {2023})},\ \Eprint {http://arxiv.org/abs/2206.15118} {arXiv:2206.15118 [hep-th]} \BibitemShut {NoStop}%
\bibitem [{\citenamefont {Brennan}\ \emph {et~al.}(2022)\citenamefont {Brennan}, \citenamefont {Cordova},\ and\ \citenamefont {Dumitrescu}}]{Brennan:2022tyl}%
  \BibitemOpen
  \bibfield  {author} {\bibinfo {author} {\bibfnamefont {T.~D.}\ \bibnamefont {Brennan}}, \bibinfo {author} {\bibfnamefont {C.}~\bibnamefont {Cordova}}, \ and\ \bibinfo {author} {\bibfnamefont {T.~T.}\ \bibnamefont {Dumitrescu}},\ }\bibfield  {title} {\enquote {\bibinfo {title} {{Line Defect Quantum Numbers \& Anomalies}},}\ }\href@noop {} {\  (\bibinfo {year} {2022})},\ \Eprint {http://arxiv.org/abs/2206.15401} {arXiv:2206.15401 [hep-th]} \BibitemShut {NoStop}%
\bibitem [{\citenamefont {Haldane}(1985)}]{haldane1985many}%
  \BibitemOpen
  \bibfield  {author} {\bibinfo {author} {\bibfnamefont {F.}~\bibnamefont {Haldane}},\ }\bibfield  {title} {\enquote {\bibinfo {title} {Many-particle translational symmetries of two-dimensional electrons at rational landau-level filling},}\ }\href@noop {} {\bibfield  {journal} {\bibinfo  {journal} {Physical review letters}\ }\textbf {\bibinfo {volume} {55}},\ \bibinfo {pages} {2095} (\bibinfo {year} {1985})}\BibitemShut {NoStop}%
\bibitem [{\citenamefont {Oshikawa}(2000{\natexlab{a}})}]{OshikawaLSM}%
  \BibitemOpen
  \bibfield  {author} {\bibinfo {author} {\bibfnamefont {M.}~\bibnamefont {Oshikawa}},\ }\bibfield  {title} {\enquote {\bibinfo {title} {{Topological approach to Luttinger's theorem and the Fermi surface of a Kondo lattice}},}\ }\href@noop {} {\bibfield  {journal} {\bibinfo  {journal} {Phys. Rev. Lett.}\ }\textbf {\bibinfo {volume} {84}},\ \bibinfo {pages} {3370} (\bibinfo {year} {2000}{\natexlab{a}})},\ \Eprint {http://arxiv.org/abs/cond-mat/0002392} {cond-mat/0002392} \BibitemShut {NoStop}%
\bibitem [{\citenamefont {Cheng}\ \emph {et~al.}(2016)\citenamefont {Cheng}, \citenamefont {Zaletel}, \citenamefont {Barkeshli}, \citenamefont {Vishwanath},\ and\ \citenamefont {Bonderson}}]{cheng_translational_2016}%
  \BibitemOpen
  \bibfield  {author} {\bibinfo {author} {\bibfnamefont {M.}~\bibnamefont {Cheng}}, \bibinfo {author} {\bibfnamefont {M.}~\bibnamefont {Zaletel}}, \bibinfo {author} {\bibfnamefont {M.}~\bibnamefont {Barkeshli}}, \bibinfo {author} {\bibfnamefont {A.}~\bibnamefont {Vishwanath}}, \ and\ \bibinfo {author} {\bibfnamefont {P.}~\bibnamefont {Bonderson}},\ }\bibfield  {title} {\enquote {\bibinfo {title} {Translational symmetry and microscopic constraints on symmetry-enriched topological phases: {A} view from the surface},}\ }\href {\doibase 10.1103/PhysRevX.6.041068} {\bibfield  {journal} {\bibinfo  {journal} {Physical Review X}\ }\textbf {\bibinfo {volume} {6}},\ \bibinfo {pages} {041068} (\bibinfo {year} {2016})}\BibitemShut {NoStop}%
\bibitem [{\citenamefont {Else}\ \emph {et~al.}(2021)\citenamefont {Else}, \citenamefont {Thorngren},\ and\ \citenamefont {Senthil}}]{Else:2020jln}%
  \BibitemOpen
  \bibfield  {author} {\bibinfo {author} {\bibfnamefont {D.~V.}\ \bibnamefont {Else}}, \bibinfo {author} {\bibfnamefont {R.}~\bibnamefont {Thorngren}}, \ and\ \bibinfo {author} {\bibfnamefont {T.}~\bibnamefont {Senthil}},\ }\bibfield  {title} {\enquote {\bibinfo {title} {{Non-Fermi liquids as ersatz Fermi liquids: general constraints on compressible metals}},}\ }\href {\doibase 10.1103/PhysRevX.11.021005} {\bibfield  {journal} {\bibinfo  {journal} {Phys. Rev. X}\ }\textbf {\bibinfo {volume} {11}},\ \bibinfo {pages} {021005} (\bibinfo {year} {2021})},\ \Eprint {http://arxiv.org/abs/2007.07896} {arXiv:2007.07896 [cond-mat.str-el]} \BibitemShut {NoStop}%
\bibitem [{\citenamefont {Cheng}\ and\ \citenamefont {Seiberg}(2023)}]{Cheng:2022sgb}%
  \BibitemOpen
  \bibfield  {author} {\bibinfo {author} {\bibfnamefont {M.}~\bibnamefont {Cheng}}\ and\ \bibinfo {author} {\bibfnamefont {N.}~\bibnamefont {Seiberg}},\ }\bibfield  {title} {\enquote {\bibinfo {title} {{Lieb-Schultz-Mattis, Luttinger, and 't Hooft - anomaly matching in lattice systems}},}\ }\href {\doibase 10.21468/SciPostPhys.15.2.051} {\bibfield  {journal} {\bibinfo  {journal} {SciPost Phys.}\ }\textbf {\bibinfo {volume} {15}},\ \bibinfo {pages} {051} (\bibinfo {year} {2023})},\ \Eprint {http://arxiv.org/abs/2211.12543} {arXiv:2211.12543 [cond-mat.str-el]} \BibitemShut {NoStop}%
\bibitem [{\citenamefont {Seiberg}(2025)}]{Seiberg:2024yig}%
  \BibitemOpen
  \bibfield  {author} {\bibinfo {author} {\bibfnamefont {N.}~\bibnamefont {Seiberg}},\ }\bibfield  {title} {\enquote {\bibinfo {title} {{Anomalous Continuous Translations}},}\ }\href {\doibase 10.21468/SciPostPhys.19.2.031} {\bibfield  {journal} {\bibinfo  {journal} {SciPost Phys.}\ }\textbf {\bibinfo {volume} {19}},\ \bibinfo {pages} {031} (\bibinfo {year} {2025})},\ \Eprint {http://arxiv.org/abs/2412.14434} {arXiv:2412.14434 [cond-mat.str-el]} \BibitemShut {NoStop}%
\bibitem [{\citenamefont {Jensen}\ and\ \citenamefont {Raz}(2024)}]{JensenRaz}%
  \BibitemOpen
  \bibfield  {author} {\bibinfo {author} {\bibfnamefont {K.}~\bibnamefont {Jensen}}\ and\ \bibinfo {author} {\bibfnamefont {A.}~\bibnamefont {Raz}},\ }\bibfield  {title} {\enquote {\bibinfo {title} {{The Fractional Hall hierarchy from duality}},}\ }\href@noop {} {\  (\bibinfo {year} {2024})},\ \Eprint {http://arxiv.org/abs/2412.17761} {arXiv:2412.17761 [hep-th]} \BibitemShut {NoStop}%
\bibitem [{\citenamefont {Manjunath}\ and\ \citenamefont {Barkeshli}(2020)}]{Manjunath:2020kne}%
  \BibitemOpen
  \bibfield  {author} {\bibinfo {author} {\bibfnamefont {N.}~\bibnamefont {Manjunath}}\ and\ \bibinfo {author} {\bibfnamefont {M.}~\bibnamefont {Barkeshli}},\ }\bibfield  {title} {\enquote {\bibinfo {title} {{Classification of fractional quantum Hall states with spatial symmetries}},}\ }\href@noop {} {\  (\bibinfo {year} {2020})},\ \Eprint {http://arxiv.org/abs/2012.11603} {arXiv:2012.11603 [cond-mat.str-el]} \BibitemShut {NoStop}%
\bibitem [{\citenamefont {Lieb}\ \emph {et~al.}(1961)\citenamefont {Lieb}, \citenamefont {Schultz},\ and\ \citenamefont {Mattis}}]{lieb_two_1961}%
  \BibitemOpen
  \bibfield  {author} {\bibinfo {author} {\bibfnamefont {E.}~\bibnamefont {Lieb}}, \bibinfo {author} {\bibfnamefont {T.}~\bibnamefont {Schultz}}, \ and\ \bibinfo {author} {\bibfnamefont {D.}~\bibnamefont {Mattis}},\ }\bibfield  {title} {\enquote {\bibinfo {title} {Two soluble models of an antiferromagnetic chain},}\ }\href {\doibase 10.1016/0003-4916(61)90115-4} {\bibfield  {journal} {\bibinfo  {journal} {Annals of Physics}\ }\textbf {\bibinfo {volume} {16}},\ \bibinfo {pages} {407} (\bibinfo {year} {1961})}\BibitemShut {NoStop}%
\bibitem [{\citenamefont {Oshikawa}(2000{\natexlab{b}})}]{oshikawa_commensurability_2000}%
  \BibitemOpen
  \bibfield  {author} {\bibinfo {author} {\bibfnamefont {M.}~\bibnamefont {Oshikawa}},\ }\bibfield  {title} {\enquote {\bibinfo {title} {Commensurability, excitation gap, and topology in quantum many-particle systems on a periodic lattice},}\ }\href {\doibase 10.1103/PhysRevLett.84.1535} {\bibfield  {journal} {\bibinfo  {journal} {Physical Review Letters}\ }\textbf {\bibinfo {volume} {84}},\ \bibinfo {pages} {1535} (\bibinfo {year} {2000}{\natexlab{b}})}\BibitemShut {NoStop}%
\bibitem [{\citenamefont {Hastings}(2004)}]{hastings_lieb-schultz-mattis_2004}%
  \BibitemOpen
  \bibfield  {author} {\bibinfo {author} {\bibfnamefont {M.~B.}\ \bibnamefont {Hastings}},\ }\bibfield  {title} {\enquote {\bibinfo {title} {Lieb-{Schultz}-{Mattis} in higher dimensions},}\ }\href {\doibase 10.1103/PhysRevB.69.104431} {\bibfield  {journal} {\bibinfo  {journal} {Physical Review B}\ }\textbf {\bibinfo {volume} {69}},\ \bibinfo {pages} {104431} (\bibinfo {year} {2004})}\BibitemShut {NoStop}%
\bibitem [{\citenamefont {Jalabert}\ and\ \citenamefont {Sachdev}(1991)}]{jalabert1991spontaneous}%
  \BibitemOpen
  \bibfield  {author} {\bibinfo {author} {\bibfnamefont {R.~A.}\ \bibnamefont {Jalabert}}\ and\ \bibinfo {author} {\bibfnamefont {S.}~\bibnamefont {Sachdev}},\ }\bibfield  {title} {\enquote {\bibinfo {title} {Spontaneous alignment of frustrated bonds in an anisotropic, three-dimensional ising model},}\ }\href@noop {} {\bibfield  {journal} {\bibinfo  {journal} {Physical Review B}\ }\textbf {\bibinfo {volume} {44}},\ \bibinfo {pages} {686} (\bibinfo {year} {1991})}\BibitemShut {NoStop}%
\bibitem [{\citenamefont {Senthil}\ and\ \citenamefont {Fisher}(2000)}]{senthil_z_2_2000}%
  \BibitemOpen
  \bibfield  {author} {\bibinfo {author} {\bibfnamefont {T.}~\bibnamefont {Senthil}}\ and\ \bibinfo {author} {\bibfnamefont {M.~P.~A.}\ \bibnamefont {Fisher}},\ }\bibfield  {title} {\enquote {\bibinfo {title} {$\mathbb{Z}_2$ gauge theory of electron fractionalization in strongly correlated systems},}\ }\href {\doibase 10.1103/PhysRevB.62.7850} {\bibfield  {journal} {\bibinfo  {journal} {Physical Review B}\ }\textbf {\bibinfo {volume} {62}},\ \bibinfo {pages} {7850} (\bibinfo {year} {2000})}\BibitemShut {NoStop}%
\bibitem [{\citenamefont {Zaletel}\ and\ \citenamefont {Vishwanath}(2015)}]{Zaletel:2014epa}%
  \BibitemOpen
  \bibfield  {author} {\bibinfo {author} {\bibfnamefont {M.~P.}\ \bibnamefont {Zaletel}}\ and\ \bibinfo {author} {\bibfnamefont {A.}~\bibnamefont {Vishwanath}},\ }\bibfield  {title} {\enquote {\bibinfo {title} {{Constraints on topological order in Mott Insulators}},}\ }\href {\doibase 10.1103/PhysRevLett.114.077201} {\bibfield  {journal} {\bibinfo  {journal} {Phys. Rev. Lett.}\ }\textbf {\bibinfo {volume} {114}},\ \bibinfo {pages} {077201} (\bibinfo {year} {2015})},\ \Eprint {http://arxiv.org/abs/1410.2894} {arXiv:1410.2894 [cond-mat.str-el]} \BibitemShut {NoStop}%
\bibitem [{\citenamefont {Bonderson}\ \emph {et~al.}(2016)\citenamefont {Bonderson}, \citenamefont {Cheng}, \citenamefont {Patel},\ and\ \citenamefont {Plamadeala}}]{bonderson2016topologicalenrichmentluttingerstheorem}%
  \BibitemOpen
  \bibfield  {author} {\bibinfo {author} {\bibfnamefont {P.}~\bibnamefont {Bonderson}}, \bibinfo {author} {\bibfnamefont {M.}~\bibnamefont {Cheng}}, \bibinfo {author} {\bibfnamefont {K.}~\bibnamefont {Patel}}, \ and\ \bibinfo {author} {\bibfnamefont {E.}~\bibnamefont {Plamadeala}},\ }\href@noop {} {\enquote {\bibinfo {title} {Topological enrichment of luttinger's theorem},}\ } (\bibinfo {year} {2016}),\ \Eprint {http://arxiv.org/abs/1601.07902} {arXiv:1601.07902} \BibitemShut {NoStop}%
\bibitem [{\citenamefont {Lu}\ \emph {et~al.}(2020)\citenamefont {Lu}, \citenamefont {Ran},\ and\ \citenamefont {Oshikawa}}]{lu2020filling}%
  \BibitemOpen
  \bibfield  {author} {\bibinfo {author} {\bibfnamefont {Y.-M.}\ \bibnamefont {Lu}}, \bibinfo {author} {\bibfnamefont {Y.}~\bibnamefont {Ran}}, \ and\ \bibinfo {author} {\bibfnamefont {M.}~\bibnamefont {Oshikawa}},\ }\bibfield  {title} {\enquote {\bibinfo {title} {Filling-enforced constraint on the quantized hall conductivity on a periodic lattice},}\ }\href@noop {} {\bibfield  {journal} {\bibinfo  {journal} {Annals of Physics}\ }\textbf {\bibinfo {volume} {413}},\ \bibinfo {pages} {168060} (\bibinfo {year} {2020})}\BibitemShut {NoStop}%
\bibitem [{\citenamefont {Bonderson}(2007)}]{bonderson_non-Abelian_2007}%
  \BibitemOpen
  \bibfield  {author} {\bibinfo {author} {\bibfnamefont {P.}~\bibnamefont {Bonderson}},\ }\emph {\bibinfo {title} {Non-{Abelian} anyons and interferometry.}},\ \href {https://thesis.library.caltech.edu/2447/} {Ph.D. thesis},\ \bibinfo  {school} {Caltech} (\bibinfo {year} {2007})\BibitemShut {NoStop}%
\bibitem [{\citenamefont {Drinfeld}\ \emph {et~al.}(2010)\citenamefont {Drinfeld}, \citenamefont {Gelaki}, \citenamefont {Nikshych},\ and\ \citenamefont {Ostrik}}]{drinfeld_braided_2010}%
  \BibitemOpen
  \bibfield  {author} {\bibinfo {author} {\bibfnamefont {V.}~\bibnamefont {Drinfeld}}, \bibinfo {author} {\bibfnamefont {S.}~\bibnamefont {Gelaki}}, \bibinfo {author} {\bibfnamefont {D.}~\bibnamefont {Nikshych}}, \ and\ \bibinfo {author} {\bibfnamefont {V.}~\bibnamefont {Ostrik}},\ }\href {http://arxiv.org/abs/0906.0620} {\enquote {\bibinfo {title} {On braided fusion categories {I}},}\ } (\bibinfo {year} {2010}),\ \bibinfo {note} {arXiv:0906.0620 [math]}\BibitemShut {NoStop}%
\bibitem [{\citenamefont {Etingof}\ \emph {et~al.}(2010)\citenamefont {Etingof}, \citenamefont {Nikshych},\ and\ \citenamefont {Ostrik}}]{etingof2010fusion}%
  \BibitemOpen
  \bibfield  {author} {\bibinfo {author} {\bibfnamefont {P.}~\bibnamefont {Etingof}}, \bibinfo {author} {\bibfnamefont {D.}~\bibnamefont {Nikshych}}, \ and\ \bibinfo {author} {\bibfnamefont {V.}~\bibnamefont {Ostrik}},\ }\bibfield  {title} {\enquote {\bibinfo {title} {Fusion categories and homotopy theory},}\ }\href@noop {} {\bibfield  {journal} {\bibinfo  {journal} {Quantum Topology}\ }\textbf {\bibinfo {volume} {1}},\ \bibinfo {pages} {209} (\bibinfo {year} {2010})}\BibitemShut {NoStop}%
\bibitem [{\citenamefont {Dijkgraaf}\ and\ \citenamefont {Verlinde}(1988)}]{Dijkgraaf:1988tf}%
  \BibitemOpen
  \bibfield  {author} {\bibinfo {author} {\bibfnamefont {R.}~\bibnamefont {Dijkgraaf}}\ and\ \bibinfo {author} {\bibfnamefont {E.~P.}\ \bibnamefont {Verlinde}},\ }\bibfield  {title} {\enquote {\bibinfo {title} {{Modular Invariance and the Fusion Algebra}},}\ }\href {\doibase 10.1016/0920-5632(88)90371-4} {\bibfield  {journal} {\bibinfo  {journal} {Nucl. Phys. B Proc. Suppl.}\ }\textbf {\bibinfo {volume} {5}},\ \bibinfo {pages} {87} (\bibinfo {year} {1988})}\BibitemShut {NoStop}%
\bibitem [{\citenamefont {Elitzur}\ \emph {et~al.}(1989)\citenamefont {Elitzur}, \citenamefont {Moore}, \citenamefont {Schwimmer},\ and\ \citenamefont {Seiberg}}]{Elitzur:1989nr}%
  \BibitemOpen
  \bibfield  {author} {\bibinfo {author} {\bibfnamefont {S.}~\bibnamefont {Elitzur}}, \bibinfo {author} {\bibfnamefont {G.~W.}\ \bibnamefont {Moore}}, \bibinfo {author} {\bibfnamefont {A.}~\bibnamefont {Schwimmer}}, \ and\ \bibinfo {author} {\bibfnamefont {N.}~\bibnamefont {Seiberg}},\ }\bibfield  {title} {\enquote {\bibinfo {title} {{Remarks on the Canonical Quantization of the Chern-Simons-Witten Theory}},}\ }\href {\doibase 10.1016/0550-3213(89)90436-7} {\bibfield  {journal} {\bibinfo  {journal} {Nucl. Phys. B}\ }\textbf {\bibinfo {volume} {326}},\ \bibinfo {pages} {108} (\bibinfo {year} {1989})}\BibitemShut {NoStop}%
\bibitem [{\citenamefont {Verlinde}(1988)}]{Verlinde:1988sn}%
  \BibitemOpen
  \bibfield  {author} {\bibinfo {author} {\bibfnamefont {E.~P.}\ \bibnamefont {Verlinde}},\ }\bibfield  {title} {\enquote {\bibinfo {title} {{Fusion Rules and Modular Transformations in 2D Conformal Field Theory}},}\ }\href {\doibase 10.1016/0550-3213(88)90603-7} {\bibfield  {journal} {\bibinfo  {journal} {Nucl. Phys. B}\ }\textbf {\bibinfo {volume} {300}},\ \bibinfo {pages} {360} (\bibinfo {year} {1988})}\BibitemShut {NoStop}%
\bibitem [{\citenamefont {Moore}\ and\ \citenamefont {Seiberg}(1988)}]{Moore:1988uz}%
  \BibitemOpen
  \bibfield  {author} {\bibinfo {author} {\bibfnamefont {G.~W.}\ \bibnamefont {Moore}}\ and\ \bibinfo {author} {\bibfnamefont {N.}~\bibnamefont {Seiberg}},\ }\bibfield  {title} {\enquote {\bibinfo {title} {{Polynomial Equations for Rational Conformal Field Theories}},}\ }\href {\doibase 10.1016/0370-2693(88)91796-0} {\bibfield  {journal} {\bibinfo  {journal} {Phys. Lett. B}\ }\textbf {\bibinfo {volume} {212}},\ \bibinfo {pages} {451} (\bibinfo {year} {1988})}\BibitemShut {NoStop}%
\bibitem [{\citenamefont {Haldane}(1983)}]{HaldaneHierarchy}%
  \BibitemOpen
  \bibfield  {author} {\bibinfo {author} {\bibfnamefont {F.~D.~M.}\ \bibnamefont {Haldane}},\ }\bibfield  {title} {\enquote {\bibinfo {title} {Fractional quantization of the hall effect: A hierarchy of incompressible quantum fluid states},}\ }\href {\doibase 10.1103/PhysRevLett.51.605} {\bibfield  {journal} {\bibinfo  {journal} {Phys. Rev. Lett.}\ }\textbf {\bibinfo {volume} {51}},\ \bibinfo {pages} {605} (\bibinfo {year} {1983})}\BibitemShut {NoStop}%
\bibitem [{\citenamefont {Read}(1990)}]{ReadHierarchy1990}%
  \BibitemOpen
  \bibfield  {author} {\bibinfo {author} {\bibfnamefont {N.}~\bibnamefont {Read}},\ }\bibfield  {title} {\enquote {\bibinfo {title} {Excitation structure of the hierarchy scheme in the fractional quantum hall effect},}\ }\href {\doibase 10.1103/PhysRevLett.65.1502} {\bibfield  {journal} {\bibinfo  {journal} {Phys. Rev. Lett.}\ }\textbf {\bibinfo {volume} {65}},\ \bibinfo {pages} {1502} (\bibinfo {year} {1990})}\BibitemShut {NoStop}%
\bibitem [{\citenamefont {Jain}(1989)}]{JainCF}%
  \BibitemOpen
  \bibfield  {author} {\bibinfo {author} {\bibfnamefont {J.~K.}\ \bibnamefont {Jain}},\ }\bibfield  {title} {\enquote {\bibinfo {title} {Composite-fermion approach for the fractional quantum hall effect},}\ }\href {\doibase 10.1103/PhysRevLett.63.199} {\bibfield  {journal} {\bibinfo  {journal} {Phys. Rev. Lett.}\ }\textbf {\bibinfo {volume} {63}},\ \bibinfo {pages} {199} (\bibinfo {year} {1989})}\BibitemShut {NoStop}%
\bibitem [{\citenamefont {Moore}\ and\ \citenamefont {Read}(1991)}]{Moore:1991ks}%
  \BibitemOpen
  \bibfield  {author} {\bibinfo {author} {\bibfnamefont {G.~W.}\ \bibnamefont {Moore}}\ and\ \bibinfo {author} {\bibfnamefont {N.}~\bibnamefont {Read}},\ }\bibfield  {title} {\enquote {\bibinfo {title} {{Nonabelions in the fractional quantum Hall effect}},}\ }\href {\doibase 10.1016/0550-3213(91)90407-O} {\bibfield  {journal} {\bibinfo  {journal} {Nucl. Phys. B}\ }\textbf {\bibinfo {volume} {360}},\ \bibinfo {pages} {362} (\bibinfo {year} {1991})}\BibitemShut {NoStop}%
\bibitem [{\citenamefont {Wen}(1995)}]{wen1995topological}%
  \BibitemOpen
  \bibfield  {author} {\bibinfo {author} {\bibfnamefont {X.-G.}\ \bibnamefont {Wen}},\ }\bibfield  {title} {\enquote {\bibinfo {title} {Topological orders and edge excitations in fractional quantum hall states},}\ }\href@noop {} {\bibfield  {journal} {\bibinfo  {journal} {Advances in Physics}\ }\textbf {\bibinfo {volume} {44}},\ \bibinfo {pages} {405} (\bibinfo {year} {1995})}\BibitemShut {NoStop}%
\bibitem [{\citenamefont {Wang}\ and\ \citenamefont {Wang}(2020)}]{Wang:2020nmz}%
  \BibitemOpen
  \bibfield  {author} {\bibinfo {author} {\bibfnamefont {L.}~\bibnamefont {Wang}}\ and\ \bibinfo {author} {\bibfnamefont {Z.}~\bibnamefont {Wang}},\ }\bibfield  {title} {\enquote {\bibinfo {title} {{In and around Abelian anyon models}},}\ }\href {\doibase 10.1088/1751-8121/abc6c0} {\bibfield  {journal} {\bibinfo  {journal} {J. Phys. A}\ }\textbf {\bibinfo {volume} {53}},\ \bibinfo {pages} {505203} (\bibinfo {year} {2020})},\ \Eprint {http://arxiv.org/abs/2004.12048} {arXiv:2004.12048 [math.QA]} \BibitemShut {NoStop}%
\bibitem [{\citenamefont {Read}\ and\ \citenamefont {Rezayi}(1999)}]{read_beyond_1999}%
  \BibitemOpen
  \bibfield  {author} {\bibinfo {author} {\bibfnamefont {N.}~\bibnamefont {Read}}\ and\ \bibinfo {author} {\bibfnamefont {E.}~\bibnamefont {Rezayi}},\ }\bibfield  {title} {\enquote {\bibinfo {title} {Beyond paired quantum {Hall} states: {Parafermions} and incompressible states in the first excited {Landau} level},}\ }\href {\doibase 10.1103/PhysRevB.59.8084} {\bibfield  {journal} {\bibinfo  {journal} {Physical Review B}\ }\textbf {\bibinfo {volume} {59}},\ \bibinfo {pages} {8084} (\bibinfo {year} {1999})}\BibitemShut {NoStop}%
\bibitem [{\citenamefont {Seiberg}\ and\ \citenamefont {Witten}(2016)}]{seiberg2016gapped}%
  \BibitemOpen
  \bibfield  {author} {\bibinfo {author} {\bibfnamefont {N.}~\bibnamefont {Seiberg}}\ and\ \bibinfo {author} {\bibfnamefont {E.}~\bibnamefont {Witten}},\ }\bibfield  {title} {\enquote {\bibinfo {title} {{Gapped Boundary Phases of Topological Insulators via Weak Coupling}},}\ }\href {\doibase 10.1093/ptep/ptw083} {\bibfield  {journal} {\bibinfo  {journal} {PTEP}\ }\textbf {\bibinfo {volume} {2016}},\ \bibinfo {pages} {12C101} (\bibinfo {year} {2016})},\ \Eprint {http://arxiv.org/abs/1602.04251} {arXiv:1602.04251 [cond-mat.str-el]} \BibitemShut {NoStop}%
\bibitem [{\citenamefont {Sreejith}\ \emph {et~al.}(2013)\citenamefont {Sreejith}, \citenamefont {Wu}, \citenamefont {W{\'o}js},\ and\ \citenamefont {Jain}}]{sreejith2013tripartite}%
  \BibitemOpen
  \bibfield  {author} {\bibinfo {author} {\bibfnamefont {G.}~\bibnamefont {Sreejith}}, \bibinfo {author} {\bibfnamefont {Y.~H.}\ \bibnamefont {Wu}}, \bibinfo {author} {\bibfnamefont {A.}~\bibnamefont {W{\'o}js}}, \ and\ \bibinfo {author} {\bibfnamefont {J.}~\bibnamefont {Jain}},\ }\bibfield  {title} {\enquote {\bibinfo {title} {Tripartite composite fermion states},}\ }\href@noop {} {\bibfield  {journal} {\bibinfo  {journal} {Physical Review B-Condensed Matter and Materials Physics}\ }\textbf {\bibinfo {volume} {87}},\ \bibinfo {pages} {245125} (\bibinfo {year} {2013})}\BibitemShut {NoStop}%
\bibitem [{\citenamefont {Zhu}\ \emph {et~al.}(2015)\citenamefont {Zhu}, \citenamefont {Gong}, \citenamefont {Haldane},\ and\ \citenamefont {Sheng}}]{FibonacciDMRG1}%
  \BibitemOpen
  \bibfield  {author} {\bibinfo {author} {\bibfnamefont {W.}~\bibnamefont {Zhu}}, \bibinfo {author} {\bibfnamefont {S.~S.}\ \bibnamefont {Gong}}, \bibinfo {author} {\bibfnamefont {F.~D.~M.}\ \bibnamefont {Haldane}}, \ and\ \bibinfo {author} {\bibfnamefont {D.~N.}\ \bibnamefont {Sheng}},\ }\bibfield  {title} {\enquote {\bibinfo {title} {Fractional quantum hall states at $\ensuremath{\nu}=13/5$ and $12/5$ and their non-abelian nature},}\ }\href {\doibase 10.1103/PhysRevLett.115.126805} {\bibfield  {journal} {\bibinfo  {journal} {Phys. Rev. Lett.}\ }\textbf {\bibinfo {volume} {115}},\ \bibinfo {pages} {126805} (\bibinfo {year} {2015})}\BibitemShut {NoStop}%
\bibitem [{\citenamefont {Mong}\ \emph {et~al.}(2017)\citenamefont {Mong}, \citenamefont {Zaletel}, \citenamefont {Pollmann},\ and\ \citenamefont {Papi\ifmmode~\acute{c}\else \'{c}\fi{}}}]{FibonacciDMRG2}%
  \BibitemOpen
  \bibfield  {author} {\bibinfo {author} {\bibfnamefont {R.~S.~K.}\ \bibnamefont {Mong}}, \bibinfo {author} {\bibfnamefont {M.~P.}\ \bibnamefont {Zaletel}}, \bibinfo {author} {\bibfnamefont {F.}~\bibnamefont {Pollmann}}, \ and\ \bibinfo {author} {\bibfnamefont {Z.}~\bibnamefont {Papi\ifmmode~\acute{c}\else \'{c}\fi{}}},\ }\bibfield  {title} {\enquote {\bibinfo {title} {Fibonacci anyons and charge density order in the 12/5 and 13/5 quantum hall plateaus},}\ }\href {\doibase 10.1103/PhysRevB.95.115136} {\bibfield  {journal} {\bibinfo  {journal} {Phys. Rev. B}\ }\textbf {\bibinfo {volume} {95}},\ \bibinfo {pages} {115136} (\bibinfo {year} {2017})}\BibitemShut {NoStop}%
\bibitem [{\citenamefont {Pakrouski}\ \emph {et~al.}(2016)\citenamefont {Pakrouski}, \citenamefont {Troyer}, \citenamefont {Wu}, \citenamefont {Das~Sarma},\ and\ \citenamefont {Peterson}}]{Pakrouski2016}%
  \BibitemOpen
  \bibfield  {author} {\bibinfo {author} {\bibfnamefont {K.}~\bibnamefont {Pakrouski}}, \bibinfo {author} {\bibfnamefont {M.}~\bibnamefont {Troyer}}, \bibinfo {author} {\bibfnamefont {Y.-L.}\ \bibnamefont {Wu}}, \bibinfo {author} {\bibfnamefont {S.}~\bibnamefont {Das~Sarma}}, \ and\ \bibinfo {author} {\bibfnamefont {M.~R.}\ \bibnamefont {Peterson}},\ }\bibfield  {title} {\enquote {\bibinfo {title} {Enigmatic 12/5 fractional quantum hall effect},}\ }\href {\doibase 10.1103/PhysRevB.94.075108} {\bibfield  {journal} {\bibinfo  {journal} {Phys. Rev. B}\ }\textbf {\bibinfo {volume} {94}},\ \bibinfo {pages} {075108} (\bibinfo {year} {2016})}\BibitemShut {NoStop}%
\bibitem [{\citenamefont {Bruillard}\ \emph {et~al.}(2016)\citenamefont {Bruillard}, \citenamefont {Ng}, \citenamefont {Rowell},\ and\ \citenamefont {Wang}}]{Bruillard:2013hdm}%
  \BibitemOpen
  \bibfield  {author} {\bibinfo {author} {\bibfnamefont {P.}~\bibnamefont {Bruillard}}, \bibinfo {author} {\bibfnamefont {S.-H.}\ \bibnamefont {Ng}}, \bibinfo {author} {\bibfnamefont {E.}~\bibnamefont {Rowell}}, \ and\ \bibinfo {author} {\bibfnamefont {Z.}~\bibnamefont {Wang}},\ }\bibfield  {title} {\enquote {\bibinfo {title} {{Rank-finiteness for modular categories}},}\ }\href {\doibase 10.1090/jams/842} {\bibfield  {journal} {\bibinfo  {journal} {J. Am. Math. Soc.}\ }\textbf {\bibinfo {volume} {29}},\ \bibinfo {pages} {857} (\bibinfo {year} {2016})},\ \Eprint {http://arxiv.org/abs/1310.7050} {arXiv:1310.7050 [math.QA]} \BibitemShut {NoStop}%
\bibitem [{\citenamefont {Aasen}\ \emph {et~al.}(2021)\citenamefont {Aasen}, \citenamefont {Bonderson},\ and\ \citenamefont {Knapp}}]{Aasen:2021vva}%
  \BibitemOpen
  \bibfield  {author} {\bibinfo {author} {\bibfnamefont {D.}~\bibnamefont {Aasen}}, \bibinfo {author} {\bibfnamefont {P.}~\bibnamefont {Bonderson}}, \ and\ \bibinfo {author} {\bibfnamefont {C.}~\bibnamefont {Knapp}},\ }\bibfield  {title} {\enquote {\bibinfo {title} {{Characterization and Classification of Fermionic Symmetry Enriched Topological Phases}},}\ }\href@noop {} {\  (\bibinfo {year} {2021})},\ \Eprint {http://arxiv.org/abs/2109.10911} {arXiv:2109.10911 [cond-mat.str-el]} \BibitemShut {NoStop}%
\bibitem [{\citenamefont {Bulmash}\ and\ \citenamefont {Barkeshli}(2022{\natexlab{a}})}]{Bulmash:2021hmb}%
  \BibitemOpen
  \bibfield  {author} {\bibinfo {author} {\bibfnamefont {D.}~\bibnamefont {Bulmash}}\ and\ \bibinfo {author} {\bibfnamefont {M.}~\bibnamefont {Barkeshli}},\ }\bibfield  {title} {\enquote {\bibinfo {title} {{Fermionic symmetry fractionalization in (2+1) dimensions}},}\ }\href {\doibase 10.1103/PhysRevB.105.125114} {\bibfield  {journal} {\bibinfo  {journal} {Phys. Rev. B}\ }\textbf {\bibinfo {volume} {105}},\ \bibinfo {pages} {125114} (\bibinfo {year} {2022}{\natexlab{a}})},\ \Eprint {http://arxiv.org/abs/2109.10913} {arXiv:2109.10913 [cond-mat.str-el]} \BibitemShut {NoStop}%
\bibitem [{\citenamefont {Bulmash}\ and\ \citenamefont {Barkeshli}(2022{\natexlab{b}})}]{Bulmash:2021ryq}%
  \BibitemOpen
  \bibfield  {author} {\bibinfo {author} {\bibfnamefont {D.}~\bibnamefont {Bulmash}}\ and\ \bibinfo {author} {\bibfnamefont {M.}~\bibnamefont {Barkeshli}},\ }\bibfield  {title} {\enquote {\bibinfo {title} {{Anomaly cascade in (2+1)-dimensional fermionic topological phases}},}\ }\href {\doibase 10.1103/PhysRevB.105.155126} {\bibfield  {journal} {\bibinfo  {journal} {Phys. Rev. B}\ }\textbf {\bibinfo {volume} {105}},\ \bibinfo {pages} {155126} (\bibinfo {year} {2022}{\natexlab{b}})},\ \Eprint {http://arxiv.org/abs/2109.10922} {arXiv:2109.10922 [cond-mat.str-el]} \BibitemShut {NoStop}%
\bibitem [{\citenamefont {Cheng}\ \emph {et~al.}(2018)\citenamefont {Cheng}, \citenamefont {Bi}, \citenamefont {You},\ and\ \citenamefont {Gu}}]{cheng_classification_2018}%
  \BibitemOpen
  \bibfield  {author} {\bibinfo {author} {\bibfnamefont {M.}~\bibnamefont {Cheng}}, \bibinfo {author} {\bibfnamefont {Z.}~\bibnamefont {Bi}}, \bibinfo {author} {\bibfnamefont {Y.-Z.}\ \bibnamefont {You}}, \ and\ \bibinfo {author} {\bibfnamefont {Z.-C.}\ \bibnamefont {Gu}},\ }\bibfield  {title} {\enquote {\bibinfo {title} {Classification of symmetry-protected phases for interacting fermions in two dimensions},}\ }\href {\doibase 10.1103/PhysRevB.97.205109} {\bibfield  {journal} {\bibinfo  {journal} {Physical Review B}\ }\textbf {\bibinfo {volume} {97}},\ \bibinfo {pages} {205109} (\bibinfo {year} {2018})}\BibitemShut {NoStop}%
\bibitem [{\citenamefont {Rowell}\ \emph {et~al.}(2009)\citenamefont {Rowell}, \citenamefont {Stong},\ and\ \citenamefont {Wang}}]{Rowell:2007dge}%
  \BibitemOpen
  \bibfield  {author} {\bibinfo {author} {\bibfnamefont {E.}~\bibnamefont {Rowell}}, \bibinfo {author} {\bibfnamefont {R.}~\bibnamefont {Stong}}, \ and\ \bibinfo {author} {\bibfnamefont {Z.}~\bibnamefont {Wang}},\ }\bibfield  {title} {\enquote {\bibinfo {title} {{On Classification of Modular Tensor Categories}},}\ }\href {\doibase 10.1007/s00220-009-0908-z} {\bibfield  {journal} {\bibinfo  {journal} {Commun. Math. Phys.}\ }\textbf {\bibinfo {volume} {292}},\ \bibinfo {pages} {343} (\bibinfo {year} {2009})},\ \Eprint {http://arxiv.org/abs/0712.1377} {arXiv:0712.1377 [math.QA]} \BibitemShut {NoStop}%
\bibitem [{\citenamefont {Chen}\ \emph {et~al.}(2014)\citenamefont {Chen}, \citenamefont {Fidkowski},\ and\ \citenamefont {Vishwanath}}]{chen_symmetry_2014}%
  \BibitemOpen
  \bibfield  {author} {\bibinfo {author} {\bibfnamefont {X.}~\bibnamefont {Chen}}, \bibinfo {author} {\bibfnamefont {L.}~\bibnamefont {Fidkowski}}, \ and\ \bibinfo {author} {\bibfnamefont {A.}~\bibnamefont {Vishwanath}},\ }\bibfield  {title} {\enquote {\bibinfo {title} {Symmetry enforced non-{Abelian} topological order at the surface of a topological insulator},}\ }\href {\doibase 10.1103/PhysRevB.89.165132} {\bibfield  {journal} {\bibinfo  {journal} {Physical Review B}\ }\textbf {\bibinfo {volume} {89}},\ \bibinfo {pages} {165132} (\bibinfo {year} {2014})}\BibitemShut {NoStop}%
\bibitem [{\citenamefont {Bonderson}\ \emph {et~al.}(2013)\citenamefont {Bonderson}, \citenamefont {Nayak},\ and\ \citenamefont {Qi}}]{bonderson_time-reversal_2013}%
  \BibitemOpen
  \bibfield  {author} {\bibinfo {author} {\bibfnamefont {P.}~\bibnamefont {Bonderson}}, \bibinfo {author} {\bibfnamefont {C.}~\bibnamefont {Nayak}}, \ and\ \bibinfo {author} {\bibfnamefont {X.-L.}\ \bibnamefont {Qi}},\ }\bibfield  {title} {\enquote {\bibinfo {title} {A time-reversal invariant topological phase at the surface of a {3D} topological insulator},}\ }\href {\doibase 10.1088/1742-5468/2013/09/P09016} {\bibfield  {journal} {\bibinfo  {journal} {Journal of Statistical Mechanics: Theory and Experiment}\ }\textbf {\bibinfo {volume} {2013}},\ \bibinfo {pages} {P09016} (\bibinfo {year} {2013})}\BibitemShut {NoStop}%
\bibitem [{\citenamefont {Metlitski}\ \emph {et~al.}(2015)\citenamefont {Metlitski}, \citenamefont {Kane},\ and\ \citenamefont {Fisher}}]{metlitski_symmetry-respecting_2015}%
  \BibitemOpen
  \bibfield  {author} {\bibinfo {author} {\bibfnamefont {M.~A.}\ \bibnamefont {Metlitski}}, \bibinfo {author} {\bibfnamefont {C.~L.}\ \bibnamefont {Kane}}, \ and\ \bibinfo {author} {\bibfnamefont {M.~P.~A.}\ \bibnamefont {Fisher}},\ }\bibfield  {title} {\enquote {\bibinfo {title} {Symmetry-respecting topologically ordered surface phase of three-dimensional electron topological insulators},}\ }\href {\doibase 10.1103/PhysRevB.92.125111} {\bibfield  {journal} {\bibinfo  {journal} {Physical Review B}\ }\textbf {\bibinfo {volume} {92}},\ \bibinfo {pages} {125111} (\bibinfo {year} {2015})}\BibitemShut {NoStop}%
\bibitem [{\citenamefont {Wang}\ \emph {et~al.}(2013)\citenamefont {Wang}, \citenamefont {Potter},\ and\ \citenamefont {Senthil}}]{wang_gapped_2013}%
  \BibitemOpen
  \bibfield  {author} {\bibinfo {author} {\bibfnamefont {C.}~\bibnamefont {Wang}}, \bibinfo {author} {\bibfnamefont {A.~C.}\ \bibnamefont {Potter}}, \ and\ \bibinfo {author} {\bibfnamefont {T.}~\bibnamefont {Senthil}},\ }\bibfield  {title} {\enquote {\bibinfo {title} {Gapped symmetry preserving surface state for the electron topological insulator},}\ }\href {\doibase 10.1103/PhysRevB.88.115137} {\bibfield  {journal} {\bibinfo  {journal} {Physical Review B}\ }\textbf {\bibinfo {volume} {88}},\ \bibinfo {pages} {115137} (\bibinfo {year} {2013})}\BibitemShut {NoStop}%
\bibitem [{\citenamefont {Wang}\ and\ \citenamefont {Senthil}(2014)}]{wang_interacting_2014}%
  \BibitemOpen
  \bibfield  {author} {\bibinfo {author} {\bibfnamefont {C.}~\bibnamefont {Wang}}\ and\ \bibinfo {author} {\bibfnamefont {T.}~\bibnamefont {Senthil}},\ }\bibfield  {title} {\enquote {\bibinfo {title} {Interacting fermionic topological insulators/superconductors in three dimensions},}\ }\href {\doibase 10.1103/PhysRevB.89.195124} {\bibfield  {journal} {\bibinfo  {journal} {Physical Review B}\ }\textbf {\bibinfo {volume} {89}},\ \bibinfo {pages} {195124} (\bibinfo {year} {2014})}\BibitemShut {NoStop}%
\bibitem [{\citenamefont {Sodemann}\ \emph {et~al.}(2017)\citenamefont {Sodemann}, \citenamefont {Kimchi}, \citenamefont {Wang},\ and\ \citenamefont {Senthil}}]{sodemann_composite_2017}%
  \BibitemOpen
  \bibfield  {author} {\bibinfo {author} {\bibfnamefont {I.}~\bibnamefont {Sodemann}}, \bibinfo {author} {\bibfnamefont {I.}~\bibnamefont {Kimchi}}, \bibinfo {author} {\bibfnamefont {C.}~\bibnamefont {Wang}}, \ and\ \bibinfo {author} {\bibfnamefont {T.}~\bibnamefont {Senthil}},\ }\bibfield  {title} {\enquote {\bibinfo {title} {Composite fermion duality for half-filled multicomponent {Landau} levels},}\ }\href {\doibase 10.1103/PhysRevB.95.085135} {\bibfield  {journal} {\bibinfo  {journal} {Physical Review B}\ }\textbf {\bibinfo {volume} {95}},\ \bibinfo {pages} {085135} (\bibinfo {year} {2017})}\BibitemShut {NoStop}%
\bibitem [{\citenamefont {Cooper}\ \emph {et~al.}(2001)\citenamefont {Cooper}, \citenamefont {Wilkin},\ and\ \citenamefont {Gunn}}]{cooper2001quantum}%
  \BibitemOpen
  \bibfield  {author} {\bibinfo {author} {\bibfnamefont {N.}~\bibnamefont {Cooper}}, \bibinfo {author} {\bibfnamefont {N.}~\bibnamefont {Wilkin}}, \ and\ \bibinfo {author} {\bibfnamefont {J.}~\bibnamefont {Gunn}},\ }\bibfield  {title} {\enquote {\bibinfo {title} {Quantum phases of vortices in rotating bose-einstein condensates},}\ }\href@noop {} {\bibfield  {journal} {\bibinfo  {journal} {Physical Review Letters}\ }\textbf {\bibinfo {volume} {87}},\ \bibinfo {pages} {120405} (\bibinfo {year} {2001})}\BibitemShut {NoStop}%
\bibitem [{\citenamefont {Regnault}\ and\ \citenamefont {Jolicoeur}(2003)}]{regnault2003quantum}%
  \BibitemOpen
  \bibfield  {author} {\bibinfo {author} {\bibfnamefont {N.}~\bibnamefont {Regnault}}\ and\ \bibinfo {author} {\bibfnamefont {T.}~\bibnamefont {Jolicoeur}},\ }\bibfield  {title} {\enquote {\bibinfo {title} {Quantum hall fractions in rotating bose-einstein condensates},}\ }\href@noop {} {\bibfield  {journal} {\bibinfo  {journal} {Physical Review Letters}\ }\textbf {\bibinfo {volume} {91}},\ \bibinfo {pages} {030402} (\bibinfo {year} {2003})}\BibitemShut {NoStop}%
\bibitem [{\citenamefont {Regnault}\ and\ \citenamefont {Jolicoeur}(2004)}]{regnault2004quantum}%
  \BibitemOpen
  \bibfield  {author} {\bibinfo {author} {\bibfnamefont {N.}~\bibnamefont {Regnault}}\ and\ \bibinfo {author} {\bibfnamefont {T.}~\bibnamefont {Jolicoeur}},\ }\bibfield  {title} {\enquote {\bibinfo {title} {Quantum hall fractions for spinless bosons},}\ }\href@noop {} {\bibfield  {journal} {\bibinfo  {journal} {Physical Review B: Condensed Matter and Materials Physics (1998-2015)}\ }\textbf {\bibinfo {volume} {69}},\ \bibinfo {pages} {235309} (\bibinfo {year} {2004})}\BibitemShut {NoStop}%
\bibitem [{\citenamefont {Chang}\ \emph {et~al.}(2005)\citenamefont {Chang}, \citenamefont {Regnault}, \citenamefont {Jolicoeur},\ and\ \citenamefont {Jain}}]{chang2005composite}%
  \BibitemOpen
  \bibfield  {author} {\bibinfo {author} {\bibfnamefont {C.-C.}\ \bibnamefont {Chang}}, \bibinfo {author} {\bibfnamefont {N.}~\bibnamefont {Regnault}}, \bibinfo {author} {\bibfnamefont {T.}~\bibnamefont {Jolicoeur}}, \ and\ \bibinfo {author} {\bibfnamefont {J.~K.}\ \bibnamefont {Jain}},\ }\bibfield  {title} {\enquote {\bibinfo {title} {Composite fermionization of bosons in rapidly rotating atomic traps},}\ }\href@noop {} {\bibfield  {journal} {\bibinfo  {journal} {Physical Review A—Atomic, Molecular, and Optical Physics}\ }\textbf {\bibinfo {volume} {72}},\ \bibinfo {pages} {013611} (\bibinfo {year} {2005})}\BibitemShut {NoStop}%
\bibitem [{\citenamefont {Wang}\ and\ \citenamefont {Senthil}(2016)}]{wang2016composite}%
  \BibitemOpen
  \bibfield  {author} {\bibinfo {author} {\bibfnamefont {C.}~\bibnamefont {Wang}}\ and\ \bibinfo {author} {\bibfnamefont {T.}~\bibnamefont {Senthil}},\ }\bibfield  {title} {\enquote {\bibinfo {title} {Composite fermi liquids in the lowest landau level},}\ }\href@noop {} {\bibfield  {journal} {\bibinfo  {journal} {Physical Review B}\ }\textbf {\bibinfo {volume} {94}},\ \bibinfo {pages} {245107} (\bibinfo {year} {2016})}\BibitemShut {NoStop}%
\bibitem [{\citenamefont {Banerjee}\ \emph {et~al.}(2017)\citenamefont {Banerjee}, \citenamefont {Heiblum}, \citenamefont {Rosenblatt}, \citenamefont {Oreg}, \citenamefont {Feldman}, \citenamefont {Stern},\ and\ \citenamefont {Umansky}}]{banerjee2017observed}%
  \BibitemOpen
  \bibfield  {author} {\bibinfo {author} {\bibfnamefont {M.}~\bibnamefont {Banerjee}}, \bibinfo {author} {\bibfnamefont {M.}~\bibnamefont {Heiblum}}, \bibinfo {author} {\bibfnamefont {A.}~\bibnamefont {Rosenblatt}}, \bibinfo {author} {\bibfnamefont {Y.}~\bibnamefont {Oreg}}, \bibinfo {author} {\bibfnamefont {D.~E.}\ \bibnamefont {Feldman}}, \bibinfo {author} {\bibfnamefont {A.}~\bibnamefont {Stern}}, \ and\ \bibinfo {author} {\bibfnamefont {V.}~\bibnamefont {Umansky}},\ }\bibfield  {title} {\enquote {\bibinfo {title} {Observed quantization of anyonic heat flow},}\ }\href@noop {} {\bibfield  {journal} {\bibinfo  {journal} {Nature}\ }\textbf {\bibinfo {volume} {545}},\ \bibinfo {pages} {75} (\bibinfo {year} {2017})}\BibitemShut {NoStop}%
\bibitem [{\citenamefont {Banerjee}\ \emph {et~al.}(2018)\citenamefont {Banerjee}, \citenamefont {Heiblum}, \citenamefont {Umansky}, \citenamefont {Feldman}, \citenamefont {Oreg},\ and\ \citenamefont {Stern}}]{banerjee2018observation}%
  \BibitemOpen
  \bibfield  {author} {\bibinfo {author} {\bibfnamefont {M.}~\bibnamefont {Banerjee}}, \bibinfo {author} {\bibfnamefont {M.}~\bibnamefont {Heiblum}}, \bibinfo {author} {\bibfnamefont {V.}~\bibnamefont {Umansky}}, \bibinfo {author} {\bibfnamefont {D.~E.}\ \bibnamefont {Feldman}}, \bibinfo {author} {\bibfnamefont {Y.}~\bibnamefont {Oreg}}, \ and\ \bibinfo {author} {\bibfnamefont {A.}~\bibnamefont {Stern}},\ }\bibfield  {title} {\enquote {\bibinfo {title} {Observation of half-integer thermal hall conductance},}\ }\href@noop {} {\bibfield  {journal} {\bibinfo  {journal} {Nature}\ }\textbf {\bibinfo {volume} {559}},\ \bibinfo {pages} {205} (\bibinfo {year} {2018})}\BibitemShut {NoStop}%
\bibitem [{\citenamefont {Saminadayar}\ \emph {et~al.}(1997)\citenamefont {Saminadayar}, \citenamefont {Glattli}, \citenamefont {Jin},\ and\ \citenamefont {Etienne}}]{saminadayar1997observation}%
  \BibitemOpen
  \bibfield  {author} {\bibinfo {author} {\bibfnamefont {L.}~\bibnamefont {Saminadayar}}, \bibinfo {author} {\bibfnamefont {D.}~\bibnamefont {Glattli}}, \bibinfo {author} {\bibfnamefont {Y.}~\bibnamefont {Jin}}, \ and\ \bibinfo {author} {\bibfnamefont {B.~c.-m.}\ \bibnamefont {Etienne}},\ }\bibfield  {title} {\enquote {\bibinfo {title} {Observation of the e/3 fractionally charged laughlin quasiparticle},}\ }\href@noop {} {\bibfield  {journal} {\bibinfo  {journal} {Physical Review Letters}\ }\textbf {\bibinfo {volume} {79}},\ \bibinfo {pages} {2526} (\bibinfo {year} {1997})}\BibitemShut {NoStop}%
\bibitem [{\citenamefont {De-Picciotto}\ \emph {et~al.}(1998)\citenamefont {De-Picciotto}, \citenamefont {Reznikov}, \citenamefont {Heiblum}, \citenamefont {Umansky}, \citenamefont {Bunin},\ and\ \citenamefont {Mahalu}}]{de1998direct}%
  \BibitemOpen
  \bibfield  {author} {\bibinfo {author} {\bibfnamefont {R.}~\bibnamefont {De-Picciotto}}, \bibinfo {author} {\bibfnamefont {M.}~\bibnamefont {Reznikov}}, \bibinfo {author} {\bibfnamefont {M.}~\bibnamefont {Heiblum}}, \bibinfo {author} {\bibfnamefont {V.}~\bibnamefont {Umansky}}, \bibinfo {author} {\bibfnamefont {G.}~\bibnamefont {Bunin}}, \ and\ \bibinfo {author} {\bibfnamefont {D.}~\bibnamefont {Mahalu}},\ }\bibfield  {title} {\enquote {\bibinfo {title} {Direct observation of a fractional charge},}\ }\href@noop {} {\bibfield  {journal} {\bibinfo  {journal} {Physica B: Condensed Matter}\ }\textbf {\bibinfo {volume} {249}},\ \bibinfo {pages} {395} (\bibinfo {year} {1998})}\BibitemShut {NoStop}%
\bibitem [{\citenamefont {Martin}\ \emph {et~al.}(2004)\citenamefont {Martin}, \citenamefont {Ilani}, \citenamefont {Verdene}, \citenamefont {Smet}, \citenamefont {Umansky}, \citenamefont {Mahalu}, \citenamefont {Schuh}, \citenamefont {Abstreiter},\ and\ \citenamefont {Yacoby}}]{martin2004localization}%
  \BibitemOpen
  \bibfield  {author} {\bibinfo {author} {\bibfnamefont {J.}~\bibnamefont {Martin}}, \bibinfo {author} {\bibfnamefont {S.}~\bibnamefont {Ilani}}, \bibinfo {author} {\bibfnamefont {B.}~\bibnamefont {Verdene}}, \bibinfo {author} {\bibfnamefont {J.}~\bibnamefont {Smet}}, \bibinfo {author} {\bibfnamefont {V.}~\bibnamefont {Umansky}}, \bibinfo {author} {\bibfnamefont {D.}~\bibnamefont {Mahalu}}, \bibinfo {author} {\bibfnamefont {D.}~\bibnamefont {Schuh}}, \bibinfo {author} {\bibfnamefont {G.}~\bibnamefont {Abstreiter}}, \ and\ \bibinfo {author} {\bibfnamefont {A.}~\bibnamefont {Yacoby}},\ }\bibfield  {title} {\enquote {\bibinfo {title} {Localization of fractionally charged quasi-particles},}\ }\href@noop {} {\bibfield  {journal} {\bibinfo  {journal} {Science}\ }\textbf {\bibinfo {volume} {305}},\ \bibinfo {pages} {980} (\bibinfo {year} {2004})}\BibitemShut {NoStop}%
\bibitem [{\citenamefont {Dolev}\ \emph {et~al.}(2008)\citenamefont {Dolev}, \citenamefont {Heiblum}, \citenamefont {Umansky}, \citenamefont {Stern},\ and\ \citenamefont {Mahalu}}]{dolev2008observation}%
  \BibitemOpen
  \bibfield  {author} {\bibinfo {author} {\bibfnamefont {M.}~\bibnamefont {Dolev}}, \bibinfo {author} {\bibfnamefont {M.}~\bibnamefont {Heiblum}}, \bibinfo {author} {\bibfnamefont {V.}~\bibnamefont {Umansky}}, \bibinfo {author} {\bibfnamefont {A.}~\bibnamefont {Stern}}, \ and\ \bibinfo {author} {\bibfnamefont {D.}~\bibnamefont {Mahalu}},\ }\bibfield  {title} {\enquote {\bibinfo {title} {Observation of a quarter of an electron charge at the nu= 5/2 quantum hall state.}}\ }\href@noop {} {\bibfield  {journal} {\bibinfo  {journal} {Nature}\ }\textbf {\bibinfo {volume} {452}},\ \bibinfo {pages} {829} (\bibinfo {year} {2008})}\BibitemShut {NoStop}%
\bibitem [{\citenamefont {Radu}\ \emph {et~al.}(2008)\citenamefont {Radu}, \citenamefont {Miller}, \citenamefont {Marcus}, \citenamefont {Kastner}, \citenamefont {Pfeiffer},\ and\ \citenamefont {West}}]{radu2008quasi}%
  \BibitemOpen
  \bibfield  {author} {\bibinfo {author} {\bibfnamefont {I.~P.}\ \bibnamefont {Radu}}, \bibinfo {author} {\bibfnamefont {J.}~\bibnamefont {Miller}}, \bibinfo {author} {\bibfnamefont {C.}~\bibnamefont {Marcus}}, \bibinfo {author} {\bibfnamefont {M.}~\bibnamefont {Kastner}}, \bibinfo {author} {\bibfnamefont {L.}~\bibnamefont {Pfeiffer}}, \ and\ \bibinfo {author} {\bibfnamefont {K.}~\bibnamefont {West}},\ }\bibfield  {title} {\enquote {\bibinfo {title} {Quasi-particle properties from tunneling in the v= 5/2 fractional quantum hall state},}\ }\href@noop {} {\bibfield  {journal} {\bibinfo  {journal} {Science}\ }\textbf {\bibinfo {volume} {320}},\ \bibinfo {pages} {899} (\bibinfo {year} {2008})}\BibitemShut {NoStop}%
\bibitem [{\citenamefont {Nakamura}\ \emph {et~al.}(2020)\citenamefont {Nakamura}, \citenamefont {Liang}, \citenamefont {Gardner},\ and\ \citenamefont {Manfra}}]{nakamura2020direct}%
  \BibitemOpen
  \bibfield  {author} {\bibinfo {author} {\bibfnamefont {J.}~\bibnamefont {Nakamura}}, \bibinfo {author} {\bibfnamefont {S.}~\bibnamefont {Liang}}, \bibinfo {author} {\bibfnamefont {G.~C.}\ \bibnamefont {Gardner}}, \ and\ \bibinfo {author} {\bibfnamefont {M.~J.}\ \bibnamefont {Manfra}},\ }\bibfield  {title} {\enquote {\bibinfo {title} {Direct observation of anyonic braiding statistics},}\ }\href@noop {} {\bibfield  {journal} {\bibinfo  {journal} {Nature Physics}\ }\textbf {\bibinfo {volume} {16}},\ \bibinfo {pages} {931} (\bibinfo {year} {2020})}\BibitemShut {NoStop}%
\bibitem [{\citenamefont {Bartolomei}\ \emph {et~al.}(2020)\citenamefont {Bartolomei}, \citenamefont {Kumar}, \citenamefont {Bisognin}, \citenamefont {Marguerite}, \citenamefont {Berroir}, \citenamefont {Bocquillon}, \citenamefont {Placais}, \citenamefont {Cavanna}, \citenamefont {Dong}, \citenamefont {Gennser} \emph {et~al.}}]{bartolomei2020fractional}%
  \BibitemOpen
  \bibfield  {author} {\bibinfo {author} {\bibfnamefont {H.}~\bibnamefont {Bartolomei}}, \bibinfo {author} {\bibfnamefont {M.}~\bibnamefont {Kumar}}, \bibinfo {author} {\bibfnamefont {R.}~\bibnamefont {Bisognin}}, \bibinfo {author} {\bibfnamefont {A.}~\bibnamefont {Marguerite}}, \bibinfo {author} {\bibfnamefont {J.-M.}\ \bibnamefont {Berroir}}, \bibinfo {author} {\bibfnamefont {E.}~\bibnamefont {Bocquillon}}, \bibinfo {author} {\bibfnamefont {B.}~\bibnamefont {Placais}}, \bibinfo {author} {\bibfnamefont {A.}~\bibnamefont {Cavanna}}, \bibinfo {author} {\bibfnamefont {Q.}~\bibnamefont {Dong}}, \bibinfo {author} {\bibfnamefont {U.}~\bibnamefont {Gennser}},  \emph {et~al.},\ }\bibfield  {title} {\enquote {\bibinfo {title} {Fractional statistics in anyon collisions},}\ }\href@noop {} {\bibfield  {journal} {\bibinfo  {journal} {Science}\ }\textbf {\bibinfo {volume} {368}},\ \bibinfo {pages} {173} (\bibinfo {year} {2020})}\BibitemShut {NoStop}%
\bibitem [{\citenamefont {Feldman}\ and\ \citenamefont {Halperin}(2021)}]{feldman2021fractional}%
  \BibitemOpen
  \bibfield  {author} {\bibinfo {author} {\bibfnamefont {D.}~\bibnamefont {Feldman}}\ and\ \bibinfo {author} {\bibfnamefont {B.~I.}\ \bibnamefont {Halperin}},\ }\bibfield  {title} {\enquote {\bibinfo {title} {Fractional charge and fractional statistics in the quantum hall effects},}\ }\href@noop {} {\bibfield  {journal} {\bibinfo  {journal} {Reports on Progress in Physics}\ }\textbf {\bibinfo {volume} {84}} (\bibinfo {year} {2021})}\BibitemShut {NoStop}%
\bibitem [{\citenamefont {Jain}(2007)}]{jain2007composite}%
  \BibitemOpen
  \bibfield  {author} {\bibinfo {author} {\bibfnamefont {J.~K.}\ \bibnamefont {Jain}},\ }\href@noop {} {\emph {\bibinfo {title} {Composite fermions}}}\ (\bibinfo  {publisher} {Cambridge University Press},\ \bibinfo {year} {2007})\BibitemShut {NoStop}%
\bibitem [{\citenamefont {Xia}\ \emph {et~al.}(2004)\citenamefont {Xia}, \citenamefont {Pan}, \citenamefont {Vicente}, \citenamefont {Adams}, \citenamefont {Sullivan}, \citenamefont {Stormer}, \citenamefont {Tsui}, \citenamefont {Pfeiffer}, \citenamefont {Baldwin},\ and\ \citenamefont {West}}]{xia2004electron}%
  \BibitemOpen
  \bibfield  {author} {\bibinfo {author} {\bibfnamefont {J.}~\bibnamefont {Xia}}, \bibinfo {author} {\bibfnamefont {W.}~\bibnamefont {Pan}}, \bibinfo {author} {\bibfnamefont {C.}~\bibnamefont {Vicente}}, \bibinfo {author} {\bibfnamefont {E.}~\bibnamefont {Adams}}, \bibinfo {author} {\bibfnamefont {N.}~\bibnamefont {Sullivan}}, \bibinfo {author} {\bibfnamefont {H.}~\bibnamefont {Stormer}}, \bibinfo {author} {\bibfnamefont {D.}~\bibnamefont {Tsui}}, \bibinfo {author} {\bibfnamefont {L.}~\bibnamefont {Pfeiffer}}, \bibinfo {author} {\bibfnamefont {K.}~\bibnamefont {Baldwin}}, \ and\ \bibinfo {author} {\bibfnamefont {K.}~\bibnamefont {West}},\ }\bibfield  {title} {\enquote {\bibinfo {title} {Electron correlation in the second landau level: A competition between many nearly degenerate quantum phases},}\ }\href@noop {} {\bibfield  {journal} {\bibinfo  {journal} {Physical review letters}\ }\textbf {\bibinfo {volume} {93}},\ \bibinfo {pages} {176809} (\bibinfo {year} {2004})}\BibitemShut {NoStop}%
\bibitem [{\citenamefont {Kumar}\ \emph {et~al.}(2010)\citenamefont {Kumar}, \citenamefont {Cs{\'a}thy}, \citenamefont {Manfra}, \citenamefont {Pfeiffer},\ and\ \citenamefont {West}}]{kumar2010nonconventional}%
  \BibitemOpen
  \bibfield  {author} {\bibinfo {author} {\bibfnamefont {A.}~\bibnamefont {Kumar}}, \bibinfo {author} {\bibfnamefont {G.}~\bibnamefont {Cs{\'a}thy}}, \bibinfo {author} {\bibfnamefont {M.}~\bibnamefont {Manfra}}, \bibinfo {author} {\bibfnamefont {L.}~\bibnamefont {Pfeiffer}}, \ and\ \bibinfo {author} {\bibfnamefont {K.}~\bibnamefont {West}},\ }\bibfield  {title} {\enquote {\bibinfo {title} {Nonconventional odd-denominator fractional quantum hall states<? format?> in the second landau level},}\ }\href@noop {} {\bibfield  {journal} {\bibinfo  {journal} {Physical review letters}\ }\textbf {\bibinfo {volume} {105}},\ \bibinfo {pages} {246808} (\bibinfo {year} {2010})}\BibitemShut {NoStop}%
\bibitem [{\citenamefont {Balram}\ and\ \citenamefont {W{\'o}js}(2020)}]{balram2020fractional}%
  \BibitemOpen
  \bibfield  {author} {\bibinfo {author} {\bibfnamefont {A.~C.}\ \bibnamefont {Balram}}\ and\ \bibinfo {author} {\bibfnamefont {A.}~\bibnamefont {W{\'o}js}},\ }\bibfield  {title} {\enquote {\bibinfo {title} {Fractional quantum hall effect at $\nu$= 2+ 4/9},}\ }\href@noop {} {\bibfield  {journal} {\bibinfo  {journal} {Physical Review Research}\ }\textbf {\bibinfo {volume} {2}},\ \bibinfo {pages} {032035} (\bibinfo {year} {2020})}\BibitemShut {NoStop}%
\bibitem [{\citenamefont {Coimbatore~Balram}(2021)}]{coimbatore2021non}%
  \BibitemOpen
  \bibfield  {author} {\bibinfo {author} {\bibfnamefont {A.}~\bibnamefont {Coimbatore~Balram}},\ }\bibfield  {title} {\enquote {\bibinfo {title} {A non-abelian parton state for the $\nu = 2+ 3/8$ fractional quantum hall effect},}\ }\href@noop {} {\bibfield  {journal} {\bibinfo  {journal} {SciPost Physics}\ }\textbf {\bibinfo {volume} {10}},\ \bibinfo {pages} {083} (\bibinfo {year} {2021})}\BibitemShut {NoStop}%
\bibitem [{\citenamefont {Balram}\ \emph {et~al.}(2020)\citenamefont {Balram}, \citenamefont {Jain},\ and\ \citenamefont {Barkeshli}}]{balram2020z}%
  \BibitemOpen
  \bibfield  {author} {\bibinfo {author} {\bibfnamefont {A.~C.}\ \bibnamefont {Balram}}, \bibinfo {author} {\bibfnamefont {J.}~\bibnamefont {Jain}}, \ and\ \bibinfo {author} {\bibfnamefont {M.}~\bibnamefont {Barkeshli}},\ }\bibfield  {title} {\enquote {\bibinfo {title} {Z$_n$ superconductivity of composite bosons and the 7/3 fractional quantum hall effect},}\ }\href@noop {} {\bibfield  {journal} {\bibinfo  {journal} {Physical Review Research}\ }\textbf {\bibinfo {volume} {2}},\ \bibinfo {pages} {013349} (\bibinfo {year} {2020})}\BibitemShut {NoStop}%
\bibitem [{\citenamefont {Singh}\ \emph {et~al.}(2024)\citenamefont {Singh}, \citenamefont {Wang}, \citenamefont {Tai}, \citenamefont {Calhoun}, \citenamefont {Villegas~Rosales}, \citenamefont {Madathil}, \citenamefont {Gupta}, \citenamefont {Baldwin}, \citenamefont {Pfeiffer},\ and\ \citenamefont {Shayegan}}]{singh2024topological}%
  \BibitemOpen
  \bibfield  {author} {\bibinfo {author} {\bibfnamefont {S.}~\bibnamefont {Singh}}, \bibinfo {author} {\bibfnamefont {C.}~\bibnamefont {Wang}}, \bibinfo {author} {\bibfnamefont {C.}~\bibnamefont {Tai}}, \bibinfo {author} {\bibfnamefont {C.}~\bibnamefont {Calhoun}}, \bibinfo {author} {\bibfnamefont {K.}~\bibnamefont {Villegas~Rosales}}, \bibinfo {author} {\bibfnamefont {P.}~\bibnamefont {Madathil}}, \bibinfo {author} {\bibfnamefont {A.}~\bibnamefont {Gupta}}, \bibinfo {author} {\bibfnamefont {K.}~\bibnamefont {Baldwin}}, \bibinfo {author} {\bibfnamefont {L.}~\bibnamefont {Pfeiffer}}, \ and\ \bibinfo {author} {\bibfnamefont {M.}~\bibnamefont {Shayegan}},\ }\bibfield  {title} {\enquote {\bibinfo {title} {Topological phase transition between jain states and daughter states of the $\nu$= 1/2 fractional quantum hall state},}\ }\href@noop {} {\bibfield  {journal} {\bibinfo  {journal} {Nature Physics}\ }\textbf {\bibinfo {volume} {20}},\ \bibinfo {pages} {1247} (\bibinfo {year} {2024})}\BibitemShut {NoStop}%
\bibitem [{\citenamefont {Levin}\ and\ \citenamefont {Halperin}(2009)}]{levin2009collective}%
  \BibitemOpen
  \bibfield  {author} {\bibinfo {author} {\bibfnamefont {M.}~\bibnamefont {Levin}}\ and\ \bibinfo {author} {\bibfnamefont {B.~I.}\ \bibnamefont {Halperin}},\ }\bibfield  {title} {\enquote {\bibinfo {title} {Collective states of non-abelian quasiparticles in a magnetic field},}\ }\href@noop {} {\bibfield  {journal} {\bibinfo  {journal} {Physical Review B—Condensed Matter and Materials Physics}\ }\textbf {\bibinfo {volume} {79}},\ \bibinfo {pages} {205301} (\bibinfo {year} {2009})}\BibitemShut {NoStop}%
\bibitem [{\citenamefont {Yutushui}\ \emph {et~al.}(2024)\citenamefont {Yutushui}, \citenamefont {Hermanns},\ and\ \citenamefont {Mross}}]{yutushui2024paired}%
  \BibitemOpen
  \bibfield  {author} {\bibinfo {author} {\bibfnamefont {M.}~\bibnamefont {Yutushui}}, \bibinfo {author} {\bibfnamefont {M.}~\bibnamefont {Hermanns}}, \ and\ \bibinfo {author} {\bibfnamefont {D.~F.}\ \bibnamefont {Mross}},\ }\bibfield  {title} {\enquote {\bibinfo {title} {Paired fermions in strong magnetic fields and daughters of even-denominator hall plateaus},}\ }\href@noop {} {\bibfield  {journal} {\bibinfo  {journal} {Physical Review B}\ }\textbf {\bibinfo {volume} {110}},\ \bibinfo {pages} {165402} (\bibinfo {year} {2024})}\BibitemShut {NoStop}%
\bibitem [{\citenamefont {Zheltonozhskii}\ \emph {et~al.}(2024)\citenamefont {Zheltonozhskii}, \citenamefont {Stern},\ and\ \citenamefont {Lindner}}]{zheltonozhskii2024identifying}%
  \BibitemOpen
  \bibfield  {author} {\bibinfo {author} {\bibfnamefont {E.}~\bibnamefont {Zheltonozhskii}}, \bibinfo {author} {\bibfnamefont {A.}~\bibnamefont {Stern}}, \ and\ \bibinfo {author} {\bibfnamefont {N.~H.}\ \bibnamefont {Lindner}},\ }\bibfield  {title} {\enquote {\bibinfo {title} {Identifying the topological order of quantized half-filled landau levels through their daughter states},}\ }\href@noop {} {\bibfield  {journal} {\bibinfo  {journal} {Physical Review B}\ }\textbf {\bibinfo {volume} {110}},\ \bibinfo {pages} {245140} (\bibinfo {year} {2024})}\BibitemShut {NoStop}%
\bibitem [{\citenamefont {Maldacena}\ \emph {et~al.}(2001)\citenamefont {Maldacena}, \citenamefont {Moore},\ and\ \citenamefont {Seiberg}}]{Maldacena:2001ss}%
  \BibitemOpen
  \bibfield  {author} {\bibinfo {author} {\bibfnamefont {J.~M.}\ \bibnamefont {Maldacena}}, \bibinfo {author} {\bibfnamefont {G.~W.}\ \bibnamefont {Moore}}, \ and\ \bibinfo {author} {\bibfnamefont {N.}~\bibnamefont {Seiberg}},\ }\bibfield  {title} {\enquote {\bibinfo {title} {{D-brane charges in five-brane backgrounds}},}\ }\href {\doibase 10.1088/1126-6708/2001/10/005} {\bibfield  {journal} {\bibinfo  {journal} {JHEP}\ }\textbf {\bibinfo {volume} {10}},\ \bibinfo {pages} {005} (\bibinfo {year} {2001})},\ \Eprint {http://arxiv.org/abs/hep-th/0108152} {arXiv:hep-th/0108152} \BibitemShut {NoStop}%
\bibitem [{\citenamefont {Gepner}\ and\ \citenamefont {Witten}(1986)}]{Gepner:1986wi}%
  \BibitemOpen
  \bibfield  {author} {\bibinfo {author} {\bibfnamefont {D.}~\bibnamefont {Gepner}}\ and\ \bibinfo {author} {\bibfnamefont {E.}~\bibnamefont {Witten}},\ }\bibfield  {title} {\enquote {\bibinfo {title} {{String Theory on Group Manifolds}},}\ }\href {\doibase 10.1016/0550-3213(86)90051-9} {\bibfield  {journal} {\bibinfo  {journal} {Nucl. Phys. B}\ }\textbf {\bibinfo {volume} {278}},\ \bibinfo {pages} {493} (\bibinfo {year} {1986})}\BibitemShut {NoStop}%
\bibitem [{\citenamefont {Kobayashi}\ and\ \citenamefont {Barkeshli}(2024)}]{kobayashi_31-dimensional_2024}%
  \BibitemOpen
  \bibfield  {author} {\bibinfo {author} {\bibfnamefont {R.}~\bibnamefont {Kobayashi}}\ and\ \bibinfo {author} {\bibfnamefont {M.}~\bibnamefont {Barkeshli}},\ }\bibfield  {title} {\enquote {\bibinfo {title} {(3+1)-dimensional path integral state sums on curved {U}(1) bundles and {U}(1) anomalies of (2+1)-dimensional topological phases},}\ }\href {\doibase 10.1103/PhysRevB.110.155140} {\bibfield  {journal} {\bibinfo  {journal} {Physical Review B}\ }\textbf {\bibinfo {volume} {110}},\ \bibinfo {pages} {155140} (\bibinfo {year} {2024})}\BibitemShut {NoStop}%
\bibitem [{\citenamefont {Bruillard}\ \emph {et~al.}(2017{\natexlab{a}})\citenamefont {Bruillard}, \citenamefont {Galindo}, \citenamefont {Hagge}, \citenamefont {Ng}, \citenamefont {Plavnik}, \citenamefont {Rowell},\ and\ \citenamefont {Wang}}]{Bruillard2017}%
  \BibitemOpen
  \bibfield  {author} {\bibinfo {author} {\bibfnamefont {P.}~\bibnamefont {Bruillard}}, \bibinfo {author} {\bibfnamefont {C.}~\bibnamefont {Galindo}}, \bibinfo {author} {\bibfnamefont {T.}~\bibnamefont {Hagge}}, \bibinfo {author} {\bibfnamefont {S.-H.}\ \bibnamefont {Ng}}, \bibinfo {author} {\bibfnamefont {J.~Y.}\ \bibnamefont {Plavnik}}, \bibinfo {author} {\bibfnamefont {E.~C.}\ \bibnamefont {Rowell}}, \ and\ \bibinfo {author} {\bibfnamefont {Z.}~\bibnamefont {Wang}},\ }\bibfield  {title} {\enquote {\bibinfo {title} {Fermionic modular categories and the 16-fold way},}\ }\href {\doibase 10.1063/1.4978327} {\bibfield  {journal} {\bibinfo  {journal} {J. Math. Phys.}\ }\textbf {\bibinfo {volume} {58}},\ \bibinfo {pages} {041704} (\bibinfo {year} {2017}{\natexlab{a}})}\BibitemShut {NoStop}%
\bibitem [{\citenamefont {Johnson-Freyd}\ and\ \citenamefont {Reutter}(2024)}]{Johnson-Freyd_Reutter_2024}%
  \BibitemOpen
  \bibfield  {author} {\bibinfo {author} {\bibfnamefont {T.}~\bibnamefont {Johnson-Freyd}}\ and\ \bibinfo {author} {\bibfnamefont {D.}~\bibnamefont {Reutter}},\ }\href {\doibase 10.1090/jams/1023} {\bibfield  {journal} {\bibinfo  {journal} {J. Amer. Math. Soc.}\ }\textbf {\bibinfo {volume} {37}},\ \bibinfo {pages} {81} (\bibinfo {year} {2024})}\BibitemShut {NoStop}%
\bibitem [{\citenamefont {Cheng}(2019)}]{cheng_fermionic_2019}%
  \BibitemOpen
  \bibfield  {author} {\bibinfo {author} {\bibfnamefont {M.}~\bibnamefont {Cheng}},\ }\bibfield  {title} {\enquote {\bibinfo {title} {Fermionic {Lieb}-{Schultz}-{Mattis} theorems and weak symmetry-protected phases},}\ }\href {\doibase 10.1103/PhysRevB.99.075143} {\bibfield  {journal} {\bibinfo  {journal} {Physical Review B}\ }\textbf {\bibinfo {volume} {99}},\ \bibinfo {pages} {075143} (\bibinfo {year} {2019})}\BibitemShut {NoStop}%
\bibitem [{\citenamefont {Bruillard}\ \emph {et~al.}(2017{\natexlab{b}})\citenamefont {Bruillard}, \citenamefont {Galindo}, \citenamefont {Ng}, \citenamefont {Plavnik}, \citenamefont {Rowell},\ and\ \citenamefont {Wang}}]{bruillard_classification_2017}%
  \BibitemOpen
  \bibfield  {author} {\bibinfo {author} {\bibfnamefont {P.}~\bibnamefont {Bruillard}}, \bibinfo {author} {\bibfnamefont {C.}~\bibnamefont {Galindo}}, \bibinfo {author} {\bibfnamefont {S.-H.}\ \bibnamefont {Ng}}, \bibinfo {author} {\bibfnamefont {J.~Y.}\ \bibnamefont {Plavnik}}, \bibinfo {author} {\bibfnamefont {E.~C.}\ \bibnamefont {Rowell}}, \ and\ \bibinfo {author} {\bibfnamefont {Z.}~\bibnamefont {Wang}},\ }\href {\doibase 10.48550/arXiv.1705.05293} {\enquote {\bibinfo {title} {Classification of super-modular categories by rank},}\ } (\bibinfo {year} {2017}{\natexlab{b}}),\ \bibinfo {note} {arXiv:1705.05293 [math]}\BibitemShut {NoStop}%
\bibitem [{\citenamefont {Delmastro}\ \emph {et~al.}(2021)\citenamefont {Delmastro}, \citenamefont {Gaiotto},\ and\ \citenamefont {Gomis}}]{Delmastro:2021xox}%
  \BibitemOpen
  \bibfield  {author} {\bibinfo {author} {\bibfnamefont {D.}~\bibnamefont {Delmastro}}, \bibinfo {author} {\bibfnamefont {D.}~\bibnamefont {Gaiotto}}, \ and\ \bibinfo {author} {\bibfnamefont {J.}~\bibnamefont {Gomis}},\ }\bibfield  {title} {\enquote {\bibinfo {title} {{Global anomalies on the Hilbert space}},}\ }\href {\doibase 10.1007/JHEP11(2021)142} {\bibfield  {journal} {\bibinfo  {journal} {JHEP}\ }\textbf {\bibinfo {volume} {11}},\ \bibinfo {pages} {142} (\bibinfo {year} {2021})},\ \Eprint {http://arxiv.org/abs/2101.02218} {arXiv:2101.02218 [hep-th]} \BibitemShut {NoStop}%
\bibitem [{\citenamefont {Lapa}\ and\ \citenamefont {Levin}(2019)}]{lapa_anomaly_2019}%
  \BibitemOpen
  \bibfield  {author} {\bibinfo {author} {\bibfnamefont {M.~F.}\ \bibnamefont {Lapa}}\ and\ \bibinfo {author} {\bibfnamefont {M.}~\bibnamefont {Levin}},\ }\bibfield  {title} {\enquote {\bibinfo {title} {Anomaly indicators for topological orders with {$\U$} and time-reversal symmetry},}\ }\href {\doibase 10.1103/PhysRevB.100.165129} {\bibfield  {journal} {\bibinfo  {journal} {Physical Review B}\ }\textbf {\bibinfo {volume} {100}},\ \bibinfo {pages} {165129} (\bibinfo {year} {2019})}\BibitemShut {NoStop}%
\bibitem [{\citenamefont {Hsin}\ and\ \citenamefont {Shao}(2020)}]{Hsin:2019gvb}%
  \BibitemOpen
  \bibfield  {author} {\bibinfo {author} {\bibfnamefont {P.-S.}\ \bibnamefont {Hsin}}\ and\ \bibinfo {author} {\bibfnamefont {S.-H.}\ \bibnamefont {Shao}},\ }\bibfield  {title} {\enquote {\bibinfo {title} {{Lorentz Symmetry Fractionalization and Dualities in (2+1)d}},}\ }\href {\doibase 10.21468/SciPostPhys.8.2.018} {\bibfield  {journal} {\bibinfo  {journal} {SciPost Phys.}\ }\textbf {\bibinfo {volume} {8}},\ \bibinfo {pages} {018} (\bibinfo {year} {2020})},\ \Eprint {http://arxiv.org/abs/1909.07383} {arXiv:1909.07383 [cond-mat.str-el]} \BibitemShut {NoStop}%
\bibitem [{\citenamefont {Kirillov~Jr}\ and\ \citenamefont {Ostrik}(2002)}]{kirillov2002q}%
  \BibitemOpen
  \bibfield  {author} {\bibinfo {author} {\bibfnamefont {A.}~\bibnamefont {Kirillov~Jr}}\ and\ \bibinfo {author} {\bibfnamefont {V.}~\bibnamefont {Ostrik}},\ }\bibfield  {title} {\enquote {\bibinfo {title} {On a q-analogue of the mckay correspondence and the ade classification of sl2 conformal field theories},}\ }\href@noop {} {\bibfield  {journal} {\bibinfo  {journal} {Advances in Mathematics}\ }\textbf {\bibinfo {volume} {171}},\ \bibinfo {pages} {183} (\bibinfo {year} {2002})}\BibitemShut {NoStop}%
\bibitem [{\citenamefont {Kong}(2014)}]{Kong:2013aya}%
  \BibitemOpen
  \bibfield  {author} {\bibinfo {author} {\bibfnamefont {L.}~\bibnamefont {Kong}},\ }\bibfield  {title} {\enquote {\bibinfo {title} {{Anyon condensation and tensor categories}},}\ }\href {\doibase 10.1016/j.nuclphysb.2014.07.003} {\bibfield  {journal} {\bibinfo  {journal} {Nucl. Phys. B}\ }\textbf {\bibinfo {volume} {886}},\ \bibinfo {pages} {436} (\bibinfo {year} {2014})},\ \Eprint {http://arxiv.org/abs/1307.8244} {arXiv:1307.8244 [cond-mat.str-el]} \BibitemShut {NoStop}%
\bibitem [{\citenamefont {Neupert}\ \emph {et~al.}(2016)\citenamefont {Neupert}, \citenamefont {He}, \citenamefont {Von~Keyserlingk}, \citenamefont {Sierra},\ and\ \citenamefont {Bernevig}}]{neupert2016boson}%
  \BibitemOpen
  \bibfield  {author} {\bibinfo {author} {\bibfnamefont {T.}~\bibnamefont {Neupert}}, \bibinfo {author} {\bibfnamefont {H.}~\bibnamefont {He}}, \bibinfo {author} {\bibfnamefont {C.}~\bibnamefont {Von~Keyserlingk}}, \bibinfo {author} {\bibfnamefont {G.}~\bibnamefont {Sierra}}, \ and\ \bibinfo {author} {\bibfnamefont {B.~A.}\ \bibnamefont {Bernevig}},\ }\bibfield  {title} {\enquote {\bibinfo {title} {Boson condensation in topologically ordered quantum liquids},}\ }\href@noop {} {\bibfield  {journal} {\bibinfo  {journal} {Phys. Rev. B}\ }\textbf {\bibinfo {volume} {93}},\ \bibinfo {pages} {115103} (\bibinfo {year} {2016})}\BibitemShut {NoStop}%
\end{thebibliography}%

\end{document}